\titlespacing\subsection{0pt}{6pt plus 4pt minus 2pt}{6pt plus 2pt minus 2pt}
\titlespacing\section{0pt}{6pt plus 4pt minus 2pt}{6pt plus 2pt minus 2pt}
\titlespacing\subsubsection{0pt}{6pt plus 4pt minus 2pt}{6pt plus 2pt minus 2pt}
\titleformat*{\section}{\large\bfseries} 
 \titleformat*{\subsection}{\bfseries}
\newcommand{\dnote}[1]{\ifthenelse{\boolean{comment}}
{\textcolor{purple}{ {\textbf{(Dorit: }#1\textbf{) }}}}
{}}
\newcommand{\snote}[1]{ \ifthenelse{\boolean{comment}}
{\textcolor{red}{ {\textbf{(Sandy: }#1\textbf{) } } } }
{}}
\def\IF{{\bf if}\ }
\def\ELSE{{\bf else}}
\def\WHILE{{\bf while}\ }
\def\FOR{{\bf for}\ }
\def\RETURN{{\bf return}\ }
\def\blo{\noindent
\begin{tabular}{@{\quad}l@{\quad}}
\begin{minipage}{1in}
\begin{tabbing}
\qquad\=\qquad\=\qquad\=\qquad\=\qquad\=\qquad\=\qquad\=\kill}
\def\elo{\end{tabbing}\end{minipage}\\\end{tabular}}
\long\def\comment#1{}
\newcommand{\fpqmaexp}{\mbox{FP}^{\mbox{QMA-EXP}}}
\newcommand{\fpnp}{\mbox{FP}^{\mbox{NP}}}
\newcommand{\fexpqmaexp}{\mbox{FEXP}^{\mbox{QMA-EXP}}}
\newcommand{\fexpnexp}{\mbox{FEXP}^{\mbox{NEXP}}}
\newcommand{\fpnexp}{\mbox{FP}^{\mbox{NEXP}}}
\newcommand{\qmaexp}{{\mbox{QMA-EXP}}}
\newcommand{\nexp}{{\mbox{NEXP}}}
\newcommand{\qma}{{\mbox{QMA}}}
\newcommand{\fexp}{{\mbox{FEXP}}}
\newcommand{\fp}{{\mbox{FP}}}
\newcommand{\calh}{{\cal{H}}}
\newcommand{\call}{{\cal{L}}}
\newcommand{\calk}{{\cal{K}}}
\newcommand{\cals}{{\cal{S}}}
\newcommand{\calt}{{\cal{T}}}
\newcommand{\caln}{{\cal{N}}}
\newcommand{\ket}[1]{|#1\rangle}
\newcommand{\bra}[1]{\langle#1|}
\newcommand{\ketbra}[2]{|#1\rangle\langle#2|}
\newcommand{\ketbrabig}[2]{\left|~#1~ \right\rangle \left\langle~#2~ \right|}
\newcommand{\myw}{\omega}
\newcommand{\vinit}{v\mbox{-init}}
\newcommand{\norm}[1]{\left\lVert #1 \right\rVert}
\newtheorem{theorem}{Theorem}[section]
\newtheorem{claim}[theorem]{Claim}
\newtheorem{lemma}[theorem]{Lemma}
\newtheorem{coro}[theorem]{Corollary}
\newtheorem{fact}[theorem]{Fact} 
\newtheorem{definition}[theorem]{Definition}
\newcommand{\twocellsvert}[2]{\begin{array}{|@{}c@{}|} \hline  #1 \\ \hline  
#2 \\ \hline \end{array}}
\newcommand{\threecellsL}[2]{\begin{array}{|@{}c@{}|@{}c@{}|} \hline  \leftBr &
	\begin{array}{@{}c@{}} #1 \\ \hline  #2 \\ \end{array}\\ \hline  \end{array}}
\newcommand{\threecellsR}[2]{\begin{array}{|@{}c@{}|@{}c@{}|} \hline
	\begin{array}{@{}c@{}} #1 \\ \hline  #2  \end{array} & \rightBr \\ \hline  \end{array}}
\newcommand{\fourcells}[4]{\begin{array}{|@{}c@{}|@{}c@{}|} \hline  #1 & #3 \\ \hline #2 & #4 \\ \hline   \end{array}}
\newcommand{\threecellsvert}[3]{\begin{array}{|@{}c@{}|} \hline  #1 \\ \hline  
#2 \\ \hline 
#3 \\ \hline
\end{array}}
\newcommand{\sixcells}[6]{\begin{array}{|@{}c@{}|@{}c@{}|} \hline  #1 & #2 \\ \hline  
#3 & #4 \\ \hline 
#5 & #6 \\ \hline
\end{array}}
\newcommand{\fourcellsR}[3]{\begin{array}{|@{}c@{}|@{}c@{}|} \hline
	\begin{array}{@{}c@{}} #1 \\ \hline  #2  \\ \hline  #3
	\end{array} & \rightBr \\ \hline  \end{array}}
\newcommand{\fourcellsL}[3]{\begin{array}{|@{}c@{}|@{}c@{}|} \hline  \leftBr &
	\begin{array}{@{}c@{}} #1 \\ \hline  #2 \\ \hline  #3 \\
	\end{array}\\ \hline  \end{array}}
\newcommand{\fivecellsL}[4]{\begin{array}{|@{}c@{}|@{}c@{}|} \hline  \leftBr &
	\begin{array}{@{}c@{}} #1 \\ \hline  #2 \\ \hline  #3 \\ \hline  #4 \\
	\end{array}\\ \hline  \end{array}}
\newcommand{\eightcells}[8]{\begin{array}{|@{}c@{}| @{}c@{}| @{}c@{}|@{}c@{}|@{}c@{}|@{}c@{}|} \hline
\leftBr &
	\begin{array}{@{}c@{}} #1 \\ \hline  #5   \end{array} &
	\begin{array}{@{}c@{}} #2 \\ \hline  #6   \end{array} &
	\begin{array}{@{}c@{}} #3 \\ \hline  #7   \end{array} &
	\begin{array}{@{}c@{}} #4 \\ \hline  #8   \end{array} &
	 \rightBr \\ \hline  \end{array}}
\newcommand\generic{\vcenter{\hbox{\includegraphics[height = 2.2ex]{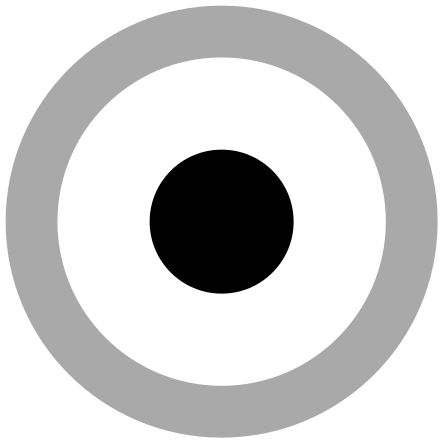}}}}
\newcommand\dead{\vcenter{\hbox{\includegraphics[height = 2.2ex]{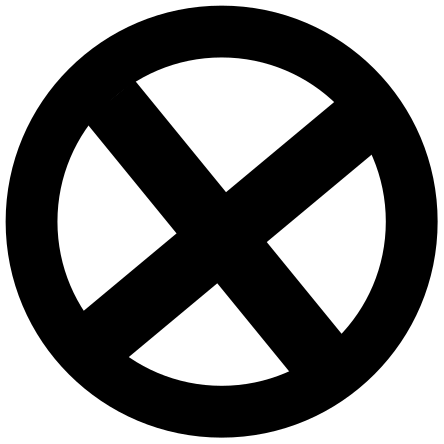}}}}
\newcommand\blank{\vcenter{\hbox{\includegraphics[height = 2.2ex]{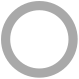}}}}
\newcommand\Dblank{\vcenter{\hbox{\includegraphics[height = 2.2ex]{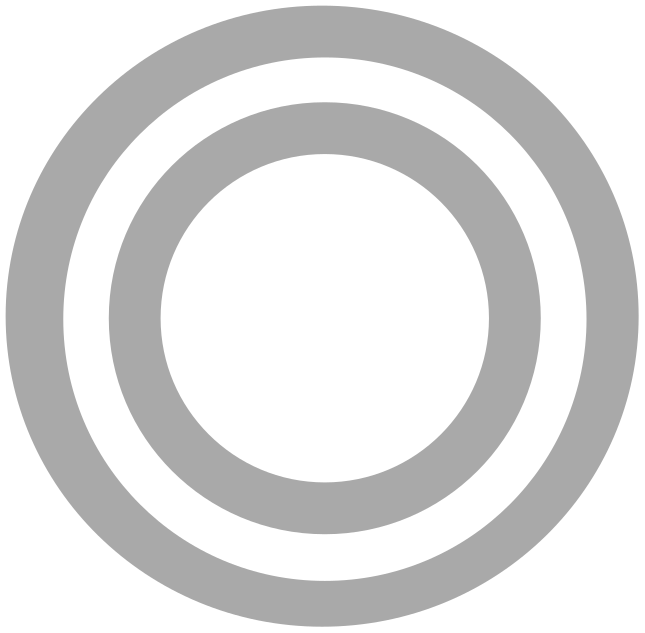}}}}
\newcommand\blankLtwo{\vcenter{\hbox{\includegraphics[height = 2.2ex]{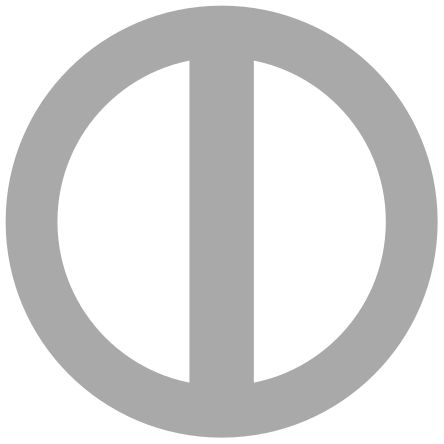}}}}
\newcommand\DblankLtwo{\vcenter{\hbox{\includegraphics[height = 2.2ex]{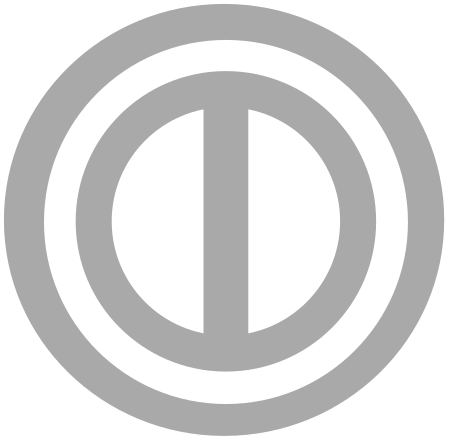}}}}
\newcommand\blankLthree{\vcenter{\hbox{\includegraphics[height = 2.2ex]{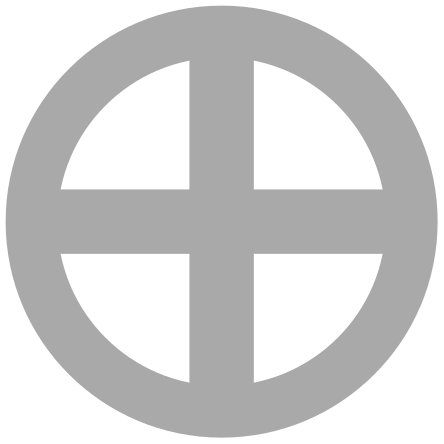}}}}
\newcommand\DblankLthree{\vcenter{\hbox{\includegraphics[height = 2.2ex]{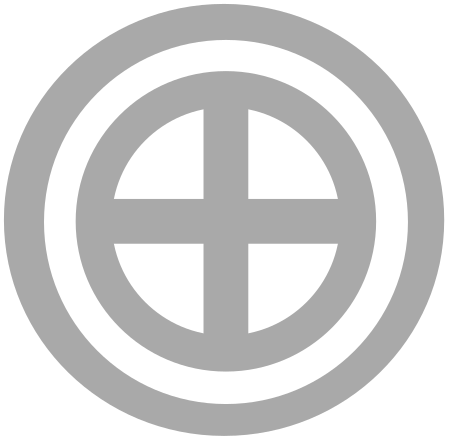}}}}
\newcommand\leftcorner{\vcenter{\hbox{\includegraphics[height = 2.2ex]{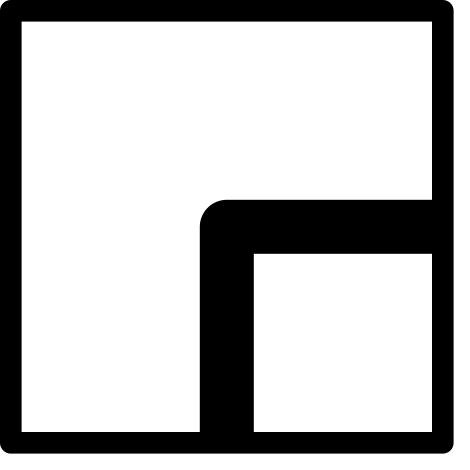}}}}
\newcommand\rightcorner{\vcenter{\hbox{\includegraphics[height = 2.2ex]{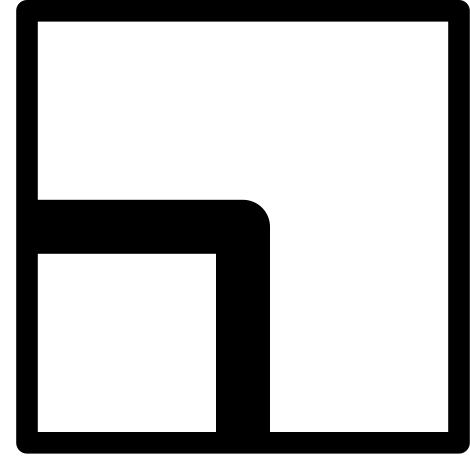}}}}
\newcommand\toprow{\vcenter{\hbox{\includegraphics[height = 2.2ex]{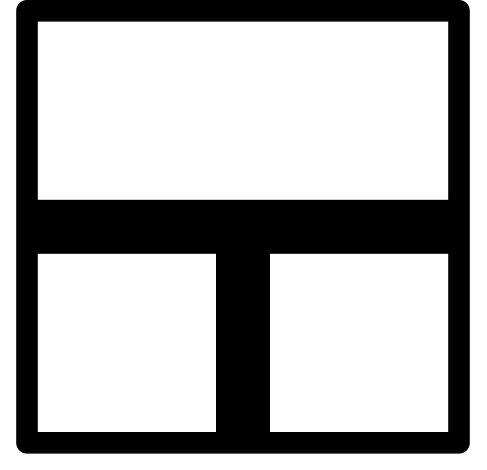}}}}
\newcommand\leftBr{\vcenter{\hbox{\includegraphics[height = 2.2ex]{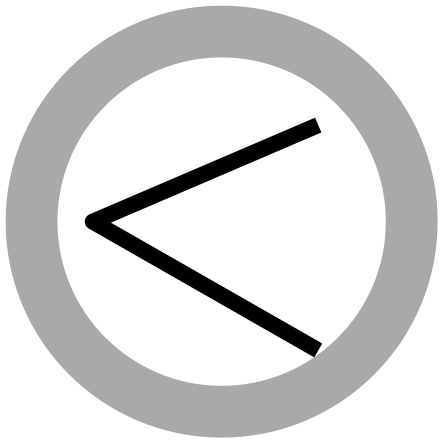}}}}
\newcommand\rightBr{\vcenter{\hbox{\includegraphics[height = 2.2ex]{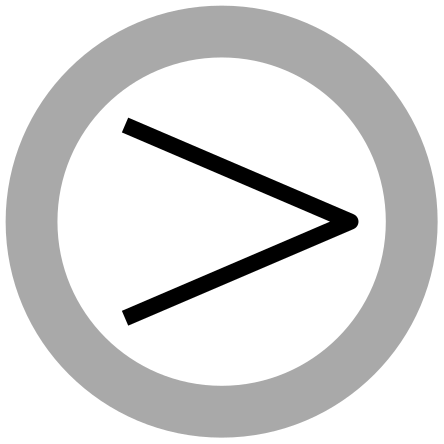}}}}
\newcommand\arrR{\vcenter{\hbox{\includegraphics[height = 2.2ex]{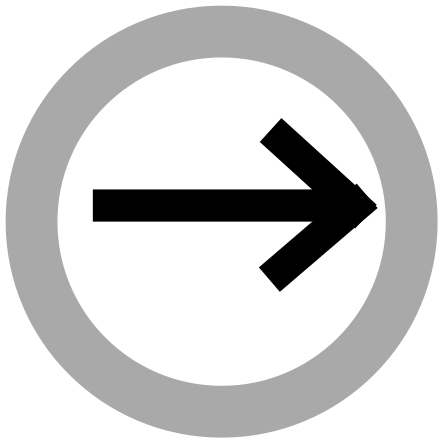}}}}
\newcommand\arrRone{\vcenter{\hbox{\includegraphics[height = 2.2ex]{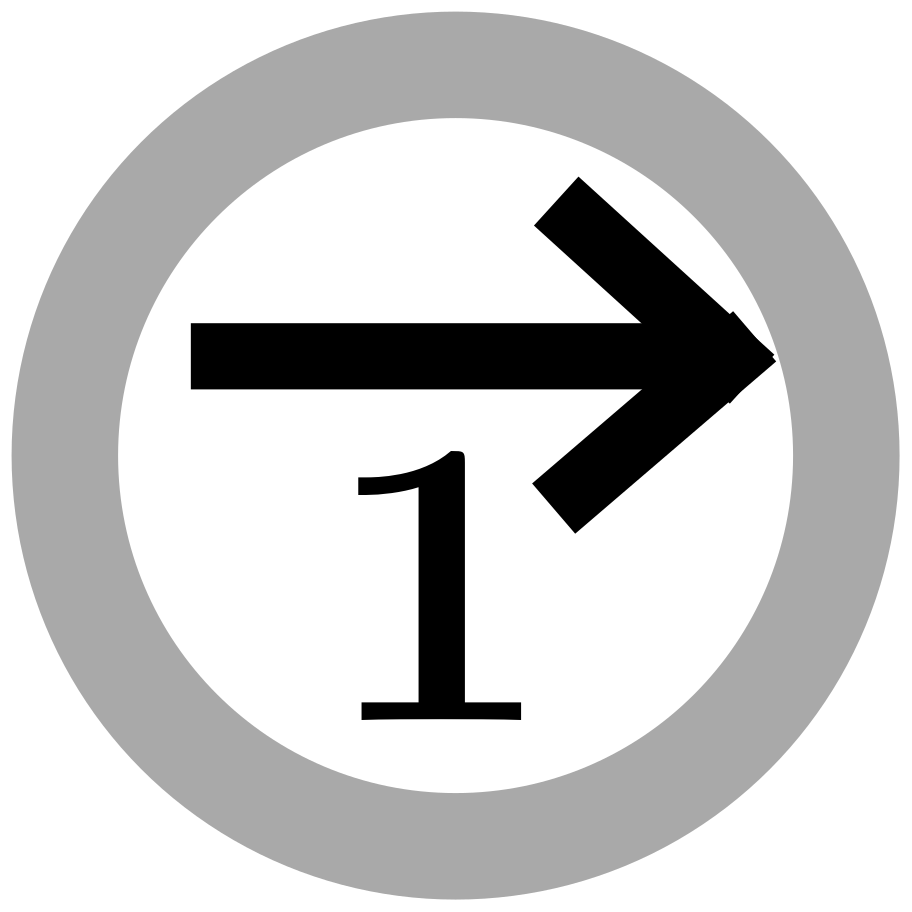}}}}
\newcommand\arrL{\vcenter{\hbox{\includegraphics[height = 2.2ex]{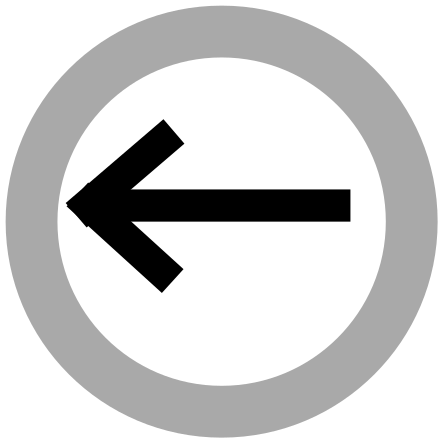}}}}
\newcommand\arrLtwo{\vcenter{\hbox{\includegraphics[height = 2.2ex]{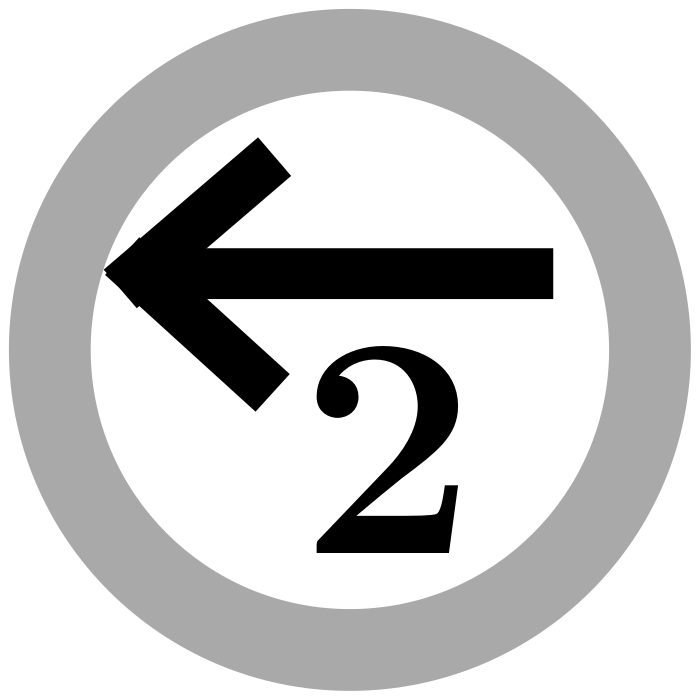}}}}
\newcommand\arrLfour{\vcenter{\hbox{\includegraphics[height = 2.2ex]{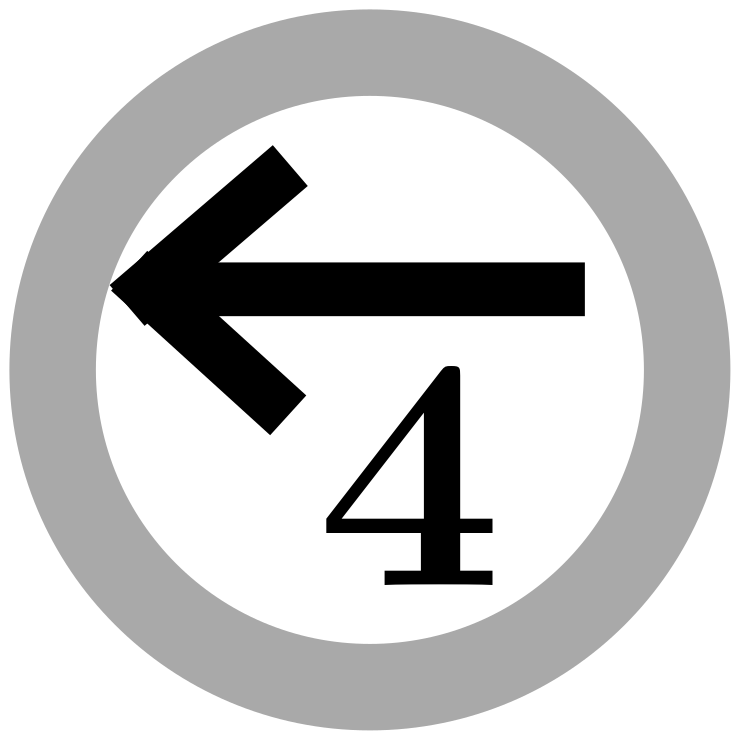}}}}
\newcommand\arrLfive{\vcenter{\hbox{\includegraphics[height = 2.2ex]{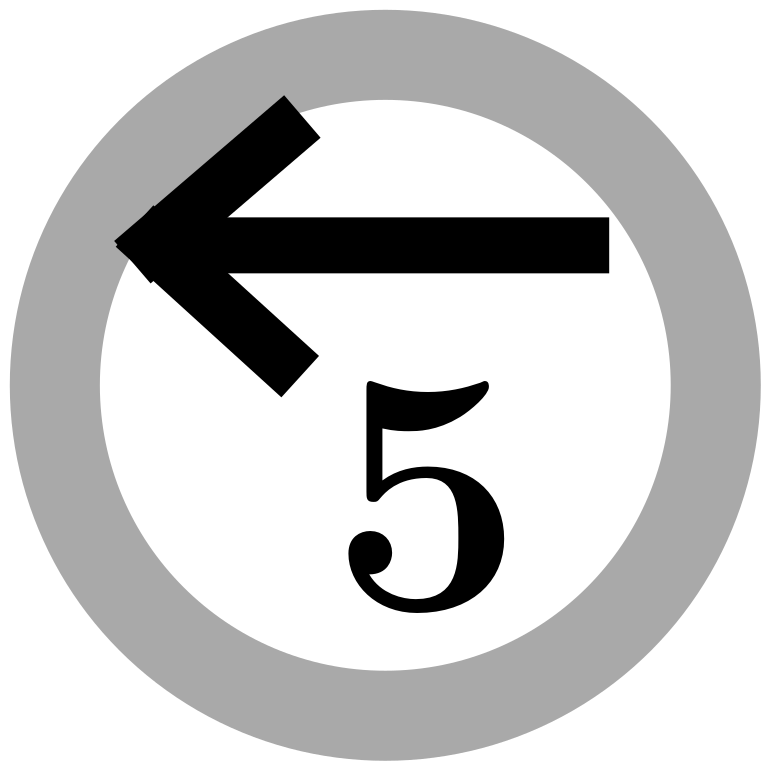}}}}
\newcommand\arrRi{\vcenter{\hbox{\includegraphics[height = 2.2ex]{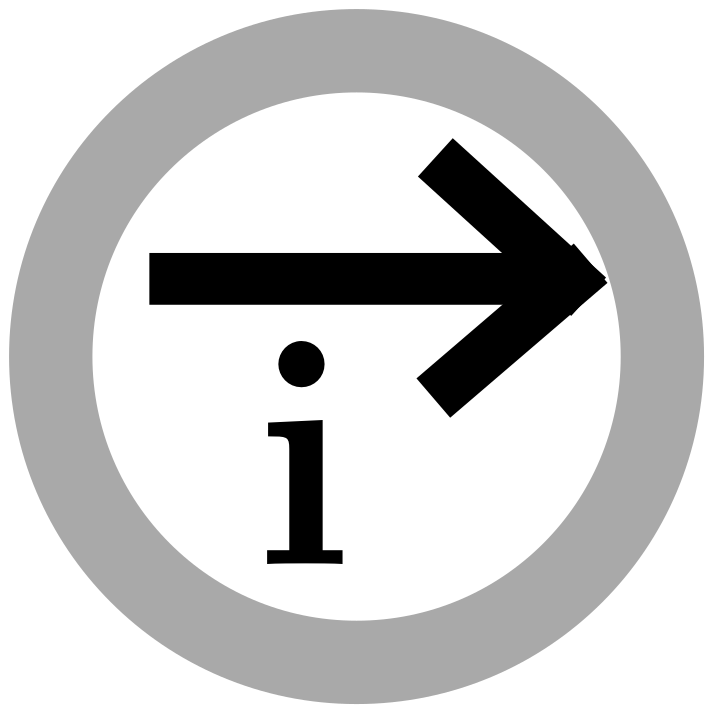}}}}
\newcommand\arrLi{\vcenter{\hbox{\includegraphics[height = 2.2ex]{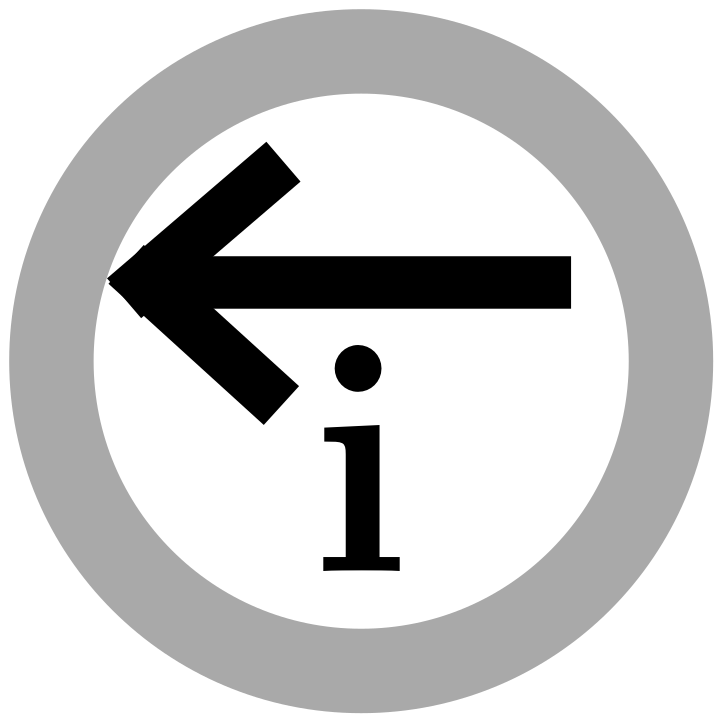}}}}
\newcommand\arrLone{\vcenter{\hbox{\includegraphics[height = 2.2ex]{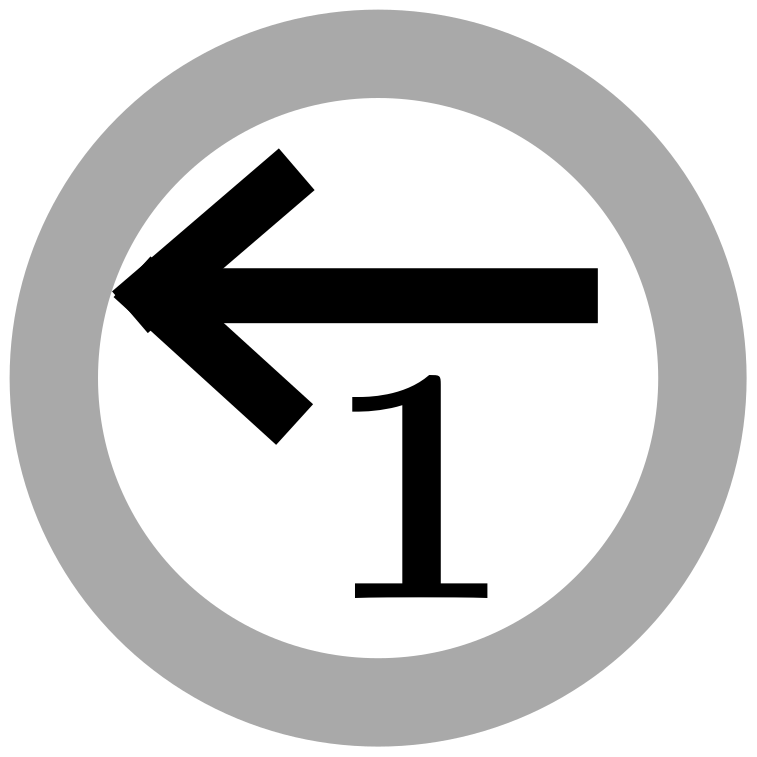}}}}
\newcommand\arrRtwo{\vcenter{\hbox{\includegraphics[height = 2.2ex]{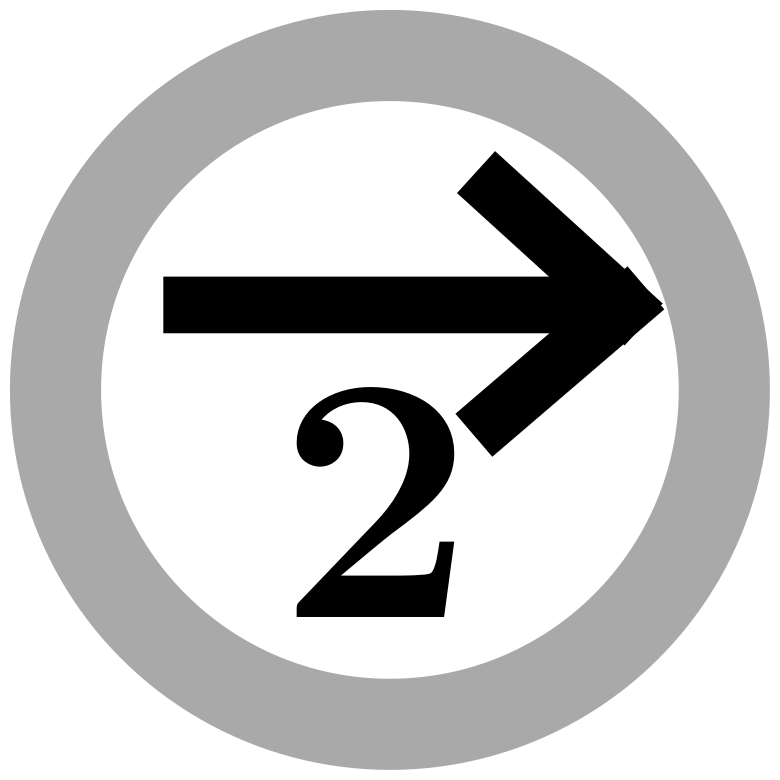}}}}
\newcommand\arrLthree{\vcenter{\hbox{\includegraphics[height = 2.2ex]{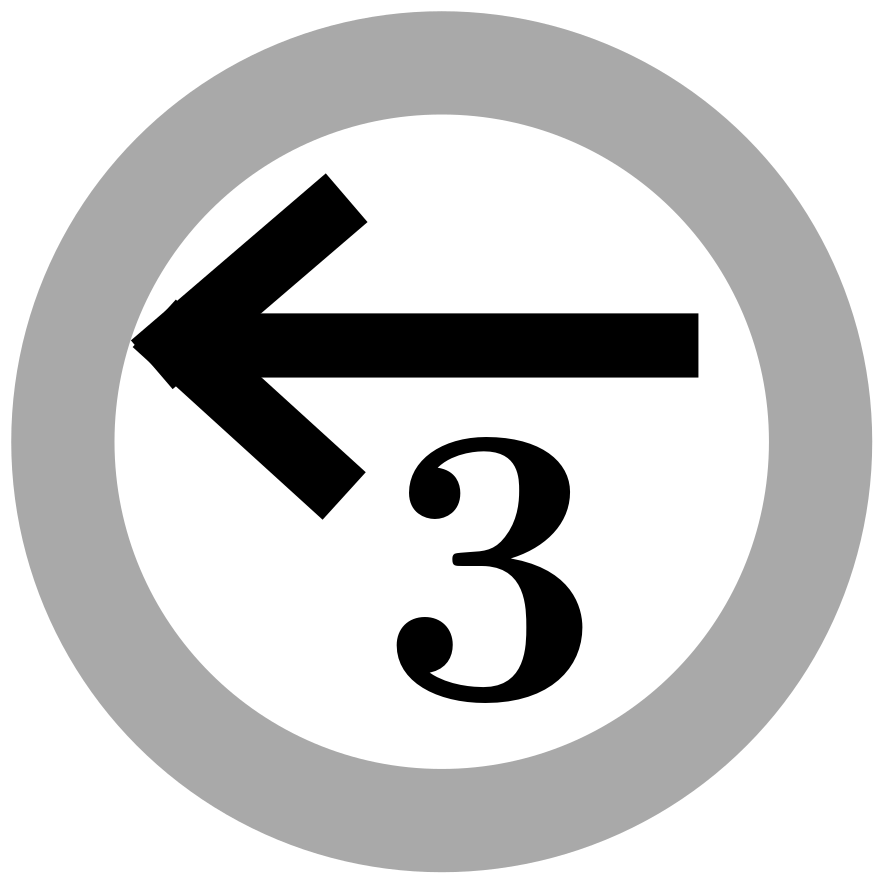}}}}
\newcommand\arrRthree{\vcenter{\hbox{\includegraphics[height = 2.2ex]{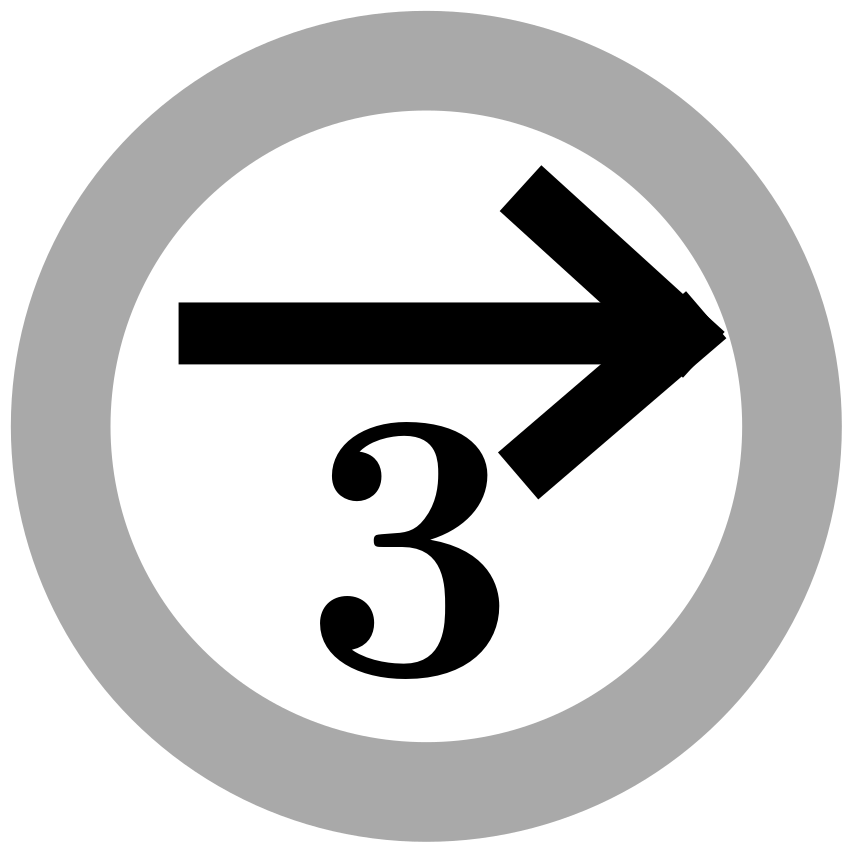}}}}
\newcommand\arrRfour{\vcenter{\hbox{\includegraphics[height = 2.2ex]{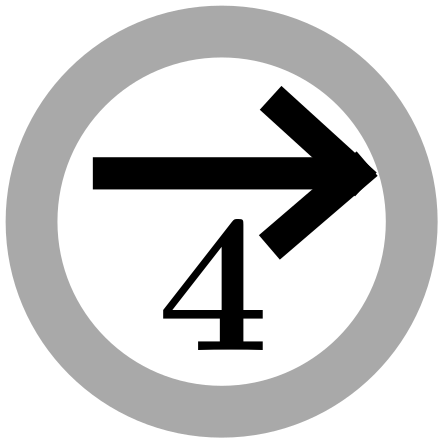}}}}
\newcommand\arrRfive{\vcenter{\hbox{\includegraphics[height = 2.2ex]{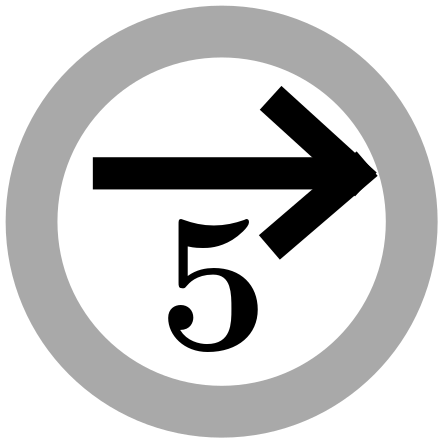}}}}
\newcommand\arrRsix{\vcenter{\hbox{\includegraphics[height = 2.2ex]{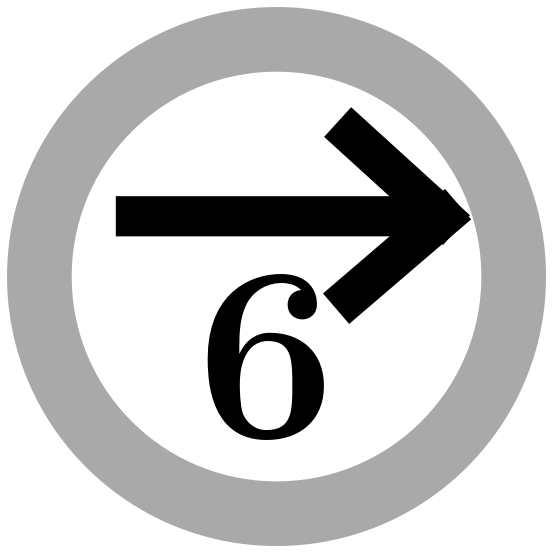}}}}
\newcommand\arrRseven{\vcenter{\hbox{\includegraphics[height = 2.2ex]{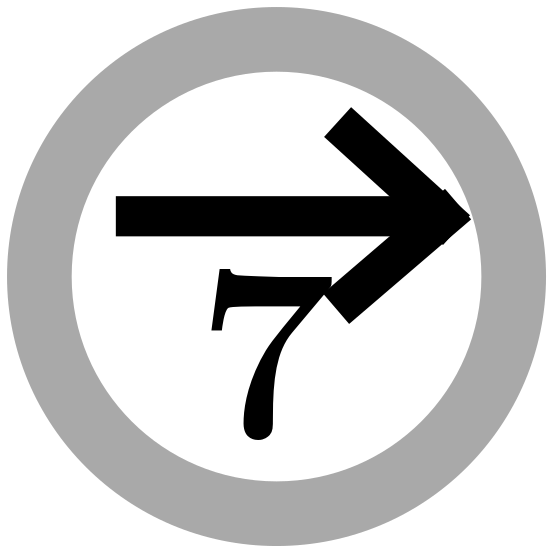}}}}
\newcommand\arrReight{\vcenter{\hbox{\includegraphics[height = 2.2ex]{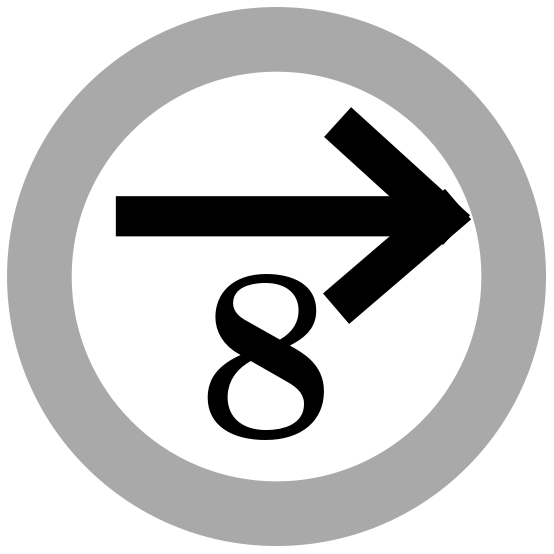}}}}
\newcommand\arrLsix{\vcenter{\hbox{\includegraphics[height = 2.2ex]{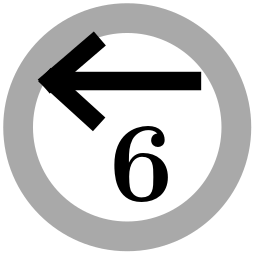}}}}
\newcommand\arrLseven{\vcenter{\hbox{\includegraphics[height = 2.2ex]{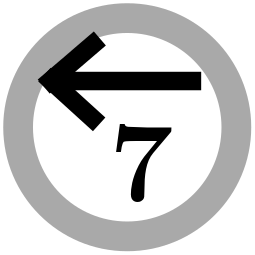}}}}
\newcommand\arrLeight{\vcenter{\hbox{\includegraphics[height = 2.2ex]{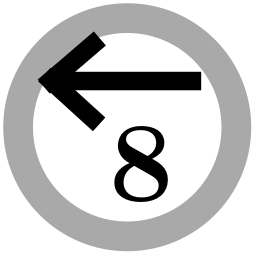}}}}
\newif\ifquantum
\author{
  Dorit Aharonov\thanks{{\tt dorit.aharonov@gmail.com}}
  \and
  Sandy Irani\thanks{{\tt irani@ics.uci.edu}}
}
\title{Hamiltonian Complexity in the Thermodynamic Limit}
\begin{document}
%\sloppy

\maketitle
\thispagestyle{empty}

\abstract{
Despite immense progress in quantum Hamiltonian complexity in the past decade, little is known about the computational complexity of quantum physics at the thermodynamic limit.
In fact, even defining the problem properly is not straight forward. We study  the complexity of estimating the ground energy of a fixed, translationally-invariant (TI) Hamiltonian in the thermodynamic limit, to within a given precision; 
this precision (given by $n$ the number of bits of the approximation) is the sole input to the problem. 
Understanding the complexity of this problem captures how difficult it is for a physicist to measure or 
compute another digit in the approximation of a physical quantity 
in the thermodynamic limit.  
We show that this problem is contained in $\fexpqmaexp$ and is
hard for $\fexpnexp$. This means that the problem is {\it doubly} exponentially hard in the size of the input. 

As an ingredient in our construction, we study the problem of computing
the ground energy of translationally invariant finite 1D chains.
A single Hamiltonian term, which is a fixed parameter of the problem, is applied to every pair of particles in a finite chain.
In the finite case, the length of the chain is the sole input to the problem and the task is to compute an approximation of the ground energy. No thresholds are provided as in the standard formulation of the local Hamiltonian problem.
We show that this problem is contained in $\fpqmaexp$ and is
hard for $\fpnexp$. 
Our techniques employ a circular clock structure in which the ground energy is calibrated by the length of the cycle. This requires more precise expressions for the ground states of the resulting matrices than was required for previous QMA-completeness constructions and even exact analytical bounds for the infinite case which we derive using techniques from spectral graph theory. To our knowledge, this is the first use of the circuit-to-Hamiltonian construction which shows hardness for a function class.
}

\pagebreak
\thispagestyle{empty}

\tableofcontents
\thispagestyle{empty}

%
% ======================================================================
%

\setcounter{page}{0}
\pagebreak
\section{Introduction} 
Kitaev's fundamental  QMA-completeness result \cite{KSV02} initiated the field of quantum Hamiltonian complexity \cite{Gharibian_2015}; however its setting is still very far from the problems that naturally arise in condensed matter physics. 
Subsequent work brought QMA-completeness results 
closer to physical settings by extending from the general geometry of Kitaev's local Hamiltonians to 2D \cite{OT05}, and even 1D \cite{Aharonov_2009} lattices;  
%or when the Hamiltonian is set in 1D with nearest neighbor interactions \cite{Aharonov_2009}.
Gottesman and Irani \cite{GI} showed that the 1D result holds even for translationally invariant (TI) systems, and even when 
the only input is $N$, the size of the system. 

Despite this important and fundamental progress,  such complexity results
are still  far from  capturing the primary challenges in computational many-body physics.
Physicists study finite systems by necessity,
but the problem which is typically of greatest interest is the estimation of physical quantities (energy density, two body correlations, etc.) in the {\it Thermodynamic limit} (TL)\footnote{This question was highlighted by I. Cirac in 
an online discussion in a SIMONS institute quantum workshop, 2017}. Their focus is not the estimation of physical quantities as a {\it function} of the size 
of the system, as in the QMA completeness results, but instead, they study a {\it particular} quantity of a Hamiltonian whose local terms are {\it fixed}, 
%(so the system is TI with input- independent terms) 
as  $N\longmapsto\infty$.

The breakthrough undecidability result for computing the spectral gap 
by Cubitt,  Perez-Garcia, Wolf
\cite{C15} and its  follow up \cite{Bausch_2020} provided 
the first study of the {\it computability} of  Hamiltonian problems in the TL. 
To the best of our knowledge, the
only existing result about  {\it computational complexity} in the TL, is that of  \cite{GI} who study this as a side result to their main finite case result
\footnote{
More specifically, the TL  problem studied in \cite{GI}  is parameterized by three polynomials, $r$, $p$ and $q$.
The input is an integer $N$  in binary representation and a Hamiltonian term $h$ acting on a pair of $d$-dimensional
particles whose entries are integer multiples of $1/r(N)$.
They show that it is a $\qmaexp$-complete problem
to determine whether the ground energy
density of the Hamiltonian resulting from applying $h$ to every pair of particles
in an infinite 1D chain 
is below $1/p(N)$ or above $1/p(N) + 1/q(N)$.}. 
However, importantly, \cite{GI} (as well as \cite{C15, Bausch_2020}) study the TL when the Hamiltonian term applied to each pair of particles is {\it input dependent}.
%It is worth asking how natural it is to consider input-dependent Hamiltonian problems, from a physics point of view. 
While the input-dependent Hamiltonian settings makes sense in the context of studying gappedness 
as a function of the Hamiltonian  parameters (as in \cite{C15, Bausch_2020}; see also the studies of {\it phase diagrams} for gappedness \cite{Bausch_2021,BravGoss15}), it seems much less justifiable in the common physical scenario of
approximating quantities of the ground state in the TL. In this context, physicists usually treat say, the AKLT model, as a different problem than, say, the Ising model. 

Here we initiate the study of computational complexity in the TL for fixed, input-independent physical systems.  
An immediate problem arises: it is not clear how to even define the  problem, namely, how to associate computational complexity to 
a problem whose 
quantity of interest is merely
a single, fixed number. Our natural approach is to have the input specify only the desired {\it precision} to which the number -- here the ground energy density in the TL -- is computed, in terms of number of bits. The required precision is the only input to the problem. In order to show computational hardness, we have 
 to show how to encode a full language into that single number.
 Note that this challenge does not arise in the case of input-dependent terms addressed in \cite{GI}, where 
 %each input $x$, specifies a different Hamiltonain $H_x$ on the infinite grid,
%which has it's own ground energy density. 
the reduction can encode the answer to a decision problem on input $x$ into the input-dependent ground energy of the Hamiltonian resulting from the TI term $h_x$.

\subsection{Problem Definition and statement of result}
We will formally define the problem for the 2D grid; there are 
natural extensions to 1D and higher dimensions. %since that is the version of the problem addressed in this paper. There are natural 1D and higher dimensional versions of the problem in which the number of Hamiltonian terms  scales with the dimension of the underlying grid.
The problem in 2D is parameterized by the dimension of each individual particle $d$, and two $d^2 \times d^2$ Hermetian matrices $h^{row}$ and $h^{col}$ denoting the energy interaction between two neighboring particles in the horizontal and vertical directions on a 2D grid. 
Then $H_{2D}(N)$ is the Hamiltonian of an $N \times N$ 2D grid of $d$-dimensional particles,
where the same fixed $d^2 \times d^2$ local terms, $h^{row}$ and  $h^{col}$, are applied to every pair in the horizontal and vertical directions.
%
%Then $H_{2D}(N)$ is the Hamiltonian of an $N \times N$ grid of $d$-dimensional particles in which $h^{row}$ is applied to each neighboring pair in the horizontal direction and $h^{col}$ is applied to each neighboring pair in the vertical direction. 
$\lambda_0(H_{2D}(N))$ is the ground energy of $H_{2D}(N)$. 
The ground energy density of the system in the 
TL is defined as the following limit (we prove this limit always exists in Section \ref{sec:existLimit}):
\begin{definition}
{\bf Energy density in the TL:} We define $\alpha_0$ as the limit
$\alpha_0  = \lim_{N \rightarrow \infty} \frac{\lambda_0(H_{2D}(N))}{N^2}.$
\end{definition}
The main problem we consider is to compute $\alpha_0$ to within a given precision specified by the input:
\begin{definition}
{\sc Function-GED-2D for $(h^{row}, h^{col})$}

{\bf Input:} An integer $n$ expressed in binary.

{\bf Output:} A number $\alpha$ such that $|\alpha - \alpha_0| \le 1/2^n$.
\end{definition}

 %A similar observation regarding the distinction between the two-threshold version of the local Hamiltonian problem, and the approximation version, was made earlier by Ambainis in 2014 \cite{Amb14} who introduced the problem APX-SIM, which is to approximate the value of a local observable applied to a ground state of a given Hamiltonian. APX-SIM, as well as the function version of local Hamiltonian studied here,
%are arguably  much better physically motivated computational problems than the standard decision version of local Hamiltonian in which the computational task is to determine  whether the ground energy is lower or higher than an interval defined by two thresholds provided in the input. 
%The natural complexity of this family of problems  are oracle classes, owing to the fact that an inherent challenge of computing a property of a ground state is to verify that the state being measured is in fact the true ground state.
%The upper bounds all require some form of binary search with queries to a $\qma$ or $\qmaexp$ oracle. 
%(See Section \ref{sec:related} for further discussion). 
At first sight, it might seem counter intuitive that this problem is hard as the ground energy density for a fixed Hamiltonian in the TL is just a single number.
A hardness result would  need to embed a hard computational problem for all  instances
into a {\it specific number} $\alpha_0$. We address this by exploiting the fact that the ground energy density in the TL is an infinite precision number, and we can use different portions of its binary representation to encode the solution to different instances of the problem from
which we are reducing.
Our main result is:
\begin{theorem}
\label{th:infinite}
The problem {\sc Function-GED-2D}  is hard for $\fexpnexp$ under Karp reductions, and is contained in 
$\fexpqmaexp$. 
\end{theorem}

A remark is due regarding our chosen definition of the problem {\sc Function-GED-2D}. While it would have been possible to consider a decision version of this problem, such as some variant of determining the $n^{th}$ bit of $\alpha_0$, we believe the function version more naturally describes the problem encountered in physics.
 %Moreover, computing the $n^{th}$ bit essentially requires computing bits $1$ through $n-1$, as \dnote{added to clarify:} given the first $n-1$ bits one can verify that the $n^{th}$ bit of the energy density of a particular state is $0$ or $1$, and the other bits are required to ensure that the state being measured is a low energy state.
%\snote{Perhaps just remove the last part of the sentence that begins with "as on can verify..."?}\dnote{is this OK now?}\snote{That doesn't sound quite right to me. 
%The next sentence is an alternative:}
Moreover, computing the $n^{th}$ bit essentially requires computing bits $1$ through $n-1$, as one can measure the  $n^{th}$ bit of the energy for a particular state, but in order to verify that the state being measured is close enough to the ground state, it seems 
necessary to verify that the first $n$ bits of the energy for the given state
correspond to the true ground energy density. 
%This makes the decision version of the problem close in nature to the approximation version defined above, and 
We believe the 
%technical difficulties encountered in this work, as well as the 
computational complexity of determining a particular bit of $\alpha_0$
would still be characterized by an oracle class, which constitutes the hardest challenge in our proof.
As in  \cite{C15,Bausch_2020}, 
the proof is based on Kitaev's circuit-to-Hamiltonian construction, and works by embedding finite 1D chains into the 2D infinite lattice using Robinson tiles. 
The problem for the infinite grid thus reduces to a problem for finite 1D chains.
%The result implies that the {\sc Function-GED-2D} problem is doubly exponentially hard in the input size. 
%The hardness proof requires reducing a function problem from
%$\fexpnexp$ into a Hamiltonian set on the
%infinite lattice.  
%As in the recent breakthrough works of \cite{C15,Bausch_2020}, 
%the proof works by 
%reducing to finite 1D chains, by embedding those into the infinite 2D lattice, using Robinson tiles.
However, the constructions in \cite{C15,Bausch_2020}  require 
a "two-threshold version" of the finite 1D problem; by this we mean the standard QMA-type problem, in which one needs to decide whether a quantity is larger than some threshold or smaller than another. %\cite{C15,Bausch_2020} 
Therefore \cite{C15,Bausch_2020} can directly apply  techniques from the finite case (the main result) of \cite{GI}, 
 which addresses this two-threshold setting in the 1D finite TI case. 
However, the hardness result given here (Theorem \ref{th:infinite}) requires a different type of finite 1D problem, where the task is to approximate the ground energy {\it to some given precision}, and no threshold is given. 

 To this end we define an approximation version of the 1D finite problem and characterize its complexity. Our results in the finite case pertain to 1D systems, but they can be naturally generalized for higher dimensions. In the 1D case, there is a single $d^2 \times d^2$ Hermitian  matrix $h$ which parameterizes the problem, and $H_{1D}(N)$ is the Hamiltonian resulting from applying $h$ to every pair of neighboring particles in a $1D$ chain of length $N$. 
The function version of the finite TI  Hamiltonian problem in 1D ({\sc Function-TIH-1D}) is defined as follows:
\begin{definition}
{\sc Function-TIH-1D for $h$ and constant $c$}

{\bf Input:} An integer $N$ expressed in binary.

{\bf Output:} A number $\lambda$ such that $|\lambda - \lambda_0(H_{1D}(N))| \le 1/N^c$.
\end{definition}
The following theorem encapsulates our results for Function-TIH:
\begin{theorem}
\label{th:finite}
Function-TIH-1D problem is contained in $\fpqmaexp$
and is hard for $\fpnexp$ under Karp reductions.
\end{theorem}

The proof of Theorem \ref{th:finite}, which comprises the main technical effort of this paper, requires strengthening the finite TI results of \cite{GI}\footnote{We note that \cite{Bausch_2017} improves on \cite{GI} by reducing the dimension of the particles significantly. However, critically, their Hamiltonian is not input-independent.}  to handle approximation problems rather than two-threshold type problems.
 Ambainis \cite{Amb14}, and later also \cite{GPY19, GY19}, studied a related class of problems of  approximating quantities of groundstates, called APX-SIM, which they argue are better physically motivated than the standard two-threshold type local Hamiltonian problems. 
%where the goal is to approximate the result of  measuring a quantity of the ground state of a Hamiltonian.
%\snote{The previous two sentences replace these two:
%Our result studies the hardness of the approximation version of that problem. Such approximation problems were studied by Ambainis \cite{Amb14} and later by \cite{GPY19, GY19} by the name of 
%APX-SIM: 
%}
%Watson, Bausch, and Gharibian \cite{WBG20} 
%In this type of problem, computing the $n^{th}$ bit essentially requires computing bits $1$ through $n-1$, leading to the oracle structure; 
In those approximation problems, as in the one studied here, the absence of a given threshold presents an inherent challenge: one needs to verify that the state being measured is close enough to the ground state. As a result, the upper bounds on the complexity of these problems  all require some form of binary search with queries to a $\qma$ or $\qmaexp$ oracle \cite{Amb14}.
Indeed 
 the natural complexity classes for such problems are {\it oracular} ones. 
%decision version of local Hamiltonian in which the computational task is to determine  whether the ground energy is lower or higher than an interval defined by two thresholds provided in the input. 
This oracular structure poses a technical challenge for lower bounds (assuming the lower bounds are proven in the stronger setting of Karp reductions), due to the fact that "no" answers from the oracle cannot be verified. 
To overcome this challenge, we make use of 
a technique pioneered by Krentel \cite{Krentel,Papa}
who proved that the optimization version of certain NP-hard problems are complete for $\fpnp$.
Our main technical contribution 
is implementing a TI version of Krentel's technique with sufficiently precise ground energy estimations as needed for the TL case. We 
elaborate on how we do this in Subsection
\ref{sec:proofoverviewfinite}.  

%Krentel's cost function was also used for the finite quantum constructions in \cite{Amb14, GPY19, GY19, WBG20}; we discuss how this compares to the way we  implement the technique in Subsection \ref{sec:related}. 
%although our implementation of the cost function is completely different.
%\snote{I pulled this sentence over from the proof overview which we replaced with the QIP one. I feel like we should say a bit more about the TI construction in \cite{WBG20} because they also implement Krentel's scheme in a TI way - with caveats. }\dnote{WE could add more, but maybe we put it in the related work?}

We remark regarding the exponential difference in complexity between the finite and the infinite case (Theorems \ref{th:finite} and \ref{th:infinite}). Roughly, the ground energy density in the TL in 2D can be estimated to within $\pm 1/N$
by solving a a finite grid of size $O(N)$ by $O(N)$ as shown in Lemma \ref{lem:gedbound}.
Since the input $n$ to the FUNCTION-GED-2D problem requires precision $1/2^n$ and is specified using $\log n$ bits, the complexity is doubly exponential in the input size.
In the finite case, the size of the system itself ($N$) is given in binary, so the complexity is only singly exponential.
%A comment is warranted here on the exponential difference between the finite and the infinite case. Roughly, the GED in the TL can be estimated to within $\pm 1/N$
%by solving a system of size $poly(N)$. For example, our proof that the thermodynamic limit always exists (Lemma \ref{lem:gedbound}) shows that the GED in the thermodynamic limit can be estimated to within $\pm 1/N$
%by solving a finite grid of size $O(N^2)$ by $O(N^2)$. 
%The input to GED is a number $n$ which asks for a precision of $1/2^n$ (and therefore a system of size $poly(2^n)$). Since the input is specified using $\log n$ bits, the complexity is doubly exponential in the size of the input.
%In the finite case, the size of the system itself ($N$) is given in binary, so the complexity is only singly exponential.
From an expressibility perspective, in the finite case, every value of $N$ can be used to encode the solution to an instance of the problem from which we are reducing.
By contrast, in the infinite case, the most efficient reduction we can hope for is where each bit of $\alpha_0$ encodes an answer to a computational problem. In this case, the system size has to double for each input encoded.

%This result clarified that, perhaps as expected, the task of the physicist, so to speak, is {\it exponentially} hard, in two different ways; 
%first, due to the need to find the groundvalue, 
%we expect the hardness to be non-deterministic. 
%However, the relevant non-determinism 
%is for witnesses of exponential size, due 
%to the translational invariance. 

\subsection{Proof Overview and Techniques: the Finite Case} \label{sec:proofoverviewfinite}

We start by giving an overview of the proof of Theorem \ref{th:finite} for computing the ground energy of finite 1D TI Hamiltonians. More details and references to the lemmas in the paper are given in Subsubsections \ref{sec:krenteloverview}-
\ref{sec:penalty}.  The
finite construction 
is used in the infinite case (Theorem
\ref{th:infinite}) as 1D finite Hamiltonians are layered on top of
a Robinson tiling of the infinite grid. An overview of the infinite case is given in Subsection \ref{sec:proofoverviewinf}.

We now consider
an arbitrary $f \in \fpnexp$, and describe how the reduction to  Function-TIH-1D  works. 
First, $f$ is associated with a {\em fixed} Hamiltonian term $h$ that operates on two $d$-dimensional particles. 
Let $H_N$ denote the Hamiltonian on a chain of $N$ $d$-dimensional particles resulting from applying $h$ to each neighboring pair in the chain. The reduction maps an input string $x$ of length $n$ for the function $f$, 
to a positive integer $N = N(x)$ such that $N(x)$ can be computed in time polynomial in $n$.
We will show that %(for any $f\in \fpnexp$)
there is a polynomial $q$ and a polynomial time classical algorithm, which for any $x$ 
can compute $f(x)$ given a $1/q(n)$-approximation of
$\lambda_0 (H_N)$ (namely a value $E$, where $|E - \lambda_0 (H_N)| \le 1/q(n)$).

%As is typical in hardness results for finding ground state energies, we use Kitaev's circuit-to-Hamiltonian construction \cite{KSV02}. 
In the circuit-to-Hamiltonian construction, the Hilbert space of the entire chain consists of a Hilbert space which encodes the state
of the computation, tensored with a Hilbert space that contains a clock which regulates the process of transitioning from one configuration to another. In the absence of any additional penalty terms, the ground energy is $0$ and is achieved by
the state that is a uniform superposition of all states in the computation, entangled with the clock state for that point in time:
$\sum_{t=0}^L \ket{t} \ket{\phi_t(\text{init})}$. The
$\ket{t}$ denotes the state of the clock and $\ket{\phi_t(\text{init})}$ denotes the state of the computation
after starting in state $\ket{\text{init}}$ and progressing for $t$ clock steps. 
%The construction  specifies a 
In our construction. the ground state of the Hamiltonian 
encodes the history of a computation which simulates the  polynomial time Turing Machine that computes the function $f$,
with access to a $\nexp$ oracle.
%Each query to the oracle is answered with
%a single bit denoting the oracle response, and 
%The simulation must have the property
%that the lowest energy state corresponds to the string of correct oracle responses.
%However, only
%the {\em yes} responses can be checked by simulating the corresponding verifier for the
%$\nexp$ oracle language. The {\em no} instances cannot be checked this way. 
%We address the problem by using  a technique pioneered by Krentel \cite{Krentel} (see also \cite{Papa}) 
%who proved that the  optimization version of certain NP-hard problems are complete for $\fpnp$.

As mentioned above, the oracle calls pose a challenge in the circuit-to-Hamiltonian construction: 
the {\em no} guesses of the oracle responses cannot be verified. 
To overcome this, we use Krentel's accounting scheme \cite{Krentel,Papa} that applies a cost to every string $y$ representing guesses for the sequence of responses to all the oracle queries made. The accounting scheme needs to ensure that the minimum cost $y$ is equal to the correct sequence of oracle responses, $\tilde{y}$. 
{\em yes} and {\em no} guesses are treated differently, due to the fact that   
the verifier can check {\em yes} instances (and thus incorrect {\em yes} guesses can incur a very high cost), but {\em no} guesses, cannot be directly verified. In Krentel's scheme, {\em no} 
guesses 
incur a more modest cost, whether correct or not, and their cost 
must 
decrease exponentially. This is because the oracle queries are adaptive;  
an incorrect oracle response could potentially change all the oracle queries made in the future
and so it is important that the penalty for an incorrect guess on the $i^{th}$ query is higher than the energy
that could  potentially be saved on all future queries.
%In the constructions of \cite{ambainis2014}, \cite{GY19}, \cite{GPY19}, the full Hamiltonian is given as part of the input and can therefore encode the charging scheme for different oracle guesses directly into the Hamiltonian. 
Thus,  the costs 
%cannot be of finite constant size  because they decrease exponentially with each oracle query and therefore have to
range
from a constant to exponential in $m$, where $m$ is the number of oracle calls; our main challenge is that 
a fixed translationally-invariant Hamiltonian cannot directly 
encode these costs in penalty terms.  
%coefficient of a penalty term 
%then is not
%an option in the translationally-invariant setting in which the Hamiltonian term is a fixed constant,  independent of the input to the problem.

We address this issue as follows. The computation consists of repeated loops, each of which lasts $p(N)$ steps. In each repetition the computation simulates the verifier for all the {\em yes} oracle guesses and imposes an energy cost if any of these computations rejects (i.e., the {\em yes} guess was wrong). 
We define a function $T(x,y)$ (discussed below) and enforce that the number of loop  repetitions is $2T(x,y)+1$,  making the total length of the computation 
$L=(2T(x,y)+1)\cdot p(N)$ steps. The clock is  circular, making eigenvalues analysis easier, and hence we refer to the entire computation as a {\em cycle}. The Hamiltonian is thus block diagonal in those cycles, where each cycle, or block, corresponds to a computation initiated with different input parameters:
 the guess string
$y$,  a guess $T$ (in unary) of $T(x,y)$, a string $w$ corresponding to the witnesses needed for verifying the {\em yes} guesses of the oracle, and finally the initial configuration for the computation, which we denote by $\vinit$. 
The ground energy of the Hamiltonian is the minimum over the ground values of these blocks. 

To achieve the large penalty for incorrect {\em yes} guesses, we introduce a penalty for each  verification 
computation that ends with {\sc reject}.  This 
results in a periodic cost occurring once per iteration. We use spectral graph theory
%in Lemma \ref{lem:periodicEig} 
to obtain a close-to-tight lower bound on the lowest eigenvalues of the relevant blocks, which are Laplacians with periodic $+1$'s on the diagonal; We show that the ground energy for blocks with incorrect {\em yes} guesses would then behave inversely with the square of the {\em period}'s length, and will thus be respectively large.  

To penalize incorrect {\em no} guesses, we introduce two consecutive $+1/2$ penalty terms on the diagonal for {\it every} computation (even correct ones).
We show 
%in Lemma \ref{lem:cycleEig}
that the ground energy of the resulting block (where all {\em yes} guesses are correct) is exactly \begin{equation}\label{eq:cosinus}
    1 - \cos(\pi/(L+1)),
    \end{equation} which is a function of $L$. 
Our idea is that we can vary the {\it length of the computation} $L$ to control the ground energy, in order to implement the energy penalty for the {\em no} guesses. 
This is achieved by defining the following function $T(x,y)$, for a given input $x$ and oracle guess sequence 
$y$:
%and thus
%Thus by varying the {\it length of a computation} we can control the ground energy.
%diagonal as a means of controlling %the ground energy, so that the ground state corresponds to the string of correct oracle guesses. %The correct length is computed by a Turing Machine with a finite set of rules that is embedded into the Hamiltonian term. 
%here $x$ is the input to the computation, $y$ is a string of guesses for the oracle responses, and
%\vspace{-.05in}
\begin{equation}
    \label{eq:Txy}
    T(x,y) = f(x,y) + 2^m \cdot 4^{m+1} + 2^m \cdot \sum_{j=1}^m y_j \cdot 4^{m-j+1}
    \end{equation}
%The computation embedded in the ground state consists of 
%The length of the clock cycle is $L = p \cdot T(x, y)$, where $p$ is the number of steps in a single iteration. 
%\vspace{-.05in}
Note that a {\em yes} guess ($y_j = 1$) increases the value of $T$ and in turn of $L$, and thus decreases the lowest eigenvalue. Therefore {\em no} guesses have an implicit cost. The function $T$ has the required exponential structure so that
the cost of a {\em no} guess decreases exponentially with each query.
This fact, along with the fact that incorrect {\em yes} guesses incur a very high periodic cost,
guarantees that the smallest eigenvalue will correspond to a block with $T(x, \tilde{y})$, where $\tilde{y}$ is the string of correct oracle responses. 
%as shown in Lemma \ref{lem:groundenergy}.

The function $f(x,y)$ in the definition of $T(x,y)$ is the outcome of the computation on input $x$, when string $y$ is used for the oracle responses made by the Turing Machine computing $f$. If $\tilde{y}$ is the set of correct oracle responses, then $f(x) = f(x, \tilde{y})$. 
We show in Theorem \ref{th:finitehardness} that if the ground energy $(1 - \cos(\pi/(L+1)))$ can be computed to a $1/\mbox{poly}$ precision for a sufficiently high degree polynomial, then the value of $L = p(N)\cdot (2T(x, \tilde{y})+1)$ can be recovered, from which $T(x,\tilde{y})$ can be recovered.
Here we are assuming that $f(x,y)$ is of length at most $m$; this can be guaranteed using a standard padding argument (see Lemma \ref{lem:pad}). Thus, as can be seen from the expression for 
$T(x,y)$ given in Equation (\ref{eq:Txy}),  $f(x,\tilde{y})$
%which is the correct value of the function $f(x)$ from which we are reducing 
is just the low order bits of $T(x,\tilde{y})$. 
%We show in Theorem \ref{th:finitehardness} that
%given a sufficiently accurate estimate of the ground energy $(1 - \cos(\pi/(L+1)))$, an efficient classical calculation can then determine 
%the length of the cycle, $L  = p \cdot T(x, \tilde{y})$ 
%\dnote{a remark mainly to myself: there is something a little confusing here. Here T comes with a tilde, but earlier in equation 1 we had just T. I understand - but this comes from confusing the terms of ground energy for global and global...maybe say somewhere earlier more explicitly that the ground energy of the entire Hamiltonian is indeed the lowest eigenvalue of the matrix corresponding to the correct guesses. It is said but not exactly explicitly}, 
%and then 
%derive $f(x)$ from it 

The Hilbert space of each particle in our construction 
is a tensor product of $6$ different spaces which are used to form $6$ tracks. 
There are also two additional particle 
states $\leftBr$ and $\rightBr$. We add constraints that enforce the condition that
for a low energy state, the leftmost particle must be in state
$\leftBr$, the right particle must be in state $\rightBr$
and the particles in between are not in states $\leftBr$ or $\rightBr$.
Three of the tracks are {\em clock} tracks which contain the state of the clock.
The other tracks are {\em computation} tracks that encode the configuration of a Turing Machine computation. 
An example is given in Figure \ref{fig:tracks}. The symbols in the figure above represent standard basis states for 
each portion of the Hilbert space.
\begin{figure}
    \centering
  \includegraphics[width=4.7in]{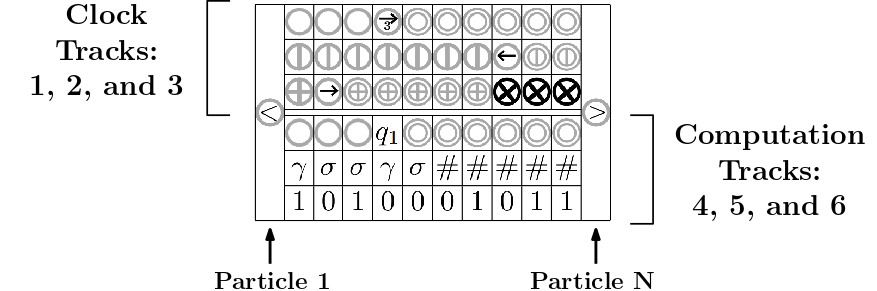}
%    \begin{small}
%\begin{tabular}{|@{}c@{}|@{}c@{}|@{}c@{}|}
%\hline
%$\leftBr$ &
%\begin{tabular}{@{}c@{}|@{}c@{}|@{}c@{}|@{}c@{}|@{}c@{}|@{}c@{}|@{}c@{}|@{}c@{}|@{}c@{}|@{}c@{}}
%$\blank$ &  $\blank$ & $\blank$ & $\arrRthree$ & $\Dblank$ & $\Dblank$ & $\Dblank$ & $\Dblank$ & $\Dblank$ & $\Dblank$ \\
%\hline
%$\blankLtwo$ &  $\blankLtwo$ & $\blankLtwo$ &  $\blankLtwo$ & $\blankLtwo$ & $\blankLtwo$ & $\blankLtwo$ & $\arrL$ & $\DblankLtwo$ & $\DblankLtwo$ \\
%\hline
%$\blankLthree$ &  $\arrR$ & $\DblankLthree$ &  $\DblankLthree$ & $\DblankLthree$ & $\DblankLthree$ & $\DblankLthree$ & $\dead$ & $\dead$ & $\dead$ \\
%\hline
%\hline
%$\blank$ &  $\blank$ & $\blank$ & $q_1$ & $\Dblank$ & $\Dblank$ & $\Dblank$ & $\Dblank$ & $\Dblank$ & $\Dblank$ \\
%\hline
%$\gamma$ &  $\sigma$ & $\sigma$ &  $\gamma$ & $\sigma$ & $\#$ & $\#$ & $\#$ & $\#$ & $\#$ \\
%\hline
%$1$ &  $0$ & $1$ &  $0$ & $0$ & $0$ & $1$ & $0$ & $1$ & $1$
%\end{tabular}
%& $\rightBr$ \\
%\hline
%\end{tabular}
%\end{small}
    \caption{Sample configurations for the clock and computation tracks.
The computation tracks represent the configuration of a Turing Machine with the top track representing the state and location of the head, the second track representing the contents of the work tape, and the third track representing a read-only witness string. }
    \label{fig:tracks}
\end{figure}

In the remainder of this subsection, we provide more details on how the construction is implemented.

\subsubsection{The  Clock}\label{sec:krenteloverview}

In this subsubsection we focus on the Hamiltonian restricted to the subspace of the clock.
The clock used in this construction is a more elaborate version of the clock used in \cite{GI}.
Track $1$ has exactly one pointer that shuttles back and forth between the left and right ends of the chain according to the transition rules. The $\arrR$ pointer moves to the right and the $\arrL$ pointer moves to the left.  The Track $1$ pointers come in $8$ different varieties and are labeled with tags in the range from $1$ through $8$. There are two pointers 
with the  label $i$:  $\arrLi$ and $\arrRi$.
The different types of  Track $1$ pointers act as a means of program control as they trigger different operations on the three computation tracks (Tracks $4$, $5$, and $6$). 

In each iteration for Tracks $1$ and $2$, two different Turing Machines (described in the next subsection) are   run sequentially, each for $N-2$ steps. 
The Track $1$  pointer acts as a second hand for the clock.
In a round trip of a Track $1$ pointer, a single step of the Turing Machine  is executed
on the computation tracks.
The 
number of Turing Machine steps in the iteration  is regulated by the pointer in Track $2$ which is advanced by one location as the Track $1$ pointer sweeps by. Thus, the Track $2$ pointer acts as a minute hand for the clock. 
Some of the Track $1$ pointers act as the identity on the computation tracks and are used to check certain local conditions and impose an energy penalty if those conditions are not met.

We classify configurations for Tracks $1$ and $2$ as either {\em correct} or {\em incorrect} and show in Lemma \ref{lem:clockGraph2} (and see also Lemma \ref{lem:lbpath})
that the incorrect clock configurations will never be in the support of the ground state. 
There are exactly $p(N) = 4(N-2)(2N-3)$ correct clock configurations for Tracks $1$ and $2$.
We show in Lemma \ref{lem:clockGraph1} that starting from any correct clock configuration for Track $1$ and $2$, 
if the transition rules are applied $p(N)$ times, then
all the correct clock configurations for Tracks $1$ and $2$ are reached and we return to the initial configuration. We will refer to a sequence of $p(N)$ clock steps as a {\em iteration}.

The third clock track acts like an hour hand in that the pointer on Track $3$ moves by one location
each time an iteration is completed.
%Tracks $1$ and $2$ (together)\dnote{added} complete an iteration.
Track $3$ is used  to hold a {\it timer}, that will count the number of iterations. 
 The number of $\DblankLthree$ or $\blankLthree$ particles, denoted by $T$, is called the {\em timer length} for Track $3$, and the transition rules never alter the length of the timer. If the timer length is $T$, the iteration will be repeated $2T+1$ times, to form a {\em clock cycle}. This is done as follows.  

The Track $3$ pointer shuttles back and forth, as do the pointers on Tracks $1$ and $2$. However, the right-moving Track $3$ pointer will turn around when it reaches the  the left-most $\dead$ particle. This provides a way to control the number of iterations of Tracks $1$ and $2$ in each clock cycle. 
In the sample configuration shown in Figure \ref{fig:tracks}, the Track $3$ timer has length $6$. If the length of the timer is $T$, then Track $3$ transitions through
$2T+1$ configurations before repeating. Referring again to the example in Figure \ref{fig:tracks}, since the timer has length $6$, there will be 
$13$ iterations of Tracks $1$  and $2$ for each clock cycle.
Starting in a correct clock configuration in which Track $3$ has a timer length $T$,
after $(2T+1) \cdot p(N)$ time steps, the clock configuration will return to its original state completing a clock cycle. We note that 
in referring to a {\em clock configuration} without specifically restricting to a subset of the tracks, we are referring to a configuration for all three clock tracks. 
%A clock {\em cycle} refers to a sequence of clock configurations (for all three tracks) that repeats after 
%$(2T+1) \cdot p(N)$ clock steps. 

Lemmas \ref{lem:countcorrect} and \ref{lem:lbpath} together show that the space spanned by incorrect clock configurations have high energy. 
In the absence of the computation tracks, we show in Lemma \ref{lem:correctclock} that
the Hamiltonian restricted to correct clock configurations is block diagonal, with one block corresponding to each possible timer length $T$. The block corresponding to timer length $T$ is $1/2$ times the Laplacian of a cycle graph with $(2T+1) \cdot p(N)$ vertices. 
When the Hilbert space for the computation tracks is tensored with the Hilbert space for the clock, the state space expands but the Hamiltonian retains this block diagonal structure. In the expanded space that includes both the computation and the clock space,  every block is parametrized by its timer length $T$ in addition to the
initial configuration of the Turing Machine, denoted  by $\vinit$ (describing the state of Tracks $4$ and $5$)
as well as the read-only witness string on Track $6$. 
It will be convenient to divide the witness string into a string $y$ used as the guesses for the responses to the oracle queries and $w$ for the witness string used in simulating the verifier for the calls made to the oracle in which the guess is {\em yes} (for each such {\em yes} guess there will be a different witness string $w_i$).  Thus, every block is parameterized by the $4$-tuple $(T, \vinit, y, w)$. Penalty terms described in Subsection \ref{sec:penalty}
will be added in to ensure
that blocks corresponding to incorrect parameters have a high ground energy.
First we describe in Subsubsection \ref{sec:compoverview} the computation performed in each iteration.

\subsubsection{Overview of the Computation Embedded in the Hamiltonian}
\label{sec:compoverview}

Here we review the computation used in the construction. This computation is repeated $2T+1$ times.

%{~}
\vspace{.08in}

\noindent
{\bf Stage $1$ - Binary Counter Turing Machine:}
A Turing Machine that increments a binary counter called $M_{BC}$ is run for $(N-2)$ TM steps, where $N$ is the length of the chain.  If the starting configuration has the string $"1"$ on the work tape, then after $(N-2)$ TM steps, there will be some string $x$ on the work tape representing the number of increment operations performed. $x$ is then used as the input to the next stage of the computation.

%{~}
\vspace{.08in}

The idea of using a binary counter TM to translate the length of the 1D chain into an input string $x$ was used in \cite{GI} as well.
Define $N(x)$ to be  the number such that after $(N(x)-2)$ TM steps, the binary counter TM completes an increment operation with $x$ on the work tape. The function mapping $x$ to
$N(x)$ is the reduction. The complete specification of the binary counter Turing Machine as well as an explicit formula for $N(x)$ is given in Section \ref{sec:MBC}.

%{~}
\vspace{.08in}

\noindent 
{\bf Stage $2$ - Timer and Verification Turing Machine:}
In the second stage of the computation, a Turing Machine called $M_{TV}$
simulates the verifier for oracle responses for which the guess response is  {\em yes}, 
and then computes the function $T(x,y)$ shown in Equation (\ref{eq:Txy}).
The first $m$ bits of the witness track
are used as the guess $y$ for the oracle responses and the remaining bits $w$ are used as witnesses for the verifier as needed. If $f \in \fpnexp$, then $f$ is computed by a polynomial time Turing Machine $M$ with access to an oracle for language $L \in \nexp$. The exponential time verifier for $L$ is called $V$.
First $M$ is simulated on input $x$ using the oracle responses given by $y$. This completely determines the oracle inputs $x_1, \ldots, x_m$. If $y$ guesses that an oracle response is {\em yes} ($y_i = 1$) then 
$V$ is simulated on input $x_i$ using bits from the witness $w$ from the witness track. If any of the 
oracle computations that correspond to {\em yes} guesses are rejecting, a character is placed on the work tape that will trigger a penalty term at the second checking phase, which occurs at the end of the computation (as explained in Subsection \ref{sec:penalty}). 
Next the function $T(x,y)$ is computed and written in unary on Track $3$ so  that it can be checked (again, in the second checking phase)  against the timer used to regulate the number of iterations of the clock. 
We argue in Lemma \ref{lem:existsMTV}  that this second stage of the computation  can also be computed in at most $(N-2)$ TM steps. 

%{~}
\vspace{.08in}

After both stages, the entire process is reversed, so that the configuration of the computation tracks returns to its initial state with the string $1$ on the work tape. The pseudo-code for the computational process executed in the second $N-2$ TM steps is given in Figure \ref{fig:MTVpseudo}. The schedule for a single iteration is:
%\vspace{.03in}
\begin{center}
\begin{tabular}{ll}
  1)~ Checking Phase $1$   &  4) ~Checking Phase $2$\\
 2) ~$N-2$ forward steps of $M_{BC}$  ~~~~~~  & 5)~ $N-2$ reverse steps of $M_{TV}$\\
3)~ $N-2$ forward steps of $M_{TV}$    & 6) ~$N-2$ reverse steps of $M_{BC}$
\end{tabular}
\end{center}
%\vspace{.03in}

The checking phases and how they are used to implement the penalities are described in the next subsection (Subsubsection \ref{sec:penalty}). Each checking phase is implemented by
a single round trip of a pointer on Track $1$ which acts as the identity on the computation tracks
and triggers a penalty if certain local constraints are not met. Recall that the Track $1$ pointers come with $8$ different labels which act as a means of program control and are used to create the schedule given above.

\subsubsection{Penalty Terms and Lowest Eigenvalues}
\label{sec:penalty}

%We now explain the checking phases the conditions they impose; as well as how those 
%translate to lower bounds on the energy eigenvalues. 
We now explain the checking phases, the conditions they impose, and how those conditions determine the lowest energy eigenvalue. 
If any of the conditions enforced in the checking phases are violated, that will trigger a penalty that occurs every $p(N)$ clock steps for a total of $2T+1$ times during one full iteration of the cycle of length $(2T+1) \cdot p(N)$. This will result in a block matrix corresponding to a cycle of length $L = (2T+1) \cdot p(N)$ with an additional $+1$ every $p(N)$ locations along the cycle. We use techniques from spectral graph theory to 
exploit the exact periodic structure of the penalty terms to 
show in Lemma \ref{lem:periodicEig} that the lowest energy eigenvalue of this matrix is at least
\begin{equation}
\frac 1 8 \left( 1 - \cos \left( \frac{\pi}{2p(N)+1} \right) \right) = \Theta \left( \frac 1 {(p(N))^2} \right).
\end{equation}
This lower bound is large enough to eliminate any blocks with these periodic costs.
We now explain how  we use this mechanism to create the large costs that penalize an incorrect {\em yes} guess for an oracle response as well as to check that the initial configuration and the timer lengths are correct. 

The first checking phase  takes place before the forward execution of the Turing Machines 
and verifies that the Turing Machine is starting in the correct initial configuration for Tracks $4$ and $5$
which is depicted below:
\begin{center}
      \begin{small}
\begin{tabular}{|@{}c@{}|@{}c@{}|@{}c@{}|}
\hline
$\leftBr$ &
\begin{tabular}{@{}c@{}|@{}c@{}|@{}c@{}|@{}c@{}|@{}c@{}|@{}c@{}|@{}c@{}|@{}c@{}|@{}c@{}|@{}c@{}}
$q_0$ &  $\Dblank$ & $\Dblank$ & $\Dblank$ & $\Dblank$ & $\Dblank$ & $\Dblank$ & $\Dblank$ & $\Dblank$ & $\Dblank$ \\
\hline
$1$ &  $\#$ & $\#$ &  $\#$ & $\#$ & $\#$ & $\#$ & $\#$ & $\#$ & $\#$ \\
\end{tabular}
& $\rightBr$ \\
\hline
\end{tabular}
\end{small}
\end{center}
If the computation begins in this special initial configuration then at the end of $N-2$ steps of the Turing Machine $M_{BC}$, the correct input $x$ for chain length $N$ will be written on the work tape. 

The second checking phase occurs after the forward computation of the two Turing Machine 
and checks that the timer on Track $3$ matches the calculation of $T(x,y)$ performed by $M_{TV}$. Thus, there will be a periodic cost unless the timer length $T$ is equal to $T(x,y)$ for the correct input $x$ and the $y$ written on the witness track. In addition, the second checking phase is used to check
that all the simulations  of the verifier performed by $M_{TV}$ are accepting. In other words, a penalty is applied if any of the verifier computations were rejecting.
Thus, if a block parameterized by $(T, \vinit, y, w)$ does not have a periodic cost, then the computation generates the correct input $x$, the timer $T$ is equal to $T(x,y)$, and all of the oracle guess where $y_i = 1$ ({\em yes} guesses) are correct and use a correct witness from $w$ in the verifying computation.

Now we discuss the mechanism used to apply the more modest penalty for {\em no} guesses for oracle responses.
 In our construction, every computation, regardless of the outcome of the computation, will incur two consecutive penalties of $+1/2$ at a particular point in the cycle. The resulting matrix is the propagation matrix for a cycle of length $L = (2T+1) \cdot p(N)$ with an additional $+1/2$ at two consecutive locations on the diagonal. We show in Lemma \ref{lem:cycleEig} that the lowest eigenvalue for this matrix is exactly
 $1 - \cos(\pi/(L+1))$.
%\begin{equation}
 %   \label{eq:grounden}
%    \left(1 -  \cos \left( \frac{ \pi}{L+1}\right)\right) = \Theta\left( \frac{1}{L^2}\right).
%\end{equation}
Thus any block which does not have a periodic cost will have a smallest eigenvalue 
equal to $1 - \cos(\pi/(L+1))$, with
$L = (2T(x,y)+1) \cdot p(N)$. The block with the smallest eigenvalue will correspond to the $y$ that has the largest value of $T(x,y)$ subject to the condition that $y$ does not include any incorrect {\em yes} guesses. The exponential structure of the function $T(x,y)$ shown in Equation (\ref{eq:Txy})
guarantees that the block with the smallest eigenvalue uses the correct oracle guesses $\tilde{y}$, as shown in Lemma \ref{lem:groundenergy}.

%$T$ will also encode the value to the function $f(x,y)$ which is the outcome of the computation of $f$ on input $x$, when the string $y$ is used for the oracle responses made by the Turing Machine computing $f$. If $\tilde{y}$ is the set of correct oracle responses,
%then $f(x) = f(x, \tilde{y})$.
%If the ground energy can be computed to a $1/\mbox{poly}$ precision for a sufficiently high degree polynomial, then the value of $L$ can be recovered, from which $T(x,\tilde{y})$ can be recovered. Note that $f(x,\tilde{y})$, which is the correct value of the function $f(x)$ from which we are reducing is just the low order bits of $T(x,\tilde{y})$.

Recovering the value of $f(x)$ from an approximation of the ground energy requires a much sharper analysis of the lowest eigenvalue %of the resulting matrix 
than has been required in previous constructions. This analysis is given in Section \ref{sec:eigenvaluebounds}, and the recovery method is given, as mentioned above, in Theorem \ref{th:finitehardness}. 

%his construction, and in particular the way the reduction works, \dnote{ref?} requires a much sharper analysis of lowest eigenvalues  of matrices than was used in previous constructions of this kind. 

\subsection{Proof Overview and Techniques: the Infinite Case} \label{sec:proofoverviewinf}

We use the  technique introduced by Cubitt, Perez-Garcia, and Wolf in \cite{C15} who incorporate an apreiodic tiling structure into the Hamiltonian term for the infinite plane. They use Robinson tiles \cite{R71}  which are a finite set of tiling rules that when applied to the infinite plane, force an aperiodic structure with squares of exponentially increasing size, as shown in Figure \ref{fig:RobinsonGrid}. Each square has size $4^k$, for positive integer $k$, and the density of squares of size $4^k$ in the limit of the infinite plane is  $1/4^{2k+1}$. 
%A $4 \times 4$ square from a Robinson tiling is shown in Figure \ref{fig:RobinsonSquare}.
Tiling rules are essentially a classical version of local Hamiltonians, so the tiling rules can be encoded into a 2D layer of the Hilbert space on the plane. As was done in \cite{C15}, we layer a TI 1D Hamiltonian  on top of one of the sides of all the squares. 
The tiling pattern on the lower layer determines where the 1D term is applied.
The effect of this structure is that the ground energy density $\alpha_0$ for the infinite plane is the sum of the ground energies for an infinite series of finite 1D systems divided by the density of each square size: 
%The ground energy density of the infinite grid is shown to be:
\begin{equation}
\label{eq:alpha}
    \alpha_0 = \sum_{x=1}^{\infty} \frac{\lambda_0 (N_x)}{4 (N_x)^2},
\end{equation}
where $N_x = 4^{x^2}$ and $\lambda_0 (N_x)$ is the ground energy of a 1D chain of length $N_x$
with the Hamiltonian term from the 1D finite construction applied to each pair in the chain.
Note that it is essential here that the TI construction for finite 1D chains have a fixed Hamiltonian term that does not depend on the chain size since in the infinite construction the same Hamiltonian term is applied to every pair of neighboring
particles along each dimension.
%essential here since the same Hamiltonian term is applied to every pair of particles in every finite chain embedded in the infinite plane. 

As noted earlier, our reduction needs to encode an entire language in one, infinite-precision number $\alpha_0$.
In our construction, different portions of the binary representation of $\alpha_0$ encode the value of a function $f$
on different inputs. In particular, bits $4x^2$ through $4(x+1)^2$ encode the value of $f(x)$.
The top segment of each square of size $4^{x^2}$ in the Robinson tiling of the infinite plane is layered with
a TI 1D Hamiltonian whose ground energy encodes the value $f(x)$,
where the function problem $f$ is from the oracle complexity class $\fexpnexp$.
This 1D Hamiltonian is exactly the construction used in the finite case, except that the 
computation is in $\fexp$ instead of $\fp$ as in the finite case. The reason for the exponential
increase in complexity is that we are using a square of size $4^{x^2}$ to encode the computation
on input $x$. 

Each $\lambda_0(N_x)$ in Equation (\ref{eq:alpha}) is an irrational number,
so bits $4x^2$ through $4(x+1)^2$ which encode the value of $f(x)$
will also include the "overflow" from
the energy contributions of $f(x')$ for every $x' < x$. The values of $f$ on inputs $x' < x$,
then need to be calculated to the required precision in the classical calculation of the reduction and subtracted off from $\alpha_0$ in order to recover the bits required to reconstruct the value of $f(x)$. 
Using the analysis of our construction for the finite 1D case, we know that
$\lambda_0 (N_x) = (1 - \cos(\pi/(L_x+1))$, where $L_x$ is the integer equal to $p(N_x) \cdot (2T(x, \tilde{y})+1)$. Recall that $T(x, \tilde{y})$ encodes the desired value for $f(x)$.

We now sketch how the first $4(x+1)^2 + 2$ bits 
of $\alpha_0$ are sufficient to recover $\lambda_0 (N_x)$ by inductively 
subtracting off $\lambda_0 (N_{x'})/4(N_{x'})^2$ for every $x' < x$.
Note that dividing $\lambda_0 (N_{x'})$ by $4(N_{x'})^2$ effectively shifts the binary representation of $\lambda_0 (N_{x'})$ to the right by
$\log_2 [4(N_{x'})^2] = 4(x')^2 + 2$ bits.
Let $\overline{\alpha}_0$ be the current value of the sum. 
Initially, $\overline{\alpha}_0$ is equal to the first $4(x+1)^2 + 2$ bits 
of $\alpha_0$.
In an inductive step, one has subtracted off $\lambda_0(N_{x'})/4(N_{x'})^2$ for every
$x' < z$ for some $z \le x$. As shown in Figure, \ref{fig:alpha}, 
the first $4z^2+2$ bits of $\overline{\alpha}_0$ are $0$ and the next
$4(z+1)^2 - 4z^2$ bits are determined solely by $\lambda_0 (N_z)$.
We show in the proof of Theorem \ref{th:finitehardness} that this is enough information to recover the integer $L_z$ where
$\lambda_0 (N_z) = (1 - \cos(\pi/(L_z+1))$.
Once $L_z$ is recovered, $\lambda_0 (N_z)$
can be computed to any desired accuracy. In particular, it can be calculated up to $4(x+1)^2 + 2$ bits of precision
and subtracted off from $\overline{\alpha}_0$. The result is a new
$\overline{\alpha}_0$ which is equal to the sum $\sum_{x' = z+1}^x \lambda_0(N_{x'})/4(N_{x'})^2$ up to $4x^2 + 2$ bits of precision.
\begin{figure}[ht]
  \centering
  \includegraphics[width=5.0in]{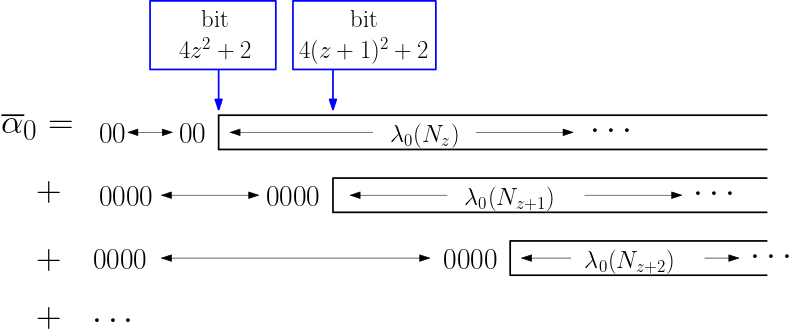}
\caption{The partial sum $\overline{\alpha}_0$. The first $4z^2+2$ bits of $\overline{\alpha}_0$ are $0$ and the next
$4(z+1)^2 - 4z^2$ bits are determined solely by $\lambda_0 (N_z)$.
This is enough information to recover the integer $L_z$ where 
$\lambda_0 (N_z) = (1 - \cos(\pi/(L_z+1))$. }
\label{fig:alpha}
\end{figure}
This illustrates why it is necessary to calculate the ground energy for each finite chain to an arbitrary level of precision: for every $z < x$, it is necessary to calculate $\lambda_0(N_z)$ to a precision of $4x^2+2$ bits in order to derive $f(x)$ from a $4x^2+2$-bit approximation of $\alpha_0$.

\subsection{Summary of main new technical contributions in the proof} \label{sec:new}

While the results in the paper make use of many previously known techniques  from Hamiltonian complexity,
there are several novel technical contributions introduced here which we hope will be useful elsewhere.

As mentioned above, to the best of our knowledge this is the first time that the length of the computation (which we control using the number of repetitions $T(x,y)$) is used to control the ground energy and make sure that the ground value indeed corresponds to the correct initial parameters.  

We believe this is also the first time in Hamiltonian complexity that such precise control over the eigenvalues was required. We need the high level of precision in several places in the proof. First, in the infinite case, as explained in Subsection \ref{sec:infinite}, 
the precise analysis of the lowest eigenvalue is essential in simultaneously encoding an infinite sequence of numbers (values $f(x)$ for an infinite sequence of $x$'s) into a single, infinite-precision number.
The high level of precision is needed so that given a precise enough estimate for $\alpha_0$ 
we can inductively subtract off the energy contributions of $f(x')$ for each $x' < x$, in the classical computation of the reduction; this then leaves enough information
to determine $f(x)$.

Secondly, 
even in our finite case construction, the fact that 
we are considering a function problem implies that we must distinguish between many different values for the cost function as opposed to extracting only a single bit as required for a decision problem. This requires a more precise 
handle on the lowest eigenvalue of the resulting matrix.
%due to the encoding of function values of a whole language into a single infinite precision number, we encounter overflows which need to be handled very carefully in order to derive the value of the function for a given input $x$ from the right approximation of the ground energy density. 

Last but not least, in the finite case, as explained in subsection \ref{sec:proofoverviewinf}, we use highly precise lower bounds to show that the energy penalty for incorrect {\em yes} oracle guesses, which we achieve by periodic penalties of period $p(N)$, is indeed larger than the ground energy in the case that all {\em yes} guesses are correct. 
The usual lower bound of  $\Omega(1/p(N)^3)$ used in previous circuit-to-Hamiltonian constructions does not suffice, as it needs to be larger than our upper bound on the ground value, which is $O(1/T^2P(N)^2)$
by Equation (\ref{eq:cosinus}). We cannot take $T$ to be arbitrarily large, since $T$ has to be computed in unary in less than $N$ steps and so it is bounded by roughly $\sqrt{N}$.

%Typically, circuit-to-Hamiltonian constructions distinguish between accepting and rejecting computations with a single additional additive term applied along the diagonal of the propagation matrix for a path. In the case of an accepting computation, the additive cost is exponentially small, and in the rejecting case, the additive cost is close to $1$.
%The analysis distinguishes between these two cases by showing that a rejecting computation results in a lowest eigenvalue of $\Omega(1/L^3)$, where $L$ is the length of the computation, and an accepting computation results in a lowest eigenvalue which is exponentially small. These techniques are insufficient to implement Krentel's accounting scheme in a translationally-invariant way or to encode enough information in the ground energy that the output $f(x)$ of a function can be extracted from an approximation of the ground energy.

%Every computation has a fixed cost  in the form of
%two consecutive $1/2$ terms along the diagonal, so the lowest eigenvalue is strictly decreasing with the length of the cycle.
%We give exact analytical bounds for the lowest eigenvalue of the resulting matrix which allows us to apply a carefully calibrated cost function for different strings corresponding to oracle responses to queries.

 The use of the circular clock also requires some additional care, as it is important that the computation
part of the state also returns to its initial state at the end of the cycle. By contrast, if the computation is a path, the ending state of the computation can be arbitrary.
We handle this by embedding a computation which executes a process in the forward direction for a certain number of steps. Since the action on the computation tracks is reversible, the computation can be undone by executing the same number of reverse steps. 
%This forward/backward loop is repeated a fixed number of times (depending on the desired length of the cycle)
%which ensures that the computation always returns to its original state.

\subsection{Concluding Remarks, Related Work and Open Questions} \label{sec:related}

To our knowledge, this is the first time the complexity of calculating 
physical quantities in the
TL for a fixed Hamiltonian is characterized from a computational 
perspective. 
To this end we define a function problem which captures roughly the complexity of the task that a physicist encounters when attempting to calculate a physical quantity with one more bit of precision; we show, roughly, that it is doubly exponential.
 To 
 our knowledge this is also the first use of a circuit-to-Hamiltonian construction to show hardness for a {\em function} problem.
%Technically, our proofs required relying on many known ideas from previous works, many of them mentioned above, but  we also needed to come up with some novel techniques which we hope will be useful in further Quantum Hamiltonian complexity research. Those are highlighted above 
%in Subsection 
%\ref{sec:new}.

%We note that devising a construction in which the Hamiltonian 
%is independent of the input,
%was very much  the focus of 
 %the main result in \cite{GI}, regarding finite systems, where much effort is devoted to making sure the Hamiltonian is fixed.
%Bausch, Cubitt, and Ozols give a follow-up construction in which the dimension of the particle is $40$
%(in contrast to the very large constant from \cite{GI}). However, the local Hamiltonian term is a weighted sum of
%fixed terms, where the weights depend on the input. 

%To the best of our knowledge, the current work is the first to address 
%the computational complexity of a local problem in the thermodynamical limit, in the translational invariant setting, with a fixed Hamiltonian which is a parameter of the problem and does not depend on the input. 

%Another line of work that is related to our results was initiated by Ambainis \cite{Amb14}
%and further developed in \cite{GPY19, GY19} which examines the computational complexity of 
%APX-SIM, the problem of computing
%the output of a local measurement applied to the ground state of a local Hamiltonian. 

After completing this work, we  learned that  
Cubitt and Watson independently  considered the problem of   computing the ground energy density in the TL with a fixed Hamiltonian term \cite{CW21}. 
They study a classical two-threshold version of our GED problem (which they call GSED), and prove 
%Their input consists of two numbers $\alpha < \beta$  specified with $n$ bits. The problem is to determine whether $\alpha_0 \le \alpha$ or $\alpha_0 \ge \beta$.
that the class of languages computable by an exponential time Turing Machine with access to GSED is contained
 in $\text{EXP}^{\text{NEXP}}$ and contains $P^{\text{NEEXP}}$.  Their hardness result thus holds for Turing reductions but not for Karp reductions, as in our reduction; using Turing reductions frees them from the need to employ Krentel's accounting scheme to embed the entire interaction with an oracle in a single instance, which is the primary technical contribution in our work. In their proof, the finite construction embedded in each square of the Robinson tiling  is  a NEEXP computation, as opposed to the $\fexpnexp$ computation implemented in our finite construction. 
 We note that we do not know how to prove Karp reduction results for the two-threshold problem variant. 
 The problem is that in a Karp reduction,
 a polynomial time reduction must take an instance $x$ of a language $L$ and
 create an instance of GSED that encodes whether $x \in L$. The instance of GSED must, by definition, include thresholds and computing the correct threshold for the $n^{th}$ bit of $\alpha_0$ essentially requires computing the first $n-1$ bits of $\alpha_0$.
 %\snote{edit} Thus, the construction embedded on each finite square of the Robinson tiling is a standard classical tableau for a NEEXP computation.

%Note that we are reducing from a function $f \in \fpnexp$ for the finite case and
%$f \in \fexpnexp$ for the infinite case, and hence the verifier $V$ and the computations described above are classical (Note that the circuit to Hamiltonian construction is still quantum)  \snote{Not sure where to put this sentence. It seems disconnected from everything else.}\dnote{tried}

Importantly, our results are not tight. 
We note that in our hardness results,  since we are reducing from functions $f \in \fpnexp$,
and $f\in\fexpqmaexp$, the verifier $V$ and hence all of the computations we are simulating are classical. 
We still need a quantum construction to execute the clock and create a ground state that is a superposition of the states of the computation at each clock step. 
We conjecture that the finite problem is in fact hard for $\fpqmaexp$, and the infinite one is hard for 
$\fexpqmaexp$. 
We believe that our constructions can be generalized in a straight-forward way to encode
the computation of quantum verifiers and quantum witnesses required for a $\qmaexp$ oracle
instead of the classical ones required for $\nexp$.
The main problem in completing  the quantum  hardness result is that $\qma$ and $\qmaexp$ are classes of
promise problems which means that the Turing Machine can make invalid oracle queries, and there is no guarantee on the responses for invalid queries. Recently, 
Gharibian and Yirka \cite{GY19} and later 
Watson, Bausch, and Gharibian \cite{WBG20} managed to get around the invalid queries issue in the  finite  APX-SIM problem, for quantities other than the ground-energy; however, in the context of measuring the ground energy itself, it is unknown how to do this since invalid queries have an uncontrolled effect on the ground energy.      
%the clock and create a ground state that is a superposition of the states of the computation at eac
%However, when moving to quantum verifier and quantum witness, the Turing Machine can make oracle queries that are invalid and there is no guarantee on the responses for invalid queries.
%and verification that the witness is the ground state is needed.  
%
%Note that the important point here is the property that distinguishes the state
%to be measured (minimum energy) is different than the local observable
%applied to the measured state. By contrast,  our problem is to compute the
%energy of the lowest energy state.
%In this case, a hardness result  requires encoding the answer to a computational problem into the ground energy itself.
%two aspects: they study two-body correlations, and other quantities (except the ground energy). %which  examine the complexity of general local measurements applied
%to a ground state, in contrast to our problem of computing the ground energy; 
%taking away the complication in verifying that the witness is indeed the ground state; 
%this removes the oracular complications.  
As written, the constructions in \cite{WBG20}, while TI, also have input-dependent 
Hamiltonian parameters, but this could probably be fixed. 
The issue of invalid queries, on the other hand, appears to be an obstacle for extending these results to the ground energy, even when the Hamiltonian terms are position-dependent as in \cite{GY19, GPY19}, as well as in the finite TI  constructions given here and in \cite{WBG20}.
It remains open to provide tight complexity bounds for computing the ground energy to within a given precision, even in the finite non-TI case. 
Another interesting open problem is to extend the works of \cite{WBG20} to the infinite case, and arrive at {\it tight} complexity results in the TL fixed Hamiltonian case, for measuring quantities other than energy.

We note that several recent works  \cite{C15, Bausch_2020, Bausch_2017,CubittMontanaro}
study a relaxed notion of TI (called {\it semi}-TI in \cite{AharonovZhou}), in which the Hamiltonian is a weighted sum of fixed terms; the weights are given as part of the input. This notion seems less natural for the TL setting where infinitely many weights need to be given in order to specify the problem.
%studied the classification of families of Hamiltonian terms in which the Hamiltonian is sum of local terms, each is some weight times some Hamiltonian term taken from some fixed set. This notion was coined {\it semi-translational invariance} \cite{AharonovZhou} and is a considerably weaker notion.  \dnote{maybe discuss? SHORTEN SIGNIFICANTLY TO JUST MENTION THIS NOTION DOESN'T MAKE SENSE IN THE INFINITE CASE} 

\comment{
\subsection{Discussion and Open Problems} 
\label{sec:discussion}

The results are not tight; 
We believe  in fact, that the finite problem is hard for $\fpqmaexp$
and the infinite problem is hard for $\fexpqmaexp$. 
The complexity classes $\qmaexp$ and $\nexp$ differ in several ways.
The first is that in $\qmaexp$, the verifier is a quantum exponential time algorithm
and the witness provided to the verifier is a quantum state on an exponential number of qubits.
This contrasts with $\nexp$ where the verifier and witness are both classical.
The second important difference is that $\qmaexp$ is a class of promise problems
in which the instances can be classified as {\em valid} or {\em invalid}. Valid instances are further partitioned into {\em yes} and {\em no} instances. On a valid {\em no} instance, the verifier will reject with high probability regardless of the witness that is provided.
On a valid {\em yes} instance, there is a witness that will cause the verifier to accept with high probability. If an instance is invalid, there are no guarantees whatsoever on the behavior of the verifier. By contrast, $\nexp$ is a class of languages, in which every instances is a {\em yes} or {\em no} instance which has a corresponding guaranty on the behavior of the verifier. Our results can be likely extended to apply to verifiers which are Quantum Turing Machines that use quantum witnesses. However, the presence of invalid queries appears to be a more significant obstacle.
%One can imagine \dnote{do we want a stronger term instead of Imagine? like define}  an intermediate class ($\qmaexpl$)
%which is a language version of $\qmaexp$ in which the verifier and witness are quantum
%but the languages contained in the class do not have invalid instances. The results in this paper can be extended \dnote{I think we need to tone this down, say something like: We haven't worked out the details, but we believe...} to show that the problems we study are complete for $\fpqmaexpl$ and $\fexpqmaexplone$ (for the finite
%and infinite case, respectively). Our construction for the infinite case requires perfect completeness, so we can only  extend the hardness result to $\fexpqmaexplone$.
%However, since $\qmaexpl$ is not a class of particular interest, we chose to simplify the exposition and show hardness using a $\nexp$ oracle, which uses a 
 %classical verifier and witness. Note that showing hardness with a  $\qmaexpl$
 %oracle
 %would still not result in tight bounds since we show containment in $\fpqmaexp$
 %and $\fexpqmaexp$.
 %In particular, the algorithm to compute the ground energy of a local Hamiltonian
 %uses an oracle which can accept invalid instances. 
 %\dnote{if we are sure that the previous claim is correct, then wouldn't this be a much stronger claim since it is tight?}
 Note that even though all the computations embedded in the Hamiltonian are classical in this construction,
quantumness is still required 
%in the construction with a classical oracle
in order to create the computation state which is the superposition of each clock step of an encoded computation.

%\dnote{I think I messed here things - some background is needed, I copied it to 
%the definition of the problem but not clear to me how much of it should be repeated here} 
%Ambainis' work was later expanded upon by \cite{GPY19, GY19} which address non-translationally-invariant Hamiltonians and then by \cite{WBG20} who address the translationally-invariant case. The natural complexity of this family of problems  are oracle classes, owing to the fact that an inherent challenge of computing a property of a ground state is to verify that the state being measured is in fact the true ground state.
%The upper bounds all require some form of binary search with queries to a $\qma$ or $\qmaexp$ oracle. 

Interestingly,  the fact that $\qma$ is a class of promise problems instead of languages
was originally overlooked by Ambainis \cite{ambainis2014}
and later clarified by Gharibian and Yirka \cite{GY19}. 
If the oracle is for a promise problem, there are no guarantees on the responses
for invalid queries. This means that there could be more than one string of correct oracle queries, since we only require that responses be correct on valid instances.
Gharibian and Yirka show that for the problems they study, namely decision problems related to local measurements on the ground state
of a Hamiltonian, 
the ground state of their construction must correspond to one of the correct oracle response strings. 
This idea works fine if the quantity being measured is not the actual energy itself.
%the ground state corresponds to correct oracle responses and then this string
%$y$ of oracle responses determines the outcome of the measurement.
Note that the important point here is the property that distinguishes the state
to be measured (minimum energy) is different than the local observable
applied to the measured state. By contrast,  our problem is to compute the
energy of the lowest energy state.
%\dnote{I didn't understand this idea at all, (the sentence above and below); in particular, energy is a special case of a local measurement, so I am confused by this paragraph} \snote{Tried to clarify}\dnote{I think the text from QIP clarifies this better no?}
%However, if the problem is to compute the ground energy,
In this case, a hardness result  requires encoding the answer to a computational problem into the ground energy itself.
It is unknown how to do this since invalid queries have an uncontrolled effect on the ground energy.  
This appears to be an obstacle, even when the Hamiltonian terms are position-dependent as in the constructions of \cite{GY19, GPY19}, as well as in the translationally-invariant constructions given here and in \cite{WBG20}.
%\dnote{this is weird to cite this here without explaining and saying anything - this is the first time this paper is cited, no? maybe say something about it only in the next paragraph when this result is mentioned in detail} \snote{I have added more description of these paper in the previous section on related work.}
It remains an open question to determine tight complexity bounds for computing the ground energy to within a given level of precision, even in the finite non-translationally-invariant case. %\dnote{but we said above we know tight bounds but they are too complicated to present, no?} \snote{Even in the more complicated version of our proof, which we did not write up, the results are not tight. That is explained above.} 
}

{~}

\noindent{\bf Structure of remainder of paper} 
Section \ref{sec:finite} is devoted to the proof of Theorem \ref{th:finite}.
In the first subsection, we prove the far easier direction of containment, namely Function-TIH is in  $\fexpqmaexp$. The next subsection, Subsection \ref{finitehardnessoverview}, provides a detailed overview of the proof of the hardness
result,
which is the main technical effort in the paper. In this subsection we elaborate on the ideas which were sketched in the above overview of the proof, Subsection \ref{sec:proofoverviewfinite}.
Section \ref{sec:infinite} provides the proof of our main theorem, Theorem \ref{th:infinite},
starting again with a detailed overview of the proof 
in Subsection \ref{sec:hardnessoverviewinf}, which expands on the sketch provided above in Subsection \ref{sec:proofoverviewinf}.

\section{Computing Ground Energies for Finite TI Hamiltonians}\label{sec:finite} 

\subsection{Containment}

Using binary search with queries to a $\qmaexp$ oracle, it is possible in polynomial time
to determine $\lambda_0(H_N)$ to within $\pm 1/q(N)$, for
any polynomial $q(N)$. The algorithm which we present here for completeness is similar to the
binary search algorithm given by Ambainis in \cite{Amb14}.
There is a little subtlety regarding the fact that the queries
are made to a promise problem instead of a language.
The algorithm below shows how an oracle to a language in $\qmaexp$
can be used to compute the ground energy of $H_N$. 
The oracle computes a decision version of the translationally-invariant Hamiltonian problem:

\begin{quote}
{\sc Decision-TIH for $h$ and constant $c$}\\
{\bf Input:} A number $N$ specified with $n$ bits, and $\lambda$, specified to precision $1/2^{cn}$.\\
{\bf Output:} Accept if $\lambda_0(H_N) \le \lambda$ and Reject if $\lambda_0(H_N) \ge \lambda + 1/2^{cn}$.
\end{quote}

 Decision-TIH  is  in $\qmaexp$ for any $h$ and any $c$.
 The proof is a straight-forward generalization of Kitaev's proof that local Hamiltonian is in QMA \cite{KSV02}.
 The algorithm given here  computes the  Function-TIH in $\mbox{FEXP}^{\mbox{QMA-EXP}}$
 with access to an oracle for Decision-TIH. 
 %The algorithm maintains two variables $u$ and $l$, which are an upper and lower bound on $\lambda_0(H_N)$. We will show
 %that if the number of rounds is $r$ then the precision is $O(1/2^r)$. 
 We will assume that $I \cdot c_{offset}$ has been added to $h$
 to ensure that $h \succeq 0$. Since a Hamiltonian with $N$ particles
 has $N-1$ terms, then $(N-1) c_{offset}$ can be subtracted from the final answer to obtain the desired result.
 Note that {\sc Decision-TIH} and {\sc Function-TIH} are both
 parameterized by a constant $c$ governing the precision of the energy estimate. We will require an oracle for
 {\sc Decision-TIH} that has is parameterized by constant $c+1$
 if the desired constant 
 for {\sc Function-TIH} is $c$. \snote{New text in this paragraph, please check.}

\vspace{.2in}

\begin{figure}[ht]
\noindent
\begin{center}
\fbox{\begin{minipage}{\textwidth}
\begin{tabbing}
{\sc Algorithm for Function-TIH for $h$ and $c$}\\
{\sc Input:} Integer $N$.\\
(1)~~~~ \= Initialize $u_0 \leftarrow \norm{h} \cdot N$, $l_0 \leftarrow 0$, and $r \leftarrow (n+1)c + \lceil \log_2(u_0) \rceil$, where $n = |N|$.\\
(2)  \> \FOR $j = 1, \ldots, r$\\
(3)  \> ~~~~~ \=  Send two queries  to Decision-TIH for $h$ and $c+1$ with the following two input $\lambda$'s\\
%\> \> (padded with $0$'s to have length $j+1$):\\
(4)  \>\>$\lambda_1 = l_{i-1} + u_0/2^{j+1}$\\
(5)  \>\>$\lambda_2 = l_{i-1} + 2u_0/2^{j+1}$\\
(6)  \>\> \IF the result of query $1$ is Accept:\\
(7)  \>\>~~~~~ \=  $u_i \leftarrow l_{i-1} + u_0/2^j$ and $l_i \leftarrow l_{i-1}$\\
(8)  \>\> \IF the results of queries $1$ and $2$ are both Reject:\\
(9)  \>\>\> $l_i \leftarrow l_{i-1} + u_0/2^j$  and $u_i \leftarrow u_{i-1}$\\
(10)  \>\> \IF the result of query $1$ is Reject and the result of query $2$ is Accept:\\
(11)  \>\>\> $u_i \leftarrow  l_{i-1} + 3u_0/2^{j+1}$\\
(12)  \>\>\> $l_i  \leftarrow  l_{i-1} + u_0/2^{j+1}$\\
(13) \> \RETURN $(u_r + l_r)/2$
\end{tabbing}
\end{minipage}}
\end{center}
\caption{Pseudo-code for the Function-GED.}
\label{fig:BinSearch-pseudo}
\end{figure}

\begin{theorem}
Function-TIH $\in \fpqmaexp$ for any fixed $c$ and $h$.
\end{theorem}

\begin{proof}
We will show that
if $n = |N| \ge 2$, then after the final iteration of
the binary search algorithm in Figure \ref{fig:BinSearch-pseudo}, $\lambda_0 \in [l_r,u_r]$ and
$u_r-l_r \le 1/N^c$. 
The number of iterations is $O(\log(N))$. Each iteration makes two oracle queries
and does a constant number of arithmetic operations on number of length $O(\log(N))$.
\dnote{shouldn't we require r=cn to fit definition of decision-TIH?} \snote{Yes, I made the input $N$ a parameter of the algorithm
and initialized $r$ accordingly.}

Let $u_j$ and $l_j$ be the values of $u$ and $l$ after iteration $j$. 
We prove by induction on
$j$, that $\lambda_0 \in [l_j,u_j]$ and
$u_j -l_j = u_0/2^j$. Since the number of iterations $r$ is $(n+1)c + \lceil \log_2(u_0) \rceil$, this will guarantee that $u_r-l_r \le 1/2^{c(n+1)} \le 1/N^c$.
For the base case, notice that $\norm{H_N} \le \norm{h} \cdot N$.
By assumption $h \succeq 0$, so $\lambda_0 \in [0,u_0]$.
\dnote{missing base case j=1..} 

For the inductive step, we will consider three different cases, 
based on the outcomes of the queries.
The call to Decision-TIH for $h$ and $c+1$ has a promise gap of
$1/2^{(c+1)n}$. For $n \ge 2$, this is at most
$1/2^{cn+2}$.
Since $r$ is chosen to be $(n+1)c + \lceil \log_2(u_0) \rceil$,
$1/2^{cn+2} \le 1/2^{c(n+1)+2} \le u_0/2^{r+1} \le u_0/2^{j+1}$. Therefore, the queries to Decision-TIH
made in the $j^{th}$ iteration with threshold $\lambda$
will accept if $\lambda_0 \le \lambda$ and reject if $\lambda_0 \ge u_0/2^{j+1}$.

If the result of the first query is Accept, then  by the definition of Decision-TIH, $\lambda_0 \le \lambda_1 + u_0/2^{j+1} = l_{j-1} + u_0/2^j = u_j$. \dnote{in the first inequality we use $cn\ge j+1$ which comes from the connection of r and cn as in my dnote above} \snote{This is addressed above.}
By the inductive hypothesis, $\lambda_0 \ge l_{j-1} = l_j$.
$u_j - l_j = (l_{j-1} + u_0/2^j) - l_{j-1} = u_0/2^j$.

If the result of the second query is Reject,  then $\lambda_0 \ge \lambda_2 = l_{j-1} + 2u_0/2^{j+1} = l_j$. By the inductive hypothesis, $\lambda_0 \le u_{j-1} = u_{j}$.
Also, by the inductive hypothesis, $u_{j-1} - l_{j-1} = u_0/2^{j-1}$.
Therefore, $u_j - l_j = u_{j-1} - (l_{j-1} + 2u_0/2^{j+1}) = u_0/2^{j-1} - u_0/2^j = u_0/2^j$.

If the result of the first query is Reject and
the second query is Accept, then $\lambda_0 \ge \lambda_1 = l_{j-1} + u_0/2^{j+1} = l_j$. \dnote{I think the last equality is incorrect - the definition of $l_j$ depends on the result of the query on $\Lambda_1$ and so it is either determined by line 9 or 12 in the algorithm}
\snote{I think I fixed this by adding in the assumption that the second query is Accept.}
Since the result of the second query is Accept then, $\lambda_0 \le \lambda_2 + u_0/2^{j+1} = l_{j-1} + 3u_0/2^{j+1} = u_j$. 
Then $u_j - l_j = (l_{j-1} + 3u_0/2^{j+1}) - (l_{j-1} + u_0/2^{j+1}) = u_0/2^j$.
\end{proof}

\subsection{Hardness: Overview of the Construction}
\label{finitehardnessoverview}

We start with an arbitrary $f \in \fpnexp$ and construct a Hamiltonian term $h$ that operates on two $d$-dimensional particles. 
Let $H_N$ denote the Hamiltonian on a chain of $N$ $d$-dimensional particles resulting from applying $h$ to each neighboring pair in the chain. The reduction will map an input string $x$ of length $n$
to a positive integer $N = N(x)$ such that $N(x)$ can be computed in time polynomial in $n$.
We will show that there is a polynomial $q$ such that there is a polynomial time algorithm, which can compute $f(x)$ given a $1/q(n)$-approximation of
$\lambda_0 (H_N)$, namely a value $E$, where $|E - \lambda_0 (H_N)| \le 1/q(n)$.
%\dnote{I have a feeling (not sure) this could go earlier into the overview of the constuction, earlier we say the reduction is x goes to $N(x)$, but we sort of don't complete how the reduction works}

We will also assume without loss of generality that if the last bit of $x$ is $0$, then
$f(x) = 0$. If $f$ does not have this property, then we can define a new function $f'$
such that $f'(x0) = 0$ and $f'(x1) = f(x)$. Any polynomial time algorithm to compute $f'$
can be used to compute $f$.

Since $f \in \fpnexp$, $f$ can be computed by a polynomial-time Turing Machine $M$
with access to an  oracle for language $L \in \nexp$. The verifier for $L$ is a  Turing Machine $V$ which
runs in time exponential in the size of the query input and uses a witness whose length is also exponential in the size of the query input.
We will assume that there are constants $c_1$ and $c_2$ such that on input $x$ of length $n$,
the number of queries made by $M$ is at most $c_1 n$ and if $M$ makes a query $\bar{x}$
to the oracle, the size of the witness $w$ required is at most $2^{c_2 n}$ and the running time
of $V$ on input $(\bar{x},w)$ is at most $2^{c_2 n}$. This assumption which follows from a standard padding argument, is given in Section \ref{sec:MTV}.

%As is typical in hardness results for finding ground state energies, we use Kitaev's circuit-to-Hamiltonian construction in which the
%ground state will correspond to the history state of a computation  \cite{KSV02}. 
%Note that since we are reducing from $\fexpnexp$, the computation encoded in the ground state is classical. A quantum state is still needed in the construction in order
%to create the history state which is a quantum superpositions of the state of the computation at each point in time.
%The ground state of the Hamiltonian will correspond to a uniform superposition of the state of the computation
%at each point in time.
%The process of transitioning from one state to another is regulated by a clock encoded in a different part of the Hilbert space than the computation.
%The next subsection outlines how the Hilbert space of each particle is divided to encode the clock and the computation.

The remainder of this subsection gives an overview of the track structure of the Hilbert space
as well a review of the types of terms used in circuit-to-Hamiltonian constructions.
Subsection \ref{sec:clock} gives a detailed description of the the propagation terms for the clock tracks and analyzes the structure of the Hamiltonian restricted to the space of clock configurations.
Subsection \ref{sec:TMs} gives a detailed description of the two Turing Machines used in the construction and proves that they have the necessary properties required for the construction which embeds the computation in the ground state of a Hamiltonian.
Subsection \ref{sec:Ham} describes all of the components in the Hamiltonian term, including
how the propagation terms for the clock are enhanced to apply the right unitary operations on the computation tracks as well as additional penalty terms used to ensure that the ground state corresponds to a correct computation of the function $f$.
Finally Subsection \ref{sec:analysis} analyzes the ground energy of the resulting Hamiltonian and shows how the value $f(x)$ can be determined given an approximation of the ground energy of $H_N$ for $N = N(x)$.

\subsubsection{The Track Structure of the Hilbert Space}
\label{sec:track}
The Hilbert space of each particle is a tensor product of $6$ different
spaces which will be used to form $6$ tracks. There are also two additional
states $\leftBr$ and $\rightBr$. So the Hilbert space for particle
$i$ is
$$\calh_i = \{ \leftBr, \rightBr \} \oplus \left( \calh_{i,1} \otimes \calh_{i,2}\otimes \calh_{i,3}\otimes \calh_{i,4}\otimes \calh_{i,5}\otimes \calh_{i,6} \right).$$
Later, we will introduce constraints that enforce the condition that
for a low energy state, the leftmost particle must be in state
$\leftBr$, the right particle must be in state $\rightBr$
and the particles in between are not in states $\leftBr$ or $\rightBr$.
We will call all such standard bases states {\em bracketed}.
The space spanned by all bracketed states is $\calh_{br}$
and the final Hamiltonian is invariant on $\calh_{br}$. 
For any bracketed state, 
Track $k$ of the system of $N$ particles consists of the
tensor of $\calh_{i,k}$ as $i$ runs from $2$ to $N-1$. Thus,
the state of the system can be pictured as:

\begin{center}
\begin{small}
\begin{tabular}{|@{}c@{}|@{}c@{}|@{}c@{}|}
\hline
$\leftBr$ &
\begin{tabular}{@{}c@{}|@{}c@{}|clc|@{}c@{}|@{}c@{}}
$\arrRone$ & $\blank$ & $\cdots$ & Track 1: Clock second hand & $\cdots$ & $\blank$ &  $\blank$ \\
\hline
$\arrR$ & $\blankLtwo$ & $\cdots$ & Track 2: Clock minute hand & $\cdots$ & $\blankLtwo$ &  $\blankLtwo$ \\
\hline
$\arrR$ & $\blankLthree$ & $\cdots$  & Track 3: Clock hour hand & $\cdots$ & $\dead$ &  $\dead$ \\
\hline
$\Dblank$ & $q_0$   & $\cdots$ & Track 4: Tape head and state for TM $V$ & $\cdots$ & $\blank$   &  $\blank$   \\
\hline
1 & \# & $\cdots$ & Track 5: Turing Machine Work Tape  & $\cdots$ & \#&  \# \\
\hline
0/1 & 0/1 & $\cdots$ & Track 6: Quantum witness for $V$ & $\cdots$ & 0/1&  0/1\\
\end{tabular}
& $\rightBr$ \\
\hline
\end{tabular}
\end{small}
\end{center}

Tracks $1$ through $3$ are called {\em Clock Tracks}. Tracks $4$ through $6$ s are called {\em Computation Tracks}. The symbols in the figure above represent standard basis states for 
each portion of the Hilbert space of the particle $\calh_{i,j}$ whose meaning will be described later.
The states for the computation tracks will depend on the Turing Machine being simulated. We will use $\Gamma$ to denote the tape alphabet for the TM and $Q$ to denote the set of states. The standard basis for each $\calh_{i,j}$ are:

\begin{center}
\begin{tabular}{ll}
   $\calh_{i,1}$:   &  $\{ \blank, \Dblank\} \cup \{ \arrLi, \arrRi \mid i = 1, \ldots, 8\}$ \\
   $\calh_{i,2}$:  & $\{ \blankLtwo, \DblankLtwo, \arrR, \arrL \}$\\
   $\calh_{i,3}$:  & $\{ \blankLthree, \DblankLthree, \dead, \arrR, \arrL \}$\\
   $\calh_{i,4}$:   &  $\{ \blank, \Dblank\} \cup Q$ \\
   $\calh_{i,5}$:  & $\Gamma$~~~~(which includes $0$, $1$, and $\#$)\\
   $\calh_{i,6}$:  & $\{0, 1\}$
\end{tabular}
\end{center}

An overview of the structure of the clock is given in Subsection \ref{sec:clockOverview}. The details of the terms governing the clock are given in Section \ref{sec:clock}.
The computation tracks will represent the configuration of a Turing Machine. A typical configuration for the computation tracks will look like:
\vspace{.1in}
\begin{center}
\begin{small}
\begin{tabular}{|@{}c@{}|@{}c@{}|@{}c@{}|}
\hline
$\leftBr$ &
\begin{tabular}{@{}c@{}|@{}c@{}|@{}c@{}|@{}c@{}|@{}c@{}|@{}c@{}|@{}c@{}|@{}c@{}|@{}c@{}|@{}c@{}}
$\blank$ &  $\blank$ & $\blank$ & $q_1$ & $\Dblank$ & $\Dblank$ & $\Dblank$ & $\Dblank$ & $\Dblank$ & $\Dblank$ \\
\hline
$\gamma$ &  $\sigma$ & $\sigma$ &  $\gamma$ & $\sigma$ & $\#$ & $\#$ & $\#$ & $\#$ & $\#$ \\
\hline
$1$ &  $0$ & $1$ &  $0$ & $0$ & $0$ & $1$ & $0$ & $1$ & $1$ 
\end{tabular}
& $\rightBr$ \\
\hline
\end{tabular}
\end{small}
\end{center}
\vspace{.1in}
The symbols $\sigma, \gamma \in \Gamma$ are elements in  the tape alphabet of the TM, and $q_1 \in Q$ is a state of the TM.
Track $6$ contains a read-only binary string which stores the guesses for the oracle responses
and witnesses used in verifier computations. An overview of the computations performed on the computation tracks
is given in Subsection \ref{sec:twoTMs}.
The details regarding how the steps of the Turing Machine are encoded in the Hamiltonian are given in Section \ref{sec:implementingTM}.

%\snote{I'm not sure this section is the right place for this overview. A high level description of what you are describing is already given in the introduction. The more detailed overview is given in the section called overview of the clock tracks. The purpose of this section is just to show how the Hilbert space of each particle is structured.}
%The overall structure of the computation will be as follows:

%The initial configuration of the $6$ tracks consists of standard initial configurations (to be explained below) in tracks $1,2,4,5$; 
%Track $3$ constrains an initial timer value $T$, 
%and track $6$ contains the string $w$, the witness, which consists of $y$, the guess of the list of all $m$ responses of the oracles, 
%as well as the witness for all those 
%oracles. 

%The computation itself consists of an outer FOR loop $T$ cycles are run. 
%Track $3$ is responsible for the FOR loop which counts the cycles. In each cycle, $12$ "segments" are executed sequentially, in which the states of the computation tracks $4,5$ are advanced, with the help of the clock tracks  $1$ and $2$. These $12$ segments consist,essentially, of $4$ main segments:  running the computation of two alternating Turing machines,$M_{BC}$ and $M_{TV}$, 
%and then running them backwards.  These segments are interspersed with shorter segments which are required for the sake of transitions between those main segments. This cycle is repeated $T$ times. 

\subsubsection{Circuit-to-Hamiltonian Preliminaries} 
\label{sec:circToHam}

The Hamiltonian is a sum of two body terms, and 
there are two types of  terms.
Type I terms will have the form
$\ket{ab}\bra{ab}$ where $a$ and $b$ are possible states.
This has the effect of adding an energy penalty to any
state which has a particle
in state $a$ to the immediate left of a particle in state $b$.
In this case, we will refer to $ab$ as an {\em illegal pair}.
We will sometimes consider the restriction of an illegal pair to a set of
tracks. In this case we implicitly mean that the Hamiltonian term acts as the identity
on the remaining (unspecified tracks).
For example the term $\ketbra{\arrL \arrL}{\arrL \arrL}$
applied to Track $2$ means that any pair of neighboring particles in which $\calh_{i,2} \otimes \calh_{i+1,2}$
is in state $\ket{ \arrL \arrL}$ will contribute $+1$ to the energy of the state.
The restriction of an illegal pair to a set of tracks will sometimes be referred to as an {\em illegal pattern}.
We will refer to standard basis states of the whole chain or a subset of the tracks as {\em configurations}. Any configuration which has an illegal pattern or illegal pair is said to be illegal.

We will sometimes use a regular expression to define a subset of standard basis states.
Throughout the construction, we will make use of the following lemma (Lemma 5.2 from
\cite{GI})

\begin{lemma}
\label{lem:regexp}
For any regular expression over the set of particle states in which each
state appears at most once, we can define a set of illegal pairs such that
for any  standard basis state $x$ for the system,  $x$ is  a substring of
a string in the regular set if and only if $x$ does not contains an illegal pair.
\end{lemma}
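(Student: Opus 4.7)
The plan is to construct the illegal pairs from the Glushkov position automaton of the regular expression $R$. Because each particle state appears at most once in $R$, the positions of this automaton can be identified with the set $\Sigma_R$ of particle states that actually occur in $R$, so the transition structure of the automaton becomes a binary relation on $\Sigma_R$ (the \emph{Follow} relation), and its initial/accepting structure is captured by two subsets $\mathrm{First}(R), \mathrm{Last}(R) \subseteq \Sigma_R$. All three are computed by straightforward structural induction on $R$. Before defining illegal pairs, I would first restrict $\Sigma_R$ to its \emph{live} subset: those states that are reachable from $\mathrm{First}(R)$ and that can reach $\mathrm{Last}(R)$ along Follow. This is a purely classical graph-reachability cleanup and removes exactly those symbols of $R$ that contribute to no string of $L(R)$.

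I then declare a pair $(a,b)$ to be legal iff $a$ and $b$ are both live and $b \in \mathrm{Follow}(a)$, and illegal otherwise. With this definition the biconditional splits naturally into two directions. For the forward direction, if $x = a_1\cdots a_k$ is a substring of some $s \in L(R)$, then because $s$ is accepted by the Glushkov automaton, every symbol of $s$ is live and every consecutive pair of symbols of $s$ is related by Follow; both properties are inherited by the substring $x$, so $x$ contains no illegal pair. For the backward direction, if $x$ contains no illegal pair then $x$ itself traces a walk among live states in the Glushkov graph, and by liveness I can prepend a walk from some symbol of $\mathrm{First}(R)$ ending at $a_1$ and append a walk starting at $a_k$ and ending at some symbol of $\mathrm{Last}(R)$. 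Concatenating these walks with $x$ gives an accepting walk of the Glushkov automaton, which by correctness of that construction corresponds to a string in $L(R)$ that contains $x$ as a substring.

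The part I expect to require the most care is the structural induction that computes $\mathrm{First}, \mathrm{Last}, \mathrm{Follow}$ and proves that the resulting Glushkov automaton accepts exactly $L(R)$, because the three relations behave differently under concatenation depending on whether the left operand can derive the empty string, and Kleene star of $R_1$ adds the cross product $\mathrm{Last}(R_1) \times \mathrm{First}(R_1)$ into Follow. The critical use of the hypothesis that each state appears at most once is precisely in the identification of Glushkov positions with alphabet symbols: without this hypothesis, a single symbol could occupy multiple positions whose Follow sets differ, and no set of two-body (i.e., symbol-pair) constraints would suffice to encode the transition relation of the automaton. Once the combinatorics of $\mathrm{First}, \mathrm{Last}, \mathrm{Follow}$ is in hand, both directions of the lemma follow directly as described above.
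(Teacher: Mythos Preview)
The paper does not prove this lemma; it is quoted verbatim as Lemma 5.2 of \cite{GI} and used as a black box. So there is no ``paper's own proof'' to compare against here.

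Your Glushkov/position-automaton argument is correct and is essentially the standard proof of this fact. The crucial observation---that the hypothesis ``each state appears at most once'' collapses positions to alphabet symbols, so that membership in $L(R)$ is characterised purely by $\mathrm{First}$, $\mathrm{Last}$, and the binary $\mathrm{Follow}$ relation on $\Sigma_R$---is exactly the point, and you identify it clearly. Both directions of the biconditional then go through as you describe, provided the system has length at least $2$ (so that every particle participates in at least one pair); this is always the case in the paper's setting. One minor remark: for regular expressions built only from symbols, union, concatenation, and Kleene star (no $\emptyset$), every symbol occurring in $R$ already lies on some accepting path, so your liveness-pruning step is redundant---but it is a harmless precaution and would be needed if $\emptyset$ were permitted as a subexpression.
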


We will refine this a bit further for bracketed states:

\begin{lemma}
\label{lem:regexpwf}
{\bf [Regular Expressions for Standard Basis States]}
Consider any regular expression over the set of particle states in which each
state appears at most once. If the corresponding regular set only contains
bracketed strings, then we can use illegal pairs to ensure that
for any bracketed  standard basis state $x$ for the system, $x$ is in the regular set if and only if $x$ does not contain any illegal pairs.
\end{lemma}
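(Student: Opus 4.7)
The plan is to reduce directly to Lemma~\ref{lem:regexp} and exploit the extra constraint that $x$ is bracketed to eliminate the ``substring of'' slack. First I would invoke Lemma~\ref{lem:regexp} on the given regular expression to obtain a set $\mathcal{I}$ of illegal pairs with the property that an arbitrary standard basis state contains no pair in $\mathcal{I}$ if and only if it is a contiguous substring of some string in the regular set $R$. I would then use this same set $\mathcal{I}$ as the illegal pairs for the present lemma, with no additional terms.

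For the forward direction, if a bracketed $x$ lies in $R$, then $x$ is trivially a substring of itself, so by Lemma~\ref{lem:regexp} it contains no illegal pair. For the converse, suppose $x$ is bracketed and avoids all pairs in $\mathcal{I}$; Lemma~\ref{lem:regexp} then guarantees that $x$ occurs as a substring of some $y \in R$. By hypothesis every element of $R$ is a bracketed string, so $y = \leftBr\, s\, \rightBr$ where $s$ contains no bracket symbol. Consequently $\leftBr$ appears in $y$ only at its first position and $\rightBr$ only at its last. Since the first symbol of $x$ is $\leftBr$, the embedding of $x$ in $y$ must begin at position $1$ of $y$; since the last symbol of $x$ is $\rightBr$, it must end at the last position of $y$. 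These two forced alignments imply $x$ and $y$ have the same length and $x = y$, so $x \in R$.

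The argument is essentially immediate once Lemma~\ref{lem:regexp} is in hand, so there is no real obstacle; the only step requiring a moment of care is confirming that $\leftBr$ and $\rightBr$ each occur at a unique position in every $y \in R$, which is exactly the definition of ``bracketed''. The conceptual content that distinguishes this statement from Lemma~\ref{lem:regexp} is that insisting $x$ be bracketed rules out the pathological scenario in which $x$ could appear as a proper internal substring of some element of $R$ without itself belonging to $R$.
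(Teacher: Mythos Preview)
Your proposal is correct and follows exactly the same approach as the paper: the paper's proof is the one-line observation that if $x$ is a substring of $y$ and both are bracketed, then $x = y$, which is precisely the converse-direction argument you spell out. You have simply made explicit the invocation of Lemma~\ref{lem:regexp} and the trivial forward direction that the paper leaves implicit.
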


\begin{proof}
If $x$ is a substring of $y$ and $x$ and $y$ are both bracketed, then $x = y$.
\end{proof}

Type II terms will have the form:
$\frac{1}{2}
(\ketbra{ab}{ab} + \ketbra{cd}{cd} -
\ketbra{ab}{cd} - \ketbra{cd}{ab})$.
These terms enforce that for any eigenstate with zero
energy, if there is a configuration $A$ with two neighboring
particles in states $a$ and $b$, there must be a configuration $B$ with
equal amplitude that is the same as $A$ except that $a$ and $b$ are
replaced by $c$ and $d$.
Even though a Type II term is symmetric, we 
break this symmetry and associate with each such term a direction by choosing and order on the pair of pairs. If we decide that $ab$ precedes $cd$, then 
we denote the term as: $ab \rightarrow cd$.
Type II terms are also referred to as
{\em transition rules}. We will say that configuration
$A$  transitions into configuration $B$ by rule $ab \rightarrow cd$
if $B$ can be
obtained from $A$ by replacing an occurrence of $ab$ with an occurrence of
$cd$. We say that the transition rule
applies to $A$ in the forward direction
and applies to $B$ in the backwards direction.
The sum of all the Type II terms is called $h_{prop}$ or the
{\em propagation Hamiltonian}.
Again, we may specify that a transition rule acts on a particular subset of the
tracks: $ab_{\calt} \rightarrow cd_{\calt}$, where $\calt$ is a subset of the tracks. This implicitly means that the Hamiltonian term acts as the identity on
the remaining tracks.
It will also sometimes be convenient to say that a transition rule applies to
every state except a particular standard particle basis state. In this case, we will
denote the rule by: $a(\neg b) \rightarrow cd$ to indicate that the rule applies to
every $ax$ pair, where $x \neq b$.

\subsection{The Clock Tracks} 
\label{sec:clock}

\subsubsection{Overview of the Clock Tracks}
\label{sec:clockOverview}

In referring to a {\em clock configuration} without specifically restricting to a subset of the tracks, we are referring to a configuration for all three clock tracks.
We start by defining the set of well-formed clock configurations. 
We  will show in Lemma \ref{lem:clockwellformed} that the Hamiltonian is closed on the span of well-formed configurations. 
Furthermore, penalty terms are added so that every
clock configuration that is not well-formed has an energy of at least $1$,
so we can restrict our attention to the subspace spanned by well-formed clock configurations.

\begin{definition}
\label{def:wellformed}
{\bf [Well-formed Clock Configurations]}
Any standard basis state of Tracks $1$ through $3$ is called a {\em clock configuration}.
A clock configuration is said to be {\em well formed} if the configuration has the following form for each track:
\begin{enumerate}
    \item Track $1$: $\leftBr \Dblank^* (\arrRi + \arrLi) ~\blank^* \rightBr$
\item Track $2$: $\leftBr \DblankLtwo^* (\arrR + \arrL) ~ \blankLtwo^* \rightBr$.
\item Track $3$: $\leftBr \DblankLthree^* (\arrR + \arrL) ~ \blankLthree^* \dead^* \rightBr$.
\end{enumerate}
\end{definition}

By Lemma \ref{lem:regexpwf}, we can use Type I constraints to give an energy
penalty to any clock state that is bracketed but not well-formed.

\begin{definition}
{\bf [Hamiltonian Terms to Enforce Well-formed Clock Configurations]}
\label{def:Hwf-cl}
Let $h_{wf-cl}$ denote the Hamiltonian terms with the constraints that give an energy
penalty for any clock state that is bracketed but not well-formed.
\end{definition}

A typical configuration of the clock tracks looks like:
\vspace{.1in}
\begin{center}
\begin{small}
\begin{tabular}{|@{}c@{}|@{}c@{}|@{}c@{}|}
\hline
$\leftBr$ &
\begin{tabular}{@{}c@{}|@{}c@{}|@{}c@{}|@{}c@{}|@{}c@{}|@{}c@{}|@{}c@{}|@{}c@{}|@{}c@{}|@{}c@{}}
$\blank$ &  $\blank$ & $\blank$ & $\arrRthree$ & $\Dblank$ & $\Dblank$ & $\Dblank$ & $\Dblank$ & $\Dblank$ & $\Dblank$ \\
\hline
$\blankLtwo$ &  $\blankLtwo$ & $\blankLtwo$ &  $\blankLtwo$ & $\blankLtwo$ & $\blankLtwo$ & $\blankLtwo$ & $\arrL$ & $\DblankLtwo$ & $\DblankLtwo$ \\
\hline
$\blankLthree$ &  $\arrR$ & $\DblankLthree$ &  $\DblankLthree$ & $\DblankLthree$ & $\DblankLthree$ & $\DblankLthree$ & $\dead$ & $\dead$ & $\dead$ \\
\end{tabular}
& $\rightBr$ \\
\hline
\end{tabular}
\end{small}
\end{center}
\vspace{.1in}
The constraints in $h_{wf-cl}$ ensure that
Track $1$ has exactly one pointer. The transition rules described in this section will cause the Track $1$ pointer to shuttle back and forth between the left and right ends of the chain according to the transition rules. The $\arrR$ pointer moves to the right, and the $\arrL$ pointer moves to the left. 
A sequence of clock configurations in which the Track $1$ pointer starts at one  end of the chain, sweeps to the other end and back will be referred to as a {\em round trip}. The Track $1$ pointers come in $8$ different varieties and are labeled with tags in the range from $1$ through $8$. A Track $1$ pointer labeled $i$ is called an $i$-pointer. There are two $i$-pointers for each $i$: $\arrLi$ and $\arrRi$.
The different types of  Track $1$ pointers act as a means of program control as they trigger different operations on the three computation tracks (Tracks $4$, $5$, and $6$).
Each right-to-left sweep of an $i$ pointer, for $i = 1,2 ,3$, advances a Turing Machine configuration on the computation tracks by exactly one TM step.
The particular TM move that is executed depends on the label of the $1$-pointer.
%The $1$-pointer and $2$-pointer execute a step of a different Turing Machine than the $3$-pointer.
A left-to-right sweep of an $i$-pointer for $i = 5, 6, 7$, applies the reverse steps of the Turing Machines. The $4$-pointers and $8$-pointers act as the identity on the computation tracks and are used to trigger Type I penalty terms which enforce that the computation in a low energy state has the correct input and output. Figure \ref{fig:cycle} shows a  graphical representation of the schedule of segments within an iteration of Tracks $1$ and $2$.

A maximal sequence of consecutive clock steps in which the Track $1$ pointer (left or right)
is labeled with $i$ is called an {\em $i$-segment}. 
The even numbered 
$i$-pointers make a single round trip, beginning and ending at the  same end of the chain, so even numbered segments last for $2(N-2)$ clock steps. The odd numbered pointers make multiple round trips, where the number of iterations is regulated by a pointer on Track $2$. 
The Track $1$ pointer is like the second hand of a clock while the Track $2$ pointer is like the minute hand.
In each round trip of an even numbered $i$-pointer, the Track $2$ pointer is advanced by one location. 
When the Track $2$ pointer reaches the end of the chain, the $i$-pointer transitions to an $(i+1)$-pointer. Figure \ref{fig:1segment} illustrates a single $1$-segment.
Odd numbered segments last for $(N-2)(2N-5)$ clock steps.
When the  $8$-pointer reaches the end of the chain, it
transitions to a $1$-pointer, forming a cycle of clock configurations for Tracks $1$ and $2$.
%Figure \ref{fig:segmentsummary} shows the number of clock steps in each segment. A more graphical representation is given in Figure
%\ref{fig:cycle}. 

In Definition \ref{def:corrconfigs}, we classify configurations for Tracks $1$ and $2$ as either {\em correct} or {\em incorrect} and show that the incorrect clock configurations will never be in the support of the ground state.
There are exactly $p(N) = 4(N-2)(2N-3)$ correct clock configurations for Tracks $1$ and $2$.
We show in Lemma \ref{lem:clockGraph1} that starting with any of these, if we apply the transition rules $p(N)$ times,
we will transition through all the correct clock configurations for Tracks $1$ and $2$
and return to the initial configuration.

The third clock track acts like an hour hand in that the pointer on Track $3$ moves by one location
each time  Tracks $1$ and $2$ complete an iteration (every $p(N)$ clock steps). The Track $3$ pointer shuttles back and forth between the left end of the chain and the left-most $\dead$ particle. 
The number of $\DblankLthree$ or $\blankLthree$ particles in between the left end of the chain and the left-most $\dead$ particle
is the {\em timer length} for Track $3$. 
The transition rules never alter the Track $3$ timer length, so the Hamiltonian is closed on the space of clock configurations in which Track $3$ has timer length $T$ for a particular value $T$.
The timer length regulates the number of clock steps in a full cycle of the clock. 
If the length of the timer is $T$, then Track $3$ transitions through $2T+1$ configurations before repeating. For each Track $3$ configuration, Tracks $1$ and $2$ will complete an iteration lasting $p(N)$ clock
steps. Thus the total number of clock steps in a cycle for Tracks $1$, $2$, and $3$ is $(2T+1) \cdot p(N)$.
In the sample configuration shown above, the Track $3$ timer has length $6$ which will result in
$13$ iterations  for Tracks $1$  and $2$ for every cycle of the entire clock.

%In the construction, every computation (even correct computations) will have a small cost at least once in a cycle of $(2T+1) \cdot p(N)$ clock steps. The smallest eigenvalue will be $\Theta(1/L^2)$, where $L = (2T+1) \cdot p(N)$
%is the length of the cycle. Thus we can regulate the smallest eigenvalue by controlling the length of a cycle.

%\subsubsection{Structure of the Hamiltonian Restricted to the Hilbert Space of the Clock} 

We can define a graph structure on the set of well-formed clock configurations, where edges represent transitions:

\begin{definition} {\bf [Configuration Graph]} The vertices of the 
{\em configuration graph} correspond to the well-formed clock configuration for  Tracks $1$, $2$, and $3$. There is a directed edge from configuration $c_1$ to configuration $c_2$, if $c_2$ can be reached from $c_1$ by the application of one transition rule in the forward direction. 
\end{definition} 

We show in Lemma \ref{lem:clockwellformed}
that every well-formed clock configuration has at most one in-coming and at most one out-going edge in the configuration graph. Thus, the configuration graph is a set of disjoint paths and cycles. In Definition \ref{def:correct} we further divided the well-formed clock configurations as  {\em correct} or {\em incorrect} configurations. (In order for a clock configuration for Tracks $1$, $2$, and $3$ to be correct, the configuration for Tracks $1$ and $2$ must be correct along with some additional constraints on Track $3$). 
In Lemma \ref{lem:correctclock} we show that  incorrect clock configurations form short paths that end in a configuration that incurs an energy penalty. Furthermore the correct clock configurations with timer length $T$ form cycles of length
$(2T+1) p(N)$. The final Hamiltonian will be closed on the space of states whose clock configurations are all within a connected component of the configuration graph, so we can reason about each component separately. 

Since correct configurations will form cycles under the transition rules, it will be convenient to define a Time $0$ in the cycle:

\begin{definition}
\label{def:time0}
{\bf [Time $0$]}
\begin{enumerate}
\item A clock configuration is at Time $0$ for Track $2$ if it's Track $2$ state is
$\leftBr  \arrR \blankLtwo^* \rightBr$
\item A clock configuration is at Time $0$ for Tracks $1$ and $2$ if it's Track $1$ state is
$\leftBr  \arrRone \blank^* \rightBr$ and its Track $2$ state is
$\leftBr  \arrR \blankLtwo^* \rightBr$.
\item A clock configuration is at Time $0$ for Tracks $1$, $2$, and $3$ if it's Track $1$ state is
$\leftBr  \arrRone \blank^* \rightBr$, it's Track $2$ state is
$\leftBr  \arrR \blankLtwo^* \rightBr$,
and its Track $3$ state is
$\leftBr  \arrR \blankLthree^* \rightBr$.
\end{enumerate}
\end{definition}

{\bf Clock Track Summary:}
The Hamiltonian is closed on the set of well-formed clock configurations.
Every clock configuration which is not well-formed has an energy penalty that is at least $1$. The well-formed clock configurations form a graph consisting of disjoint paths and cycles. Incorrect clock configurations are in paths of length at most $p(N)$ which end in a configuration with a penalty term.   This will be sufficient to show that the Hilbert space spanned by configurations of the entire chain that have an incorrect clock configuration will have energy at least
$\Omega(1/p(N)^2)$. There are exactly $(2T+1)p(N)$
correct clock configurations whose timer length is $T$ and those configurations form a {\em cycle} of length 
$(2T+1)p(N)$ in the configuration graph. The cycle consists of $2T+1$ {\em iterations}.
In a single iteration, the configurations for Tracks $1$ and $2$ start at Time $0$, transition through all the correct configurations for Tracks $1$ and $2$, and then return to Time $0$. The schedule of $i$-segments and the movement of the pointers on Track $1$ in a single iteration is depicted in Figure \ref{fig:cycle}.

In the remainder of this section, we  describe the propagation terms for the clock
as well as the Type I constraints that achieve these properties. These sum of the  Type I terms given in the remainder of this section will be denoted $h_{cl}$.

\subsubsection{Rules for 1-pointers}

The $1$-pointers make many round trips between the left and the right
ends of the chain. This is accomplished by the following four generic transition terms:
$$\arrRone \blank \rightarrow \Dblank \arrRone
~~~~~~~~~
\arrRone \rightBr \rightarrow \arrLone \rightBr~~~~~~~~~
\Dblank \arrLone \rightarrow \arrLone \blank~~~~~~~~~
\leftBr \arrLone \rightarrow \leftBr \arrRone$$
Note that these rules will be enhanced to trigger certain actions
on the other tracks.
The following sequence shows one round trip:
\begin{center}
    \begin{tabular}{cc}
    $\leftBr \arrRone \blank \blank \blank \rightBr$
    & $\leftBr \Dblank \Dblank \arrLone \blank   \rightBr$\\
    $\leftBr \Dblank \arrRone  \blank \blank \rightBr$
    & $\leftBr \Dblank  \arrLone \blank \blank  \rightBr$\\
    $\leftBr \Dblank \Dblank \arrRone  \blank \rightBr$
    & $\leftBr   \arrLone \blank \blank \blank  \rightBr$\\
    $\leftBr \Dblank \Dblank \Dblank \arrRone  \rightBr$
    & $\leftBr   \arrRone \blank \blank \blank  \rightBr$\\
    $\leftBr \Dblank \Dblank \Dblank \arrLone  \rightBr$
    & 
    \end{tabular}
\end{center}
In general, one round trip of a Track $1$ pointer will take $2(N-2)$ clock steps.

As the $\arrLone$ sweeps from right to left, it advances the counter
on Track 2. Thus, we replace the rule $\Dblank \arrLone \rightarrow \arrLone \blank$
with the following two rules for the left-moving
pointer:
$$\fourcells{\Dblank}{\neg \arrR}{\arrLone}{\generic}
\rightarrow \fourcells{\arrLone}{\neg \arrR}{\blank}{\generic}~~~~~~~~~~~~~~~~~~
\fourcells{\Dblank}{\arrR}{\arrLone}{\blankLtwo}
\rightarrow \fourcells{\arrLone}{\DblankLtwo}{\blank}{\arrR},$$
where $\generic$ represent an arbitrary basis state for the Hilbert space for a particle and track.
We will enforce constraints that a left-moving $1$-pointer
should never be vertically aligned with a right pointer 
on Track $2$, so the first rule above applies only if $\generic$
is not $\arrR$.
We will add a constraint to enforce that
during a $1$-segment,
the arrow on Track $2$ points right, so a right-moving $1$-pointer
advances to the right as long as the state underneath it on Track $2$
is not $\arrL$:
$$\fourcells{\arrRone}{\neg \arrL}{\blank}{\generic}
\rightarrow \fourcells{\Dblank}{\neg \arrL}{\arrRone}{\generic}$$
This rule replaces the generic rule $\arrRone \blank \rightarrow \Dblank \arrRone$
shown above.

The $1$-pointers continue to shuttle back and forth until the
$\arrR$ on Track $2$ reaches the right end of the chain, so
we have the transitions:
$$\threecellsR{\arrRone}{\generic} \rightarrow \threecellsR{\arrLone}{\generic}~~~~~~~~~
\leftBr \arrLone \rightarrow \leftBr \arrRone$$
Where $\generic \neq \arrR$ or $\arrL$.
The $1$-pointer transitions to a left-moving $2$ pointer only
when the $\arrR$ has reached the right end of the chain:
$$\threecellsR{\arrRone}{\arrR} \rightarrow \threecellsR{\arrLtwo}{\arrL}$$
The change on Track $2$ from $\arrR$ to $\arrL$ indicates that the
Track $2$ counter will now be moving from right to left. 
For each round trip of the $1$-pointer, the state on Track $2$
changes as follows:

\begin{center}
    \begin{tabular}{c}
    $\leftBr \arrR \blankLtwo \blankLtwo \blankLtwo \rightBr$\\
    $\leftBr \DblankLtwo \arrR \blankLtwo \blankLtwo \rightBr$\\
    $\leftBr \DblankLtwo \DblankLtwo \arrR \blankLtwo \rightBr$\\
    $\leftBr \DblankLtwo \DblankLtwo \DblankLtwo \arrR \rightBr$
    \end{tabular}
\end{center}

During all of the round trips with the $1$-pointer, Track $2$
should be in a state $\leftBr \DblankLtwo^* \arrR \blankLtwo^* \rightBr$, so
we make illegal any configuration in which $\arrRone$ is
on top of $\arrL$ on Track $2$.
Also, when the right-moving pointer on Track $2$ is advanced by the left-moving $1$-pointer on
Track $1$, the two pointers pass each other. Therefore it will be illegal to have a left-moving $1$-pointer
on top of a right-moving pointer on Track $2$. The following local patterns are illegal and there are no out-going
transitions from these configurations:
\begin{equation}
    \label{eq:hcl1}
    \twocellsvert{\arrRone}{\arrL}~~~~~~~~~~~~~~~~~~~~~~~~
\twocellsvert{\arrLone}{\arrR}
\end{equation}
The two illegal patterns in (\ref{eq:hcl1}) are added as Type I constraints
to $h_{cl}$.

Figure \ref{fig:1pointer} shows a summary of the transition rules 
for the $1$-pointers. Note that some of these will be further refined
when we describe how the right-moving Track $1$ pointers trigger changes
on the computation tracks. The transition rules are labeled TR-$i$ so that
we can refer to them later.

\begin{figure}
    \centering
    \begin{tabular}{ccc}
    TR-$1$ & TR-$2$ & TR-$3$\\
   ~~~~~ $\fourcells{\Dblank}{\neg \arrR}{\arrLone}{\neg \arrR}
\rightarrow \fourcells{\arrLone}{\neg \arrR}{\blank}{\neg \arrR}$~~~~~ & 
~~~~~$\fourcells{\Dblank}{\arrR}{\arrLone}{\blankLtwo}
\rightarrow \fourcells{\arrLone}{\DblankLtwo}{\blank}{\arrR}$~~~~~ & 
~~~~~$\fourcells{\arrRone}{\neg \arrL}{\blank}{\generic}
\rightarrow \fourcells{\Dblank}{\neg \arrL}{\arrRone}{\generic}$~~~~~\\
& & \\
TR-$4$ & TR-$5$ & TR-$6$\\
$\threecellsR{\arrRone}{\generic} \rightarrow \threecellsR{\arrLone}{\generic}$ & 
$\leftBr \arrLone \rightarrow \leftBr \arrRone$
& $\threecellsR{\arrRone}{\arrR} \rightarrow \threecellsR{\arrLtwo}{\arrL}$\\
For $\generic \neq \arrR, \arrL$ & & 
    \end{tabular}
    \caption{Transition rules involving $1$-pointers.}
    \label{fig:1pointer}
\end{figure}

A $1$-segment begins with the $\arrRone$ pointer at the left end of the chain
and ends with $\arrRone$ at the right end of the chain before it transitions
to $\arrLtwo$. Thus, the $1$-segment lasts for $N-3$ full round trips plus
on additional one-way trip from the left end of the chain to the right end of the chain. Figure \ref{fig:1segment} shows the first and last full
round trips of a $1$-segment as well as the last one-way trip.

\begin{figure}
\begin{center}

\begin{tabular}{ccc}

\begin{tabular}{c}
First round trip \\
of a $1$-phase\\
\\
$\eightcells{\arrRone}{\blank}{\blank}{\blank}{\arrR}{\blankLtwo}{\blankLtwo}{\blankLtwo}$ \\  \\
$\eightcells{\Dblank}{\arrRone}{\blank}{\blank}{\arrR}{\blankLtwo}{\blankLtwo}{\blankLtwo}$ \\  \\
$\eightcells{\Dblank}{ \Dblank}{ \arrRone}{ \blank}{\arrR}{\blankLtwo}{\blankLtwo}{\blankLtwo}$ \\  \\
$\eightcells{\Dblank}{ \Dblank}{ \Dblank}{ \arrRone}{\arrR}{\blankLtwo}{\blankLtwo}{\blankLtwo}$ \\  \\
$\eightcells{\Dblank}{ \Dblank}{ \Dblank}{ \arrLone}{\arrR}{\blankLtwo}{\blankLtwo}{\blankLtwo}$ \\  \\
$\eightcells{\Dblank}{ \Dblank}{ \arrLone}{ \blank}{\arrR}{\blankLtwo}{\blankLtwo}{\blankLtwo}$ \\  \\
$\eightcells{\Dblank}{ \arrLone}{ \blank}{ \blank}{\arrR}{\blankLtwo}{\blankLtwo}{\blankLtwo}$ \\  \\
$\eightcells{\arrLone}{ \blank}{ \blank}{ \blank}{\DblankLtwo}{\arrR}{\blankLtwo}{\blankLtwo}$ \\  \\
$\eightcells{\arrRone}{ \blank}{ \blank}{ \blank}{\DblankLtwo}{\arrR}{\blankLtwo}{\blankLtwo}$ \\  \\
\end{tabular}

\begin{tabular}{c}
Last full round trip\\
of a $1$-phase\\
\\
$\eightcells{\arrRone}{\blank}{\blank}{\blank}{\DblankLtwo}{\DblankLtwo}{\arrR}{\blankLtwo}$ \\  \\
$\eightcells{\Dblank}{ \arrRone}{ \blank}{ \blank}{\DblankLtwo}{\DblankLtwo}{\arrR}{\blankLtwo}$ \\  \\
$\eightcells{\Dblank}{\Dblank}{ \arrRone}{ \blank}{\DblankLtwo}{\DblankLtwo}{\arrR}{\blankLtwo}$ \\  \\
$\eightcells{ \Dblank}{\Dblank}{\Dblank}{\arrRone}{\DblankLtwo}{\DblankLtwo}{\arrR}{\blankLtwo}$ \\  \\
$\eightcells{\Dblank}{\Dblank}{\Dblank}{\arrLone}{\DblankLtwo}{\DblankLtwo}{\arrR}{\blankLtwo}$ \\  \\
$\eightcells{\Dblank}{\Dblank}{\arrLone}{\blank}{\DblankLtwo}{\DblankLtwo}{\DblankLtwo}{\arrR}$ \\  \\
$\eightcells{\Dblank}{\arrLone}{\blank}{\blank}{\DblankLtwo}{\DblankLtwo}{\DblankLtwo}{\arrR}$ \\  \\
$\eightcells{\arrLone}{\blank}{\blank}{\blank}{\DblankLtwo}{\DblankLtwo}{\DblankLtwo}{\arrR}$ \\  \\
$\eightcells{\arrRone}{\blank}{\blank}{\blank}{\DblankLtwo}{\DblankLtwo}{\DblankLtwo}{\arrR}$ \\  \\
\end{tabular}

\begin{tabular}{c}
Last one-way trip\\
of a $1$-phase\\
\\
$\eightcells{\Dblank}{\arrRone}{\blank}{\blank}{\DblankLtwo}{\DblankLtwo}{\DblankLtwo}{\arrR}$ \\  \\
$\eightcells{\Dblank}{ \Dblank}{ \arrRone}{ \blank}{\DblankLtwo}{\DblankLtwo}{\DblankLtwo}{\arrR}$ \\  \\
$\eightcells{\Dblank}{ \Dblank}{ \Dblank}{ \arrRone}{\DblankLtwo}{\DblankLtwo}{\DblankLtwo}{\arrR}$ \\  \\
$\eightcells{\Dblank}{ \Dblank}{ \Dblank}{ \arrLtwo}{\DblankLtwo}{\DblankLtwo}{\DblankLtwo}{\arrL}$ \\  \\
\end{tabular}

\end{tabular}
\caption{The first round trip, the last full round trip,
and the last one-way trip of a $1$-phase. The very last transition
illustrates the transition to the next $2$-phase.}
\label{fig:1segment}
\end{center}
\end{figure}

There are a total of $N-3$ full round trips as the $\arrR$ on
Track $2$ moves from the left end of the chain to the right
end of the chain. Since each full round trip takes $2(N-2)$ clock steps,
this accounts for a total of $2(N-2)(N-3)$ clock steps.
Then in the last one-way  trip the $\arrRone$ on Track $1$
sweeps from the left end of the chain to the right end of the
chain ($N-3$ clock steps) and then the $\arrRone$ transitions
to $\arrLtwo$ (one final clock step). The total number of clock steps in
a $1$-segment is:
$$2(N-2)(N-3) + (N-3) + 1 = (N-2)(2N-5).$$

\begin{claim}
\label{cl:everycorrect1}
Starting from Time $0$ for Tracks $1$ and $2$, every combination of clock configurations
for Tracks $1$ and $2$ with a $1$-pointer is reached except those in which
Track $2$ has a left pointer $\arrL$ and those that contain
a $\arrLone$ on Track $1$ vertically aligned with the
$\arrR$ on Track $2$.
\end{claim}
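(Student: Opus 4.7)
The plan is to track Tracks $1$ and $2$ step by step under the rules TR-$1$ through TR-$6$, decomposing the $1$-segment into $N-3$ full iterations of the Track $1$ pointer followed by one final half iteration. Two structural observations anchor the argument: first, the initial Track $2$ state at Time $0$ is $\leftBr \arrR \blankLtwo^* \rightBr$, and none of TR-$1$ through TR-$5$ changes $\arrR$ into $\arrL$ (only TR-$6$ does, and that is the last clock step of the segment), so no reached configuration has $\arrL$ on Track $2$; second, TR-$2$ acts in a single step taking $(\arrLone$ at $j{+}1$, $\arrR$ at $j)$ to $(\arrLone$ at $j$, $\arrR$ at $j{+}1)$, so the vertically aligned configuration with $\arrLone$ directly above $\arrR$ is bypassed entirely.

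I would then prove by induction on $j = 1, \ldots, N-3$ that iteration $j$ starts in the state with $\arrRone$ at position $1$ and $\arrR$ at position $j$. Within iteration $j$: TR-$3$ drives $\arrRone$ rightward through positions $1, 2, \ldots, N-2$ (its restriction against $\arrL$ is vacuously satisfied), then TR-$4$ flips $\arrRone$ to $\arrLone$ at the right end (the rightmost Track $2$ cell is $\blankLtwo$ since $j < N-2$); TR-$1$ drives $\arrLone$ leftward through positions $N-2, \ldots, j+1$; at the next step the local pattern matches TR-$2$, which fires once and jumps $\arrLone$ from $j+1$ to $j$ while simultaneously advancing $\arrR$ from $j$ to $j+1$; TR-$1$ continues the leftward sweep through positions $j-1, \ldots, 1$; and TR-$5$ flips $\arrLone$ back to $\arrRone$ at the left end. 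This visits, in order, $\arrRone$ at all positions with $\arrR$ at $j$, then $\arrLone$ at positions $N-2, \ldots, j+1$ with $\arrR$ at $j$, and then $\arrLone$ at positions $j, j-1, \ldots, 1$ with $\arrR$ at $j+1$. The iteration ends in the state with $\arrRone$ at $1$ and $\arrR$ at $j+1$, which establishes the inductive hypothesis for iteration $j+1$.

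For the final half iteration ($j = N-2$), $\arrRone$ sweeps right through positions $1, \ldots, N-2$ with $\arrR$ fixed at $N-2$, and the final clock step applies TR-$6$ (rather than TR-$4$), since $\arrR$ now sits at the right end, producing the $\arrLtwo$ configuration that begins the $2$-segment. Taking the union over all iterations yields exactly the configurations: $\arrRone$ at any position with $\arrR$ at any position, and $\arrLone$ at position $i$ with $\arrR$ at any position $j \neq i$. Counting gives $(N-2)^2 + \bigl((N-2)^2 - (N-2)\bigr) = (N-2)(2N-5)$, matching the total clock-step count for the $1$-segment derived earlier, so every non-excluded configuration is reached, and no configuration is visited twice.

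The main obstacle is the bookkeeping around TR-$2$: one must verify that within iteration $j$, $\arrR$ advances from $j$ to $j+1$ at precisely the step when $\arrLone$ crosses from $j+1$ to $j$, so that $\arrLone$ positions $\geq j+1$ are paired with $\arrR$ at $j$ while $\arrLone$ positions $\leq j$ are paired with $\arrR$ at $j+1$. Once this interlocking is pinned down, the union of visited sets over $j$ together with the arithmetic match against $(N-2)(2N-5)$ makes the claim immediate.
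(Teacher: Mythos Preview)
Your proof is correct and follows essentially the same approach as the paper's: both track the Track~$1$ pointer sweeping back and forth while TR-$2$ advances the Track~$2$ pointer once per round trip, and both conclude by matching the count of visited configurations against the known segment length $(N-2)(2N-5)$. The only cosmetic difference is where the iteration boundaries are drawn: you begin each iteration with $\arrRone$ at position $1$ and $\arrR$ at position $j$, whereas the paper begins each iteration with $\arrLone$ at position $i-1$ (immediately after the TR-$2$ swap) and groups configurations by the Track~$2$ pointer location~$i$, arguing directly that all $2N-5$ allowable configurations with $\arrR$ at $i$ are visited in one pass.
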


\begin{proof}
Number the non-bracket particles  $1$ through $N-2$.
The $\arrR$ pointer on Track $2$ can be in any of the $N-2$ locations.
Suppose the $\arrR$ pointer on Track $2$ is on location $i$.
The Track $1$ pointer can be right-moving $\arrRone$ in any of the $N-2$ locations
or left-moving 
$\arrLone$ in any of the locations besides $i$. So there should be a total
of $(N-2)[(N-2) + (N-3)] = (N-2)(2N-5)$ configurations.
Starting at Time $0$ for Tracks $1$ and $2$, the Track $1$ pointer is $\arrRone$ in location $1$ and the Track $2$ pointer is $\arrR$  in location $1$.
The Track $1$ pointer sweeps to the right in every possible location
(by TR-$3$). When it hits the right end in location $N-2$ it changes
to $\arrLone$ (by TR-$4$), and then sweeps left to location $2$ (by TR-$1$). 
Every allowable configuration with the Track $2$ pointer in location $1$ is reached.
In the next clock step, the Track $1$ pointer and the Track $2$ pointer swap locations
(by TR-$2$).
In a general round trip, the Track $2$ pointer is in location $i$ and the Track $1$ pointer starts as $\arrLone$ in location $i-1$. The Track $1$ pointer sweeps left
to location $1$, transitions to $\arrRone$, sweeps right to location $N-2$,
transitions back to $\arrLone$ and then sweeps left to location $i+1$. Every allowable configuration with the Track $2$ pointer in location $i$ is reached.
In the next clock step, the Track $1$ pointer and the Track $2$ pointer swap locations
(by TR-$2$) and a new round trip begins. In each round trip, all $2N-5$ configurations
with the Track $2$ pointer location $i$ are reached.
In the last configuration, the Track $1$ pointer $\arrRone$ and Track $2$ pointer $\arrR$ are in location $N-2$. The Track $1$ pointer transitions to $\arrLtwo$
and the Track $2$ pointer transitions to $\arrL$ (by TR-$6$).
\end{proof}

\subsubsection{Rules for 2-pointers}

The $2$-pointers just make a single round trip starting and ending at the right end of the chain:
\begin{center}
    \begin{tabular}{ccc}
    TR-$7$ & TR-$8$ & TR-$9$\\
   ~~~~~ $\Dblank \arrLtwo \rightarrow \arrLtwo \blank$~~~~~ & 
~~~~~$\leftBr \arrLtwo \rightarrow \leftBr \arrRtwo$~~~~~ & 
~~~~~$\arrRtwo \blank \rightarrow \Dblank \arrRtwo$~~~~~
\end{tabular}
\end{center}

 When the right-moving $2$-pointer reaches the right end of the chain, it transitions
 to a left-moving three pointer. At this point the Track $2$ state should be
 $\leftBr \Dblank^* \arrL \rightBr$, so this transition only happens if the Track $2$
 symbol is $\arrR$:
 \begin{center}
    \begin{tabular}{c}
    TR-$10$\\
    $\threecellsR{\arrRtwo}{\arrL} \rightarrow \threecellsR{\arrLthree}{\arrL}$
    \end{tabular}
    \end{center}
    
 Moreover the following patterns is illegal and the Type I constraint that adds
 the energy penalty for this patterns is added to $h_{cl}$:
 \begin{equation}
    \label{eq:hcl2}
 \threecellsR{\arrRtwo}{\neg \arrL}
 \end{equation}

\subsubsection{Rules for 3-pointers}
The $3$-pointers also make many round trips between the left and the right
ends of the chain. We will add a constraint to enforce that
during a $3$-segment
the arrow on Track $2$ points left, so a left-moving $3$-pointer
advances to the left as long as the state underneath it on Track $2$
is not $\arrR$:
$$\fourcells{\Dblank}{\generic}{\arrLthree}{\neg \arrR}
\rightarrow \fourcells{\arrLthree}{\generic}{\blank}{\neg \arrR}$$
As the $\arrRthree$ sweeps from left to right, it advances the counter
on Track 2. Thus, we have the following two rules for the left-moving
pointer:
$$\fourcells{\arrRthree}{\generic}{\blank}{\neg \arrL}
\rightarrow \fourcells{\Dblank}{\generic}{\arrRthree}{\neg \arrL}~~~~~~~~~~~~~~~~~~
\fourcells{\arrRthree}{\DblankLtwo}{\blank}{\arrL}
\rightarrow \fourcells{\Dblank}{\arrL}{\arrRthree}{\blankLtwo}$$
We will enforce constraints that a right-moving $3$-pointer
should never be vertically aligned with a  left-moving pointer
on Track $2$, so the first rule above applies only if $\generic$
is not $\arrL$.

The $3$-pointers continue to shuttle back and forth until the
$\arrL$ on Track $2$ reaches the left end of the chain, so
we have the transitions:
$$\threecellsL{\arrLthree}{\generic} \rightarrow \threecellsL{\arrRthree}{\generic}~~~~~~~~~
 \arrRthree \rightBr \rightarrow \arrLthree \rightBr ,$$
where $\generic$ is any state other than $\arrL$ or $\arrR$.
For each round trip of the $3$-pointer, the state on Track $2$
changes as follows:

\begin{center}
    \begin{tabular}{c}
    $\leftBr  \DblankLtwo \DblankLtwo \DblankLtwo \arrL \rightBr$\\
    $\leftBr \DblankLtwo  \DblankLtwo \arrL \blankLtwo \rightBr$\\
    $\leftBr \DblankLtwo \arrL \blankLtwo \blankLtwo \rightBr$\\
    $\leftBr \arrL \blankLtwo \blankLtwo \blankLtwo  \rightBr$
    \end{tabular}
\end{center}

The $3$-pointer transitions to a right-moving $4$ pointer only
when the $\arrL$ has reached the left end of the chain:
$$\threecellsL{\arrLthree}{\arrL} \rightarrow \threecellsL{\arrRfour}{\arrR}$$
During all of the round trips with the $3$-pointer, Track $2$
should be in a state $\leftBr \DblankLtwo^* \arrL \blankLtwo^* \rightBr$.
Therefore, we make illegal any configuration in which $\arrLthree$ on Track $1$ is
on top of $\arrR$ on Track $2$.
Also, when the left-moving pointer on Track $2$ is advanced by the right-moving $3$-pointer on
Track $1$, the two pointers pass each other. Therefore it will be illegal to have a right-moving $3$-pointer
on top of a right-moving pointer on Track $2$. The following local patterns are illegal and there are no out-going
transitions from these configurations:
 \begin{equation}
    \label{eq:hcl3}
\twocellsvert{\arrLthree}{\arrR}~~~~~~~~~~~~~~~~~~~~~~~~
\twocellsvert{\arrRthree}{\arrL}
\end{equation}

Figure \ref{fig:3pointer} shows a summary of the transition rules 
for the $3$-pointers. Note that some of these will be further refined
when we describe how the right-moving Track $1$ pointers trigger changes
on the computation tracks. 

\begin{figure}
    \centering
    \begin{tabular}{ccc}
    TR-$11$ & TR-$12$ & TR-$13$\\
   ~~~~~ $\fourcells{\arrRthree}{\neg \arrL}{\blank}{\neg \arrL}
\rightarrow \fourcells{\Dblank}{\neg \arrL}{\arrRthree}{\neg \arrL}$~~~~~ & 
~~~~~$\fourcells{\arrRthree}{\DblankLtwo}{\blank}{\arrL}
\rightarrow \fourcells{\Dblank}{\arrL}{\arrRthree}{\blankLtwo}$~~~~~ & 
~~~~~$\fourcells{\Dblank}{\generic}{\arrLthree}{\neg \arrR}
\rightarrow \fourcells{\arrLthree}{\generic}{\blank}{\neg \arrR}$~~~~~\\
& & \\
TR-$14$ & TR-$15$ & TR-$16$\\
$\threecellsL{\arrLthree}{\generic} \rightarrow \threecellsL{\arrRthree}{\generic}$ & 
$\arrRthree \rightBr \rightarrow \arrLthree \rightBr$
& $\threecellsL{\arrLthree}{\arrL} \rightarrow \threecellsL{\arrRfour}{\arrR}$\\
For $\generic \neq \arrR, \arrL$ & & \\
    \end{tabular}
    \caption{Transition rules involving $1$-pointers.}
    \label{fig:3pointer}
\end{figure}

A $3$-segment begins with the $3$-pointer on Track $1$ at the
right end of the chain, does $N-3$ full round trips,
ending with the $3$-pointer at the right end of the chain.
Then it does a one-way trip in which the $3$-pointer
moves to the left end of the chain and then one final clock step in which the $\arrLthree$ on Track
$1$ transitions to $\arrRfour$.
$$\eightcells{\Dblank}{\Dblank}{\Dblank}{\arrLthree}{\DblankLtwo}{\DblankLtwo}{\DblankLtwo}{\arrL} \xrightarrow[\text{steps}]{2(N-2)(N-3)}
\eightcells{\Dblank}{\Dblank}{\Dblank}{\arrLthree}{\arrL}{\blankLtwo}{\blankLtwo}{\blankLtwo}
\xrightarrow[\text{steps}]{(N-3)}
\eightcells{\arrLthree}{\blank}{\blank}{\blank}{\arrL}{\blankLtwo}{\blankLtwo}{\blankLtwo}
\xrightarrow[\text{step}]{\text{one}}
\eightcells{\arrRfour}{\blank}{\blank}{\blank}{\arrR}{\blankLtwo}{\blankLtwo}{\blankLtwo}$$
Therefore,
a $3$-segment lasts for $(N-2)(2N-5)$ clock steps.

\begin{claim}
\label{cl:everycorrect3}
Starting from a configuration in which $\arrLthree$ is on the left end of the chain on Track $1$ and $\arrL$ is on the right end of the chain on Track $2$,  every combination of clock configurations
for Tracks $1$ and $2$ with a $3$-pointer on Track $1$
is reached by applications of the transition rules, except those in which
Track $2$ has a right pointer $\arrR$ and those that contain
a $\arrRthree$ on Track $1$ vertically aligned with the
$\arrL$ on Track $2$.
\end{claim}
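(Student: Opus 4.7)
The plan is to mirror the proof of Claim \ref{cl:everycorrect1}, exploiting the fact that the transition rules for $3$-pointers are the left-right reflection of those for $1$-pointers, with the role of Track $2$'s $\arrR$ replaced by $\arrL$. (I will also assume the obvious typo: the stated starting configuration should place $\arrLthree$ at the \emph{right} end of Track $1$, matching the exit of the $2$-segment via TR-$10$ and the description of the $3$-segment given earlier in the section.)

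First I would count the target set. Number the non-bracket positions $1, \ldots, N-2$. The $\arrL$ on Track $2$ can occupy any of the $N-2$ positions; for each such position $i$, the Track $1$ pointer may be $\arrRthree$ in any of the $N-2$ locations except location $i$ (by the illegal pattern in (\ref{eq:hcl3})), or $\arrLthree$ in any of the $N-2$ locations. This gives $(N-2)[(N-2)+(N-3)] = (N-2)(2N-5)$ legal combinations, matching the number of clock steps in a $3$-segment computed above.

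Next I would describe the orbit of the initial configuration under the transition rules. Starting with $\arrLthree$ at location $N-2$ on Track $1$ and $\arrL$ at location $N-2$ on Track $2$, rule TR-$13$ sweeps $\arrLthree$ leftward across all positions (this is permitted because the Track $2$ symbol underneath is never $\arrR$), until it reaches location $1$. Rule TR-$14$ then flips it to $\arrRthree$, and rule TR-$11$ sweeps it rightward; it does not swap locations with the Track $2$ pointer during this sweep because the Track $2$ state is $\arrL$ and TR-$12$ activates only when $\arrRthree$ sits immediately to the left of $\arrL$, which, in this first iteration, occurs only at the very end of the chain. At the right end, TR-$15$ flips $\arrRthree$ back to $\arrLthree$, and TR-$13$ sweeps left again down to location $N-3$. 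In a general iteration with $\arrL$ at location $i$, the Track $1$ pointer starts as $\arrRthree$ at location $i-1$, fires TR-$12$ to swap with $\arrL$, sweeps right to $N-2$, flips via TR-$15$, sweeps left to $1$, flips via TR-$14$, and sweeps right back to location $i-2$ before the next TR-$12$ swap; every admissible configuration with $\arrL$ at location $i$ is realized in this iteration. The last iteration ends with $\arrLthree$ at $1$ and $\arrL$ at $1$, at which point TR-$16$ fires to begin the $4$-segment.

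The main (minor) obstacle is bookkeeping at the boundary transitions: one must check that during each full iteration of the Track $1$ pointer precisely the $2N-5$ admissible $(\arrRthree,\arrLthree)$ positions paired with the current $\arrL$ location are visited exactly once, and that no forbidden pattern from (\ref{eq:hcl3}) is ever produced. Both facts follow from the direction-preserving nature of TR-$11$ and TR-$13$ combined with the ``no-passing'' property encoded in TR-$12$: the right-moving $\arrRthree$ meets $\arrL$ exactly once per iteration, triggering the swap that advances the Track $2$ pointer by one position to the left, after which $\arrL$ stays put until the next swap. Summing over $i = N-2, N-3, \ldots, 1$ yields all $(N-2)(2N-5)$ targeted configurations, completing the claim.
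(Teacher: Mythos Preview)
Your approach is the same as the paper's: exploit the left--right symmetry with Claim~\ref{cl:everycorrect1}, count the $(N-2)(2N-5)$ admissible configurations, and trace the orbit iteration by iteration via TR-$11$ through TR-$16$. You also correctly flag the typo in the statement (the starting $\arrLthree$ is at the right end, position $N-2$).

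One bookkeeping slip worth fixing: in your description of the general iteration with $\arrL$ at position $i$, you place the Track~$1$ pointer $\arrRthree$ at position $i-1$ and then immediately fire TR-$12$. But TR-$12$ moves $\arrL$ from $i$ to $i-1$, so after that first step $\arrL$ is no longer at $i$---which makes the sentence ``every admissible configuration with $\arrL$ at location $i$ is realized in this iteration'' false as written. The paper defines the iteration to start just \emph{after} the swap, with $\arrRthree$ at position $i+1$ and $\arrL$ fixed at $i$ throughout; the pointer then sweeps right to $N-2$, flips, sweeps left to $1$, flips, and sweeps right to $i-1$, at which point TR-$12$ fires and the next iteration begins. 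With that boundary convention the $2N-5$ configurations with $\arrL$ at $i$ are exactly those visited in one iteration, and the accounting is clean.
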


\begin{proof}
Number the non-bracket particles  $1$ through $N-2$.
The $\arrL$ pointer on Track $2$ can be in any of the $N-2$ locations.
Suppose the $\arrL$ pointer on Track $2$ is on location $i$.
The Track $1$ pointer can be a left-moving $\arrLthree$ in any of the $N-2$ locations
or a right-moving $\arrRthree$ in any of the locations besides $i$. So there should be a total
of $(N-2)[(N-2) + (N-3)] = (N-2)(2N-5)$ configurations.
Starting at time $N-2$ for Tracks $1$ and $2$, the Track $1$ pointer is $\arrLthree$ in location $N-2$ and the Track $2$ pointer $\arrL$ is in location $N-2$.
The Track $1$ pointer sweeps to the left in every possible location
(by TR-$13$). When it hits the left end in location $1$ it changes
to $\arrRthree$ (by TR-$14$), and then sweeps right to location $N-3$ (by TR-$11$). 
Every allowable configuration with the Track $2$ pointer in location $N-2$ is reached.
In the next clock step, the Track $1$ pointer and the Track $2$ pointer swap locations
(by TR-$12$).
In a general round trip, the Track $2$ pointer is in location $i$ and the Track $1$ pointer starts as $\arrRthree$ in location $i+1$. The Track $1$ pointer sweeps right
to location $N-2$, transitions to $\arrLthree$, sweeps left to location $1$,
transitions back to $\arrRthree$ and then sweeps right to location $i-1$.
Every allowable configuration with the Track $2$ pointer in location $i$ is reached.
In the next clock step, the Track $1$ pointer and the Track $2$ pointer swap locations
(by TR-$12$) and a new round trip begins. In each round trip, all $2N-5$ 
allowable configurations
with the Track $2$ pointer location $i$ are reached.
In the last configuration, the Track $1$ pointer $\arrLthree$ and Track $2$ pointer $\arrL$ are in location $1$. The Track $1$ pointer transitions to $\arrRfour$
and the Track $2$ pointer transitions to $\arrR$ (by TR-$16$).
\end{proof}

\subsubsection{Rules for 4-pointers}

The $4$-pointers just make a single round trip starting and ending at the left end of the chain:
\begin{center}
    \begin{tabular}{ccc}
    TR-$17$ & TR-$18$ & TR-$19$\\
   ~~~~~ $\arrRfour \blank \rightarrow \Dblank \arrRfour$~~~~~ & 
~~~~~$\arrRfour \rightBr \rightarrow \arrLfour \rightBr$~~~~~ & 
~~~~~$\Dblank \arrLfour \rightarrow \arrLfour \blank$~~~~~
\end{tabular}
\end{center}
 When the right-moving $4$-pointer reaches the left end of the chain, it transitions
 to a right-moving five pointer. At this point the Track $2$ state should be
 $\leftBr  \arrR \blank^* \rightBr$, so this transition only happens if the Track $2$
 symbol is $\arrR$:
 \begin{center}
    \begin{tabular}{ccc}
    TR-$20$\\
   ~~~~~ $\threecellsL{\arrLfour}{\arrR} \rightarrow \threecellsL{\arrRfive}{\arrR}$~~~~~ 
\end{tabular}
\end{center}
 Moreover the following pattern is illegal and the Type I constraint that adds
 the energy penalty for this pattern is added to $h_{cl}$:
 \begin{equation}
    \label{eq:hcl4}
 \threecellsL{\arrLfour}{\neg \arrR}
 \end{equation}

\subsubsection{Rules for i-pointers for i = 5, 6, 7, 8}

The $5$-pointers make $N-3$ full round trips plus an additional one-way trip, starting at the left end of the chain and ending at the right end of the chain. Their rules are identical to the rules for the $1$-pointers, except that they eventually transition to a $6$-pointer when the Track $2$ pointer reaches the right end of the chain. TR-$1$ through TR-$5$ are duplicated with
$\arrRone$ and $\arrLone$ replaced with $\arrRfive$ and $\arrLfive$
and named TR-$21$ through TR-$25$. 
\begin{center}
    \begin{tabular}{ccc}
    TR-$26$\\
   ~~~~~ $\threecellsR{\arrRfive}{\arrR} \rightarrow \threecellsR{\arrLsix}{\arrL}$~~~~~ 
\end{tabular}
\end{center}
The Type I terms from (\ref{eq:hcl1}) are duplicated with
$\arrRone$ and $\arrLone$ replaced with $\arrRfive$ and $\arrLfive$ and added to $h_{cl}$.

The $6$-pointers make a single round trip, starting and ending at the right end of the chain.
The rules for these pointers are identical to the $2$-pointers given above, except that they transition to $7$-pointers when they reach the right end of the chain.
TR-$7$ through TR-$9$ are duplicated with
$\arrRtwo$ and $\arrLtwo$ replaced with $\arrRsix$ and $\arrLsix$
and named TR-$27$ through TR-$29$. 
\begin{center}
    \begin{tabular}{ccc}
    TR-$30$\\
   ~~~~~ $\threecellsR{\arrRsix}{\arrL} \rightarrow \threecellsR{\arrLseven}{\arrL}$~~~~~ 
\end{tabular}
\end{center}
The Type I terms from (\ref{eq:hcl2}) are duplicated with
$\arrRtwo$ and $\arrLtwo$ replaced with $\arrRsix$ and $\arrLsix$ and added to $h_{cl}$.
 
 The $7$-pointers make $N-3$ full round trips plus an additional one-way trip, starting at the right end of the chain and ending at the left end of the chain. Their rules are identical to the rules for the $3$-pointers, except that they eventually transition to an $8$-pointer when the Track $2$ pointer reaches the right end of the chain.
 TR-$11$ through TR-$15$ are duplicated with
$\arrRthree$ and $\arrLthree$ replaced with $\arrRseven$ and $\arrLseven$
and named TR-$31$ through TR-$35$. 
\begin{center}
    \begin{tabular}{ccc}
    TR-$36$\\
   ~~~~~ $\threecellsL{\arrLseven}{\arrL} \rightarrow \threecellsL{\arrReight}{\arrR}$~~~~~ 
\end{tabular}
\end{center}
The Type I terms from (\ref{eq:hcl3}) are duplicated with
$\arrRthree$ and $\arrLthree$ replaced with $\arrRseven$ and $\arrLseven$ and added to $h_{cl}$.

The $8$-pointers make a single round trip, starting and ending at the left end of the chain. The $8$-pointers also cause the Track $3$ pointer to advance one location. Therefore these rules
will be replaced in the next section
with rules that involve Track $3$ as well.
We will label these rules with the tag ``temp'' for now.
The rules for these pointers are identical to the $4$-pointers given above, except that they transition to a $1$-pointer at the end of the round trip. 
 TR-$17$ through TR-$19$ are duplicated with
$\arrRfour$ and $\arrLfour$ replaced with $\arrReight$ and $\arrLeight$
and named TR-$37$-temp through TR-$39$-temp. 
\begin{center}
    \begin{tabular}{ccc}
    TR-$40$-temp\\
   ~~~~~ $\threecellsL{\arrLseven}{\arrR} \rightarrow \threecellsL{\arrReight}{\arrR}$~~~~~ 
\end{tabular}
\end{center}
We will wait to re-number these transition rules because 

Figure \ref{fig:segmentsummary} summarizes the segments.
In general, the the different $i$-pointers will trigger one application of a transition function of a Turing Machine
on the computational tracks
as it sweeps from one end of the chain to the other. 
$i$-pointers for $i = 1, 2, 3$ trigger a TM step as they move from right to left. 
$i$-pointers for $i = 5, 6, 7$ trigger a TM step as they move from left to right. 
The labels on the pointers provide a way of controlling which Turing Machine transition function
is executed at which time. 
The $4$-segments and $8$-segments
will be used to check  conditions on the lower
tracks, triggering an energy penalty if any of the conditions are violated.

\begin{figure}
\begin{center}

\begin{tabular}{cccccc}
Segment & Start & End & Track $1$ State & Track $2$ State & Number of Steps\\
\hline
$1$ & Left & Right & $\leftBr \Dblank^* (\arrRone + \arrLone)
\blank^* \rightBr$ & $\leftBr \DblankLtwo^* \arrR \blankLtwo^* 
 \rightBr$ & $(N-2)(2N-5)$\\
$2$ & Right & Right & $\leftBr \Dblank^* (\arrRtwo + \arrLtwo)
\blank^* \rightBr$ & $\leftBr \DblankLtwo^* \arrL  
 \rightBr$ & $2(N-2)$\\
$3$ & Right & Left & $\leftBr \Dblank^* (\arrRthree + \arrLthree)
\blank^* \rightBr$ & $\leftBr \DblankLtwo^* \arrL \blankLtwo^*
 \rightBr$ & $(N-2)(2N-5)$\\
$4$ & Left & Left & $\leftBr \Dblank^* (\arrRfour + \arrLfour)
\blank^* \rightBr$ & $\leftBr \arrR \blankLtwo^* 
 \rightBr$ & $2(N-2)$\\
$5$ & Left & Right & $\leftBr \Dblank^* (\arrRfive + \arrLfive)
\blank^* \rightBr$ & $\leftBr \DblankLtwo^* \arrR \blankLtwo^* 
 \rightBr$ & $(N-2)(2N-5)$\\
$6$ & Right & Right & $\leftBr \Dblank^* (\arrRsix + \arrLsix)
\blank^* \rightBr$ & $\leftBr \DblankLtwo^* \arrL  
 \rightBr$ & $2(N-2)$\\
$7$ & Right & Left & $\leftBr \Dblank^* (\arrRseven + \arrLseven) \blank^*
 \rightBr$ & $\leftBr \DblankLtwo^* \arrL \blankLtwo^*
 \rightBr$ & $(N-2)(2N-5)$\\
$8$ & Left & Left & $\leftBr \Dblank^* (\arrReight + \arrLeight)
\blank^* \rightBr$ & $\leftBr  \arrR  \blankLtwo^*
 \rightBr$ & $2(N-2)$
\end{tabular}
\caption{A summary of the $i$-segments for $i = 0, \ldots, g$.}
\label{fig:segmentsummary}
\end{center}
\end{figure}

\subsubsection{Summary of Track 1 and 2 Clock Configurations}

Throughout the construction, we will need to maintain
that the transition rules are well defined for well-formed
clock configurations. For now, we prove this fact for
the transition rules that apply to Tracks $1$ and $2$.

\begin{lemma}
\label{lem:rulebound}
{\bf [Bound on Reachable Configurations in One Clock Step]}
For every  well formed clock configuration for Tracks $1$ and $2$,
$A$, there is at most one configuration that can be reached from $A$ by a transition rule applied in the forward direction and at most one configuration that can be reached from $A$ by a transition rule applied in the reverse direction
\end{lemma}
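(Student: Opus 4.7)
The plan is to prove determinism of the transition relation on well-formed Track $1$ / Track $2$ clock configurations by a structured case analysis keyed on the label $i$ of the unique Track $1$ pointer. By Definition~\ref{def:wellformed}, every well-formed configuration $A$ contains exactly one Track $1$ pointer (of the form $\arrRi$ or $\arrLi$ for a single $i \in \{1,\ldots,8\}$) and exactly one Track $2$ arrow ($\arrR$ or $\arrL$). Every transition rule TR-$k$ listed in Figures~\ref{fig:1pointer}, \ref{fig:3pointer} and in the rule lists for $i = 2, 4, 5, 6, 7, 8$ mentions a specific label $i$ in its left-hand side, so the label of the unique pointer in $A$ immediately restricts the candidate rules to the small family associated with that $i$.

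Within each family I would enumerate the candidates by two features of $A$: the direction of the Track $1$ pointer (right- vs.\ left-moving), and the location of the Track $1$ pointer relative to the brackets and relative to the Track $2$ arrow. For the ``odd'' families ($i = 1, 3, 5, 7$) there are five candidate forward rules: two for an interior right-moving pointer (move-and-advance past the arrow, or simply move) and one each for the interior left-moving pointer, the right-end turnaround, the left-end turnaround, together with the segment-transition rule. The guards ``$\neg \arrR$'' and ``$\neg \arrL$'' placed above/below the pointer exactly distinguish the ``interior move'' rules from the rules that interact with the Track $2$ arrow, and the bracket-adjacency distinguishes interior rules from turnaround rules. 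The segment-transition rule (e.g.~TR-$6$, TR-$16$, TR-$26$, TR-$36$) requires the Track $1$ pointer to be at the specific end of the chain \emph{and} the Track $2$ arrow to be in a specific state, which is mutually exclusive with every interior or ordinary-turnaround rule. For the ``even'' families ($i = 2, 4, 6, 8$) the enumeration is simpler: three round-trip rules plus one segment-transition rule, again distinguished by bracket-adjacency and by the Track $2$ state. In each sub-case at most one of these local fingerprints matches $A$, so at most one forward rule applies.

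For the reverse direction I would perform the symmetric analysis by characterizing, for each well-formed $A$, the possible predecessors. Each transition rule of the form $ab \to cd$ imposes a unique local fingerprint on its right-hand side as well --- the label and direction of the resulting Track $1$ pointer, and the accompanying Track $2$ state --- so reading the rules ``backwards'' gives exactly the same kind of case split and yields at most one predecessor. The only place where one might worry about a collision is at the boundary of a segment, where a within-segment rule and the segment-transition rule could both seem to terminate at similar configurations; the main obstacle (though mostly bookkeeping) is verifying that the ``$\neg \arrR$'' / ``$\neg \arrL$'' guards on the interior rules rule out exactly the configuration in which the segment-transition rule fires, and vice versa. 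Once this is confirmed for each of the eight families, the lemma follows, and as a direct corollary the configuration graph on well-formed states has in-degree and out-degree at most $1$ everywhere, which is what later arguments (Lemmas~\ref{lem:clockwellformed}, \ref{lem:correctclock}) require.
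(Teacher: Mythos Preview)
Your case analysis is correct and would establish the lemma, but the paper takes a considerably shorter route that you should be aware of. Rather than splitting into the eight families indexed by $i$ and then further by direction, bracket-adjacency, and Track~$2$ alignment, the paper observes a single structural fact about \emph{all} the transition rules simultaneously: every forward rule acts on a two-particle window whose Track~$1$ content is one of the four shapes
\[
\leftBr\,\arrLi,\qquad \Dblank\,\arrLi,\qquad \arrRi\,\blank,\qquad \arrRi\,\rightBr,
\]
and the well-formed Track~$1$ regular expression $\leftBr\,\Dblank^{*}(\arrRi+\arrLi)\,\blank^{*}\,\rightBr$ contains exactly one occurrence of any of these four shapes. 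This pins down the \emph{location} where a rule could fire without any case split on $i$. The Track~$2$ guards attached to the rules that share a given Track~$1$ shape are then mutually exclusive, so at most one rule fires. The reverse direction is handled by the symmetric list $\leftBr\,\arrRi$, $\Dblank\,\arrRi$, $\arrLi\,\blank$, $\arrLi\,\rightBr$.

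What your approach buys is explicitness: you actually verify, family by family, that the $\neg\arrR$/$\neg\arrL$ guards and the segment-transition guards are disjoint, which the paper leaves implicit in the phrase ``additional constraints on the other tracks can only decrease the number of rules.'' What the paper's approach buys is brevity and uniformity: by abstracting over $i$ it avoids eight nearly identical sub-arguments. Your identification of the segment-boundary collision as the only nontrivial check is exactly right, and once you see the four-shape observation that check becomes the entire content of the proof.
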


\begin{proof}
Every transition rule TR-$1$ through Tr-$40$ applies in the forward direction to a two-particle configuration in which the Track $1$ state is one of the following four cases:
$$\leftBr \arrLi~~~~~~~~~~~~~~~~~~
\Dblank \arrLi~~~~~~~~~~~~~~~~~~
\arrRi \blank~~~~~~~~~~~~~~~~~~
\arrRi \rightBr$$
If the clock configuration is well-formed then the Track $1$ state has
the form $\leftBr \Dblank^* (\arrRi + \arrLi) \blank^* \rightBr$.
In any clock configuration of that form, there is exactly one occurrence of any of the four patterns shown above. The additional constraints on the other tracks can only decrease the number of rules that apply in the forward direction to a given
Track $1$ configuration. Therefore, there is at most one location and at most one rule that can be applied in the forward direction to get a new clock configuration.

Every transition rule TR-$1$ through TR-$40$ applies in the backward direction to a two-particle configuration in which the Track $1$ state is one of the following four cases:
$$\leftBr \arrRi~~~~~~~~~~~~~~~~~~
\Dblank \arrRi~~~~~~~~~~~~~~~~~~
\arrLi \blank~~~~~~~~~~~~~~~~~~
\arrLi \rightBr$$
In any well-formed clock configuration, there is exactly one occurrence of these four patterns
on Track $1$. The additional constraints on the other tracks can only decrease the number of rules that apply in the reverse direction to a given Track $1$ configuration. Therefore, there is at most one location and at most one rule that can be applied in the reverse direction to get a new clock configuration.

\end{proof}

Eventually, we will define correct and incorrect clock configurations for all three tracks. However, it will be useful to first define properties required of the configurations for Tracks $1$ and $2$ and establish some properties about how they are connected by the transition rules.

\begin{definition}
\label{def:corrconfigs}
{\bf [Correct and Incorrect Clock Configurations for Tracks $1$ and $2$]}
A well formed clock configuration for Tracks $1$ ad $2$ is called {\em correct} if the following conditions hold
and is called {\em incorrect} otherwise:
\begin{enumerate} 
\item If the pointer on Track $1$ is an $i$-pointer for $i \in \{ 4, 8\}$, then Track $2$ is at Time $0$.
\item If the pointer on Track $1$ is an $i$-pointer for $i \in \{ 2, 6\}$, then Track $2$ is is in configuration $\leftBr \DblankLtwo^* \arrL \rightBr$.
\item If the pointer on Track $1$ is an $i$-pointer for $i \in \{ 1, 5\}$, then the pointer on Track $2$ is $\arrR$.
\item If the pointer on Track $1$ is an $i$-pointer for $i \in \{ 3, 7\}$, then the pointer on Track $2$ is $\arrL$.
\item If the pointer on Track $1$ is an $i$-pointer for $i = 1, 3, 5, 7$, then it is not vertically aligned with the pointer on Track $2$ and pointing on the opposite direction. 
\end{enumerate}
\end{definition}

\begin{lemma}
\label{lem:clockGraph1}
{\bf [Correct Configurations Form Cycles]}
Starting from the Time $0$ configuration for Tracks $1$ and $2$, the transition rules will
reach every correct configuration for Tracks $1$ and $2$ exactly once before returning to the Time $0$ configuration after
$p(N) = 4(N-2)(2N-3)$ clock steps.
\end{lemma}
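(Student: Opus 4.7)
The plan is to decompose the forward orbit of the Time $0$ configuration into the eight segments catalogued in Figure \ref{fig:segmentsummary}, invoke Claims \ref{cl:everycorrect1} and \ref{cl:everycorrect3} (together with their obvious analogs for the $5$- and $7$-segments) for the odd-indexed segments, analyze each even-indexed segment by direct inspection of the transition rules TR-$7$--TR-$10$, TR-$17$--TR-$20$ and their analogs, and finally invoke Lemma \ref{lem:rulebound} to conclude that the resulting trace is a simple cycle.

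First I would count the total number of correct clock configurations for Tracks $1$ and $2$. For each $i \in \{1,3,5,7\}$, the Track $2$ pointer can occupy any of $N-2$ positions; given that position, there are $N-2$ admissible right-moving choices for the Track $1$ pointer and $N-3$ left-moving choices (the vertically aligned opposite-direction configuration being excluded by condition $5$ of the definition of correctness), giving $(N-2)(2N-5)$ configurations per value of $i$. For each $i \in \{2,4,6,8\}$, the Track $2$ state is completely determined (either at Time $0$ or equal to $\leftBr \DblankLtwo^* \arrL \rightBr$), so only the $2(N-2)$ positions and directions of the Track $1$ pointer vary. Summing over all eight segment types yields $4(N-2)(2N-5) + 8(N-2) = 4(N-2)(2N-3) = p(N)$.

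Next, starting from Time $0$ and iteratively applying transition rules, the $1$-segment visits exactly its $(N-2)(2N-5)$ correct configurations once each by Claim \ref{cl:everycorrect1}, terminating via TR-$6$, which initiates the $2$-segment with $\arrLtwo$ and $\arrL$ at the right end. During the $2$-segment, TR-$7$ sweeps $\arrLtwo$ leftward, TR-$8$ reflects it to $\arrRtwo$ at the left end, TR-$9$ sweeps it back to the right end, and TR-$10$ initiates the $3$-segment, accounting for all $2(N-2)$ correct $2$-segment configurations in the process. The $3$-segment is handled by Claim \ref{cl:everycorrect3}, the $4$-segment by the same reasoning as the $2$-segment with TR-$17$--TR-$20$, and segments $5$--$8$ are verbatim repeats with the analog rules. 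The final rule of the $8$-segment (the analog of TR-$20$) transitions the $\arrLeight$ back to $\arrRone$, returning the system to Time $0$.

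By Lemma \ref{lem:rulebound}, at most one forward transition applies at each step, so the above trace is a well-defined simple path. It visits $p(N)$ pairwise-distinct correct configurations before returning to its starting state, hence it is a simple cycle of length exactly $p(N)$. Since $p(N)$ equals the total count of correct configurations computed in paragraph two, every correct configuration is hit exactly once, proving the lemma. The main bookkeeping subtlety I anticipate is ensuring that each inter-segment transition rule (TR-$6$, TR-$10$, TR-$16$, TR-$20$ and their segments $5$--$8$ analogs) is attributed to exactly one segment, so that the per-segment step counts in Figure \ref{fig:segmentsummary} sum cleanly to $p(N)$ without double counting; this is a straightforward but somewhat tedious verification.
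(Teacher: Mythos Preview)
Your proposal is correct and follows essentially the same approach as the paper: trace the forward orbit from Time $0$ through the eight segments, invoking Claims \ref{cl:everycorrect1} and \ref{cl:everycorrect3} for the odd-indexed segments and direct inspection of the round-trip rules for the even-indexed ones, then sum the segment lengths to $p(N)$. The only organizational difference is that you front-load an explicit enumeration of the correct configurations and close with a pigeonhole argument (cycle length equals total count), whereas the paper argues directly that each segment exhausts the correct configurations of its pointer type and sums the step counts at the end; both arrive at the same conclusion by the same substantive reasoning.
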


\begin{proof}
Claim \ref{cl:everycorrect1} shows that starting from Time $0$
and applying the transition rules $(N-2)(2N-5)$ times, every correct
clock configuration with a $1$-pointer is reached, ending in the configuration:
$$\eightcells{\Dblank}{~~~~\cdots~~~~}{\Dblank}{\arrRone}{\DblankLtwo}{~~~~\cdots~~~~}{\DblankLtwo}{\arrR} \xrightarrow{\text{TR-}6}
\eightcells{\Dblank}{~~~~\cdots~~~~}{\Dblank}{\arrLtwo}{\DblankLtwo}{~~~~\cdots~~~~}{\DblankLtwo}{\arrL} $$
The rules for the $2$-pointers do not alter Track $2$. The $2$-pointer moves to the left and then to the right, reaching every location  with $\arrLtwo$ and then every location with $\arrRtwo$ until it reaches configuration:
$$\eightcells{\Dblank}{~~~~\cdots~~~~}{\Dblank}{\arrRtwo}{\DblankLtwo}{~~~~\cdots~~~~}{\DblankLtwo}{\arrL} \xrightarrow{\text{TR-}10}
\eightcells{\Dblank}{~~~~\cdots~~~~}{\Dblank}{\arrLthree}{\DblankLtwo}{~~~~\cdots~~~~}{\DblankLtwo}{\arrL} $$
Then by Claim \ref{cl:everycorrect3}, every correct configuration with a $3$-pointer is reached, ending in configuration:
$$\eightcells{\arrLthree}{\blank}{~~~~\cdots~~~~}{\blank}{\arrL}{\blankLtwo}{~~~~\cdots~~~~}{\blankLtwo} \xrightarrow{\text{TR-}16}
\eightcells{\arrRfour}{~~~~\cdots~~~~}{\blank}{\blank}{\arrR}{~~~~\cdots~~~~}{\blankLtwo}{\blankLtwo} $$
The rules for the $4$-pointers do not alter Track $2$. The $4$-pointer moves to the right and  then to the left, reaching every location  with $\arrRfour$ and then every location with $\arrLfour$ until it reaches configuration:
$$\eightcells{\arrLfour}{\blank}{~~~~\cdots~~~~}{\blank}{\arrR}{\blankLtwo}{~~~~\cdots~~~~}{\blankLtwo} \xrightarrow{\text{TR-}20}
\eightcells{\arrRfive}{\blank}{~~~~\cdots~~~~}{\blank}{\arrR}{\blankLtwo}{~~~~\cdots~~~~}{\blankLtwo} $$
Since the rules for the $5$, $6$, $7$, and $8$-pointers are the same as the
$1$, $2$, $3$, and $4$-pointers, the same argument holds for those segments. In the final clock step, the $8$-pointer transitions back to the $1$-pointer which puts the clock configuration back at time $0$:
$$\eightcells{\arrLeight}{\blank}{~~~~\cdots~~~~}{\blank}{\arrR}{\blankLtwo}{~~~~\cdots~~~~}{\blankLtwo} \rightarrow
\eightcells{\arrRone}{\blank}{~~~~\cdots~~~~}{\blank}{\arrR}{\blankLtwo}{~~~~\cdots~~~~}{\blankLtwo} $$
The $1$, $3$, $5$, and $7$-segments each take $(N-2)(2N-5)$ clock  steps. 
The $2$, $4$, $6$, and $8$-segments each take $2(N-2)$ clock steps. The total number of clock steps (and therefore the number of correct clock configurations for Tracks $1$ and $2$) is:
$$p(N) = 4[(N-2)(2N-5)] + 4[2(N-2)] = 4(N-2)(2N-3).$$
\end{proof}

\begin{lemma}
\label{lem:clockGraph2}
{\bf [Incorrect Configurations form Short Paths]}
Any  incorrect clock configuration for Tracks $1$ and $2$ will reach an illegal configuration 
from $h_{cl}$ in at most $2(N-2)$ clock steps from which there is no out-going transition.
\end{lemma}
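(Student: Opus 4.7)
}
The plan is to do a case analysis on which of the five conditions in the definition of ``correct for Tracks $1$ and $2$'' is violated, and in each case trace the forward evolution until an illegal pattern of $h_{cl}$ is reached or until no transition rule applies. First I dispose of condition $5$: any well-formed clock configuration in which an $i$-pointer on Track $1$ (for $i\in\{1,3,5,7\}$) is vertically aligned with the Track~$2$ pointer in the opposite direction is already an illegal pattern from \eqref{eq:hcl1} or \eqref{eq:hcl3} (and its $5,7$-analogs). Moreover, inspection of TR-$3$, TR-$13$ and their $5,7$-copies shows that none of them apply when the Track~$1$ and Track~$2$ arrows sit on top of each other pointing in opposite directions, and neither do the end-of-chain rules. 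So in that case the conclusion holds in $0$ steps with no out-going transition.

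Next I treat conditions $3$ and $4$, i.e.\ a $1,3,5,$ or $7$-segment in which the Track~$2$ arrow points the wrong way. Take condition~$3$ with a $1$-pointer and $\arrL$ on Track~$2$ (the other sub-cases are symmetric). Because TR-$2$ requires an $\arrR$ on Track~$2$ and is therefore dormant, the $1$-pointer simply shuttles via TR-$1$, TR-$3$, TR-$4$, TR-$5$, leaving Track~$2$ unchanged. However, TR-$3$ explicitly forbids a Track~$2$ state of $\arrL$ directly under $\arrRone$, so once the right-moving $1$-pointer reaches the cell above $\arrL$ it produces $\twocellsvert{\arrRone}{\arrL}$, an illegal pattern of \eqref{eq:hcl1}, and no rule of Figure~\ref{fig:1pointer} applies there. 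A direct count, using that the pointer does at most one full left-right round trip before meeting the stationary $\arrL$, gives at most $2(N-2)$ clock steps; the same accounting works for the $3,5,7$-pointer cases using \eqref{eq:hcl3} and the duplicated rules.

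The remaining conditions $1$ and $2$ involve a $2,4,6,$ or $8$-pointer with a Track~$2$ configuration not matching the one required at the end of the segment. Take condition~$1$ with a $4$-pointer (the $8$-pointer case is identical, and conditions~$2$ are analogous). The rules TR-$17$, TR-$18$, TR-$19$ act only on Track~$1$, so the $4$-pointer performs its full right-then-left round trip in exactly $2(N-2)$ clock steps without touching Track~$2$. When $\arrLfour$ finally reaches the left end, TR-$20$ fires only if the Track~$2$ state there is $\arrR$, i.e.\ only if Track~$2$ was at Time~$0$. Since we assumed the configuration is incorrect, Track~$2$ is not at Time~$0$, so instead we land in the illegal pattern $\threecellsL{\arrLfour}{\neg \arrR}$ of \eqref{eq:hcl4}, from which TR-$20$ cannot be applied. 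The same argument, using \eqref{eq:hcl2} and the $6,8$-analogs of \eqref{eq:hcl2} and \eqref{eq:hcl4}, covers the other three sub-cases.

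The only real subtlety I anticipate is condition~$3$/$4$ when the Track~$2$ arrow sits \emph{behind} the moving Track~$1$ pointer: the pointer must first run to the far end, reverse (the end-of-chain rule TR-$4$ or its analogs applies since $\generic\neq\arrR,\arrL$ at that end), come back, reverse again at the other end, and only then meet the offending Track~$2$ arrow. The bookkeeping shows this entire excursion is bounded by a single round trip, $2(N-2)$ clock steps, which is exactly what the lemma claims; verifying this uniformly for the $1,3,5,7$-pointers, together with the fact that Track~$1$ admits at most one out-going transition per configuration (Lemma~\ref{lem:rulebound}), so the forward orbit is a well-defined path, is the main routine work in the proof.
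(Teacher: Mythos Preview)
Your proof is correct and follows essentially the same approach as the paper's: a case analysis on the type of $i$-pointer (or equivalently, which correctness condition is violated), showing in each case that the forward evolution reaches one of the illegal patterns \eqref{eq:hcl1}--\eqref{eq:hcl4} (or their $5,6,7,8$-analogs) within a single round-trip of the Track~$1$ pointer. The paper organizes the cases by pointer label ($1/5$, then $3/7$, then $4/8$, then $2/6$) whereas you organize by which condition fails (condition~$5$ first as the trivial $0$-step case, then $3/4$, then $1/2$), but the underlying argument is identical; your only imprecision is the phrase ``in exactly $2(N-2)$ clock steps'' for the $4$-pointer round trip, which should be ``in at most $2(N-2)$'' since the pointer may begin mid-chain.
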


\begin{proof}
If the Track $1$ pointer is a $1$-pointer or a $5$-pointer, and the clock configuration is incorrect, then either the Track $2$ pointer is $\arrR$ and is vertically aligned with $\arrLone$ or $\arrLfive$
or the Track $2$  pointer is $\arrL$ (in any location).
In the former case, the current configuration is illegal and
there is no outgoing transition, since TR-$1$ (TR-$21$)
applies only if the Track $2$
symbol under the $\arrLone$ ($\arrLfive$) is not $\arrR$.
In the latter case, the $1$-pointer (or $5$-pointer)
will keep moving until it is in a state $\arrRone$ (or $\arrRfive$) directly over the $\arrL$ on Track $2$.  This is an illegal configuration and there are no outgoing transitions since TR-$3$ and TR-$4$ (TR-$23$ and TR-$24$) do not apply if the Track $2$ pointer is in state $\arrL$. The number of clock steps until the illegal configuration is reached is at most $2(N-2)$, the time for a round trip for the Track $1$ pointer.

The symmetric argument holds if the Track $1$ pointer is a $3$ or $7$-pointer, with the role of left and right reversed.

The $4$-pointers and $8$-pointers make a single round trip starting and ending at the left end of the chain and do not cause any changes on Track $2$.
Eventually, the left end of the chain will be reached. If Track $2$ does not have $\arrR$ in the left end of the chain, the following illegal configuration will be reached within $2(N-2)$ clock steps, from which there is no outgoing transition:
$$\threecellsL{\arrLfour}{\neg \arrR}~~~~~~~~~~~\threecellsL{\arrLeight}{\neg \arrR}$$

The $2$-pointers and $6$-pointers make a single round trip starting and ending at the right end of the chain and do not cause any changes on Track $2$.
Eventually, the right end of the chain will be reached. If Track $2$ does not have $\arrL$ in the left end of the chain, the following illegal configuration will be reached within $2(N-2)$ clock steps, from which there is no outgoing transition:
$$\threecellsR{\arrRtwo}{\neg \arrL}~~~~~~~~~~~\threecellsR{\arrRsix}{\neg \arrL}$$
\end{proof}

\subsubsection{Track 3}

In the absence of any transition rules involving Track $3$,
a sequence of clock configurations in which Tracks $1$ and $2$
begin and end at Time $0$ form an iteration of length
$4(N-2)(2N-3)$ in the configuration graph. 
We now add transition rules which cause Track $3$ to change
every such iteration. This will result in the 
clock configurations for Tracks $1$ through $3$
forming a larger cycle in the configuration graph.
We will refer to a sequence of clock steps in which
the configurations on Tracks $1$ and $2$
begin and end at the same state as an {\em iteration}.

In a well-formed clock configuration, the configuration of Track $3$ has the form:
$$\leftBr \DblankLthree^* ( \arrR + \arrL) \blankLthree^* \dead^* \rightBr.$$
The transition rules will never change the number of $\dead$ particles
at the right end of the chain on Track $3$. The $\arrR$ and $\arrL$
pointers will move back and forth between the 
locations with $\DblankLthree$ or $\blankLthree$.
The {\em length} of the clock on Track 3 is the total number of
$\DblankLthree$ or $\blankLthree$ symbols and the transition
rules will leave the length fixed.
The only pointers from Track $1$
that interact with Track $3$ are the $8$-pointers.
Each sweep of the $8$-pointer will update the state of the
Track $3$ clock by moving the pointer over by one location or reversing the direction of the Track $3$ pointer.

If the Track $3$ pointer is facing right, it will advance as the $8$-pointer on
Track $1$ moves left. 
If the Track $3$ pointer is facing left, it will advance as the $8$-pointer on
Track $1$ moves right. 
\begin{center}
    \begin{tabular}{cc}
    TR-$37$ & TR-$38$ \\
   ~~~~~ $\sixcells{\Dblank}{\arrLeight}{\generic}{\generic}{\arrR}{\blankLthree}
\rightarrow \sixcells{\arrLeight}{\blank}{\generic}{\generic}{\DblankLthree}{\arrR}$~~~~~ & 
~~~~~$\sixcells{\arrReight}{\blank}{\generic}{\generic}{\DblankLthree}{\arrL} 
\rightarrow \sixcells{\Dblank}{\arrReight}{\generic}{\generic}{\arrL}{\blankLthree}$~~~~~ 
\end{tabular}
\end{center}
The arrow on Track $3$ changes direction when it reaches the end of the chain on
the left end or the $\dead$ particles on the right end. The $8$-pointer that points
in the same direction as the Track $3$ pointer will trigger the change in direction. Recall that the $8$-pointer also checks that the Track $2$ state is
at Time $0$, meaning that the pointer on Track $2$ is $\arrR$ and at the left end of the chain. Therefore, TR-$41$ applies only if the Track $2$ state
is $\arrR$.
\begin{center}
    \begin{tabular}{ccc}
    TR-$39$ & TR-$40$ & TR-$41$ \\
   ~~~~~ $\sixcells{\arrReight}{\blank}{\generic}{\generic}{\arrR}{\dead}
\rightarrow \sixcells{\Dblank}{\arrReight}{\generic}{\generic}{\arrL}{\dead}$~~~~~ & 
~~~~~$\fourcellsR{\arrReight}{\generic}{\arrR} 
\rightarrow \fourcellsR{\arrLeight}{\generic}{\arrL}$~~~~~ & 
~~~~~$\fourcellsL{\arrLeight}{\arrR}{\arrL} 
\rightarrow \fourcellsL{\arrRone}{\arrR}{\arrR}$~~~~~ 
\end{tabular}
\end{center}
Other than the configurations described in the four rules above, the $8$-pointers do not
trigger any other change on Track $3$. So in the absence of a pointer on Track $3$, the $8$-pointer just moves along:
\begin{center}
\begin{tabular}{c}
    \begin{tabular}{cc}
    TR-$42$ & TR-$43$  \\
   ~~~~~ $\sixcells{\arrReight}{\blank}{\generic}{\generic}{\generic}{\generic}
\rightarrow \sixcells{\Dblank}{\arrReight}{\generic}{\generic}{\generic}{\generic}$~~~~~ & 
~~~~~$\sixcells{\Dblank}{\arrLeight}{\generic}{\generic}{\neg \arrR}{\neg \arrR}
\rightarrow \sixcells{\arrLeight}{\blank}{\generic}{\generic}{\neg \arrR}{\neg \arrR}$~~~~~ 
\end{tabular}\\
\\
TR-$42$ applies when the bottom row $\generic \generic$ is in\\
$\{\blankLthree \dead, ~\blankLthree \blankLthree, ~\DblankLthree \DblankLthree, ~\arrR \blankLthree,
~\DblankLthree \arrR, ~\dead \dead \}$
\end{tabular}
    \begin{tabular}{ccc}
    TR-$44$ & TR-$45$  & TR-$46$ \\
~~~~~$\fourcellsR{\arrReight}{\generic}{\blank} 
\rightarrow \fourcellsR{\arrLeight}{\generic}{\blank}$~~~~~ & 
~~~~~$\fourcellsR{\arrReight}{\generic}{\dead} 
\rightarrow \fourcellsR{\arrLeight}{\generic}{\dead}$~~~~~ & 
~~~~~$\fourcellsL{\arrLeight}{\arrR}{\DblankLthree} 
\rightarrow \fourcellsL{\arrRone}{\arrR}{\DblankLthree}$~~~~~ 
\end{tabular}
\end{center}
It should never happen that the Track $1$ pointer is an $8$-pointer and is vertically aligned with the Track $3$
pointer, pointing in the opposite direction. It should also never happen that the $8$-pointer reaches the left
end of the chain and Track $2$ is not at Time $0$.
So the following three patterns are illegal:
\begin{equation}
\label{eq:hcl5}
    \threecellsvert{\arrLeight}{\generic}{\arrR}
~~~~~~~~~~~~~~~~~~~~~~~~\threecellsvert{\arrReight}{\generic}{\arrL}
~~~~~~~~~~~~~~~~~~~~~~~~\fourcellsL{\arrLeight}{\neg \arrR}{\generic}
\end{equation}
There should are no outgoing transitions from a configuration
that contains on of those patterns.
Also,  there is no out-going transition from the pattern below, so it is also added to $h_{cl}$ as an illegal pattern:
\begin{equation}
\label{eq:hcl6}
   \sixcells{\Dblank}{\arrLeight}{\generic}{\generic}{\arrR}{\dead}
\end{equation}

Now that all the transition rules that pertain to the clock tracks are complete, we are ready to analyze the structure of the configuration graph. The first lemma establishes that the configuration graph is well-defined since the set of transition rules is closed on the set of well-formed configuration. The second part of the lemma establishes a degree bound which implies that the configuration graph consists of a disjoint set of paths and cycles.

\begin{lemma}
\label{lem:clockwellformed}
{\bf [Degree Bound for the Configuration Graph]}
The transition rules are closed on the set of well-formed clock configurations. Furthermore, for every well-formed clock configuration, there is at most one transition rule that applies in the forward direction and at most one transition rule that applies in the reverse direction.
\end{lemma}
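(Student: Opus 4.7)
The plan is to prove the two assertions separately, reducing each to a case analysis over the transition rules TR-$1$ through TR-$46$ introduced in this section, grouped by which track they modify.

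For closure on well-formed configurations, I would argue track-by-track that applying any forward or reverse transition rule to a well-formed configuration produces one whose Track-$k$ projection still matches the regular expression of Definition \ref{def:wellformed}. The key structural observation is that each rule modifies at most one pointer on at most the three clock tracks, and leaves the remainder of those tracks unchanged. For Track $1$, the applicable forward patterns are exactly the four local windows $\leftBr\,\arrLi$, $\Dblank\,\arrLi$, $\arrRi\,\blank$, $\arrRi\,\rightBr$; in each case the rule either shifts the pointer one cell (swapping a $\blank$/$\Dblank$ pair across it), reverses its direction at an endpoint, or re-tags it as an $(i{+}1)$- or $1$-pointer at the endpoint. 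None of these operations can break the form $\leftBr\,\Dblank^*(\arrRj+\arrLj)\,\blank^*\,\rightBr$. For Track $2$, the only rules that touch it are the ``handoff'' rules TR-$2$, TR-$22$ (and their analogues on the left-to-right phases), plus the segment transitions TR-$6$, TR-$10$, TR-$16$, TR-$20$, TR-$26$, TR-$27$, TR-$36$, TR-$41$; each either shifts the Track-$2$ pointer one cell in its pointing direction or reverses its direction at the boundary, again preserving the required regular form. For Track $3$, only TR-$37$--TR-$41$ alter it: the first two shift the pointer one cell, TR-$39$ and TR-$41$ reverse the pointer at the right $\dead$-boundary and the left bracket respectively, and none change the $\dead^{*}$ suffix. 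The reverse direction is handled identically by simply reading each rule right-to-left.

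For the degree bound, I would extend the argument of Lemma \ref{lem:rulebound}. That lemma already establishes that, considering only the Track-$1$ window of each rule, a well-formed configuration admits at most one applicable forward window (and at most one applicable reverse window), because the patterns $\leftBr\,\arrLi$, $\Dblank\,\arrLi$, $\arrRi\,\blank$, $\arrRi\,\rightBr$ (and their reverses) occur exactly once in any well-formed Track-$1$ word. The new rules TR-$37$--TR-$46$ involving Track $3$ all require an $8$-pointer on Track $1$; their Track-$1$ windows still fall into the same four templates above, so adding them does not create a second applicable rule at any location. The Track-$2$ and Track-$3$ side conditions (such as ``the cell below the $\arrLeight$ is $\arrR$'') only further restrict applicability, and the cases built into these conditions (e.g.\ TR-$37$ vs.\ TR-$41$ vs.\ TR-$43$ under an $\arrLeight$) are mutually exclusive partitions of the possible Tracks-$2,3$ states, so at most one of them matches a given well-formed configuration. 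The same disjointness holds in the reverse direction.

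The main obstacle I anticipate is bookkeeping rather than conceptual: ensuring that the disjointness cases among TR-$37$--TR-$46$ really do partition every possible well-formed local window underneath an $8$-pointer, since several of those rules carry \emph{generic} wildcards on Track $3$ and distinguish cases only through other cells. I would address this by enumerating, for each of the four Track-$1$ window templates containing an $8$-pointer, the finite set of Track-$2$/Track-$3$ local windows that can appear in a well-formed configuration, and exhibiting the unique matching rule (or showing the window is illegal via \eqref{eq:hcl5} or \eqref{eq:hcl6}, in which case no rule applies and the degree is $0$, still consistent with the claim). Together, the track-by-track closure and the per-template case analysis complete the proof.
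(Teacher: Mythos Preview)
Your proposal is correct and follows essentially the same route as the paper's proof: closure is argued by observing that each rule preserves the pointer count and the left/right ordering of the filler symbols on each track, and the degree bound is reduced to Lemma~\ref{lem:rulebound} with the remark that Track-$3$ side conditions only restrict applicability further. The paper's proof is considerably terser than yours---it dispatches the degree bound in a single sentence (``the additional constraints introduced by Track~$3$ can only further limit the applicable transition rules'') without the explicit disjointness check you flag as the main obstacle---so your version would in fact fill in a detail the paper leaves implicit.
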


\begin{proof} 
The transition rules all preserve the number of pointers on each Track.
They also preserve the property that on Track $1$ $\Dblank$ particles are to the left of the pointer and $\blank$ particles are to the right of the pointer. Similarly, the transition rules maintain the property that on Track $2$, $\DblankLtwo$ particles are to the left of the pointer and $\blankLtwo$ particles are to the right of the pointer.
On Track $3$, the transition rules maintain the property that 
$\DblankLthree$ particles are to the left of the pointer and $\blankLthree$ particles are to the right of the pointer as well as the fact that $\blankLthree$ particles are to the left of $\dead$ particles.
Therefore when a transition rule is applied  to a well-formed clock configuration,
the result is another well-formed clock configuration.

Lemma \ref{lem:rulebound} shows that at most one configuration for Tracks $1$ and $2$ can be reached in the forward direction or the reverse direction from a well-formed clock configuration for Tracks $1$ and $2$. The additional constraints introduced by Track $3$ can only further limit the applicable transition rules. 
\end{proof}

We are ready now to describe the cycle of configurations reached by
starting in the Time $0$ configuration for Tracks $1$, $2$, and $3$ and repeatedly applying the transition rules.
If  $T$ is the length of the timer on Track $3$,
then the Track $3$ state at Time $0$ is $\leftBr \arrR \blank^T \dead^* \rightBr$.
There is one round trip of the $8$-pointers for every Track 1-2 iteration.
After $T$ round trips of the $8$-pointer, the arrow on Track $3$ has advanced
to the right end of the non-$\dead$ states (as shown below).
In the $T+1^{st}$ round trip of the $8$-pointer, the pointer on Track $3$ reverses directions but stays in the same place.
After $T-1$ more round trips, the pointer on Track $3$ is one place away from the left end of the chain.
Finally, in the next round trip, the Track $3$ pointer moves one position to the left and changes directions. The number above each arrow below 
shows the number of $8$-segments (which happen every $p(N)$ clock steps) to get from one Track $3$ configuration to the next.
$$\leftBr \arrR \blank^T \dead^* \rightBr  \xrightarrow{T} \leftBr  \Dblank^T \arrR \dead^* \rightBr
\xrightarrow{1} \leftBr  \Dblank^T \arrL \dead^* \rightBr
\xrightarrow{T-1} \leftBr \Dblank \arrL \blank^{T-1} \dead^* \rightBr \xrightarrow{1}
\leftBr  \arrR \blank^{T} \dead^* \rightBr
$$
Figure \ref{fig:middle8segment} shows the middle $8$-segment in which
the Track $3$ pointer stays in the same location and changes direction.
Figure \ref{fig:last8segment} shows the last $8$-segment in which
the Track $3$ pointer advances by one location and changes direction.
\begin{figure}
\begin{center}

\begin{tabular}{ccc}

\begin{tabular}{cc}
1) & $\eightcells{\arrReight}{\blank}{\blank}{\blank}{\DblankLthree}{\DblankLthree}{\arrR}{\dead}$~~~~ \\  \\
2) & $\eightcells{\Dblank}{\arrReight}{\blank}{\blank}{\DblankLthree}{\DblankLthree}{\arrR}{\dead}$ \\  \\
3) & $\eightcells{\Dblank}{\Dblank}{\arrReight}{\blank}{\DblankLthree}{\DblankLthree}{\arrR}{\dead}$ 
\end{tabular}

\begin{tabular}{cc}
4) & $\eightcells{\Dblank}{\Dblank}{\Dblank}{\arrReight}{\DblankLthree}{\DblankLthree}{\arrL}{\dead}$ ~~~~\\  \\
5) & $\eightcells{\Dblank}{\Dblank}{\Dblank}{\arrLeight}{\DblankLthree}{\DblankLthree}{\arrL}{\dead}$ \\  \\
6) & $\eightcells{\Dblank}{\Dblank}{\arrLeight}{\blank}{\DblankLthree}{\DblankLthree}{\arrL}{\dead}$ 
\end{tabular}

\begin{tabular}{cc}
7) & $\eightcells{\Dblank}{\arrLeight}{\blank}{\blank}{\DblankLthree}{\DblankLthree}{\arrL}{\dead}$ ~~~~\\  \\
8) & $\eightcells{\arrLeight}{\blank}{\blank}{\blank}{\DblankLthree}{\DblankLthree}{\arrL}{\dead}$ \\  \\
9) & $\eightcells{\arrRone}{\blank}{\blank}{\blank}{\DblankLthree}{\DblankLthree}{\arrL}{\dead}$ 
\end{tabular}

\end{tabular}
\caption{The middle $8$-segment of a complete cycle in which
the Track $3$ pointer changes from right-moving to left-moving.}
\label{fig:middle8segment}
\end{center}
\end{figure}
\begin{figure}
\begin{center}

\begin{tabular}{ccc}

\begin{tabular}{cc}
1) & $\eightcells{\arrReight}{\blank}{\blank}{\blank}{\DblankLthree}{\arrL}{\blankLthree}{\dead}$~~~~ \\  \\
2) & $\eightcells{\Dblank}{\arrReight}{\blank}{\blank}{\arrL}{\blankLthree}{\blankLthree}{\dead}$ \\  \\
3) & $\eightcells{\Dblank}{\Dblank}{\arrReight}{\blank}{\arrL}{\blankLthree}{\blankLthree}{\dead}$ 
\end{tabular}

\begin{tabular}{cc}
4) & $\eightcells{\Dblank}{\Dblank}{\Dblank}{\arrReight}{\arrL}{\blankLthree}{\blankLthree}{\dead}$ ~~~~\\  \\
5) & $\eightcells{\Dblank}{\Dblank}{\Dblank}{\arrLeight}{\arrL}{\blankLthree}{\blankLthree}{\dead}$ \\  \\
6) & $\eightcells{\Dblank}{\Dblank}{\arrLeight}{\blank}{\arrL}{\blankLthree}{\blankLthree}{\dead}$ 
\end{tabular}

\begin{tabular}{cc}
7) & $\eightcells{\Dblank}{\arrLeight}{\blank}{\blank}{\arrL}{\blankLthree}{\blankLthree}{\dead}$ ~~~~\\  \\
8) & $\eightcells{\arrLeight}{\blank}{\blank}{\blank}{\arrL}{\blankLthree}{\blankLthree}{\dead}$ \\  \\
9) & $\eightcells{\arrRone}{\blank}{\blank}{\blank}{\arrR}{\blankLthree}{\blankLthree}{\dead}$ 
\end{tabular}

\end{tabular}
\caption{The last $8$-segment of a complete cycle in which
the Track $3$ pointer advances by one location and changes direction. Tracks $1$ and $3$ are shown.}
\label{fig:last8segment}
\end{center}
\end{figure}
The number of $8$-segments until the clock configuration is again at Time $0$ for Tracks $1$, $2$, and $3$
is $2T+1$.
Each Track 1-2 iteration takes $p(N) = 4(N-2)(2N-3)$ clock steps, so starting at Time $0$,
the total number of clock steps for the clock configuration to return
to Time $0$ for Tracks $1$, $2$, and $3$ is $(2T+1)p(N)$.

The definition below partitions the set of well-formed clock configurations into
a {\em correct} set and an {\em incorrect} set. 
We will show that the set of $(2T+1)p(N)$ clock configurations with timer length $T$ that are reached in a cycle starting and ending at Time $0$ will be
exactly the set of correct configurations with timer length $T$.
The definition provides a characterization of the incorrect configurations which will allow us to ensure that each of the incorrect configurations are contained
in a short path ending in an illegal configuration
which incurs a penalty from $h_{cl}$. Since $H$ will be closed on each connected component of configurations, we can analyze each component separately
and lower bound the energy of the states
with incorrect clock configurations.

\begin{definition}
\label{def:correct}
{\bf [Correct Clock Configurations (for Tracks $1$, $2$, and $3$)]}
A well formed clock configuration for Tracks $1$, $2$ and $3$ is called {\em correct} if the following conditions hold
and is called {\em incorrect} otherwise:
\begin{enumerate} 
\item The configuration for Tracks $1$ and $2$ is correct.
\item If the pointer on Track $1$ is an $8$-pointer, then it is not vertically aligned with the pointer on Track $3$ pointing on the opposite direction. 
\item If the pointer on Track $3$ is $\arrL$ and in the leftmost location, then the Track $1$ pointer is an $8$-pointer.
\item If the pointer on Track $3$ is pointing right and next to a dead state ($\arrR \dead)$
and the Track $1$ pointer is an $8$-pointer, then the $8$-pointer is not vertically aligned with a $\dead$ state on Track $3$.
\end{enumerate}
\end{definition}

The first step is to count the number of correct configurations whose time length is $T$.

\begin{lemma}
\label{lem:countcorrect}
{\bf [Number of Correct Clock Configurations]}
The number of correct clock configurations with timer length $T$ is $(2T+1)p(N)$.
\end{lemma}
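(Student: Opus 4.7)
My plan is to count the correct clock configurations directly, by showing that for every one of the $p(N)$ correct Track $1$-$2$ configurations (Lemma \ref{lem:clockGraph1}), the number of compatible correct Track $3$ configurations of timer length $T$ is exactly $2T+1$. Multiplying then yields the claimed count $(2T+1)\,p(N)$.

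A well-formed Track $3$ configuration of timer length $T$ is specified by the direction of the pointer ($\arrR$ or $\arrL$) and its position $k \in \{1, \ldots, T+1\}$ within the non-dead region; positions $T+2, \ldots, N-2$ contain $\dead$ particles. This gives $2(T+1)$ candidate Track $3$ configurations. I will show that for every correct Track $1$-$2$ configuration, exactly one of these candidates is excluded by the correctness conditions of Definition \ref{def:correct}, leaving $2T+1$ admissible Track $3$ configurations.

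If the Track $1$ pointer is not an $8$-pointer, conditions 2 and 4 of Definition \ref{def:correct} are vacuous, and condition 3 excludes the single configuration in which the Track $3$ pointer is $\arrL$ at the leftmost column. If the Track $1$ pointer is an $8$-pointer at column $j \in \{1, \ldots, N-2\}$, condition 3 is automatic, and I split on whether $j$ indexes a non-dead column ($j \le T+1$) or a dead column ($j \ge T+2$) of Track $3$. In the first subcase, the $8$-pointer cannot align with a $\dead$, so condition 4 is vacuous; condition 2 excludes exactly the unique Track $3$ configuration whose pointer sits at column $j$ and points opposite to the $8$-pointer. In the second subcase, the Track $3$ pointer occupies one of columns $\{1, \ldots, T+1\}$ and so cannot align with the $8$-pointer at column $j \ge T+2$, making condition 2 vacuous; condition 4 then excludes exactly the configuration $\leftBr \DblankLthree^{T} \arrR \dead^* \rightBr$. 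In each situation the admissible count is $2(T+1) - 1 = 2T+1$.

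The main subtlety lies in the $8$-pointer case, where conditions 2 and 4 interact with the position $j$. The key dichotomy is that $j$ either aligns with a non-dead column of Track $3$ (triggering condition 2) or with a dead column (triggering condition 4), but never both and never neither; this is what makes the per-configuration exclusion count uniformly equal to one, and thereby makes the final count independent of the choice of Track $1$-$2$ configuration. Summing $2T+1$ over the $p(N)$ correct Track $1$-$2$ configurations gives the claimed $(2T+1)\,p(N)$. As a consistency check, this matches the length of the cycle from the Time $0$ configuration derived in the discussion preceding the lemma, so combined with the degree bound of Lemma \ref{lem:clockwellformed} this also confirms that the forward-iteration cycle from Time $0$ visits every correct configuration exactly once.
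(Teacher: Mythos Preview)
Your proof is correct. Both you and the paper start from the $2(T+1)\,p(N)$ candidate pairs and reach $(2T+1)\,p(N)$ by removing exactly $p(N)$ of them, but the counting is organized differently. The paper subtracts globally, condition by condition: it tallies all pairs excluded by condition 2 (namely $2(T+1)$), then by condition 3 (namely $p(N)-2(N-2)$), then by condition 4 (namely $2[(N-2)-(T+1)]$), checks that these three sets of excluded pairs are pairwise disjoint, and verifies that the three counts sum to $p(N)$. You instead fix the Track $1$--$2$ configuration and count ``by fiber,'' showing that each of the $p(N)$ fibers over the Track $3$ space has size exactly $2T+1$. Your case split on whether the $8$-pointer sits over a non-dead or a dead column is precisely what makes the exclusions from conditions 2 and 4 mutually exclusive and exhaustive in the $8$-pointer case, which is the same disjointness observation the paper needs but uses less explicitly. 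Your organization has the advantage of making the factorization $(2T+1)\cdot p(N)$ transparent rather than emerging from a fortuitous arithmetic cancellation.
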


\begin{proof}
By Lemma \ref{lem:clockGraph1}, there are $p(N)$ correct clock configurations for Tracks $1$ and $2$.
There  are $2(T+1)$ possible
Track $3$ configurations in a well-formed clock state:
$$\leftBr \DblankLthree^j \arrR \blankLthree^{T-j} \dead^* \rightBr 
~~~~~~~~~~~~~~~~~~~~~~~~~~
\leftBr \DblankLthree^{T-j} \arrL \blankLthree^{j} \dead^* \rightBr,$$
where $j$ can range from $0$ to $T$.
We will start with $2(T+1)p(N)$ candidate correct configurations and subtract off the configurations that violate one of conditions $2$, $3$, or $4$ in Definition
\ref{def:correct}. Note that the the set of configurations that violate each one of those conditions is disjoint, so we can consider them separately.

Condition $2$ says that the $8$-pointer can not be vertically aligned with the $3$-pointer and pointing in the opposite direction. There are $T+1$ possible locations for the $3$-pointer and two directions. Each of these $2(T+1)$ configurations for the $3$-pointer eliminates one possible configuration for the $8$-pointer. (Subtract $2(T+1)$)

For condition $3$, there are $p(N) - 2(N-2)$ correct configurations for Tracks $1$ and $2$ in which the Track $1$ pointer is not an $8$-pointer. None of these are allowed when the Track $3$ pointer is all the way to the left of the chain and pointing to the left. (Subtract $p(N) - 2(N-2)$)

If the non-bracket particles are numbered from $1$ to $N-2$, then
the Track $3$ pointer is in position $T+1$ if it is immediately to the left of a $\dead$ particle. Condition $4$ says that the $8$-pointer can not be to the right of this location. So there are $(N-2)-(T+1)$ locations for the $8$-pointer that are not allowed and two directions for each location. (Subtract $2[(N-2)-(T+1)]$)

The total number of correct states is
$$2(T+1)p(N) - 2(T+1) - [p(N)-2(N-2)] - 2[(N-2)-(T+1)] = (2T+1)p(N).$$
\end{proof}

\begin{lemma}
\label{lem:correctclock}
{\bf [Structure of the Configuration Graph]}
In the configuration graph, the correct clock configurations with timer length $T$ form a cycle of length $(2T+1)p(N)$.
Every correct clock configuration for Tracks $1$ and $2$ occurs exactly $2T+1$ times in the cycle at intervals of
exactly $p(N)$.
All other well-formed clock configurations are in paths of length at most $p(N)$ that end in an illegal configuration.
\end{lemma}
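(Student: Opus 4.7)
My plan is to combine three tools: the degree bound of Lemma \ref{lem:clockwellformed}, the Track 1--2 cycle result of Lemma \ref{lem:clockGraph1}, and the count of Lemma \ref{lem:countcorrect}; these pin down the structure of the correct configurations, and then a case analysis on the type of violation handles the incorrect ones.

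For the main cycle, I would trace the forward orbit of the Time 0 configuration for all three tracks. Each block of $p(N)$ consecutive clock steps is one complete Track 1--2 cycle by Lemma \ref{lem:clockGraph1}, during which Track 3 is touched only in the 8-segment via rules TR-37 through TR-46. Inspecting those rules, the net effect of one 8-segment on Track 3 is either to advance the Track 3 pointer by one position in the direction of its arrow (TR-37/TR-38), to reverse its direction in place at the $\dead$ boundary (TR-39), or to reverse its direction in place at the left bracket (TR-41). Starting from $\leftBr \arrR \blankLthree^T \dead^* \rightBr$, a sequence of $T$ rightward advances, one reversal, $T-1$ leftward advances, and one final reversal brings Track 3 back to its initial configuration, so the full orbit has length $(2T+1)p(N)$. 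The degree bound forces these visited configurations to be distinct, and a straightforward inductive check along the orbit confirms that all four conditions of Definition \ref{def:correct} are preserved. Thus the orbit consists of $(2T+1)p(N)$ distinct correct configurations, which by Lemma \ref{lem:countcorrect} is exactly the total number of correct configurations with timer length $T$, so the orbit enumerates all of them. The periodicity claim follows immediately: the cycle decomposes into $2T+1$ sub-cycles of length $p(N)$, each of which by Lemma \ref{lem:clockGraph1} enumerates every correct Track 1--2 configuration exactly once.

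For the incorrect well-formed configurations, the degree bound implies each connected component of the configuration graph is either a path or a cycle, so it suffices to show that every incorrect configuration reaches a dead-end illegal configuration in at most $p(N)$ forward steps (this simultaneously rules out incorrect cycles and bounds path lengths). A condition 1 violation is dispatched directly by Lemma \ref{lem:clockGraph2} in $2(N-2) \le p(N)$ steps, with Track 3 inert during this prefix. A condition 2 violation is itself an instance of the illegal patterns in (\ref{eq:hcl5}), and a short inspection of the preconditions of TR-37, TR-38, TR-41, TR-43, and TR-46 confirms no forward rule applies (each of these rules requires a specific Track 3 symbol adjacent to, rather than directly underneath, the 8-pointer). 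A condition 3 or 4 violation persists until the upcoming 8-segment of the current Track 1--2 sub-cycle (at most $p(N)$ further steps), at which point the 8-pointer is forced into either the pattern of (\ref{eq:hcl5}) --- in the condition 3 case, since $\arrReight$ enters the left end directly above the stationary Track 3 $\arrL$ --- or the pattern of (\ref{eq:hcl6}) --- in the condition 4 case, once $\arrLeight$ returns to the boundary between the non-$\dead$ and $\dead$ regions --- both of which are dead-ends.

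The main obstacle will be verifying rigorously that each condition 2, 3, and 4 violation really does propagate to a dead-end illegal configuration within one $p(N)$-block, i.e.\ that no forward transition can carry an incorrect configuration to a correct one. This requires a case analysis across the possible positions and directions of the Track 3 pointer and the 8-pointer; I would organize it by the Track 3 pointer's position (interior, left bracket, or adjacent to $\dead$) and its direction, and dispose of each case in a few lines by referencing the preconditions of the rules TR-37 through TR-46 and the illegal patterns in (\ref{eq:hcl5}) and (\ref{eq:hcl6}).
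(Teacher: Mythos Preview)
Your proposal is correct and follows essentially the same approach as the paper: trace the forward orbit from Time~0 using the Track~3 evolution through the 8-segments, invoke Lemma~\ref{lem:countcorrect} to conclude the orbit exhausts all correct configurations, and then handle incorrect configurations by a condition-by-condition case analysis matching violations of Definition~\ref{def:correct} to the illegal patterns (\ref{eq:hcl5}) and (\ref{eq:hcl6}). One minor wording issue: a condition~4 violation already has the Track~1 pointer in the 8-segment, so ``persists until the upcoming 8-segment'' is slightly off for that case---the 8-pointer is already present and reaches the dead-end within one round trip ($2(N-2)$ steps), as the paper notes.
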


Given the structure of the configuration graph defined in Lemma \ref{lem:correctclock}, we can define an enumeration of the correct clock states that will be useful later in the analysis of the final Hamiltonian:

\begin{definition}
\label{def:corrclockenum}
{\bf [Enumeration of Correct Clock Configurations]}
$\ket{c_{T,0,0}}$ is the clock configuration in which the Track $3$ timer has length $T$ and Tracks $1$, $2$, and $3$ are all at Time $0$. Note that the timer length is well-defined for $T \in \{0, \ldots, N-3\}$.
For $s \in \{0, \ldots, 2T\}$ and $t \in \{0,\ldots, p(N)-1\}$,
$\ket{c_{T,s,t}}$ is the clock configuration that is  reached after $s \cdot p(N) + t$ clock steps starting in $\ket{c_{T,0,0}}$.
$s$ uniquely determines the  configuration of Track $3$ and $t$ uniquely determines the configuration of Tracks $1$ and $2$.
\end{definition}

Here is the proof of Lemma \ref{lem:correctclock}.

\begin{proof} 
If the length of the timer on Track $3$ is $T$, then there are $2(T+1)$ possible
 Track $3$ configurations in a well-formed clock state:
$$\leftBr \DblankLthree^j \arrR \blankLthree^{T-j} \dead^* \rightBr 
~~~~~~~~~~~~~~~~~~~~~~~~~~
\leftBr \DblankLthree^{T-j} \arrL \blankLthree^{j} \dead^* \rightBr,$$
where $j$ can range from $0$ to $T$.
We can order the configurations from $0$ to $2T+1$ so that the configurations on the left come
first and each set is ordered by increasing $j$.
At Time $0$ for Tracks $1$, $2$, and $3$, Track $3$ is in configuration $0$.
Tracks $1$ and $2$ iterate through all correct configurations every $p(N)$ clock steps.
At each point that Time $0$ for Tracks $1$ and $2$ is reached, Track $3$ has advanced to the next
configuration, except in the last Track 1-2 iteration in which the configuration on Track $3$ goes 
from $2T$ to $2T+1$ and then to $0$ in a single $8$-segment(see Figure \ref{fig:last8segment}).
Because the configuration graph is a cycle, $(2T+1) p(N)$ unique configurations have been reached.

We will argue that any configuration reached by applying transition rules to the clock configuration at Time $0$ will be correct. By Lemma \ref{lem:countcorrect},
these must be exactly the set of correct clock configurations. We consider each condition in Definition \ref{def:correct} in turn.

\begin{enumerate}
\item {\bf Condition $1$:}
Lemma \ref{lem:clockGraph1} establishes that any configuration reached from Time $0$ for Tracks $1$ and $2$ is correct for Tracks $1$ and $2$.
\item {\bf Condition $2$:}
Note that Time $0$ for Tracks $1$, $2$, and $3$ is a correct configuration and none of the transition rules can result in an $8$-pointer on Track $1$ vertically aligned with the Track $3$ pointer pointing in the opposite direction. 
\item {\bf Condition $3$:}
If Track $3$ is in configuration $2T+1$ (with an $\arrL$ in the leftmost location), it must have entered that configuration in an $8$-segment since only the $8$-pointer can alter Track $3$.
In the last clock step of the $8$-segment, when the Track $1$ pointer transitions from an
$8$-pointer to a $1$-pointer,
TR-$41$ is applied which also flips the direction of the Track $3$ pointer.
Track $3$ remains unchanged until the next $8$-phase.
\item {\bf Condition $4$:} 
When Track $3$ first enters configuration $T$ ( which is $\leftBr \DblankLthree^T \arrR \dead^* \rightBr$) the $8$-pointer on Track $3$ is sweeping left, so afterwards
it will be to the left of the $\arrR$
on Track $3$. The next time, the Track $3$ configuration changes, the $8$-pointer is sweeping right.
TR-$39$ will cause the pointer on Track $3$ to change directions as it passes the Track $3$ pointer.
Therefore, when the $8$-pointer is over the $\dead$ states on Track $3$, the Track $3$ pointer will be
$\arrL$. 
\end{enumerate}

Now we need to argue that if a well-formed clock configuration is incorrect, it will reach an illegal configuration within $p(N)$ steps from which there will be no outgoing transition.
Again, we consider each condition from Definition \ref{def:correct} in turn.
\begin{enumerate}
\item {\bf Condition $1$:}
If the configuration is incorrect for Tracks $1$ and $2$, Lemma \ref{lem:clockGraph2} shows that an illegal configuration will be reached within $2(N-2) \le p(N)$ steps from which there is no outgoing transition.
\item {\bf Condition $2$:}
If the pointer on Track $1$ is an $8$-pointer that is vertically aligned with the pointer on Track $3$ pointing on the opposite direction, the current configuration is illegal due to term
 (\ref{eq:hcl5}) and there are no out-going transition rules that apply.
\item {\bf Condition $3$:}
Now suppose the pointer on Track $3$ is $\arrL$, in the leftmost location and  the Track $1$ pointer is not an $8$-pointer. The transition rules will advance Tracks $1$ and $2$ until the configuration below  is reached.
$$\begin{array}{|@{}c@{}| @{}c@{}| @{}c@{}|@{}c@{}|@{}c@{}|@{}c@{}|} \hline
\leftBr &
	\begin{array}{@{}c@{}} \arrLseven \\ \hline  \arrL  \\ \hline  \arrL  \end{array} &
	\begin{array}{@{}c@{}} \Dblank \\ \hline  \DblankLtwo  \\ \hline  \DblankLthree  \end{array} &
	\begin{array}{@{}c@{}} ~~~~\cdots~~~~ \\ \hline  ~~~~\cdots~~~~   \\ \hline  ~~~~\cdots~~~~ \end{array} &
	\begin{array}{@{}c@{}} \Dblank \\ \hline  \DblankLtwo  \\ \hline  \dead  \end{array} &
	 \rightBr \\ \hline  \end{array}
	  \rightarrow
\begin{array}{|@{}c@{}| @{}c@{}| @{}c@{}|@{}c@{}|@{}c@{}|@{}c@{}|} \hline
\leftBr &
	\begin{array}{@{}c@{}} \arrReight \\ \hline  \arrR  \\ \hline  \arrL  \end{array} &
	\begin{array}{@{}c@{}} \Dblank \\ \hline  \DblankLtwo  \\ \hline  \DblankLthree  \end{array} &
	\begin{array}{@{}c@{}} ~~~~\cdots~~~~ \\ \hline  ~~~~\cdots~~~~   \\ \hline  ~~~~\cdots~~~~ \end{array} &
	\begin{array}{@{}c@{}} \Dblank \\ \hline  \DblankLtwo  \\ \hline  \dead  \end{array} &
	 \rightBr \\ \hline  \end{array} $$
This will transition to an illegal configuration due to term
 (\ref{eq:hcl5}) from which there is no out-going transition rule. Since an iteration for Tracks $1$ and $2$ take $p(N)$ clock steps, an illegal configuration will be reached in at most $p(N)$ clock steps.
\item {\bf Condition $4$:}
Finally suppose that the Track $3$ configuration is $\leftBr \blankLthree^T \arrR \dead^* \rightBr$
and the Track $1$ pointer is an $8$-pointer to the left of the $\arrR$ on Track $3$. The $8$-pointer on Track $1$ will sweep right, turn around, and then sweep left. In at most $2(N-2)$ clock steps,  the following configuration is reached which is illegal due to the term
 (\ref{eq:hcl6}) and  there is no outgoing transition:
$$\sixcells{\Dblank}{\arrLeight}{\generic}{\generic}{\arrR}{\dead}$$
\end{enumerate}
\end{proof}
    
\subsection{Turing Machines Used in the Construction} 
\label{sec:TMs}

Tracks $4$ through $6$ are the computational tracks.
Track $4$ will store the state and head location for the Turing Machine,
Track $5$ will store the contents of the work tape, and Track $6$ will be a read-only track
storing the guess for the oracle queries and the witnesses for the verification
computations. The clock pointer on Track $1$ will trigger certain actions 
on the computation tracks, depending on the label of the pointer and the
state of the computation tracks. This subsection describes the two deterministic, reversible Turing Machines, $M_{BC}$ and $M_{TV}$, used in the construction.
We start with an overview in Subsection \ref{sec:twoTMs}
of the desired behavior of the two TMs
and then give more details in the remainder of the section.
Subsection \ref{sec:review} reviews basic concepts of  deterministic reversible TMs from \cite{BV97} and \cite{C15}.
Subsection \ref{sec:MBC} gives a detailed description of the binary counter TM, $M_{BC}$, including a closed formula for the reduction $N(x)$ that maps a string $x$ to the number of steps required by $M_{BC}$ to produce $x$.
 Section \ref{sec:MTV} outlines the properties required of $M_{TV}$ and argues
 that a reversible TM exists with those properties.
 Finally in Subsection \ref{sec:combineTM}, we describe several transformations that 
 we need to apply to $M_{BC}$ and $M_{TV}$ in order
 to encode the computation of the two Turing Machines into the Hamiltonian used in the construction.

\subsubsection{Overview of the two Turing Machines: $M_{BC}$ and $M_{TV}$.}
\label{sec:twoTMs}

%We will assume that the two Turing Machines described 
%are reversible. This is a valid assumption since it is known that any Turing Machine can be transformed into a reversible TM. There may be some additional overhead, but it is not significant.
The first Turing Machine is $M_{BC}$ which runs a binary counter.

\begin{quote}
{\bf Binary Counter Turing Machine:}
The initial configuration of $M_{BC}$ will have the head in state $q_0$
pointing to a $1$ followed by all blank symbols to the right.
The head of $M_{BC}$ will never move to the left of the original starting
location. $M_{BC}$ will repeatedly execute an increment operation.
Each increment operation starts with the head in state $q_0$ at the left
end of a binary string whose right-most bit is $1$ followed by blank symbols to the right.
After the increment operation, the head is back to the original location
in state $q_f$ and the binary string (interpreted as a the reverse of a binary number) has been incremented. $M_{BC}$ then transitions to
$q_0$ for the next increment operation.
The Turing Machine $M_{BC}$ is described in detail in Section \ref{sec:MBC}.
\end{quote}

The reduction will map $x$ to an integer $N$ such that
after $N-2$ steps of $M_{BC}$, the contents of the the work tape is $x$, and the head is in state $q_f$ in its original location. 
The function $N(x)$,  mapping $x$ to $N$, is the actual reduction and is
specified explicitly in Section \ref{sec:MBC}

The Turing Machine $M_{TV}$ performs two different operations in sequence.
The first operation starts with a string $x$  written on
the work tape
and a witness string  written on another auxiliary track
(which will be Track $6$ in our construction). 
The Turing Machine $M_{TV}$ starts in an initial state
$p_0$ with the head located at the leftmost input symbol. 
Then $M_{TV}$ will compute $m = c_1|x|$,
where $c_1$ is a constant that is determined by $M$ (hard-coded into the rules of $M_{TV}$).
Let $y$ denote the first $m$ bits of the witness string and 
$y_i$ the $i^{th}$ bit of $y$. $M_{TV}$ will simulate Turing Machine
$M$ on input $x$ using oracle responses denoted by $y$ to obtain the output
string $f(x,y)$. Note that by assumption $|f(x,y)| = m$.
Then $M_T$ will compute the following quantity and write that value
in unary on the worktape:
$$T(x,y) = f(x,y) + 2^m \left[ 4^{m+1} + \sum_{j=1}^m y_j \cdot 4^{m-j+1}\right]$$
The function $T(x,y)$ is used to employ Krentel's accounting scheme as described in 
Section \ref{sec:proofoverviewfinite}.

The second operation performed by $M_{TV}$ is to simulate $M$ with input $x$ and oracle queries $y$
from the witness tape. If the oracle responses are fixed, then the input to each
oracle query is also fixed. Let $x_i$ denote the input to the $i^{th}$ oracle query.
The next $m \cdot 2^{c_2 |x|}$ bits of the witness after $y$ are divided into $m$
segments of length $2^{c_2 |x|}$, where $c_2$ is a constant determined by $M$ and $V$.
Let $w_i$ denote the string formed by the bits in the
$i^{th}$ segment.
For each $i$ such that $y_i = 1$,  $M_{TV}$ runs the verifier $V$ on input
$(x_i, w_i)$.  Recall that $V$ is the verifier for language $L$, which is the oracle
language for the function $f \in \fpnexp$.

\begin{quote}
    {\bf Timer and Verification Turing Machine:}
    The  computation of $M_{TV}$ starts with $x$ written in binary 
on the worktape and finishes in the following configuration:
The first symbol is $\sigma_A$ if all of the verifier computations were accepting
and $\sigma_R$ if any of the verifier computations were rejecting.
Then $T(x,y)$ is written in unary on the worktape using the character $\sigma_X$.
The remainder of the work tape is arbitrary but can not contain the 
symbols $\sigma_A$, $\sigma_R$, or $\sigma_X$.
$M_{TV}$ will end in  state $p_f$ with the head back in the original start location.
The pseudo-code for $M_{TV}$ is given in Figure \ref{fig:MTVpseudo}.
\end{quote}

\subsubsection{Review of Reversible Turing Machines}
\label{sec:review}

Bernstein and Vazirani \cite{BV97} define several properties of Turing Machines and Quantum Turing Machines which we summarize here.
We will consider  {\em generalized} Turing Machines, also considered in \cite{C15},
in which the head can move left, right, or stay in the same place at a particular step. These options are denoted
by $\{L, R, N\}$. 
A Turing Machine is defined by a $5$-tuple:
$(Q, \Gamma, I, \delta, q_f, q_0)$.
$Q$ is the finite set of states, $\Gamma$ the tape alphabet, and
$I \subset \Gamma$ is the input alphabet. There is a special blank symbol $\# \in \Gamma - I$.
The tape consists of an infinite sequence of cells or locations, each of which stores
a symbol from $\Gamma$.
The {\em head} of the Turing Machine points to a particular location of the tape.
The {\em current symbol} is the symbol written in the location where the head is pointing.
A {\em configuration} of the Turing Machine consists of the contents of the tape, the location of the
head and the current state. 
For any $x \in I^*$, the initial configuration consists of the string $x$ written on
the tape with blank symbols everywhere else, and the head pointing to the leftmost symbol in $x$ in the state $q_0$.
The transition function $\delta$ maps pairs from $Q - \{q_f\} \times \Gamma$
to $Q \times \Gamma \times \{L, R, N\}$. In a single step of the Turing Machine, if the current symbol is $a$,  the state is $q$,
and $\delta(q,a) = (p, b, D)$, then the symbol $a$ will be overwritten by $b$, the new state will be
$p$ and the head will move one step in the direction $D$.
We say that a Turing Machine {\em halts} when it reaches the state $q_f$.
Since will be interested in Turing Machines that act as a permutation on a finite set of configurations, we will have transition functions that are also defined on pairs of the form 
$(q_f, a)$.

\begin{definition}
{\bf [Reversible Turing Machines]}
A (classical) Turing Machine is said to be {\em reversible} if each configuration has at most one predecessor. 
\end{definition}

\begin{definition}
{\bf [Unidirectional Turing Machines]}
A TM is said to be {\em unidirectional} if each state can be entered from only one direction. 
This means that if $\delta(q, a) = (p, b, D)$ and $\delta(q', a') = (p, b', D')$, then $D = D'$.
\end{definition}

Each state $p$ of a unidrectional TM has a unique direction
$D_p \in \{L, R, N\}$ that determines the direction from which
$p$ is entered.
Once these directions are defined for each state, the direction component in any transition rule
triple  $(p, b, D)$ is completely determined by the state $p$. 
The {\em reduced transition rule} $\delta_r: Q \times \Gamma \rightarrow Q \times \Gamma$
is well defined since the direction can be recovered from the output state  alone. 
It  will be helpful to refer collectively to the set of states that are reached from
the same direction:
\begin{align*}
    Q_L & = \{q \mid D_q = L\}\\
    Q_R & = \{q \mid D_q = R\}\\
    Q_N & = \{q \mid D_q = N\}
\end{align*}
Bernstein and Vazirani show local conditions that characterize whether a TM is reversible. 

\begin{theorem} {\bf [Theorem B.1 and Cor B.3 in \cite{BV97}]}
\label{th:reversible}
A TM $M$ is reversible if and only if $M$ satisfies the following two conditions:
\begin{enumerate}
    \item $M$ is unidirectional.
    \item The reduced transition function is one-to-one.
\end{enumerate}
\end{theorem}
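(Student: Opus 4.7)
The plan is to prove both directions of the equivalence by directly constructing predecessors (for the backward direction) or constructing collisions (for the forward direction) in terms of configurations.

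For the backward direction, assume $M$ is unidirectional and its reduced transition function $\delta_r$ is one-to-one; I would show that every configuration $C$ has at most one predecessor. Given $C$ with head at position $i$, state $q$, and current symbol $c$ at position $i$, I use unidirectionality: the direction $D_q$ from which $q$ is entered is determined by $q$ alone, so the predecessor's head position must be $i - \epsilon(D_q)$ (where $\epsilon(L)=-1,\epsilon(R)=+1,\epsilon(N)=0$). Call that position $j$. The tape of the predecessor agrees with $C$ everywhere except possibly at position $j$, where some symbol $a$ was overwritten by the symbol $c'$ currently sitting at $j$ in $C$ (positions $\ne j$ cannot have been touched in the last step). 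Thus any predecessor has state $p$ and symbol $a$ at $j$ with $\delta_r(p,a) = (q, c')$. Since $\delta_r$ is one-to-one, $(p,a)$ is uniquely determined, so $C$ has at most one predecessor.

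For the forward direction, I contrapositively assume one of the two conditions fails and exhibit two distinct configurations with the same successor. If unidirectionality fails, there exist transitions $\delta(p,a) = (q, b, R)$ and $\delta(p',a') = (q, b', L)$ entering $q$ from different directions. I then build two configurations $C_1$ and $C_2$ whose heads sit at positions $i$ and $i+2$ respectively, with chosen tapes
\[
T_1:\ \ldots,\ a,\ x,\ b',\ \ldots \qquad T_2:\ \ldots,\ b,\ x,\ a',\ \ldots
\]
at positions $i,i+1,i+2$, agreeing elsewhere. One computational step from $C_1$ (in state $p$) writes $b$ at position $i$ and moves the head right to $i+1$ in state $q$; one step from $C_2$ (in state $p'$) writes $b'$ at position $i+2$ and moves the head left to $i+1$ in state $q$. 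The two resulting tapes both read $b,x,b'$ at $i,i+1,i+2$ and agree elsewhere, so $C_1$ and $C_2$ have a common successor, while being distinct since their head positions differ. If instead $M$ is unidirectional but $\delta_r$ is not one-to-one, pick $(p,a)\ne(p',a')$ with $\delta_r(p,a)=\delta_r(p',a')=(q,b)$; the unique direction $D_q$ coincides for both, so placing the head at the same position $i$ on tapes that differ only at position $i$ (writing $a$ vs.\ $a'$) in states $p$ vs.\ $p'$ yields two distinct configurations collapsing to the same successor after one step.

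I would proceed in the order: first the short backward direction (to fix notation and confirm that unidirectionality is exactly what lets us read off the previous head position from $q$), then the two sub-cases of the forward direction. The main obstacle I anticipate is bookkeeping in the unidirectionality failure case: I must pick the two auxiliary tapes $T_1,T_2$ so that both one-step images coincide even though the heads approach position $i+1$ from opposite sides and overwrite different cells. The trick above — seeding $T_1$ with the \emph{to-be-written} symbol $b'$ already in place at $i+2$, and symmetrically for $T_2$ at $i$ — handles this cleanly, and no issues arise from tape-alphabet distinctions because the construction uses only symbols that already appear in the given transitions. The remaining case where both conditions fail (or where $D_q = N$) is a routine adaptation of the same idea.
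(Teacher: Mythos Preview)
The paper does not actually prove this theorem; it is quoted from Bernstein and Vazirani \cite{BV97} (Theorem~B.1 and Corollary~B.3 there) and used as a black box, so there is no proof in the paper to compare against. Your argument is correct and is the standard one: in the backward direction, unidirectionality pins down the predecessor's head position from the current state $q$ alone, and injectivity of $\delta_r$ then pins down the unique $(p,a)$ pair; in the forward direction, your explicit collision constructions work, including the ``pre-seeding'' trick for the unidirectionality-fails case and the straightforward same-cell collision when $\delta_r$ is not injective. The deferred cases (directions $N$ versus $L$ or $R$, or both conditions failing simultaneously) are indeed routine variants of the same constructions.
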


We will make use of several other properties of Turing Machines given below.
The first two are standard from \cite{BV97}. The third is introduced here
for the purposes of our construction.

\begin{definition}
{\bf [Normal Form TM]}
A reversible TM is in {\em normal form} if
$$\forall a \in \Gamma~~~~~\delta(q_f, a) = (a, q_0, N).$$
\end{definition}

We will refer to the tape cell where the head is located in the initial configuration as the {\em start cell}.
Occasionally, it will be convenient to refer to a numbering of the tape cells. The start cell will be numbered $1$ and rest of the cells to the left and right are numbered accordingly.

\begin{definition}
{\bf [Proper TM Behavior on a Particular Input]}
A TM behaves {\em properly} on input $x$, if on input $x$,
the TM halts in the start cell and
the head never moves to the left of the start cell.
\end{definition}

\begin{definition}
{\bf [Compact TMs]}
For positive integer $c$, a TM is {\em $c$-compact}, if for every $t \ge 1$,
it reaches at most $t$ locations of the tape after $t+c$ steps.
\end{definition}

We will generally assume that a reversible Turing Machine has a transition function that is well-defined for every
pair in $Q \times \Gamma$. This is a reasonable assumption as the transition function can always be expanded if it is only
defined on a subset of $Q \times \Gamma$.

\begin{fact}
Let $M$ be a reversible Turing Machine whose transition function $\delta$ is defined on a subset  of $Q \times \Gamma$.
$\delta$ can be expanded so that it is well-defined on all of $Q \times \Gamma$ and the resulting Turing Machine is also reversible.
\end{fact}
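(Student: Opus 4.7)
The plan is to reduce to the local characterization of reversibility given by Theorem~\ref{th:reversible}, which states that a TM is reversible if and only if it is unidirectional and its reduced transition function $\delta_r$ is one-to-one. Since $M$ is reversible, its partial $\delta_r$ is an injection defined on a subset of the finite set $Q \times \Gamma$, and every state already appearing as an output has a uniquely determined direction. The problem therefore reduces to extending a partial injection on $Q \times \Gamma$ to a total one, while assigning directions consistently to states that are not yet used as outputs.

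First, I would fix notation. Let $Q' \subseteq Q$ be the set of states appearing as the second coordinate of some defined value of $\delta_r$; for each $q \in Q'$, unidirectionality of $M$ fixes a direction $D_q \in \{L, R, N\}$. Let $D_0 \subseteq Q \times \Gamma$ be the pairs on which $\delta$ is undefined, and let $R_0 \subseteq Q \times \Gamma$ be the pairs outside the image of the partial $\delta_r$. Injectivity of $\delta_r$ forces $|D_0| = |R_0|$ by a simple counting argument on the finite set $Q \times \Gamma$.

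Next, I would choose any bijection $\varphi : D_0 \to R_0$ and use it to extend $\delta_r$ to a total permutation of $Q \times \Gamma$. For every state $q \in Q \setminus Q'$ that appears as the second coordinate of some pair in $R_0$, I would fix $D_q = N$; any single choice works, provided the same one is used for every new rule that outputs $q$. Then for each new assignment $\varphi(p, a) = (q, b)$, I would define $\delta(p, a) = (q, b, D_q)$, using the direction that was either already determined by the original rules or just fixed for a previously unused output state.

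The verification is then immediate via Theorem~\ref{th:reversible} applied in the converse direction: the extended $\delta_r$ is a permutation of $Q \times \Gamma$ and therefore one-to-one, and every output state has a unique globally fixed direction, so the machine remains unidirectional. The main subtlety, rather than an obstacle, is that $\delta$ may originally have been undefined on pairs of the form $(q_f, a)$; these are simply absorbed into $D_0$ and extended in the same way, which is consistent with the paper's convention (noted just before Theorem~\ref{th:reversible}) that the transition function may act as a permutation of configurations even at the halting state.
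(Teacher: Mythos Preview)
Your proof is correct. The paper states this as a \texttt{Fact} without proof, so there is nothing to compare against; your argument via Theorem~\ref{th:reversible} (extend the partial injection $\delta_r$ to a permutation of $Q \times \Gamma$ by any bijection $D_0 \to R_0$, and assign a single direction to each previously unused output state) is exactly the natural way to fill in the details, and the counting $|D_0| = |R_0|$ together with the consistent assignment of directions handles all the cases cleanly.
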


\subsubsection{The Binary Counter Turing Machine}
\label{sec:MBC}

The Turing Machine tape in our construction will be composed of two tracks.
The {\em work track} will be stored in Track $5$  and
will
hold the contents of the work tape, each symbol of which comes from a set $\Sigma$.
The {\em witness track} will be stored in Track $6$ in the construction and
contains a read-only binary witness. Therefore the set of tape symbols $\Gamma$
is actually $\Sigma \times \{0, 1\}$. 
It will be convenient to separate the two tracks in describing a Turing Machine
configuration. Thus if $x$ and $\myw$ are two binary strings,
the input $(x,\myw)$ has $\myw$ on the witness track and input $x$ followed by blank
($\#$) symbols on the work track. The the left-most bits of $x$ and $\myw$ are vertically aligned. 
The initial configuration on input $(x,\myw)$ has the head pointing to the
leftmost bit of $x$ and $\myw$ in the start state $q_0$.
The Binary Counter Turing Machine $M_{BC}$ is unaffected by the contents of the witness track
(Track $6$), so we will only describe its behavior in terms of the contents of Track $5$.

We assume that $M_{BC}$ begins each increment operation in state $q_0$
at the left end of a binary string whose right-most bit is $1$. 
If $x$ is a binary string, then we interpret the string as 
representing the number $x^R$ in binary, where $x^R$ is the reverse of $x$.
The Turing Machine executes an increment loop,
and finishes in state $q_f$, back in the starting location,
with a string $x'$ written on the tape. The value of $(x')^R$ is
the value of $x^R$ plus $1$. Note that the condition that the binary string
$x'$ ends in $1$ will be maintained. In the construction, the starting configuration
will have the string $1$ on the tape, followed by $\#$ symbols, with the head in state $q_0$,
pointing to the $1$.
The transition rules for $M_{BC}$ are given in Figure \ref{fig:MBCrules}.
The pseudo-code for $M_{BC}$ is given in Figure \ref{fig:MBCpseudo}.

\begin{figure}[ht]
\centering
\begin{tabular}{|c|c|c|c|}
\hline
 & $0$ & $1$ & $\#$  \\
 \hline
 \hline
 $q_{0}$ & $(q_f, 1, N)$ & $(q_1, \#, R)$ & -  \\
 \hline
 $q_{1}$ & $(q_2, \#, R)$ & $(q_1, 0, R)$ & $(q_3, \#, R)$ \\
 \hline
 $q_{2}$ & $(q_4, 0, L)$ & $(q_4, 1, L)$ & - \\
 \hline
 $q_{3}$ & - & - & $(q_4, \#, L)$  \\
 \hline
 $q_{4}$ & - & - & $(q_5, 1, L)$ \\
 \hline
 $q_{5}$ & $(q_5, 0, L)$ & - & $(q_f, 0, N)$ \\
 \hline
\end{tabular}
\caption{A summary of the transition rules for $M_{BC}$. The transition rules are well-defined
as long the Turing Machine begins with the head pointing to the left end of a binary string whose rightmost bit is $1$, followed by  $\#$ symbols.
$M_{BC}$ reaches a final state in the position in which it began, with the binary
string incremented (in reverse notation) and $\#$ symbols to the right of the new string.
The rules can be extended so that they are well defined on every $Q \times \{0, 1, \# \}$ pair, so that the reduced transition function remains one-to-one and the Turing Machine
remains unidrectional. The rule $\delta(a, q_f) = (q_0, a, N)$ is included for every $a \in \{0, 1, \# \}$ so that the resulting Turing Machine is in normal-form.}
\label{fig:MBCrules}
\end{figure}

\begin{figure}[ht]
\noindent
\begin{center}
\fbox{\begin{minipage}{\textwidth}
\begin{tabbing}
(1)~~~~ \= Start in state $q_0$ at the left end of the string. \\
(2) \> \IF the current bit is $0$\\
(3)  \> ~~~~ \= Write a $1$ and transition to $q_f$, stay in place \\
(4) \> \ELSE \\
(5)  \> \> Write $\#$ (to mark the return location) move right into state $q_1$\\
(6) \> \> \WHILE the current bit is $1$, \\
(7) \> \> ~~~~ \= Write a $0$ and move right. Stay in state $q_1$\\
(8) \> \> \IF current symbol is  $0$\\
(9) \> \> \> Write a $\#$, transition to $q_2$ and move right.\\
(10) \> \> \ELSE ~~~(current symbol should be $\#$)\\
(11) \> \> \> Write a $\#$, transition to $q_3$ and move right.\\
(12) \> \> \IF current state is $q_2$, the current symbol should be a bit $b$.\\
(13) \> \> \> Write $b$, transition to $q_4$ and move left.\\
(14) \> \> \ELSE ~~~(current state is $q_3$, the current symbol should be   $\#$)\\
(15) \> \> \> Write a $\#$, transition to $q_4$ and move left.\\
(16) \> \> Current symbol should be $\#$. Write $1$, transition to $q_5$ and move left.\\ 
(17) \> \> \WHILE the current symbol is $0$, \\
(18) \> \> \> Write a $0$ and move left. Stay in state $q_5$\\
(19) \> \> \IF the current symbol is $\#$ (back at starting location), \\
(20) \> \> \> Write a $0$, transition to $q_f$, stay in place.
\end{tabbing}
\end{minipage}}
\end{center}
\caption{Pseudo-code for $M_{BC}$.}
\label{fig:MBCpseudo}
\end{figure}

\begin{lemma}
\label{lem:oneinc}
{\bf [Number of Steps for One Increment Operation of $M_{BC}$]}
$M_{BC}$ is a normal-form reversible TM.
Let $x$ be a binary string ending in $1$.
Let $l(x)$ denote the length of the maximal prefix of $x$ that is all $1$'s.
The number of steps $M_{BC}$ takes to reach $q_f$ on input $x$ is
$1$ if $l(x) = 0$ and is $2 l(x)+3$ if $l(x) > 1$.
\end{lemma}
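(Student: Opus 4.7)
The plan has three parts, matching the three assertions of the lemma: that $M_{BC}$ is in normal form, that it is reversible, and the step count.  The normal form property is immediate from the extension described in the caption of Figure \ref{fig:MBCrules}, which includes $\delta(q_f, a) = (q_0, a, N)$ for every $a \in \{0,1,\#\}$, so I would simply note this and move on.

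For reversibility, my plan is to invoke Theorem \ref{th:reversible} and check its two hypotheses directly from the table in Figure \ref{fig:MBCrules}.  For unidirectionality, I would tabulate, for each state $q$, the direction used on every transition that enters $q$:  $q_0$ and $q_f$ are entered only by $N$-moves (from $q_f$ via the normal-form extension, and from $q_0$ or $q_5$ respectively); $q_1, q_2, q_3$ are entered only by $R$-moves out of $q_0$ and $q_1$; and $q_4, q_5$ are entered only by $L$-moves out of $q_2, q_3, q_4, q_5$.  For injectivity of the reduced transition function, I would list the eleven non-trivial output pairs $\delta_r(q,a) = (q',b)$ from the table together with the three normal-form outputs $\delta_r(q_f, a) = (q_0,a)$ and observe that each $(q',b)$ appears at most once; the distinctness is easy to verify by grouping the outputs by the new state.

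For the step count I would do a direct trace, splitting on whether $l(x) = 0$ or $l(x) \geq 1$.  When $l(x) = 0$, the leftmost bit of $x$ is $0$ (as $x$ ends in $1$, and the prefix of $1$'s has length zero), and a single application of $\delta(q_0, 0) = (q_f, 1, N)$ halts in one step, giving the "$1$" case.  When $l(x) \geq 1$, write $x = 1^k \alpha$ where $k = l(x)$ and $\alpha$ is either empty or begins with $0$.  The first move consumes the leading $1$ by writing $\#$ and entering $q_1$ one cell to the right ($1$ step).  Then $q_1$ runs rightward through the remaining $k-1$ ones, overwriting each with $0$ and staying in $q_1$ ($k-1$ steps).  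At this point the head is at position $k+1$, which holds either $\#$ (if $\alpha$ is empty) or $0$ (if $\alpha$ begins with $0$); in either sub-case, the trace takes exactly two additional right-moves before transitioning to $q_4$ and moving left (this is the key step where the two sub-cases merge — in the $\#$ sub-case one uses $\delta(q_1,\#)=(q_3,\#,R)$ then $\delta(q_3,\#)=(q_4,\#,L)$, and in the $0$ sub-case one uses $\delta(q_1,0)=(q_2,\#,R)$ then $\delta(q_2,b)=(q_4,b,L)$ where $b$ is the bit at position $k+2$).  From $q_4$ at position $k+1$, reading $\#$ gives $(q_5, 1, L)$ ($1$ step), after which $q_5$ marches left through $k-1$ zeros (those newly written on cells $2, \ldots, k$), and finally one last step at position $1$ reading $\#$ executes $(q_f, 0, N)$.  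Summing:  $1 + (k-1) + 2 + 1 + (k-1) + 1 = 2k+3$, as claimed.

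The only mildly delicate point is verifying that the right-moving phase does behave uniformly in the two sub-cases for $\alpha$, and that the $0$'s written on cells $2, \ldots, k$ during the forward sweep are indeed what the $q_5$ backward sweep reads, so that the transition rule $\delta(q_5, 0) = (q_5, 0, L)$ correctly applies $k-1$ times before the head returns to the start cell holding $\#$.  This is the one step that I expect to require the most care in writing up, but it is a routine bookkeeping check rather than a genuine obstacle.
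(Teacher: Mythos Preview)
Your proposal is correct and follows essentially the same approach as the paper: a direct trace of the machine's execution, splitting on whether the leading bit is $0$ or $1$, and accounting for the two sub-cases (next symbol $0$ versus $\#$) that merge at $q_4$. Your trace and arithmetic match the paper's exactly. Two minor points worth noting: first, you explicitly verify reversibility via Theorem~\ref{th:reversible}, whereas the paper's proof simply relies on the claim in the caption of Figure~\ref{fig:MBCrules} --- your version is more self-contained; second, you correctly handle the case $l(x)=1$, which the lemma statement as written (``$l(x) > 1$'') inadvertently omits but which your argument covers uniformly under $k \ge 1$.
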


\begin{proof}
If $l(x) = 0$, then the leftmost bit of $x$ is $0$. In this case,
$M_{BC}$ writes a $1$ and transitions to $q_f$ in the first step.

If $l(x) > 1$, then $M_{BC}$ writes a $\#$ to mark its return location and moves right into state $q_1$. The head continues to move right until it reaches a $0$ or $\#$ symbol. ($l(x)$ steps so far.) It writes a $\#$ as a placeholder for the $0$ or $\#$
and moves right. (One more step.) The new state will be $q_2$ or $q_3$ depending on whether it reached a $0$ or a $\#$.
At this point, $q_2$ should be pointing to a $0$ or $1$ (since the rightmost bit is $1$) and $q_3$ should be pointing to a $\#$. It re-writes the same symbol and
moves left to state $q_4$. (One more step.) The current symbol should be $\#$
because it holds the place holder for the $0$ or $\#$. It replaces the $\#$ with a $1$
and continues to move left past all the $0$'s until the next $\#$ is reached.
($l(x)$ more steps.) This $\#$ symbol marks the location where it started. It replaces the $\#$ with a $0$ and transitions to $q_f$ (One more step).
The total number of steps has been $2 l(x)+3$.
\end{proof}

For every $a \in \Gamma$, we add the rule $\delta(q_f, a) = (q_0, a, N)$
which makes $M_{BC}$ well formed. Then if $M_{BC}$ starts with an input of $1$, it
will repeatedly increment the string as long as it is allowed to run.
For example, suppose $M_{BC}$ starts   with string $x$:

\vspace{.1in}

\begin{tabular}{cccccccccccc}
 & $q_0$ & & & & & & &  &   & &  \\
 & $\downarrow$& & & & & & &  &   &  &  \\
$\vdash$ & 1 & 1 & 0 & 1 & \# & \#  & \#  & \# &  \# & $\dashv$  \\
\end{tabular}

\vspace{.1in}
After the increment operation, the state will be:

\begin{tabular}{cccccccccccc}
 & $q_f$ & & & & & & &  &   & &  \\
 & $\downarrow$& & & & & & &  &   &  &  \\
$\vdash$ & 0 & 0 & 1 & 1 & \# & \#  & \#  & \# &  \# & $\dashv$  \\
\end{tabular}

\vspace{.1in}

Then $M_{BC}$ transitions to $q_0$ in place and the increment operation begins again:

\vspace{.1in}

\begin{tabular}{cccccccccccc}
 & $q_0$ & & & & & & &  &   & &  \\
 & $\downarrow$& & & & & & &  &   &  &  \\
$\vdash$ & 0 & 0 & 1 & 1 & \# & \#  & \#  & \# &  \# & $\dashv$  \\
\end{tabular}

\vspace{.1in}

For a particular chain length $N$, the starting configuration for $M_{BC}$
will have the string $1$, followed by $N-3$ blank ($\#$) symbols
written on the work track and a binary witness string
$\myw$ of length $N-2$ written on the witness track. We will call this input
$(1,\myw)_N$. 

\begin{definition}
\label{def:Nofx}
{\bf [Formula for $N(x)$]}
$$N(x) = 5 n(x^R) - 2 w(x)  - 4[n(x^R)~\mbox{mod}~2],$$
where $x$ is a binary string, ending in $1$, 
$x^R$ is the reverse of binary string $x$,
$n(x)$ is the numerical value of the binary number represented by string $x$, and $w(x)$ is the number of $1$'s in $x$.
\end{definition}

\begin{lemma}
\label{lem:Nofxbound}
{\bf [$N(x)-2$ Steps of $M_{BC}$ Produces String $x$]}
Let $N(x) = N \ge 5$.
Starting with input $(1, \myw)_N$. 
If $M_{BC}$ is run for $N-2$ steps,
it will never leave the segment of tape cells from $1$ through $N-4$.
Moreover, at the end of the $N-2$ steps,
the configuration will be $(x,\myw)$ with the head pointing to the left-most bits of $x$ and $\myw$ in state $q_f$.
\end{lemma}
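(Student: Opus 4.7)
The plan is to argue by induction on $k := n(x^R)$ that after exactly $\tau(k)$ TM steps of $M_{BC}$ starting from input $(1,w)_N$, the configuration is $y_k$ (the binary representation of $k$, written low-order-first) in state $q_f$ at cell $1$ on the work tape, and that the head never leaves a small prefix of the tape. The base case $k=1$ is the initial configuration itself; for the inductive step from $y_k$ in $q_f$, the normal-form rule $\delta(q_f,a) = (q_0,a,N)$ contributes one step, after which Lemma \ref{lem:oneinc} guarantees that the next increment takes $\mathrm{inc}(y_k)$ steps: $2l(y_k)+3$ when $l(y_k) \ge 1$ and $1$ when $l(y_k) = 0$. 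Unrolling the recurrence gives
\[
\tau(k) \;=\; \sum_{j=1}^{k-1}\mathrm{inc}(y_j) \;+\; (k-2).
\]

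To obtain a closed form matching $N(x)-2$, I would exploit the identity $l(y_j) = v_2(j+1)$, namely that the number of trailing $1$'s in the binary expansion of $j$ equals the $2$-adic valuation of $j+1$. Then by Legendre's formula,
\[
\sum_{j=1}^{k-1} l(y_j) \;=\; \sum_{i=2}^{k} v_2(i) \;=\; v_2(k!) \;=\; k - s_2(k) \;=\; k - w(x),
\]
since the binary Hamming weight $s_2(k)$ equals $w(x)$. The count of $j \in \{1,\dots,k-1\}$ with $l(y_j) \ge 1$ is the number of odd integers in that range, namely $\lceil (k-1)/2\rceil$, which contributes the parity-dependent correction. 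Substituting into the unrolled recurrence and simplifying should then yield $\tau(k) = N(x) - 2$ in the form given by Definition \ref{def:Nofx}.

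For the head-movement bound, I would observe directly from the $M_{BC}$ transitions that a single increment of a string with $l$ trailing $1$'s takes the head from cell $1$ rightward to at most cell $l+2$ (just past the leftmost $\#$ symbol reached during the carry propagation) and back to cell $1$, with the sweep in $q_5$ never passing cell $1$. The maximum of $l(y_j)$ over $j \le k-1$ is $\lfloor \log_2 k\rfloor$, achieved at $j = 2^{\lfloor\log_2 k\rfloor}-1$, so the head stays within cells $1$ through $\lfloor\log_2 k\rfloor + 2$ throughout the entire run of $N-2$ steps. Since $N(x) = \Theta(n(x^R)) = \Theta(2^{|x|})$ while the head bound grows like $|x| + O(1)$, the inequality $\lfloor \log_2 n(x^R)\rfloor + 2 \le N-4$ holds whenever $N(x) \ge 5$, establishing the claim on the tape segment used.

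The main obstacle is the arithmetic in the second paragraph. The parity-dependent term $-4[n(x^R)\bmod 2]$ in Definition \ref{def:Nofx} must arise precisely from the interaction of the $\lceil (k-1)/2\rceil$ odd-integer count with the $(k-2)$ transition-step contribution and the constants in $\mathrm{inc}(y_j)$; any off-by-one in how the initial increment, the final increment, or the $q_f\to q_0$ transitions are accounted for would shift the closed form by an additive or parity-dependent constant, so careful bookkeeping is essential.
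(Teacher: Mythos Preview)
Your approach is correct and takes a genuinely different route from the paper. The paper groups the step count by the bit positions of the target string $x$: it introduces $f(n)$, the number of steps needed to carry a single $1$ from position $1$ to position $n{+}1$ (i.e.\ from $(q_f/1)\#^*$ to $(q_f/0)0^{n-1}1\#^*$), solves the recurrence $f(n)=\sum_{j<n}f(j)+(4n+2)$ to obtain $f(n)=5\cdot 2^n-4$, and then sums $\sum_{j\ge 2} x_j\, f(j-1)$ together with $2(w(x)-1)$ for the intermediate ``insert-a-$1$'' steps. You instead group by individual increment operations, invoke Lemma~\ref{lem:oneinc} once per increment, and collapse $\sum_{j=1}^{k-1} l(y_j)$ via the identity $l(y_j)=v_2(j+1)$ and Legendre's formula $v_2(k!)=k-s_2(k)$. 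This avoids introducing and solving an auxiliary recurrence, and it makes the appearance of $w(x)=s_2(k)$ in the closed form immediate rather than emerging from the telescoping of $f(j-1)=5\cdot 2^{j-1}-4$. For the head-movement bound, the paper argues that cell $n$ is first visited while incrementing $1^{n-2}$ and lower-bounds the step count to that moment; you bound the rightmost cell touched during any single increment directly as $l(y_j)+2$ and take the maximum $\lfloor\log_2 k\rfloor+2$ over $j\le k-1$, which is more direct. One small remark on your induction setup: the base case $k=1$ is the initial configuration in state $q_0$, not $q_f$, so the ``$+1$ for the $q_f\to q_0$ transition'' in your inductive step does not apply at $j=1$; this is exactly why your displayed formula carries $(k-2)$ rather than $(k-1)$, and you already flag this bookkeeping as the place requiring care.
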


\begin{proof}
$M_{BC}$ takes two steps to add
 a $1$ to the leftmost position if a $0$ is present, beginning and ending in state $q_f$:
$$ (q_f/0)x1 \#^*  ~\rightarrow ~ (q_0/0)x1 \#^* ~\rightarrow ~ (q_f/1)x1 \#^*$$

Let $f(n)$ be the number of steps required to go from the state
$ (q_f/1) \#^*$ to $ (q_f/0) 0^{n-1} 1\#^*$,
which is effectively moving the $1$ over $n$ locations. Note that the cost is
the same if the starting configuration is $ (q_f/1) 0^n x \#^*$.
In order to end up in configuration $ (q_f/0) 0^{n-1} 1\#^*$,
the TM must first reach $(q_f/1)~1^{n-1}~\#^*$ and then perform an increment operation. 
In order to reach $(q_f/1)~1^{n-1}~\#^*$,
the first $1$ is moved over by $n-1$ (cost of $f(n-1)$), then a new $1$ is added
(cost of $2$), then that $1$ is moved over by $n-2$ (cost of $f(n-2)$), etc.
Thus the number of steps to reach $(q_f/1)~1^{n-1}~\#^*$ from
$(q_f/1) \#^*$ is $f(n-1) + f(n-2) + \cdots f(1) + 2(n-1)$. After one more
increment operation, the state will be $(q_f/0) 0^{n-1} 1\#^*$.
It takes $1$ step to get to $(q_0/1)~1^{n-1}~\#^*$, and then according to Lemma \ref{lem:oneinc}, an additional $2n+3$ steps to complete the increment.
Therefore the function
$f$ obeys the recurrence:
$$f(n) = \sum_{j=1}^{n-1} f(j) + 2(n-1) + 1 + (2n+3) = \sum_{j=1}^{n-1} f(j) + (4n+2).$$
The solution to this recurrence relation is $f(n) = 5 \cdot 2^n - 4$.
The initial condition is $f(1) = 6$, and it requires $6$ steps to go from
$ (q_f/1) \#^*$ to $ (q_f/0)  1\#^*$.

If the bits of $x$ are numbered from left to right $x_1 x_2 \cdots x_{n}$, then 
the number of steps to reach $ (q_f/x_1)x_2 \cdots x_n \#^*$ from $ (q_f/1) \#^*$ is
$$\sum_{j=2}^{n} x_j f(j-1) + 2(w(x)-1) = \sum_{j=2}^{n} x_j (5 \cdot 2^{j-1} - 4) + 2(w(x)-1)$$
If $x^R$ is even ($x_1 = 0$), then this value is
$5 n(x^R) - 2 w(x) - 2$. If $x^R$ is odd then this value is is $5 n(x^R) - 2 w(x) - 6$.
Note also that $M_{BC}$ starts in state $q_0$ instead of state $q_f$,
so we can subtract $1$ steps to get:
$$N(x) - 2 = 5 n(x^R) - 2 w(x) - 2 - 4(n(x^R)~\mbox{mod}~2).$$

Finally, we address whether the head tries to move out of the segment of tape symbols
from $1$ through $N(x)-3$ in the first $N(x)-2$ steps.
If $N(x) \ge 5$, then there are at least $3$ tape cells and the first increment
operation can be completed. 
The first time the head reaches tape cell $n \ge 4$ is in incrementing $1^{n-2}$. It takes $5(2^{n-2}-1) - 2(n-2) - 4$ steps to reach
$1^{n-2}$ and end in state $q_f$. It takes another $n$ steps to reach
tape cell $n$. We need that the total number of steps
$5 (2^{n-2}-1) -n \ge n+2$, which holds for every $n \ge 4$.

On the left end of the tape, $M_{BC}$ starts with a binary string ending in $1$.
In each increment operation in which it starts with a string ending in $1$, it behaves
properly. When it reaches $q_f$, it will again have a string ending in $1$. It then transitions to  $q_0$ and behaves properly on the next increment operation. 
\end{proof}

It will be  useful in the analysis to have a lower bound on $N(x)$ as a function of $|x|$.

\begin{lemma}
\label{lem:Nlowerbound}
For $|x| \ge 2$, $N(x) \ge 2^{|x|}$.
\end{lemma}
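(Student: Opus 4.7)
The plan is to write $k = |x|$ and unwind the definition $N(x) = 5n(x^R) - 2w(x) - 4[n(x^R)\bmod 2]$, then lower bound $n(x^R)$ using the structural hypothesis that $x$ ends in $1$ (equivalently, $x^R$ begins with $1$). The high-order bit of $x^R$ alone gives $n(x^R) \ge 2^{k-1}$, and after subtracting the small negative contributions $-2w(x) - 4 \ge -2k - 4$, this will leave a dominant $5 \cdot 2^{k-1}$ term that beats $2^k = 2 \cdot 2^{k-1}$ as soon as $k$ is at least a small constant.

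To be a bit careful about the $-4$ correction, I would split on the parity of $n(x^R)$. In the even case the correction vanishes, and the bound is immediate:
\[
N(x) \;\ge\; 5\cdot 2^{k-1} - 2k \;\ge\; 2^k,
\]
which reduces to $3\cdot 2^{k-1}\ge 2k$, true for all $k\ge 2$. In the odd case, $x^R$ ends in $1$, which means $x$ begins with $1$; combined with $x$ ending in $1$ and $k\ge 2$, this forces $n(x^R)\ge 2^{k-1}+1$, so the extra $+5$ from $n(x^R)$ exactly compensates the $-4$ correction and a bit more, yielding
\[
N(x) \;\ge\; 5\cdot 2^{k-1} + 1 - 2k \;\ge\; 2^k
\]
for $k\ge 2$ by the same exponential-versus-linear comparison.

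Both residual inequalities ($3\cdot 2^{k-1} \ge 2k$ and $3\cdot 2^{k-1} \ge 2k - 1$) are easy: verify the base case $k=2$ by hand and use a one-line induction in which the right side grows by $2$ while the left side doubles. The whole argument is elementary; the only place to be alert is the odd-parity case, where one must explicitly use \emph{both} endpoint bits of $x$ (not just the trailing $1$) to sharpen $n(x^R)\ge 2^{k-1}$ to $n(x^R) \ge 2^{k-1}+1$ and thereby absorb the $-4$ term. No step here should present a real obstacle.
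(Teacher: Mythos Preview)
Your proof is correct and follows essentially the same approach as the paper: both exploit that the trailing $1$ in $x$ forces $n(x^R)\ge 2^{|x|-1}$, then handle the parity-dependent $-4$ correction. The paper phrases this by identifying the extremal strings $0^{n-1}1$ and $10^{n-2}1$ and computing $N(x)$ there, whereas you bound directly via a parity split; the underlying argument is the same.
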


\begin{proof}
Consider a string $x$ of length $n$. $N(x)$ is minimized for
$x = 0^{n-1}1$ or $10^{n-2}1$. Any additional $1$'s in $x$ add
at least $10$ to $N(x)$ in the first term and only subtract $2$ from the second term.
For $10^{n-2}1$, $N(x) = 5(2^{n-1}+1) - 10$ which is at least $2^n$ for any $n \ge 2$.
For $0^{n-1}1$, $N(x) = 5(2^{n-1}) - 4$ which is also at least $2^n$ for any $n \ge 2$.
\end{proof}

\subsubsection{The Turing Machine $M_{TV}$ used in the Hamiltonian for Finite Chains}
\label{sec:MTV}

In this section we establish
that there is a reversible normal-form Turing Machine $M_{TV}$ that satisfies the conditions required in the construction. Recall that we are reducing from $f \in \fpnexp$.
The polynomial time Turing Machine that computes $f$ is called $M$ and the query language is $L$.
$L$ is contained in $\nexp$ and the exponential time verifier for $L$ is called $V$.
We will assume that there are constants $c_1$ and $c_2$ such that on input $x$,
the number of queries made by $M$ is at most $c_1 |x|$ and if $M$ makes a query $\bar{x}$
to the oracle, the size of the witness required is at most $2^{c_2 n}$ and the running time
of $V$ on input $(\bar{x},w)$ is at most $2^{c_2 |x|}$. This assumption justified by the following lemma:

\begin{lemma}
\label{lem:pad}
{\bf [Padding Lemma]}
$f \in \fpnexp$, then for any constants, $c_1$ and $c_2$,
$f$ is polynomial time reducible to a function $g \in \fpnexp$ such that $g$ can computed by
polynomial-time Turing Machine $M$
with access to a $\nexp$ oracle for language $L$. The verifier for $L$ is a  Turing Machine $V$.
Moreover, on input $x$ of length $n$, $M$ runs in $O(n)$ time, makes
at most $c_1 n$ queries to the oracle. Also, the length of the queries made to the oracle is at most $c_1 n$ and the running time of $V$ as well as the size of the witness required for $V$ on any query made by $M$ is $O(2^{c_2n})$. In addition the length of the output of $g$ is at most $c_1 n$.
\end{lemma}

\begin{proof}
Suppose that $M$ runs in $r_1(n)$ time and the running time of $V$ is $O(2^{r_2(n)})$.
Therefore, on input $x$, the number of queries made by $M$ as well as the length of the input
to the oracle, as well as the length of the output $f$ are at most $r_1(|x|)$. Also, the running time as well as the size of witness required by
$V$ is at most $O(2^{r_2(r_1(|x|))})$.

The reduction will pad an input $x$ to $f$ with $1 0^{t(n)} 1$, for a polynomial $t$ chosen below.
We will refer to this substring as the {\em suffix}.
The algorithm for $g$ will verify that the length of the suffix  is in fact $t(n)$, where $n$ is the number of bits not included in the suffix.
If not, then the value of $g$ is $0$. If the input to $g$ does have a suffix with the correct length
then it erases the suffix and simulates $M$ on the rest of the string. $t(n)$ will be chosen to be large enough so that this computation can be done in $O(t(|x|))$ time.
Note that the input length to $g$ is now $\bar{n} = t(|x|) + |x| + 2$. 

$t(n)$ is chosen so that $r_1(|x|) \le c_1 t(|x|) < c_1 \bar{n}$, so that the running time of 
$M$ and hence the number of oracles queries made, as well as the length of the inputs to those oracle queries
are all bounded  by $c_1 \bar{n}$.  The running time of $V$ on any of the query inputs is
at most 
$c 2^{r_2(r_1(|x|))}$ for some $c$. $t$ will also be chosen to be large enough so that
$$c 2^{r_2(r_1(|x|))} \le 2^{c_2 t(|x|)} < 2^{c_2 \bar{n}}.$$
\end{proof}

The pseudo-code in Figure \ref{fig:MTVpseudo} gives a detailed description 
for the behavior of the Turing Machine $M_{TV}$. Then Definition \ref{def:correctoutput}
defines the output of $M_{TV}$ required in the construction. 
\begin{figure}[ht]
\noindent
\fbox{\begin{minipage}{\textwidth}
\begin{tabbing}
{\sc Input:} $(x, \myw)$  \\
(1)  ~~~~\=  Compute $N = N(x)$\\
(2) \> $m$ is the number of oracle queries made by $M$ on input $x$\\
(3) \> $r$ is the size of the witness used by $V$  on any oracle query generated by $M$ on input $x$\\
(4) \> $m$ and $r$ are hard-coded functions of $|x|$ determined by $M$ and $V$.\\
(5)  \> Simulate Turing Machine $M$ on input $x$\\
(6)  \> ~~~~~ \= Use $y$ for the responses to the oracle queries, where $y$ denotes the first $m$ bits of $\myw$\\
(7)  \> \>Simulation generates $x_1, \ldots, x_m$, inputs to oracle queries\\
(8)  \> {\sc Reject} $= $ FALSE\\
(9)   \> \FOR $i = 1 \ldots m$\\
(10) \> \> \IF $y_i = 1$\\
(11) \> \> ~~~~~ \= Simulate $V$ on input $(x_i, w_i)$\\
(12) \> \> \> ~~~~~ \= $w_i$ is the string formed by bits $m +(i-1)r+1$ through $m + ri$ of $\myw$\\
(13) \> \> \> If $V$ rejects on input $(x_i, w_i)$\\
(14) \>\>\>\> {\sc Reject} $= $ TRUE\\
(15) \>\IF (\sc Reject)\\
(16) \>\> Write $\sigma_R$ in the left-most position of the work tape\\
(17) \>\ELSE\\
(18) \>\> Write $\sigma_A$ in the left-most position of the work tape\\
(19) \> Compute $T(x,y)$\\
(20) \> Write the value of $T(x,y)$ in unary with $\sigma_X$ symbols starting in the second position of the work tape\\
{\sc Output:} $(\{\sigma_A,\sigma_R\} (\sigma_X)^{T(x,y)} ; x,\myw)$  \\

\end{tabbing}
\end{minipage}}
\caption{Pseudo-code for the Turing Machine $M_{TV}$.}
\label{fig:MTVpseudo}
\end{figure}

The following definition allows us to refer to the correct output of $M_{TV}$ whose behavior depends on the input pair $(x, \myw)$ as well as Turing Machines $M$ and $V$:

\begin{definition}
\label{def:correctoutput}
{\bf [Correct Output of $M_{TV}$]}
Define {\sc Out}$(x, \myw, M, V)$ as the correct output of $M_{TV}$ on input $(x, \myw)$ using Turing Machines $M$ and $V$, as described in
Figure \ref{fig:MTVpseudo}.
\end{definition}

The following lemma establishes 
that there is a reversible normal-form Turing Machine $M_{TV}$ that satisfies the conditions required in the construction.

\begin{lemma}
\label{lem:existsMTV}
{\bf [Existence of $M_{TV}$]}
There is a reversible normal-form Turing Machine $M_{TV}$ such that 
\begin{enumerate}
    \item $M_{TV}$ behaves properly on all $(a, b)$, where $a, b \in \{0,1\}^*$.
    \item $M_{TV}$ is $1$-compact.
    \item For all but a finite number of binary string $x$, if $N = N(x)$
and  $\myw \in \{0,1\}^{N-4}$, then $M_{TV}$ halts on input $(x,\myw)$ in at most
$N - 3$ steps. 
\item $M_{TV}$ has {\sc Out}$(x, \myw, M, V)$ on its work tape when it reaches its final state.
\end{enumerate}
\end{lemma}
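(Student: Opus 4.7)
The plan is to first implement the pseudocode of Figure \ref{fig:MTVpseudo} as a standard multi-tape deterministic Turing machine $M'$, then apply a sequence of well-known transformations to obtain a single-tape, reversible, normal-form, $1$-compact machine $M_{TV}$. Each subroutine of the pseudocode — computing $m$, simulating $M$ on input $x$ using $y$ for the oracle responses, simulating $V$ on each $(x_i, w_i)$ with $y_i = 1$, checking acceptance, computing $T(x,y)$, and writing it in unary — is a finite, explicit computation that a standard TM can perform while keeping the head on the right half-tape and returning it to the start cell at the end, giving condition (1) by design. Condition (4) is immediate since the pseudocode writes exactly {\sc Out}$(x,w,M,V)$ onto the work tape.

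For the running-time bound I would observe that the dominant cost is writing $T(x,y)$ in unary, since $T(x,y) \leq 2^m(1 + 2\cdot 4^{m+1}) = O(2^{3m+3})$ with $m = c_1|x|$. The verifier simulations contribute at most $m\cdot 2^{O(c_2|x|)}$ steps and the other bookkeeping is polynomial in $|x|$. Thus a direct implementation of $M'$ runs in time $O(2^{(3c_1+c_2)|x|})\cdot\mathrm{poly}(|x|)$. By Lemma \ref{lem:pad} the constants $c_1$ and $c_2$ in the underlying reduction can be chosen arbitrarily small; picking them so that $3c_1 + c_2 < 1$ (and absorbing the polynomial factor) makes the running time at most $2^{\alpha|x|}$ for some $\alpha < 1$. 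By Lemma \ref{lem:Nlowerbound}, $N(x) \geq 2^{|x|}$, so for all but finitely many $x$ this is at most $N(x)-3$, establishing condition (3).

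To obtain reversibility and normal form I would apply Bennett's reversibilization to $M'$, producing a reversible TM that simulates $M'$ with only polynomial time and space overhead (the auxiliary history track can be merged into the tape alphabet). This extra overhead is again absorbed by shrinking $c_1, c_2$ via Lemma \ref{lem:pad}. Normal form is then achieved by adjoining the transitions $\delta(q_f,a) = (q_0,a,N)$ for every $a \in \Gamma$; these preserve unidirectionality and the one-to-one property of the reduced transition function, so by Theorem \ref{th:reversible} reversibility is preserved. To ensure $1$-compactness, insert a stay step between consecutive moves of the head, so that after $t+1$ physical steps the head has performed at most $\lceil(t+1)/2\rceil$ moves and thus visited at most $t$ cells; this doubles the running time, again a constant overhead that can be absorbed by padding.

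The delicate point, and the main obstacle, is juggling all four requirements simultaneously: the reversibilization incurs overhead in time and tape usage, the $1$-compactness constraint restricts how far the head can travel, yet the pseudocode must still fit its exponentially large unary output for $T(x,y)$ within the exponentially large budget $N(x)-3$ while behaving properly and returning the head to the start cell. Because each of the transformations (reversibilization, normalization, compactness, proper-behavior enforcement via a left-end marker) adds only a constant or polynomial multiplicative factor, and because Lemma \ref{lem:pad} supplies arbitrarily small $c_1, c_2$, all of these constraints are compatible for sufficiently padded inputs; the finite-exception clause in condition (3) absorbs the small inputs for which the padding does not leave enough room.
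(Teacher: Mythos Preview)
Your proposal is correct and follows essentially the same approach as the paper: bound the raw running time of the pseudocode using the padding lemma (Lemma~\ref{lem:pad}) to shrink $c_1,c_2$, then convert to a proper reversible TM via the Bernstein--Vazirani result (Theorem~B.8 of \cite{BV97}, which subsumes Bennett's reversibilization and guarantees proper behavior with quadratic overhead), absorb all polynomial overheads into the exponential slack between the running time and $N(x)\ge 2^{|x|}$, and finally patch in $1$-compactness with extra stay steps.

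The only notable difference is in how $1$-compactness is achieved. You interleave a stay step between every pair of moves, doubling the running time; the paper instead adds a single additional state forcing one stay step at the very beginning of the computation. The paper's version is cheaper but relies implicitly on the fact that the reversible TM produced by \cite{BV97} already does not move on every step; your version is more self-contained and clearly sufficient regardless of the underlying TM's behavior. Either way the overhead is a constant factor, which the padding argument absorbs.
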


\begin{proof}
Let $n = |x|$.
By Lemma \ref{lem:Nlowerbound}, $N(x) \ge 2^{|x|} = 2^n$ for any $|x| \ge 2$.

We first consider the time to compute the function outlined in Figure \ref{fig:MTVpseudo}
on a RAM and then argue about the overhead in converting the algorithm to a reversible Turing Machine.
Simulating $M$ on input $x$ takes time $r(n)$ for some polynomial $r$.
By Lemma \ref{lem:pad}, we can assume that 
the number of queries made to $V$ is at most $m \le c_1 n$ and the running time of each call to $V$ takes at most time
$2^{c_2 n}$. Thus, the total time for running the verifier is at most $c_1 n 2^{c_2n}$.
Finally the time to compute $T(x,y)$ and write the resulting value in unary is $O(4^m)$ which is $O(2^{2 c_1 n})$.
Thus, for any constant $d$, the constants $c_1$ and $c_2$ can be chosen so that the running time to compute
the function computer by $M_{TV}$ is $O(N^d)$.

The RAM algorithm can be converted to a TM computation with at most polynomial overhead.
Finally, by Theorem B.8 from \cite{BV97}, the TM can be converted to a reversible TM that behaves properly on all inputs
with quadratic overhead. Therefore the constant $d$ can be chosen to be small enough so that the running time
of the reversible TM to compute $M_{TV}$ is $o(N)$ which will be at most
$N - 3 $ for all but a finite number of strings $x$.

Finally, to satisfy the second assumption, we can add one additional state which cause the reversible TM to stay in place for one step before beginning its computation.
\end{proof}

\subsubsection{Adjustments to the Two Turing Machines}
\label{sec:combineTM}

In this subsection, we describe several transformations that 
 we need to apply to $M_{BC}$ and $M_{TV}$ in order
 to encode the computation of the two Turing Machines into the Hamiltonian used in the construction.
  One issue to be addressed is
 that when we translate the two Turing Machines into the transition rules of the Hamiltonian,
 they will both operate on the same Hilbert space for Tracks $4$, $5$, and $6$, so we need to expand
 the state set and tape alphabet for the two Turing Machines to make them equal. States and symbols need to be added in such a way that each Turing Machine behaves as expected on its original tape symbols and state set and so that they also remain reversible.
 Lemma \ref{lem:addsymbols} describes how to add tape symbols and then Fact \ref{lem:addstates} (at the end of the subsection) describes how to add states.
 
 The computation encoded in the Hamiltonian will execute $M_{BC}$ for $N-2$ steps
 and then $M_{TV}$ for $N-2$ steps.  In the finite construction,
we can choose a particular $N$ so that after $N-2$ steps, $M_{BC}$ is at the end of an increment operation and
the desired $x$ is on the work tape. In the infinite case, we will have less control over the number of steps. In particular it may happen that $M_{BC}$ is in the middle of an increment operation after $N-2$  steps.  In this case,
we will want the second Turing Machine to finish the increment operation before starting on its own computation. In order to dovetail one TM with the next,
 we formally define a combined Turing Machine (which we call $M_{BC}+M_{TV}$). 
 The Turing Machine $M_{BC}+M_{TV}$ will allow $M_{TV}$ to pick up the computation after $N-2$ steps of $M_{BC}$,
 finish the current increment operation of $M_{BC}$, and then transition to $M_{TV}$'s primary tasks of computing the timer length and verifying {\em yes} oracle responses.
%Thus, the second Turing Machine will be  $M_{BC}$ dovetailed with  $M_{TV}$.
 
 Finally, we need to address the fact that $M_{TV}$ may finish its computation before the end of
 the $N-2$ steps. Since $M_{TV}$ is reversible, it will continue to execute some unknown operations
 until the clock runs out, which could potentially alter the contents of the work tape. In order to guarantee that the tape contents contain the correct output after exactly $N-2$ steps of $M_{TV}$,
 we introduce a time-wasting process which ensures that the operation of the Turing Machine is  reversible and   maintains the contents of the tape for at least an additional $N-2$ steps after the final state is reached.
 Lemma \ref{lem:gentv} establishes that the transformed $M_{BC}+M_{TV}$ has the required properties
 for the construction.

The first step is to expand the tape alphabet of $M_{BC}$ and $M_{TV}$ so that they
have exactly the same tape alphabet $\Gamma$. The first lemma shows that this can be done
while maintaining the  properties that the Turing Machines are reversible and in normal-form.

\begin{lemma}
\label{lem:addsymbols}
{\bf [Adding Symbols]}
Let $M = (Q, \Gamma, \delta, q_0, q_f)$ be a normal-form  reversible TM.
Let $a$ be a symbol that is not in $\Gamma$.
Let  $M' = (Q, \Gamma \cup \{a\},\delta', q_0, q_f)$, where
for every  $b \in \Gamma$,
\begin{align*}
    \delta'(q, b) & = \delta(q, b) &~\mbox{for}~ q \in Q \\
    \delta'(q, a) & = (a, q, D_q) &~\mbox{for}~ q \in Q - \{ q_0, q_f \}\\
    \delta'(q_0, a) & = (a, q_f, D_{q_f})\\
    \delta'(q_f, a) & = (a, q_0, D_{q_0})\\
\end{align*}
Then $M$ and $M'$ have the same behavior starting from any configuration that does not have the symbol $a$
anywhere on the tape. If $M$ is reversible and in normal-form, then $M'$ is as well.
\end{lemma}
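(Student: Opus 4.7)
The plan is to verify the two claims in sequence. For the first claim, I observe that $\delta'$ restricted to $Q \times \Gamma$ equals $\delta$. Any configuration whose tape contains no occurrence of $a$ remains a configuration with no $a$ after one step of $M'$, since the only rules that could write $a$ are those with $a$ on input, and these never fire. Hence by induction on the number of steps, $M$ and $M'$ trace out the identical sequence of configurations from any $a$-free starting configuration.

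For the second claim, I would first address normal form. The condition $\delta'(q_f, b) = (b, q_0, N)$ for $b \in \Gamma$ is inherited directly from $M$. For $b = a$, the definition gives $\delta'(q_f, a) = (a, q_0, D_{q_0})$, so I must argue $D_{q_0} = N$. This follows from unidirectionality of $M$ together with the normal-form rule $\delta(q_f, c) = (c, q_0, N)$ for any $c \in \Gamma$: the state $q_0$ is entered with direction $N$, and by unidirectionality every rule that transitions into $q_0$ must use direction $N$, so $D_{q_0} = N$.

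Next I would apply Theorem \ref{th:reversible} to show reversibility. For unidirectionality of $M'$: every new rule of the form $\delta'(q, a) = (a, q, D_q)$ for $q \in Q \setminus \{q_0, q_f\}$ enters state $q$ with its already-established direction $D_q$; the rule $\delta'(q_0, a) = (a, q_f, D_{q_f})$ enters $q_f$ with direction $D_{q_f}$; the rule $\delta'(q_f, a) = (a, q_0, D_{q_0})$ enters $q_0$ with direction $D_{q_0}$. So every state is still entered from a single direction, and $M'$ is unidirectional. For injectivity of the reduced transition function $\delta'_r$: the outputs of the original $\delta_r$ lie in $Q \times \Gamma$, while all new rules produce outputs in $Q \times \{a\}$, so the ranges are disjoint and no collisions are possible with the old rules. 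Among the new rules alone, the outputs are $(q, a)$ for $q \in Q$, and each such output arises from a unique input (either $(q, a)$ itself when $q \neq q_0, q_f$, or from the swapped pair $(q_f, a) \mapsto (q_0, a)$ and $(q_0, a) \mapsto (q_f, a)$). Thus $\delta'_r$ is one-to-one, and Theorem \ref{th:reversible} yields reversibility of $M'$.

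The argument is largely bookkeeping; the only subtle point is the implicit step that normal form plus unidirectionality of $M$ forces $D_{q_0} = N$, which makes the newly added rule $\delta'(q_f, a) = (a, q_0, D_{q_0})$ compatible with the normal-form requirement. Once that is noted, the rest is a direct check that the disjoint ranges of the old and new rules on the symbol $a$ prevent any collisions in the reduced transition function.
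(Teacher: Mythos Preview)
Your proposal is correct and follows essentially the same approach as the paper: induction on steps for behavioral equivalence, then verifying unidirectionality and injectivity of the reduced transition function via Theorem~\ref{th:reversible}, and finally checking normal form. You are in fact more thorough than the paper, which merely asserts that the reduced transition is one-to-one and that unidirectionality is preserved; in particular, your explicit observation that normal form plus unidirectionality of $M$ forces $D_{q_0} = N$ is a detail the paper leaves implicit.
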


\begin{proof}
By induction on the number of steps. If the current tape contents do not contain $a$, then the current symbol will be some $b \in \Gamma$. $M'$ will not write $a$ in the next step because 
$\delta'(q,b) = \delta(q, b)$ and $M$ does not have the symbol $a$ in its tape alphabet.

The reduced transition rule is still one-to-one. Moreover, the TM remains unidirectional. 
If $M$ is in normal form, then the new rule $\delta'(q_f, a)  = (a, q_0, D_{q_0})$ ensures that $M'$ will also be in normal form.
\end{proof}

Once the tape alphabets from $M_{BC}$ and $M_{TV}$ are expanded so that they are the same,
the TM resulting from dovetailing the two together is well defined.

\begin{definition}
\label{def:dovetail}
{\bf [Dovetailing Two TMs]}
Let $M_1 = (Q_1, \Gamma, I, \delta_1, q_0, q_f)$ and $M_2 = (Q_2, \Gamma, I, \delta_2, p_0, p_f)$ be two normal-form  reversible TMs.
Suppose  also that $M_1$ and $M_2$ use the same set of tape symbols $\Gamma$ and  that $Q_1 \cap Q_2 = \emptyset$.  
Define $M_1 + M_2$ to be the Turing Machine with state set $Q_{1+2} = Q_1 \cup Q_2$, and for every $a \in \Gamma$,
\begin{align*}
    \delta_{1+2} (q, a) & = \delta_1 (q, a) & \mbox{for}~q \in Q_1 - \{q_f\}\\
    \delta_{1+2} (p, a) & = \delta_2 (p, a) & \mbox{for}~p \in Q_2 - \{p_f\}\\
    \delta_{1+2} (q_f, a) & = (p_0, a, N)\\
    \delta_{1+2} (p_f, a) & = (q_0, a, N)\\
\end{align*}
The initial state for $M_1 + M_2$ is $q_0$ and the final state is $p_f$.
\end{definition}

\begin{lemma}
\label{lem:dovetail}
If $M_1$ and $M_2$ are normal-form  reversible TMs, then $M_1+M_2$ is a 
normal-form  reversible TM.
\end{lemma}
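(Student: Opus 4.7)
The plan is to verify the two conditions of Theorem \ref{th:reversible} for $M_1 + M_2$, together with the normal-form condition directly from the definition. The normal-form condition is immediate: by Definition \ref{def:dovetail}, $\delta_{1+2}(p_f, a) = (q_0, a, N)$ for every $a \in \Gamma$, and $q_0$ is the designated initial state of $M_1+M_2$ while $p_f$ is its final state. So the only real content is reversibility.

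To show unidirectionality, I would check the incoming direction for every state in $Q_{1+2} = Q_1 \cup Q_2$. States in $Q_1 \setminus \{q_0\}$ have exactly the same set of in-transitions as in $M_1$ (no new rule ends in them), so they retain a unique direction $D_q$ from $M_1$. For $q_0$ itself, its in-transitions in $M_1$ all use direction $D_{q_0}$; since $M_1$ is normal-form, the rule $\delta_1(q_f, a) = (a, q_0, N)$ forces $D_{q_0} = N$. The only new in-transition to $q_0$ in $M_1+M_2$ is $\delta_{1+2}(p_f, a) = (q_0, a, N)$, which is again direction $N$, so $q_0$ remains entered only with direction $N$. A symmetric argument works for $Q_2 \setminus \{p_0\}$ and for $p_0$ (using that $M_2$ is normal-form to conclude $D_{p_0} = N$, and noting the new rule $\delta_{1+2}(q_f, a) = (p_0, a, N)$ also uses $N$).

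Next I would argue that the reduced transition function $\delta_{1+2}^r$ is injective. Relative to the reduced functions of $M_1$ and $M_2$, only two families of rules have changed: $\delta_1^r(q_f, a) = (q_0, a)$ has been replaced by $\delta_{1+2}^r(q_f, a) = (p_0, a)$, and $\delta_2^r(p_f, a) = (p_0, a)$ has been replaced by $\delta_{1+2}^r(p_f, a) = (q_0, a)$. I would then check the image of every $(q,a) \in Q_{1+2} \times \Gamma$ and verify no two inputs share an output. The only potential collisions are at outputs of the form $(q_0, a)$ and $(p_0, a)$. The key observation, which I expect to be the main step to justify carefully, is that in the original reversible $M_1$ the only preimage of $(q_0, a)$ under $\delta_1^r$ is $(q_f, a)$ (by injectivity of $\delta_1^r$ together with the normal-form rule $\delta_1^r(q_f, a) = (q_0, a)$); hence once we remove that rule, no state in $Q_1 \setminus \{q_f\}$ maps to $(q_0, a)$ in $M_1+M_2$, and the only new preimage is $(p_f, a)$. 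The analogous argument handles $(p_0, a)$ using the reversibility and normal-form of $M_2$. All other outputs lie entirely in $Q_1 \setminus \{q_0\}$ or $Q_2 \setminus \{p_0\}$ and inherit injectivity from $\delta_1^r$ or $\delta_2^r$. Since $Q_1 \cap Q_2 = \emptyset$, these two cases cannot collide with each other.

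Combining unidirectionality with injectivity of the reduced transition function, Theorem \ref{th:reversible} yields that $M_1+M_2$ is reversible, completing the proof. The only subtle step is the uniqueness-of-preimage argument for $q_0$ and $p_0$, which relies crucially on both TMs being in normal form (so that the original preimages of their initial states are exactly their final states); everything else is bookkeeping against the definition of $M_1+M_2$.
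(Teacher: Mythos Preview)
Your proposal is correct and follows essentially the same approach as the paper: verify unidirectionality and injectivity of the reduced transition function, invoke Theorem \ref{th:reversible}, and check the normal-form condition directly. Your treatment is in fact more explicit than the paper's in one spot---the paper simply asserts that ``$M_1+M_2$ preserves the direction of each state,'' while you spell out why the new in-transitions to $q_0$ and $p_0$ are consistent with their existing directions (using normal form to get $D_{q_0}=D_{p_0}=N$)---but the underlying argument is the same.
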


\begin{proof}
We will show that $M_1 + M_2$ is unidirectional and has a one-to-one reduced transition function, which by Theorem \ref{th:reversible}, will imply that $M_1 + M_2$ is reversible.
Since $M_1$ and $M_2$ are themselves reversible they are unidirectional. Since $M_1+M_2$
preserves the direction of each state, $M_1+M_2$ is also unidrectional.
Note also that for $q \in Q_1 - \{q_f\}$, $\delta_1(q, a)$ does not map to state $q_0$, otherwise the reduced transition function $M_1$ would not be one-to-one, since $\delta_1(q_f, a) = (q_0, a, N)$ for every $a \in \Gamma$. Similarly for $p \in Q_2 - \{p_f\}$, $\delta_2(p, a)$ does not map to state $p_0$. This means that there is no conflict between 
the rules in the first two lines of the Definition \ref{def:dovetail}
and the new rules added in the second two lines. Since the reduced transition functions of $M_1$ and $M_2$ are one-to-one and $Q_1 \cap Q_2 = \emptyset$, there are no conflicts between any two rules specified in the first two lines.

The final state of $M_1+M_2$ is $p_f$ and the initial state is $q_0$.
Since $\delta_{1+2}$ maps $(p_f, a)$ to $(q_0, a, N)$ for every
$a \in \Gamma$, $M_1+M_2$ is in normal form.
\end{proof}

The computational process that will be embedded into the Hamiltonian will
execute $N-2$ steps of $M_{BC}$ followed by $N-2$ steps of $M_{BC}+M_{TV}$.
Lemma \ref{lem:existsMTV} guarantees that the second process will reach a final
state $p_f$ within $N-2$ steps for most $N$. At this point, the correct output is written on the 
work tape. However, it will be important for the construction that this correct output also be written
on the input tape after exactly $N-2$ steps. 

Let $q_0$ be the start state of
$M_{BC}+M_{TV}$ (since it was the original start state of $M_{BC}$)
and let $p_f$ be the final state of $M_{BC}+M_{TV}$ (since it was the original final state of $M_{TV}$). Recall that $M_{BC}+M_{TV}$ is in normal-form, so $\delta(p_f, a) = (q_0, a, N)$ for
every $a \in \Gamma$. Also 
 in a correct computation of $M_{BC}+M_{TV}$, 
 the TM halts with the head pointing to the start cell. Furthermore, there are two symbols $\sigma_A$ and $\sigma_R$ in $\Gamma$ such that when the TM halts,
 the start cell contains $\sigma_A$ or $\sigma_R$ and there are no other occurrences of $\sigma_A$ or $\sigma_R$ on the rest of the tape. 

\begin{definition}
{\bf [Transformed TM $\calt(M_{BC}+M_{TV})$]}
Let $M_{BC}+M_{TV} = (Q, \Gamma, I, \delta, p_f, q_0)$.
Define $\calt(M_{BC}+M_{TV}) = (Q \cup \{q_*\}, \Gamma, I, \delta', p_f, q_0)$
\begin{align*}
    \delta'(q,a) &= \delta(q,a) &~\mbox{except for}~ (q,a) \in \{ (q_f, \sigma_A), (q_f, \sigma_R)\}\\
    \delta'(p_f,a) &= (q_*,a,R) &~\mbox{for}~ a \in \{ \sigma_A, \sigma_R\}\\
    \delta'(q_*,b) &= (q_*,b,R) &~\mbox{for}~ b \in \Gamma - \{ \sigma_A, \sigma_R\}\\
    \delta'(q_*,a) &= (q_0,a,N) &~\mbox{for}~ a \in \{ \sigma_A, \sigma_R\}\\
\end{align*}
\end{definition}

\begin{lemma}
\label{lem:gentv}
{\bf [Properties of  $\calt(M_{BC}+M_{TV})$]}
Let $M_{TV}$ be a TM that satisfies the properties from Lemma \ref{lem:existsMTV}.
Then $\calt(M_{BC}+M_{TV})$ is unidirectional and its reduced transition function
is one-to-one.
For all but a finite number of binary string $x$, 
and every $\myw \in \{0,1\}^{N(x)-2}$, starting with input $(1,\myw)$,
if $N(x)-2$ steps of $M_{BC}$ followed by $N(x)-2$ steps of $\calt(M_{BC}+M_{TV})$
are executed, then
\begin{enumerate}
    \item the head of the Turing Machine never leaves the sequence of tape cells from $1$ through $N(x)-3$ in the course of executing the $2N(x)-4$ steps.
\item the contents of the work tape are {\sc Out}$(x, \myw, M, V)$ at the end of the the $2N(x)-4$ steps.
\end{enumerate} 
\end{lemma}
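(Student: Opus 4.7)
The plan is twofold: verify the reversibility of $\calt(M_{BC}+M_{TV})$, and trace the full computation phase-by-phase to establish both head containment and the tape contents. For reversibility, I would first invoke Lemma \ref{lem:dovetail} to conclude that $M_{BC}+M_{TV}$ is itself a normal-form reversible TM, hence unidirectional with a one-to-one reduced transition function by Theorem \ref{th:reversible}. Then I check that the modifications in $\calt$ preserve both properties: all transitions entering $q_*$, namely the two rules $\delta'(p_f, \sigma_A) = (q_*, \sigma_A, R)$ and $\delta'(p_f, \sigma_R) = (q_*, \sigma_R, R)$ and the self-loops $\delta'(q_*, b) = (q_*, b, R)$ for $b \notin \{\sigma_A, \sigma_R\}$, use direction $R$; while the two new transitions into $q_0$ use direction $N$, matching every other transition into $q_0$. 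So unidirectionality is preserved. Moreover, the two transitions at $(p_f, \sigma_A)$ and $(p_f, \sigma_R)$ that were replaced had codomains $(q_0, \sigma_A)$ and $(q_0, \sigma_R)$, exactly the codomains of the two new $\delta'(q_*, \cdot)$ rules, so the multiset of codomains is unchanged; the $q_*$ self-loops send distinct $(q_*, b)$ to distinct $(q_*, b)$, and collide with no other rule. Hence the reduced transition function remains one-to-one.

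For the second claim I would split the $2N(x) - 4$ steps into three phases. Phase~1 consists of the first $N(x) - 2$ steps of $M_{BC}$; Lemma \ref{lem:Nofxbound} directly gives that the head stays within cells $1$ through $N(x) - 4$ and that the configuration at the end is $(x, w)$ with the head at cell $1$ in state $q_f^{BC}$. Phase~2a starts the second batch: the first step applies $\delta_{1+2}(q_f^{BC}, x_1) = (p_0^{TV}, x_1, N)$, since $x_1 \in \{0,1\}$ is distinct from $\sigma_A$ and $\sigma_R$, handing control to $M_{TV}$ at cell $1$. By Lemma \ref{lem:existsMTV}, for all but finitely many $x$, $M_{TV}$ halts in some $s_0 \le N(x) - 3$ further steps in state $p_f^{TV}$ back at cell $1$ with {\sc Out}$(x, w, M, V)$ on the work tape; its $1$-compactness and proper behavior confine the set of cells visited during Phase~2a to $\{1, \ldots, s_0\} \subseteq \{1, \ldots, N(x) - 3\}$.

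Phase~2b then consists of the remaining $N(x) - 3 - s_0$ steps (possibly empty). The first applies $\delta'(p_f, \sigma_A)$ or $\delta'(p_f, \sigma_R)$ according to the first symbol of the output, moving the head to cell $2$ in state $q_*$; each subsequent step applies the self-loop $\delta'(q_*, b) = (q_*, b, R)$. The output specification of $M_{TV}$ guarantees that cells $2, \ldots, 1 + T(x,y)$ hold $\sigma_X$ and the rest of the work tape contains neither $\sigma_A$ nor $\sigma_R$, so the head never triggers the $\delta'(q_*, \sigma_A)$ or $\delta'(q_*, \sigma_R)$ branches that would restart the computation at $q_0$. Hence the head moves monotonically right without writing, ending at cell $2 + (N(x) - 4 - s_0) = N(x) - 2 - s_0 \le N(x) - 3$, where the inequality uses $s_0 \ge 1$ (automatic since $p_0^{TV} \neq p_f^{TV}$). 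The work tape at the end of all $2N(x) - 4$ steps still holds {\sc Out}$(x, w, M, V)$ because $q_*$ never writes.

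The main obstacle is the step-counting in Phase~2b: one must show that the $q_*$ sweep wastes exactly the leftover clock steps without overshooting cell $N(x) - 3$. This relies on two facts, both automatic but essential: that $M_{TV}$ takes at least one step (so $s_0 \ge 1$), and that no $\sigma_A$ or $\sigma_R$ lies beyond cell $1$ in the work tape after $M_{TV}$ halts. Without either, the $q_*$ phase could overshoot the allowed range or accidentally re-enter the computation via the $q_0$ branch.
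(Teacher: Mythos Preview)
Your proposal is correct and follows essentially the same approach as the paper: verify unidirectionality and injectivity of the reduced transition function by tracking which rules are removed and added, then trace the computation through the $M_{BC}$ phase (via Lemma \ref{lem:Nofxbound}), the $M_{TV}$ phase (via Lemma \ref{lem:existsMTV}), and the time-wasting $q_*$ phase. Your step counting in Phase~2b is actually more explicit than the paper's and correctly isolates the constraint $s_0 \ge 1$ needed to keep the $q_*$ sweep within cell $N(x)-3$; the paper states this conclusion without spelling out the arithmetic. One minor point: your remark that ``$x_1 \in \{0,1\}$ is distinct from $\sigma_A$ and $\sigma_R$'' is unnecessary, since $\calt$ only modifies transitions at state $p_f$, not at $q_f^{BC}$, so the dovetailing rule there is untouched regardless of the symbol.
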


\begin{proof}
Since $M_{BC}+M_{TV}$ is a normal-form TM,  the direction associated with $q_0$ is $N$. All of the new rules added are therefore consistent with $D_{q_0} = N$ and $D_{q_*} = R$ and therefore $\calt(M_{BC}+M_{TV})$ is uni-directional. 

The rules $\delta(p_f, \sigma_A) = (q_0, \sigma_A, N)$ and $\delta(p_f, \sigma_R) = (q_0, \sigma_R, N)$
are removed and the rules $\delta(q_*, \sigma_A) = (q_0, \sigma_A, N)$ and $\delta(q_*, \sigma_R) = (q_0, \sigma_R, N)$ are added, so there is exactly one rule that maps on to each pair
$(q_0, \sigma_A)$ and $(q_0, \sigma_R)$. Finally the new rules
$\delta(p_f, \sigma_A) = (q_*, \sigma_A, R)$, $\delta(p_f, \sigma_R) = (q_*, \sigma_R, R)$,
and $\delta(q_*, b) = (q_*, b, R)$ for $b \not\in \{ \sigma_A, \sigma_R\}$,
are one-to-one. Since $q_*$ is a new state, these rules do not conflict with any of the previous rules from $M_{BC}+M_{TV}$. Therefore the new reduced transition function is one-to-one.

By Lemma \ref{lem:Nofxbound} the head never leaves the sequence of tape cells from $1$ through $N(x)-3$ in $N(x)-2$ steps of $M_{BC}$ as long as $N(x) \ge 5$.
At this point, the string $x$ is written on the work tape, the head is back in the starting location and the state is $q_f$. According to the definition of dovetailing, $M_{BC}+M_{TV}$ transitions from $q_f$ to $p_0$, which initiates
$M_{TV}$. In the next $N-3$ steps, $M_{TV}$ will not move past tape cell $N(x)-3$ due to the fact that it is $1$-compact.
Since $M_{TV}$ behaves properly on input $(x,\myw)$, it will not move to the left of tape cell $1$ on input $(x,\myw)$. According to Lemma \ref{lem:existsMTV},
for all but a finite number of $x$'s, $M_{TV}$ will reach $p_f$ within $N-3$ steps. Since $M_{TV}$ behaves properly on $(x,\myw)$, when it halts, the head
will be back at tape cell $1$. The content of the work tape at this point is {\sc Out}$(x, \myw, M, V)$.
Note that this output will have an $\sigma_A$ or $\sigma_R$ in cell $1$ and no other occurrences of $\sigma_A$ or $\sigma_R$ on the rest of the tape. Therefore, the current state/tape symbol pair is $(p_f,\sigma_A)$ or $(p_f,\sigma_R)$. According to the rules of $\calt(M_{BC}+M_{TV})$,
the current symbol will be re-written and the head will move right into state $q_*$. As long as no $\sigma_A$ or $\sigma_R$ symbols are encountered, the head will continue to move to the right, leaving the tape contents intact. Therefore, the end of the
$N(x)-3$ steps will be reached before the head reaches tape cell $N(x)-2$. 
At the end of the $N(x)-2$ steps of $\calt(M_{BC}+M_{TV})$, the tape contents
will still be {\sc Out}$(x, \myw, M, V)$.
\end{proof}

At this point $M_{BC}$ and $\calt(M_{BC}+M_{TV})$ have the same tape alphabet, but the state set for $M_{BC}$ is a subset of the state set for $\calt(M_{BC}+M_{TV})$. When we translate the transition functions for each TM into a Hamiltonian, they will have to be well-defined on the same state space. Therefore, we would like to expand the state set for
$M_{BC}$ so that its state set is the same as $\calt(M_{BC}+M_{TV})$.
The following fact allows us to add states without effecting the behavior of $M_{BC}$ on the original state set.

\begin{fact}
\label{lem:addstates}
{\bf [Adding States to a TM]}
Let $M = (Q, \Gamma, \delta, q_0, q_f)$ be a reversible normal-form Turing Machine. Let $M'$ be the Turing Machine with $Q' = Q \cup \{q_{new}\}$. $\delta'$ will be the same as $\delta$ for all $q \in Q$,
and $\delta'(q_{new}, a) = (q_{new}, a, N)$ for all $a \in \Gamma$.
Then $M'$ is a reversible normal-form Turing Machine that behaves exactly as $M$ on any configuration whose state is $q \in Q$.
\end{fact}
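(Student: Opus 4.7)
The plan is to verify each of the four properties in turn: unidirectionality, one-to-one reduced transition function, normal form, and equivalence of behavior on states in $Q$. Each is an essentially immediate consequence of the definition, but they do need to be checked so that Theorem \ref{th:reversible} can be invoked to conclude $M'$ is reversible.

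First I would show $M'$ is unidirectional. For every $q \in Q$, the direction associated with $q$ in $M'$ is exactly the direction associated with it in $M$, since $\delta'$ agrees with $\delta$ on $Q \times \Gamma$ and no new rule produces a state in $Q$ as its output. For the new state $q_{\mathrm{new}}$, the only rules whose output state is $q_{\mathrm{new}}$ are of the form $\delta'(q_{\mathrm{new}}, a) = (q_{\mathrm{new}}, a, N)$, so $q_{\mathrm{new}}$ is uniformly entered with direction $N$, i.e.\ $D_{q_{\mathrm{new}}} = N$. Hence $M'$ is unidirectional.

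Next I would check that the reduced transition function $\delta'_r$ is one-to-one. Since $\delta_r$ is already one-to-one on $Q \times \Gamma$, the only way a collision could arise is if a new rule maps some pair to an output already produced by an old rule, or if two new rules collide. The new rules only output pairs of the form $(q_{\mathrm{new}}, a)$, but no old rule has $q_{\mathrm{new}}$ in its output, since $q_{\mathrm{new}} \notin Q$. And the new rules $\delta'(q_{\mathrm{new}},a) = (q_{\mathrm{new}}, a)$ are themselves one-to-one (distinct tape symbols give distinct outputs). By Theorem \ref{th:reversible}, $M'$ is reversible.

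Normal form is preserved trivially: $\delta'(q_f, a) = \delta(q_f, a) = (q_0, a, N)$ for all $a \in \Gamma$, because $q_f \in Q$ and $M$ was already in normal form.

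Finally, I would verify the behavior claim by a one-step induction. If the current configuration has state $q \in Q$, then $\delta'(q, a) = \delta(q, a)$, and the resulting state is again in $Q$ (since $Q$ is closed under $\delta$, as $\delta$ produces only states in $Q$). So starting in any configuration with state in $Q$, the trajectories of $M$ and $M'$ are identical step by step. The only step that might be viewed as nontrivial is the bookkeeping of checking that the new rules do not interfere with the one-to-one property; this is the single place one has to be careful, and the separation $q_{\mathrm{new}} \notin Q$ makes even this immediate.
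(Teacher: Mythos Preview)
Your proof is correct. The paper itself states this as a \textbf{Fact} without providing any proof, so your verification actually supplies the details the paper omits; your approach via Theorem~\ref{th:reversible} (checking unidirectionality and injectivity of the reduced transition function) is exactly the natural one given the paper's setup.
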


\subsection{The Hamiltonian}
\label{sec:Ham}

We are now ready to describe the terms in the Hamiltonian. 
Section \ref{sec:compwf} describes the set of constraints which  ensure that the computation tracks are well-formed, corresponding to a valid configuration of the Turing Machine. 
As the Track $1$ pointer sweeps from right to left, it will trigger one TM step on the computation tracks. Section \ref{sec:implementingTM} gives a generic description of the Hamiltonian terms that execute a TM step.
Then Section \ref{sec:propsegments} describes more specifically how each $i$-pointer effects the computation tracks. The propagation terms are given explicitly in this section.
We have already added in Section \ref{sec:clock} a number of terms  that operate only on the clock tracks.  Many of those will remain unchanged since they operate as the identity on the computation tracks. 
Some of them, however, will be replaced with terms that advance the clock as well as change the computation tracks.
The transition terms introduced in Section \ref{sec:clock} are numbered TR-$1$ through TR-$46$. We will indicate when a new transition rule replaces one of these transition rules. 
Finally, Section \ref{sec:addconstraints} describes additional penalty terms used to ensure that the lowest energy state corresponds to a correct computation.

\subsubsection{Local Constraints to Make the Computation Tracks Well-Formed}
\label{sec:compwf}

The computation that will be embedded in the Hamiltonian will consist of $N-2$ steps of
$M_{BC}$ followed by $N-2$ steps of $\calt(M_{BC}+M_{TV})$. 
It will be convenient to  have a more succinct name for $\calt(M_{BC}+M_{TV})$:
\begin{align*}
    M_{check} & = \calt(M_{BC}+M_{TV})
\end{align*}
As per Lemmas \ref{lem:addstates} and \ref{lem:addsymbols}, states and tape symbols have been added to both
$M_{BC}$ and $M_{check}$ to ensure that they have the same set of symbols
and the same tape alphabet.
The previous section
assumed a single track Turing Machine when in fact both Tracks $5$ and $6$ 
are Turing Machine tapes. 
Thus, the tape symbols for the Turing Machine consists of the
set $\Gamma = \Sigma \times \{0, 1\}$, where $\Sigma$ is the set of symbols for Track $5$.
The Turing Machines will always have the property that the state of Track $6$ is never
changed, so if the current symbol is
some $a \in \Gamma$ which represents states $[x,1]$ on Tracks $5$ and $6$,
then the symbol written will always be of the form $[y,1]$.
In the next section, we will need to add a duplicate copy of every $q_L \in Q_L$
in order to execute the left-moving TM steps. Let $Q'_L$ denote the set of added states. 

Here are the standard basis states for the part of the Hilbert space of each particle corresponding to Tracks $4$, $5$, and $6$. 
\begin{itemize}
    \item The standard basis states for Track $4$ are $\{\blank, \Dblank\} \cup Q \cup Q'_L$.
    \item  The set of standard basis states for Track $5$ is $\Sigma$. 
    \item The standard basis states for Track $6$ are  $\{0,1\}$.
\end{itemize}

We will add Type I terms that will give an energy penalty to any standard basis state
for the computation tracks that do not correspond to a well-defined Turing Machine configuration.
\begin{definition}
\label{def:wellformedcomp}
{\bf [Well-Formed Computation Configurations]}
  Let $C_N$ denote the set of standard basis states for the computation tracks
  of a 1D chain of length $N$ in which the state of Track $4$ has the form:     $\leftBr \Dblank^* (Q + Q'_R) \blank^* \rightBr$. These are the well-formed computation configurations.
\end{definition}

\begin{definition}
\label{def:Hwf-co}
{\bf [The Hamiltonian Term Enforcing Well-Formed Computations Configurations]}
Let $h_{wf-co}$ denote the Hamiltonian terms with the constraints that give an energy
penalty for any  state of the computation tracks that is not well-formed.
\end{definition}

\subsubsection{Implementing TM Steps in Propagation Terms}
\label{sec:implementingTM}

We show here how a single sweep of a pointer on Track $1$ from right to left will cause the TM configuration
on Tracks $4$ through $6$ to advance by one TM step. The details of this part of the construction, as well as the proof Claim \ref{cl:partialIsoT} below follow
\cite{GI} closely.  Note that since we are reducing from $\fpnexp$ instead
of $\fpqmaexp$, all of the Turing Machines in our construction, including the verifier $V$,
are classical reversible Turing Machines instead of Quantum Turing Machines. 

The sweep of the Track $1$ pointer begins from 
state $\arrL \rightBr$. The Track $1$ pointer
$\arrL$ then sweeps all the way to the
left end of the chain until the state $\leftBr \arrL$ is reached.
Each transition $\ket{\Dblank \arrL} \rightarrow \ket{\arrL \blank}$ will cause a unitary 
operation to be applied to the computation tracks below.
We will write the pointer  $\arrL$ generically without a number and later discuss  how the $i$-pointers work for different $i$.

We will need to execute the TM steps in which the head moves right or stays in the same place
separately than the moves in which the head moves left. In order to do this, we introduce a new state $q'_L$
for every $q_L \in Q_L$. We will call this set $Q'_L$. In a correct computation, it will never be
the case that the TM is in state $q'_L$ without a $\arrL$ directly above it on Track $1$.
It will also never be the case that there is a $\arrL$ in the same location as the head if the state
is some $q_R \in Q_R$. 

Define $S$ to be the set of standard basis states
associated with the computation tracks of two neighboring particles. 
$S$ is the set of standard basis states of the following
form:

\begin{center}
\begin{minipage}[c]{0.25\textwidth}
~~~~~~~~~$\fourcells{~p~}{a}{~q~}{b}$~~~~~~~~where:
\end{minipage}
\begin{minipage}[c]{0.6\textwidth}
$(p, q) ~\in ~(Q \times \{\blank\})~~ \cup ~~(\{\Dblank \} \times Q)~ ~\cup ~~
\{ \Dblank \Dblank, \blank \blank \}$ \\
$(a, b) ~\in~ 
\Gamma \times \Gamma$
\end{minipage}
\end{center}

We will be interested in two particular subsets of $S$.
$S_A$ is the subset of states 
from $S$ where $q \not\in Q_R$ and $p \not\in Q'_L$.
We also define $S_B$ to be the subset of states from $S$, where $q \not\in Q'_L$ and $p \not\in Q_R$.

We will show the transformation $P$ that maps $S_A$ to $S_B$.
The transformation $P$ works in two parts. At the moment that the
Track $1$ pointer moves left from the position just to the right of the head, we execute the move for that location. 
For every $\delta(q, a) = (q_R, b, R)$,  we write a $b$ into the old
location and move the head right into state $q_R$.
For every $\delta(q, a) = (q_N, b, N)$,  we write a $b$ into the old
location and change the state to  $q_N$ without changing
the location of the head.
For every $\delta(q, a) = (q_L, b, L)$,  we write a $b$ into the old
location and change the state to  $q'_L$ without changing
the location of the head. We need to defer the action of moving the head left until the Track $1$ pointer has access to the new location for the head. In the next step of the clock when the head is aligned with the $q'_L$, we move the head left and convert it to $q_L$.
In the rules shown below, the top row shows the state of Track $1$, the middle row shows the
state of Track $4$ and the bottom row shows the combined state of Tracks $5$ and $6$.
The first set  of transitions are defined for every $q \not\in Q'_L$:

\vspace{.1in}

\begin{center}
\begin{tabular}{ccc}
~~~$\delta(q, a) = (q_R, b, R)$~~~~~ & ~~~~~$\delta(q, a) = (q_N, b, N)$~~~~~ & ~~~~~$\delta(q, a) = (q_L, b, L)$~~~
\end{tabular}
\end{center}
\begin{equation}
\label{eq:TMmove2}
\left|~ \sixcells{\Dblank}{\arrL}{q}{\blank}{a}{\generic} ~\right\rangle \rightarrow \left|~ \sixcells{\arrL}{\blank}{\Dblank}{q_R}{b}{\generic}  ~\right\rangle
~~~~~~~~~
\left|~ \sixcells{\Dblank}{\arrL}{q}{\blank}{a}{\generic}~\right\rangle \rightarrow \left|~ \sixcells{\arrL}{\blank}{q_N}{\blank}{b}{\generic}~\right\rangle
~~~~~~~~~
\left|~ \sixcells{\Dblank}{\arrL}{q}{\blank}{a}{\generic}~\right\rangle \rightarrow \left|~ \sixcells{\arrL}{\blank}{q'_L}{\blank}{b}{\generic}~\right\rangle
\end{equation}

\vspace{.1in}

After this step, we are in a configuration 
in which the step has been performed,
except that moving the head left has been deferred.
If the TM is in a primed state that is aligned with the $\arrL$,
that triggers the execution of the right-moving step.
\begin{equation}
\label{eq:TMmove3}
    \left|~ \sixcells{\Dblank}{\arrL}{\Dblank}{q'_L}{\generic}{\generic} ~\right\rangle
\rightarrow
\left|~ \sixcells{\arrL}{\blank}{q_L}{\blank}{\generic}{\generic} ~\right\rangle
\end{equation} 
Otherwise, the head just sweep right leaving the other tracks unchanged:
\begin{equation}
\label{eq:TMmove4}
    \left|~ \sixcells{\Dblank}{\arrL}{\Dblank}{q_L}{\generic}{\generic} ~\right\rangle
\rightarrow
\left|~ \sixcells{\arrL}{\blank}{\Dblank}{q_L}{\generic}{\generic} ~\right\rangle
~~~~~~~~\left|~ \sixcells{\Dblank}{\arrL}{\Dblank}{q_N}{\generic}{\generic} ~\right\rangle
\rightarrow
\left|~ \sixcells{\arrL}{\blank}{\Dblank}{q_N}{\generic}{\generic} ~\right\rangle
\end{equation} 
\begin{equation}
\label{eq:TMmove5}
    \left|~ \sixcells{\Dblank}{\arrL}{\blank}{\blank}{\generic}{\generic} ~\right\rangle
\rightarrow
\left|~ \sixcells{\arrL}{\blank}{\blank}{\blank}{\generic}{\generic} ~\right\rangle
~~~~~~~~\left|~ \sixcells{\Dblank}{\arrL}{\Dblank}{\Dblank}{\generic}{\generic} ~\right\rangle
\rightarrow
\left|~ \sixcells{\arrL}{\blank}{\Dblank}{\Dblank}{\generic}{\generic} ~\right\rangle
\end{equation}

The criteria for $S_A$ is that the Track $4$ state of the left particle
is not in $Q'_L$ and the Track $4$ state of the right particle is not in $Q_R$.
Therefore, $P$ is well-defined for every state in $S_A$.
The criteria for $S_B$ is that the Track $4$ state of the right particle
is not in $Q'_L$ and the Track $4$ state of the left particle is not in $Q_R$.
Therefore, the range of $P|_{S_A}$ is in $S_B$.
Since 
the computation state for a pair is in $S_B$ after $P$ is applied to that pair,
the next pair over is in state $S_A$
and  $P$ can then be applied to this new pair.
Note that in a correct computation, if the Track $1$ state is $\ket{\Dblank \arrL}$, then
the computational state for those two particles will be in $S_A$ and if the 
Track $1$ state is 
$\ket{\arrL \blank}$, then
the computational state for those two particles will be in $S_B$. $P$ is one-to-one from $S_A$ to $S_B$. 
This is formalized in Claim \ref{cl:partialIsoT} below.
Since $S_A$ and $S_B$ are both subsets of $S$, $P$ can be extended to be a permutation on $S$.
The Hamiltonian term will then be:
$$I_{S} \otimes \ketbra{\arrL \blank}{\arrL \blank}
+ I_{S} \otimes \ketbra{\Dblank \arrL}{\Dblank \arrL}
+ P \otimes \ketbra{\arrL \blank}{\Dblank \arrL}
+ P^{\dag} \otimes \ketbra{\Dblank \arrL}{\arrL \blank}
$$

\begin{claim}
\label{cl:partialIsoT}
If the reduced transition rule $\delta$ is one-to-one, then Rules (\ref{eq:TMmove2}) through
(\ref{eq:TMmove5})
define a one-to-one function from $S_A$ to $S_B$. 
\end{claim}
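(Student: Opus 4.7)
I would prove injectivity by showing that any output state $s\in S_B$ has at most one preimage in $S_A$ under $P$. The key observation is that every one of the eight rule-cases in (\ref{eq:TMmove2})--(\ref{eq:TMmove5}) leaves a distinctive fingerprint on the ordered pair (left Track~$4$ symbol, right Track~$4$ symbol) of the output; this fingerprint both identifies which rule-case produced $s$ and, together with the one-to-one-ness of the reduced transition function $\delta_r$, pins down the preimage uniquely. The Track~$1$ pattern of every RHS is the same $\arrL\blank$, so no information is carried there and all the work happens on Tracks $4$, $5$, and~$6$.

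First I would tabulate the Track~$4$ signature of each rule-case. The three subcases of (\ref{eq:TMmove2}) produce signatures $(\blank,\,q_R)$, $(q_N,\,\blank)$, and $(q'_L,\,\blank)$ with the marked state lying in $Q_R$, $Q_N$, $Q'_L$ respectively. Rule (\ref{eq:TMmove3}) produces $(q_L,\,\blank)$ with $q_L\in Q_L$. The two subcases of (\ref{eq:TMmove4}) produce $(\Dblank,\,q_L)$ and $(\Dblank,\,q_N)$, and those of (\ref{eq:TMmove5}) produce $(\blank,\,\blank)$ and $(\Dblank,\,\Dblank)$. Since the six sets $Q_L,Q_N,Q_R,Q'_L,\{\blank\},\{\Dblank\}$ are pairwise disjoint by construction, these eight signatures are pairwise disjoint, so the rule responsible for any $s$ can be read off its Track~$4$ pair.

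Next I would invert each rule-case. Rules (\ref{eq:TMmove3})--(\ref{eq:TMmove5}) act as the identity on Tracks~$5$ and~$6$ and only shuffle pointer/head entries on Tracks~$1$ and~$4$ in a manifestly reversible way, so the preimage is immediate. For the three subcases of (\ref{eq:TMmove2}), the output exposes the post-transition state $q'\in Q_R\cup Q_N\cup Q'_L$ and the newly written symbol $b$ on the left particle; unpriming $q'$ when it lies in $Q'_L$ recovers the true next TM state, after which the hypothesis that $\delta_r$ is one-to-one lets me write $\delta_r^{-1}(q',b)=(q,a)$, uniquely determining the input head state $q$ and input symbol $a$ on the left particle. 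The remaining input content (the $\Dblank$ and $\arrL$ on Track~$1$, and the unchanged Track~$5{+}6$ symbol on the right particle) is then forced by the shape of the LHS. A final check confirms each reconstructed preimage lies in $S_A$: the left head in every LHS is in $Q$ (hence not in $Q'_L$), and the right Track~$4$ is $\blank$, $\Dblank$, or (in (\ref{eq:TMmove3})) a $q'_L\notin Q_R$.

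The argument is essentially bookkeeping, and the only real obstacle is verifying that $Q'_L$ really is a fresh set disjoint from $Q$, $\Gamma$, and $\{\blank,\Dblank\}$---precisely the disjointness that keeps the eight output signatures from colliding and that makes the unpriming step coherent with the inversion of $\delta_r$. Both properties are guaranteed by the construction above, so no genuinely nontrivial step arises.
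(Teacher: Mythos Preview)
Your proof is correct and follows essentially the same approach as the paper---showing the rules have pairwise disjoint images and are individually injective, with the injectivity of $\delta_r$ handling (\ref{eq:TMmove2})---only spelled out in more detail via the explicit Track~4 signatures. One small slip: the output Track~4 signature for the $R$-subcase of (\ref{eq:TMmove2}) is $(\Dblank,\,q_R)$ rather than $(\blank,\,q_R)$, but this does not affect the disjointness argument.
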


\begin{proof}
Each rule defines $P$ 
on a disjoint subset of $S_A$. Moreover the images of all the rules are mutually disjoint, so we only need to show that each rule individually is one-to-one.

The rules in (\ref{eq:TMmove4}) and (\ref{eq:TMmove5}) act as the identity 
on the computational space of the two particles.
The rule in (\ref{eq:TMmove3}) is one-to-one.
Finally, we need to consider the rules given in
(\ref{eq:TMmove2}).
 Since the reduced $\delta$ function is one-to-one, then
 this transformation is also one-to-one.
\end{proof}

The two transition functions $\delta_{BC}$ and $\delta_{check}$ 
give
rise to permutations $P_{BC}$ and $P_{check}$ defined on $S$.

\begin{lemma}
\label{lem:TMonestep}
{\bf[ Left sweep of the $1$-pointer executes one step of the TM]}
Consider a Turing Machine $M$ with transition function $\delta$ that gives
rise to permutation $P$ according to Rules (\ref{eq:TMmove2}) through
(\ref{eq:TMmove5}). Suppose that the state of the computation tracks corresponds to a valid
configuration of $M$ with the head in state $q$ and current symbol $a$.
Suppose further that the following two conditions are satisfied:
\begin{enumerate}
    \item The head is not at the right-most location of the chain.
    \item The head is not at the left-most location of the chain or $\delta(q, a)$  does not cause the head to move left.
\end{enumerate}
Then one sweep of $\arrL$ from the right end of the chain to the left end of the chain
results in the configuration corresponding to one step of $M$.
\end{lemma}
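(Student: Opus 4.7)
The plan is to track the pointer $\arrL$ on Track $1$ as it sweeps from $\rightBr$ down to $\leftBr$ and, at each clock step, identify which of rules (\ref{eq:TMmove2})--(\ref{eq:TMmove5}) applies to the pair of particles under and just left of $\arrL$. Let $k$ be the position of the head in the starting configuration. By hypothesis, $k$ is not the right-most non-bracket cell, so Track $4$ initially reads $\leftBr \Dblank^{k-2} q\, \blank^{*}\rightBr$, the symbol at cell $k$ on the joint Tracks $5$--$6$ is $a$, and Track $1$ starts in $\leftBr \Dblank^{*}\arrL\rightBr$.

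I would partition the sweep into three regions by the pointer position $j$. For $j>k+1$, the pair below $\arrL$ has Track $4$ entries $(\blank,\blank)$ and only the left half of (\ref{eq:TMmove5}) applies, acting as identity on the computation and moving $\arrL$ one cell left. At $j=k+1$ the pair has Track $4$ entries $(q,\blank)$, so precisely one rule from (\ref{eq:TMmove2}) fires, chosen by $D_q$; this writes $b$ into cell $k$, and depending on the direction either installs $q_R$ at cell $k+1$ (right move), installs $q_N$ at cell $k$ (no move), or installs the deferred marker $q'_L$ at cell $k$ (left move). For $j\le k$, in the right- and no-moving cases every remaining pair has Track $4$ content matched only by the right half of (\ref{eq:TMmove5}) or by (\ref{eq:TMmove4}), both identities on the computation. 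In the left-moving case, at $j=k$ the pair has Track $4$ entries $(\Dblank, q'_L)$ and Rule (\ref{eq:TMmove3}) replaces $q'_L$ by $\blank$ while writing $q_L$ at cell $k-1$; all later pairs are handled by the left half of (\ref{eq:TMmove4}) or by (\ref{eq:TMmove5}).

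The two boundary hypotheses of the lemma play their natural roles: forbidding $k$ from being the right-most cell ensures $\arrL$ starts strictly to the right of $q$ so Region B is entered cleanly, while requiring that a left move not occur at the left-most cell guarantees that cell $k-1$ exists as the target for Rule (\ref{eq:TMmove3}). After $\arrL$ reaches $\leftBr$, Track $4$ encodes the unique configuration of $M$ obtained from $(k, q, a)$ by one application of $\delta$.

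The main obstacle will be showing that at every clock step exactly one rule matches the local pair---never zero, never more---so that the propagation term produces the intended permutation with no stalling or branching. This hinges on a careful case analysis of the Track $4$ content immediately adjacent to $\arrL$ in all three directional cases, together with an appeal to Claim \ref{cl:partialIsoT} to check that $P$ is well-defined on the pairs where we invoke it. A secondary subtlety is the left-moving sub-case, whose two-step implementation (create $q'_L$ in Region B, resolve it in Region C) must be shown to compose exactly into a single $\delta$-step, with the intermediate $q'_L$ never exposed to any rule other than (\ref{eq:TMmove3}).
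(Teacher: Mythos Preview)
Your proposal is correct and follows essentially the same approach as the paper: partition the left sweep by the pointer's position relative to the head, observe that rules (\ref{eq:TMmove5}) and (\ref{eq:TMmove4}) act as the identity on the computation tracks away from the head, apply the appropriate case of (\ref{eq:TMmove2}) when the pointer first meets the head, and in the left-moving case use (\ref{eq:TMmove3}) at the next step to complete the deferred move. Your write-up is somewhat more explicit about the region decomposition and about how the two boundary hypotheses are consumed, but the substance of the argument is the same as the paper's proof.
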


\begin{proof}
If the computation tracks are in a valid TM configuration, then the configuration of Track $4$ has the form:
$\Dblank \cdots \Dblank ~ q ~ \blank \cdots \blank$. Since the head is not in the right-most location, there must be at least
one $\blank$ to the right of the $q$.
As the $1$-pointer $\arrL$ sweeps from right to left, the operation $P$ is applied
to each pair of particles from right to left. $P$ will act as the identity as long as
the Track $4$ is $\blank \blank$. Eventually, a configuration is reached in which Track $1$
is $\Dblank \arrL$ and Track $4$ is $q ~\blank$.  In the next move, as $\arrL$ moves one location to the left,
$P$ applies a rule from (\ref{eq:TMmove2}) which executes one move of the TM. 
If the head moves right into state $q_R$, the TM move is complete.
If the head stays in place and transitions to a state $q_N$, the move is also complete, but
the $1$-pointer  is now co-located with $q_N$. Rule (\ref{eq:TMmove4}) is applied and the $1$-pointer moves left (or changes direction if it is at the left end of the chain) without changing the computation tracks. 
If the head stays in place and transitions into a state $q'_L$, then according to the conditions of the lemma, the head is not in the left-most location and the move can be completed with 
Rule (\ref{eq:TMmove3}). The $1$-pointer  is now co-located with $q_L$. Rule (\ref{eq:TMmove4}) is applied and the $1$-pointer moves left without changing the computation tracks. 
$P$ acts as the identity for
the remainder of the left-sweep since the remaining pairs have $\Dblank \Dblank$ on Track $4$.
\end{proof}

\subsubsection{Propagation Terms for Each of the 8-Segments}
\label{sec:propsegments}

We will use the different $i$-pointers on Track $1$ to trigger different $\delta$
transition functions on Tracks $4$ through $6$. Recall from the clock construction
that the  $i$-segments vary according to whether they begin or end at the
left or right end of the chain and according to the number of back and forth round trips
performed by the $i$-pointer before advancing to the next segment. 
The list below shows the segments for an entire iteration for Tracks $1$ and $2$
and how many steps of each $\delta$ function are triggered on the computation
tracks during the segment. 
Segments $1$ through $3$ trigger a step of one of the two TMs as the $1$-pointer moves from right to left. A sweep of the $\arrL$ 
triggers the application of the corresponding $P$ operation to each pair
of particles from right to left.
Segments $5$ through $7$ trigger the inverse operation of the TMs.
A sweep of the $\arrR$ 
triggers the application of the corresponding $P^{-1}$ operation to each pair
of particles from left to right. This will cause the state of the computation tracks to return to their original state after each
iteration for Tracks $1$ and $2$.
Segments $4$ and $8$ are used for checking purposes only.
The $4$ and $8$-pointers act as the identity on the computation tracks. We will add Type I terms that cause a penalty
in the presence of a $4$ or $8$-pointer to penalize certain conditions on the computation tracks.

\begin{figure}[ht]
\label{fig:schedule}
\begin{center}
\begin{tabular}{ll}
  $1)$~~  $N-3$ steps of $\delta_{BC}$ ~~~~~~~~~~&  $5)$~~  $N-2$ steps of $(\delta_{check})^-1$\\
   $2)$~~ One step of $\delta_{BC}$ &  $6)$~~  One step of $(\delta_{BC})^{-1}$\\
   $3)$~~ $N-2$ steps of $\delta_{check}$  &  $7)$~~  $N-3$ steps of $(\delta_{BC})^{-1}$\\
   $4)$~~  Identity  &  $8)$~~  Identity
\end{tabular}
\end{center}
\caption{The Turing Machine steps executed in each $i$-segment.}
\end{figure}

Figure \ref{fig:cycle} shows a graphical representation of an iteration for Tracks $1$ and $2$, indicating the path of the $i$-pointer for each $i$-segment. Note that since the Track $1$ pointer for Segment $1$ begins
at the left end of the chain and ends at the right end of the chain,
it only triggers $N-3$ steps of $M_{BC}$. The single round trip made by the $2$-pointer causes one more step of $M_{BC}$ to be executed so that each Turing Machine always runs for exactly $N-2$ steps.
The same holds for Segments $6$ and $7$ with the inverse of $M_{BC}$.

\begin{figure}[ht]
  \centering
  \includegraphics[width=6.0in]{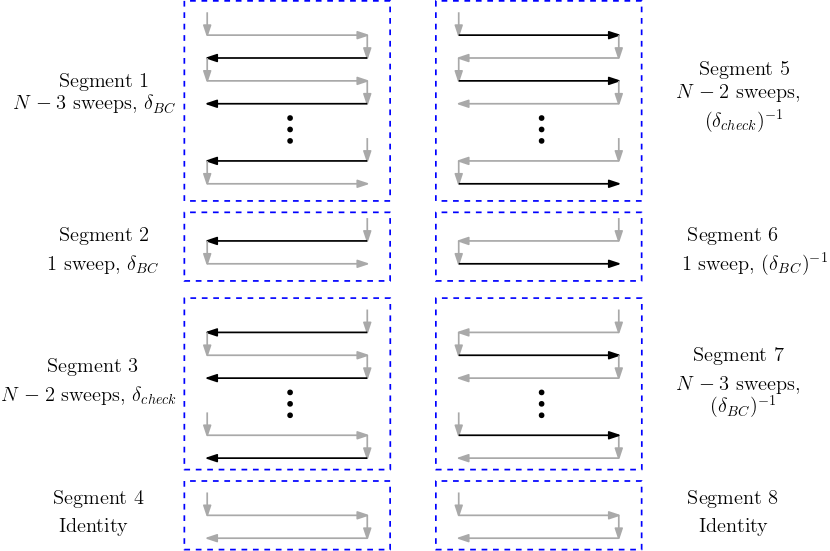}
\caption{The segments in an iteration for Tracks $1$ and $2$, along with the transition function that is triggered with each  sweep.
The dark arrows show the steps of the Track $1$ pointer that interact with the computation tracks. For Segments $1$ through $3$, the left-moving
$1$-pointer applies an operation to each pair of particles from right to left. For Segments $5$ through $7$, the right-moving
$1$-pointer applies an operation to each pair of particles from left to right.
The light arrows show the steps of the Track $1$ pointer which act as the identity on the computation tracks. The downward pointers correspond to a move at the end of the chain in which the Track $1$ pointer changes direction.
The horizontal pointers correspond to moves in the middle  of the chain in which the Track $1$ pointer moves right or left.}
\label{fig:cycle}
\end{figure}

We now define enhanced propagation terms for the clock states that act non-trivially on the computation space.
We will be replacing some of the transition rules defined in Section \ref{sec:clock} with enhanced terms that apply an 
operation on the computation tracks of the two particles to which the term applies. In each case, we will name the rule that
is removed and give the new term that will be added in its place.
The clock rules in Section \ref{sec:clock}
are expressed using arrows (such as $\Dblank \arrL \rightarrow \arrL \blank$) which implicitly represents the term:
$$\frac 1 2 [  \ketbra{\arrL \blank}{\arrL \blank}
+ \ketbra{\Dblank \arrL}{\Dblank \arrL}
+ \ketbra{\arrL \blank}{\Dblank \arrL}
+  \ketbra{\Dblank \arrL}{\arrL \blank}
]$$
These terms will be replaced by a term of the form:
$$\frac 1 2 [ I_{\cals} \otimes  \ketbra{\arrL \blank}{\arrL \blank}
+ I_{\cals} \otimes \ketbra{\Dblank \arrL}{\Dblank \arrL}
+ P \otimes \ketbra{\arrL \blank}{\Dblank \arrL}
+  P^{\dag} \otimes\ketbra{\Dblank \arrL}{\arrL \blank}
]$$
$\cals$ denotes the Hilbert spaces spanned by
$S$, the set of standard basis states for the computation tracks of two neighboring particles in a well-formed configuration.
In order to reduce notation, we will omit the factor of $1/2$ which should be applied to all the terms described in this subsection.

{\bf Segment $1$:} 
In the first segment, the $1$-pointers perform $N-3$ right-to-left sweeps in which we would like
to execute one step of $M_{BC}$. Transition Rules
TR-$1$ and TR-$2$ are placed by the following two terms:
\begin{align*}
     I_{\cals} \otimes \ketbrabig{\fourcells{\Dblank}{\neg \arrR}{\arrLone}{\neg \arrR}}{\fourcells{\Dblank}{\neg \arrR}{\arrLone}{\neg \arrR}}
& ~+~ I_{\cals} \otimes \ketbrabig{\fourcells{\arrLone}{\neg \arrR}{\blank}{\neg \arrR}}{\fourcells{\arrLone}{\neg \arrR}{\blank}{\neg \arrR}}\\
 +~ P_{BC} \otimes \ketbrabig{\fourcells{\arrLone}{\neg \arrR}{\blank}{\neg \arrR}}{\fourcells{\Dblank}{\neg \arrR}{\arrLone}{\neg \arrR}}
& ~+ ~(P_{BC})^{\dag} \otimes \ketbrabig{\fourcells{\Dblank}{\neg \arrR}{\arrLone}{\neg \arrR}}{\fourcells{\arrLone}{\neg \arrR}{\blank}{\neg \arrR}}
\end{align*}  

$$   I_{\cals} \otimes \ketbrabig{\fourcells{\Dblank}{\arrR}{\arrLone}{\blankLtwo}}{\fourcells{\Dblank}{\arrR}{\arrLone}{\blankLtwo}}
+ I_{\cals} \otimes \ketbrabig{\fourcells{\arrLone}{\DblankLtwo}{\blank}{\arrR}}{\fourcells{\arrLone}{\DblankLtwo}{\blank}{\arrR}}
 + P_{BC} \otimes \ketbrabig{\fourcells{\arrLone}{\DblankLtwo}{\blank}{\arrR}}{\fourcells{\Dblank}{\arrR}{\arrLone}{\blankLtwo}}
+ (P_{BC})^{\dag} \otimes \ketbrabig{\fourcells{\Dblank}{\arrR}{\arrLone}{\blankLtwo}}{\fourcells{\arrLone}{\DblankLtwo}{\blank}{\arrR}}$$

%\begin{align}
%    I_{\cals_E} \otimes \ketbra{\leftBr \arrLeight}{\leftBr \arrLeight}
%+ I_{\cals_E} \otimes \ketbra{\leftBr \arrRone}{\leftBr \arrRone}
%& + E_{count} \otimes \ketbra{\leftBr \arrRone}{\leftBr \arrLeight}
%+ (E_{count})^{\dag} \otimes \ketbra{\leftBr \arrLeight}{\leftBr \arrRone}\\
%I_{\cals_E} \otimes \ketbra{\leftBr \arrLone}{\leftBr \arrLone}
%+ I_{\cals_E} \otimes \ketbra{\leftBr \arrRone}{\leftBr \arrRone}
%& + E_{count} \otimes \ketbra{\leftBr \arrRone}{\leftBr \arrLone}
%+ (E_{count})^{\dag} \otimes \ketbra{\leftBr \arrLone}{\leftBr \arrRone}\\
%I_{\cals} \otimes \ketbra{\arrRone \blank}{\arrRone \blank}
%+ I_{\cals} \otimes \ketbra{\Dblank \arrRone}{\Dblank \arrRone}
%& + P_{count} \otimes \ketbra{\Dblank \arrRone}{\arrRone \blank}
%+ (P_{count})^{\dag} \otimes \ketbra{\arrRone \blank}{\Dblank \arrRone}
%\end{align}

{\bf Segment $2$:} The next segment is the $2$-segment in which the $2$-pointer makes a single round trip, starting at the right end of the chain. As the $2$ pointer moves, right to left, we would like to trigger one step of $\delta_{BC}$. Transition Rule TR-$7$ is replaced by:
\begin{align}
I_{\cals} \otimes \ketbra{\arrLtwo \blank}{\arrLtwo \blank}
+ I_{\cals} \otimes \ketbra{\Dblank \arrLtwo}{\Dblank \arrLtwo}
& + P_{BC} \otimes \ketbra{\arrLtwo \blank}{\Dblank \arrLtwo}
+ (P_{BC})^{\dag} \otimes \ketbra{\Dblank \arrLtwo}{\arrLtwo \blank}
\end{align}

{\bf Segment $3$:} The next segment is the $3$-segment in which the $3$-pointer makes $N-3$ plus one half 
iterations between the two ends of the chain. Because the $3$-pointer starts on the right
end of the chain, there are  $N-2$ right-to-left sweeps. In each of these, we would like to execute one step of $M_{check}$.
The Rule TR-$13$ is replaced by:
\begin{align}
I_{\cals} \otimes \ketbrabig{\fourcells{\Dblank}{\generic}{\arrLthree}{\neg \arrR}}{\fourcells{\Dblank}{\generic}{\arrLthree}{\neg \arrR}}
& + I_{\cals} \otimes \ketbrabig{\fourcells{\arrLthree}{\generic}{\blank}{\neg \arrR}}{\fourcells{\arrLthree}{\generic}{\blank}{\neg \arrR}}\\
+ P_{check} \otimes \ketbrabig{\fourcells{\arrLthree}{\generic}{\blank}{\neg \arrR}}{\fourcells{\Dblank}{\generic}{\arrLthree}{\neg \arrR}}
& + (P_{check})^{\dag} \ketbrabig{\fourcells{\Dblank}{\generic}{\arrLthree}{\neg \arrR}}{\fourcells{\arrLthree}{\generic}{\blank}{\neg \arrR}}
\end{align}
%\begin{align}
%I_{\cals_E} \otimes \ketbrabig{\threecellsL{\arrLthree}{\Dblank}}{\threecellsL{\arrLthree}{\Dblank}}
%+ I_{\cals_E} \otimes \ketbrabig{\threecellsL{\arrRthree}{\Dblank}}{\threecellsL{\arrRthree}{\Dblank}}
%& + E_{TV} \otimes \ketbrabig{\threecellsL{\arrRthree}{\Dblank}}{\threecellsL{\arrLthree}{\Dblank}}
%+ (E_{TV})^{\dag} \otimes \ketbrabig{\threecellsL{\arrLthree}{\Dblank}}{\threecellsL{\arrRthree}{\Dblank}}\\
%I_{\cals} \otimes \ketbra{\arrRthree \blank}{\arrRthree \blank}
%+ I_{\cals} \otimes \ketbra{\Dblank \arrRthree}{\Dblank \arrRthree}
%& + P_{check} \otimes \ketbra{\Dblank \arrRthree}{\arrRthree \blank}
%+ (P_{check})^{\dag} \otimes \ketbra{\arrRthree \blank}{\Dblank \arrRthree}
%\end{align}

{\bf Segment $5$:}  In the $5$-segment the $5$-pointer on Track $1$ makes $N-3$ plus one half iterations
starting at the left end of the chain, which results in $N-2$ left-to-right sweeps.
We use this segment to execute $N-2$ steps of $M_{check}$.
Transition Rule TR-$23$ is replaced with the rule shown below. Note that the rule of $P_{check}$ and $(P_{check})^{\dag}$ are 
switched so that $(P_{check})^{\dag}$ is applied in the forward direction and $P_{check}$ is applied in the reverse direction.
\begin{align}
I_{\cals} \otimes \ketbrabig{\fourcells{\arrRfive}{\neg \arrL}{\blank}{\generic}}{\fourcells{\arrRfive}{\neg \arrL}{\blank}{\generic}}
& + I_{\cals} \otimes \ketbrabig{\fourcells{\Dblank}{\neg \arrL}{\arrRfive}{\generic}}{\fourcells{\Dblank}{\neg \arrL}{\arrRfive}{\generic}}\\
+ (P_{check})^{\dag} \otimes \ketbrabig{\fourcells{\Dblank}{\neg \arrL}{\arrRfive}{\generic}}{\fourcells{\arrRfive}{\neg \arrL}{\blank}{\generic}}
& + P_{check} \ketbrabig{\fourcells{\arrRfive}{\neg \arrL}{\blank}{\generic}}{\fourcells{\Dblank}{\neg \arrL}{\arrRfive}{\generic}}
\end{align}
 
%\begin{align}
%    I_{\cals_E} \otimes \ketbra{\leftBr \arrLfour}{\leftBr \arrLfour}
%+ I_{\cals_E} \otimes \ketbra{\leftBr \arrRfive}{\leftBr \arrRfive}
%& + E_{check} \otimes \ketbra{\leftBr \arrRfive}{\leftBr \arrLfour}
%+ (E_{check})^{\dag} \otimes \ketbra{\leftBr \arrLfour}{\leftBr \arrRfive}\\
%I_{\cals_E} \otimes \ketbra{\leftBr \arrLfive}{\leftBr \arrLfive}
%+ I_{\cals_E} \otimes \ketbra{\leftBr \arrRfive}{\leftBr \arrRfive}
%& + E_{check} \otimes \ketbra{\leftBr \arrRfive}{\leftBr \arrLfive}
%+ (E_{check})^{\dag} \otimes \ketbra{\leftBr \arrLfive}{\leftBr \arrRfive}\\
%I_{\cals} \otimes \ketbra{\arrRfive \blank}{\arrRfive \blank}
%+ I_{\cals} \otimes \ketbra{\Dblank \arrRfive}{\Dblank \arrRfive}
%& + P_{check} \otimes \ketbra{\Dblank \arrRfive}{\arrRfive \blank}
%+ (P_{check})^{\dag} \otimes \ketbra{\arrRfive \blank}{\Dblank \arrRfive}
%\end{align}

{\bf Segment $6$:} Segment $6$ is a single round trip with the Track $1$ pointer beginning
and ending at the right end of the track. The operation
$(P_{BC})^{\dag}$ is applied to each pair as the $1$-pointer sweeps from left to right:
\begin{align}
I_{\cals} \otimes \ketbra{\arrRsix \blank}{\arrRsix \blank}
+ I_{\cals} \otimes \ketbra{\Dblank \arrRsix}{\Dblank \arrRsix}
& + (P_{BC})^{\dag} \otimes \ketbra{\Dblank \arrRsix}{\arrRsix \blank}
+ P_{BC} \otimes \ketbra{\arrRsix \blank}{\Dblank \arrRsix}
\end{align}

%\begin{align}
%    I_{\cals_E} \otimes \ketbra{\leftBr \arrLsix}{\leftBr \arrLsix}
%+ I_{\cals_E} \otimes \ketbra{\leftBr \arrRsix}{\leftBr \arrRsix}
%& + E_{count} \otimes \ketbra{\leftBr \arrRsix}{\leftBr \arrLsix}
%+ (E_{count})^{\dag} \otimes \ketbra{\leftBr \arrLsix}{\leftBr \arrRsix}\\
%I_{\cals} \otimes \ketbra{\arrRsix \blank}{\arrRsix \blank}
%+ I_{\cals} \otimes \ketbra{\Dblank \arrRsix}{\Dblank \arrRsix}
%& + P_{count} \otimes \ketbra{\Dblank \arrRsix}{\arrRsix \blank}
%+ (P_{count})^{\dag} \otimes \ketbra{\arrRsix \blank}{\Dblank \arrRsix}
%\end{align}

{\bf Segment $7$:} The next segment is the $7$-segment in which the $7$-pointer makes $N-3$ plus one half 
round trips between the two ends of the chain. Because the $7$-pointer starts on the right
end of the chain, there are only $N-3$ left-to-right sweeps. 
Because the right-moving $1$-pointer also advances the pointer on Track $2$, there are two different transition rules that need to be augmented. The rule TR-$31$ is replaced by:
\begin{align}
I_{\cals} \otimes \ketbrabig{\fourcells{\arrRseven}{\neg \arrL}{\blank}{\neg \arrL}}{\fourcells{\arrRseven}{\neg \arrL}{\blank}{\neg \arrL}}
& + I_{\cals} \otimes \ketbrabig{\fourcells{\Dblank}{\neg \arrL}{\arrRseven}{\neg \arrL}}{\fourcells{\Dblank}{\neg \arrL}{\arrRseven}{\neg \arrL}}\\
+ (P_{check})^{\dag} \otimes \ketbrabig{\fourcells{\Dblank}{\neg \arrL}{\arrRseven}{\neg \arrL}}{\fourcells{\arrRseven}{\neg \arrL}{\blank}{\neg \arrL}}
& + P_{check} \ketbrabig{\fourcells{\arrRseven}{\neg \arrL}{\blank}{\neg \arrL}}{\fourcells{\Dblank}{\neg \arrL}{\arrRseven}{\neg \arrL}}
\end{align}
The rule TR-$32$ is replaced by:
\begin{align}
I_{\cals} \otimes \ketbra{\fourcells{\arrRseven}{\DblankLtwo}{\blank}{\arrL}}{\fourcells{\arrRseven}{\DblankLtwo}{\blank}{\arrL}}
& + I_{\cals} \otimes \ketbra{\fourcells{\Dblank}{\arrL}{\arrRseven}{\blankLtwo}}{\fourcells{\Dblank}{\arrL}{\arrRseven}{\blankLtwo}}
+ (P_{check})^{\dag} \otimes \ketbra{\fourcells{\Dblank}{\arrL}{\arrRseven}{\blankLtwo}}{\fourcells{\arrRseven}{\DblankLtwo}{\blank}{\arrL}}
& + P_{check} \ketbra{\fourcells{\arrRseven}{\DblankLtwo}{\blank}{\arrL}}{\fourcells{\Dblank}{\arrL}{\arrRseven}{\blankLtwo}}
\end{align}

%\begin{align}
%I_{\cals_E} \otimes \ketbrabig{\threecellsL{\arrRseven}{\Dblank}}{\threecellsL{\arrLseven}{\Dblank}}
%+ I_{\cals_E} \otimes \ketbrabig{\threecellsL{\arrRseven}{\Dblank}}{\threecellsL{\arrRseven}{\Dblank}}
%& + E_{count} \otimes \ketbrabig{\threecellsL{\arrRseven}{\Dblank}}{\threecellsL{\arrLseven}{\Dblank}}
%+ (E_{count})^{\dag} \otimes \ketbrabig{\threecellsL{\arrLseven}{\Dblank}}{\threecellsL{\arrRseven}{\Dblank}}\\
%I_{\cals} \otimes \ketbra{\arrRseven \blank}{\arrRseven \blank}
%+ I_{\cals} \otimes \ketbra{\Dblank \arrRseven}{\Dblank \arrRseven}
%& + T_{count} \otimes \ketbra{\Dblank \arrRseven}{\arrRseven \blank}
%+ (T_{count})^{\dag} \otimes \ketbra{\arrRseven \blank}{\Dblank \arrRseven}
%\end{align}

\vspace{.1in}

\begin{definition}
\label{def:Hprop}
{\bf [$h_{prop}$ Sum of all the Propagation Terms]}
Let $h_{prop}$ denote the sum of all of the transition terms specified here
and in Section \ref{sec:clock} on the clock transition rules.
\end{definition}

\subsubsection{Additional Penalty Terms for the Finite Construction}
\label{sec:addconstraints}

%The Track $1$ pointer triggers steps of a Turing Machine whose configuration is stored on
%on the computation tracks. The computation consists of $N-2$ steps of  Turing Machines, $M_{BC}$ and $M_{TV}$,
%in the forward direction and then $N-2$ steps of each Turing Machine ($M_{TV}$ and then $M_{BC}$) in the reverse direction. The Hamiltonian terms that cause the unitary operations of the TM steps to be applied to the computation tracks are all described in Subsections \ref{sec:compwf} through \ref{sec:propsegments}.

We prove in Lemma \ref{lem:clockCycle} that
each clock step produces a unitary operation on the computation tracks, so that if the computation begins a iteration
for Tracks $1$ and $2$ in state $\ket{\phi}$, then after $p(N)$ clock steps of the transition function, Tracks $1$ and $2$ will be
back to their original configuration and the computation tracks will have returned to state $\ket{\phi}$. 
Thus, each cycle (for all three tracks) consists of the same computation repeated $2T+1$ times, where $T$ is the
length of the timer on Track $3$. 

The analysis in Section  \ref{sec:groundenergy}  defines  sets of standard basis states such that each set is parameterized by a $4$-tuple $(T, v, y, w)$, where $\ket{v}\ket{y}\ket{w}$ is a well-formed configuration for the computation tracks:
$\ket{v}$ represents a state for Tracks $4$ and $5$, $\ket{y}$ represents the first $m$ bits on the witness track, and
$\ket{w}$ represents the remaining $N-2-m$ bits on the witness track.
The states in the set parameterized by $(T, v, y, w)$
represent the evolution of the combined clock and computation state when the state of the 
computation at time $(0,0)$ is $\ket{v}\ket{w}\ket{w}$ and the Track $3$ timer has length $T$.
Let $H_{N,prop}$ be the Hamiltonian resulting from applying the propagation term $h_{prop}$ to every pair of neighboring particles in a 1D chain of length $N$.
We argue in the analysis that
 $H_{prop,N}$ is closed on each set of states.
 Moreover the
 matrix corresponding to $H_{prop,N}$ restricted to the set parameterized by $(T, v, y, w)$ is $1/2$ times the Laplacian of a cycle graph on $(2T+1) p(N)$ vertices.
 We will refer to the Hamiltonian restricted to the subspace spanned by one of the sets  as a {\em block}. 
 Since the remaining terms are diagonal in the set of standard basis states, we can analyze the lowest eigenvalue of each block separately. 

In the absence of any additional penalty terms, the ground state of each the  blocks has energy $0$
and is the uniform superposition of all the states in the set
corresponding to that block. In this subsection, we describe additional penalty terms that apply to states in these blocks. Some of the terms will result in
periodic costs that apply to at least $2T+1$ states in a set, occurring once every $p(N)$ clock steps.
These will be used to check for incorrect values for the parameters $T$, $\ket{v}$, $\ket{y}$, or $\ket{w}$.
The last term will apply to all the blocks and will be used to penalize shorter cycles over longer ones. In the analysis, this will be the mechanism for assigning an effective cost to {\em no} query answers.

The first set of  penalty terms  enforce the condition
that  the computation state at the start of a Track 1-2 iteration is the initial configuration. The condition is checked
as the $\arrLeight$ pointer moves from right to left which takes place just before $M_{BC}$ starts (in the forward direction).
In the correct initial configuration, the contents of Track $5$ are
$\leftBr  1 ~\#^*  \rightBr$, where $\#$ is the blank symbol.
The state should be $q_0$ and the head should be pointing to the cell with the $1$.

Throughout most of the discussion of the Turing Machine steps, we have combined the work tape (Track $5$) and the witness tape (Track $6$). However,  the first of the two rules given below
apply to Tracks $1$ and $5$ only and 
implicitly acts as the identity on all other tracks. The third rule
applies to Tracks $1$, $4$, and $5$:
$$h_{init} = \sum_{a \in \Gamma, ~x \in \Gamma - \{ \#\} } \ketbrabig{\fourcells{\Dblank}{a}{\arrLeight}{x}}{\fourcells{\Dblank}{a}{\arrLeight}{x}}
~~~~~+~~~~~\sum_{x \in \Gamma - \{ 1 \} } \ketbrabig{\threecellsL{\arrLeight}{x}}{\threecellsL{\arrLeight}{x}}
~~~~~+~~~~~
\ketbrabig{\fourcellsL{\arrLeight}{\neg q_0}{1}}{\fourcellsL{\arrLeight}{\neg q_0}{1}}
$$
The next set of terms check that the length of the time on Track $3$ is in fact equal to the
correct timer length as computed by $M_{TV}$. The condition is checked by the $4$-pointer $\arrLfour$
which completes its round trip just after the $N-2$ steps of $M_{TV}$ in the forward direction. 
At this point, 
the work tape will  have the following form, assuming that the computation started from a correct
initial configuration at time $0$: $$\leftBr (\sigma_A + \sigma_R) (\sigma_X)^T (\Gamma - \{\sigma_X, \sigma_A, \sigma_R\})^*  \rightBr$$ where $T = T(x,y)$.
Recall that a well-formed Track $3$ has the form
$\leftBr \DblankLthree^* \arrR \blankLthree^* \dead^* \rightBr$
or $\leftBr \DblankLthree^* \arrL \blankLthree^* \dead^* \rightBr$ and the length is the number of $\DblankLthree$ or
$\blankLthree$ symbols. 
We add a penalty if there is any
location in which Track $5$ has a symbol from
$\{\sigma_A, \sigma_R, \sigma_X\}$  and Track $3$ does not have a symbol
from
$\{\DblankLthree, \blankLthree, \arrR, \arrL \}$.
Similarly, we add a penalty if there is any
location in which Track $5$ has a symbol that is not from
$\{\sigma_A, \sigma_R, \sigma_X\}$  and Track $3$ has a symbol
from
$\{\DblankLthree, \blankLthree, \arrR, \arrL \}$.
This effectively gives a penalty if the length of the Track $3$
timer is anything other than $T(x,y)$.
The following rules apply to Tracks $1$, $3$, and $5$:
$$h_{length} = 
\sum_{\substack{s \in \{\DblankLthree, \blankLthree, \arrR, \arrL \}\\ a ~\not\in \{\sigma_A, \sigma_R, \sigma_X \}} } \ketbrabig{\threecellsvert{\arrLfour}{s}{a}}{\threecellsvert{\arrLfour}{s}{a}}
~~~~~+~~~~~
\sum_{b \in \{\sigma_A, \sigma_R, \sigma_X \} }
\ketbrabig{\threecellsvert{\arrLfour}{\dead}{ b}}{\threecellsvert{\arrLfour}{\dead}{b}}
$$
There is another penalty if the Verifier $V$ rejects on any of its computations which is also checked by the
$4$-pointer $\arrLfour$. Track $5$ of the
left-most particle  encodes
the result of the verifier computations. The symbol will be $\sigma_R$ if any of the verifier computations ended in a reject state, and we wish to penalize rejecting computations. 
The following rules apply to Tracks $1$ and $5$:
$$h_{V} = \ketbrabig{\threecellsL{\arrLfour}{\sigma_R}}{\threecellsL{\arrLfour}{\sigma_R}}.$$

Finally, there is a penalty even for correct computations and clock configurations for every cycle for Tracks $1$, $2$, and $3$. This cost is incurred whenever the left-moving $4$-pointer reaches the left end of the chain and the Tracks $2$ and $3$ pointers are also at the left end of the chain. This condition
occurs exactly once in each cycle.
The rule below applies to Tracks $1$, $2$, and $3$:
$$h_{final} = \frac 1 2 \ketbrabig{\fourcellsL{\arrLfour}{\arrR}{\arrR}}{\fourcellsL{\arrLfour}{\arrR}{\arrR}} + \frac 1 2 \ketbrabig{\fourcellsL{\arrRfive}{\arrR}{\arrR}}{\fourcellsL{\arrLfour}{\arrR}{s\arrR}}.$$

Note that $h_{final}$ adds a penalty  of $1/2$  to two consecutive clock states in every block.
The final $2$-particle Hamiltonain term is:
$$h = h_{prop} + h_{wf-cl} + h_{wf-co} + h_{cl} + h_{init} + h_{length} + h_V + h_{final}.$$
For each of these terms $h_*$, we will use $H_{N,*}$ to refer to the Hamiltonian on a chain of length $N$
obtained by applying the $2$-particle term $h_*$ to each pair of neighboring particles in the chain.

\subsection{Analysis of the ground energy of the Hamiltonian} 
\label{sec:analysis}

This section contains the analysis of the ground energy of the Hamiltonian and a proof that if the chain length is $N(x)$,
then the value $f(x)$ can be determined by a sufficiently accurate ($1/poly(N)$) estimate of the ground energy.
Section \ref{sec:cycliccomp}  establishes that the computation is cyclical in nature, meaning
that if the computation tracks start in state $\ket{\phi}$ at the beginning of a Track 1-2 iteration, then
the computation tracks will return to the same state $\ket{\phi}$ at the beginning of the next Track 1-2 iteration.
This computation is repeated $2T+1$ times over the course of the entire cycle (for Tracks $1$, $2$, and $3$).

Section \ref{sec:groundenergy} contains the main analysis of the ground energy of the Hamiltonian. 
The Hamiltonian has a block-diagonal structure which allows us to analyze the energy of each block independently.
We first eliminate blocks that correspond to incorrect clock states. Then we eliminate blocks in which
the initial configuration of the computation is incorrect.
Thus, we can assume that we have a correct computation which uses the correct string $x$ as the input.
A block is also parameterized by $y$ the first $m$ bits of the witness tape which are used as guesses for the responses of the oracle. We eliminate blocks in which the timer length on Track $3$ does not correspond to the 
correct $T(x, y)$ and  blocks in which $y$ contains an incorrect {\em yes} guess. These all have periodic costs that occur at least once every $p(N)$ clock steps in an iteration.
Finally, we give an exact expression for the lowest eigenvalue within  blocks which do not have a periodic cost, and show that blocks that correspond to the correct $y$ will have the smallest eigenvalue. Then we show how the value of $T(x, y)$ and
$f(x)$ can be extracted from a good approximation of this smallest eigenvalue.

Section \ref{sec:bracket} finally addresses the assumption we have been making all along that the state is in the span of all bracketed states. Additional terms are added to the Hamiltonian which ensure that the ground state of the Hamiltonian is in fact in
the span of all bracketed states. Theorem \ref{th:finitehardness} ties all the various pieces of the finite  construction together and gives
the final result on the hardness of {\sc Function}-TIH.
%Lemmas \ref{lem:existsMTV} and \ref{lem:gentv} together establish that there is a Turing Machine $M_{TV}$
%that results in Turing Machine $M_{check}$ such that
%for all but a finite set of $x$'s, it is in fact the case that
%for $N = N(x)$ and for every $yw \in \{0,1\}^{N-4}$,
%the process $(M_{BC})^{N-2}$ followed
%by $(M_{check})^{N-2}$ starting in configuration $\ket{\vinit}\ket{y}\ket{w}$
%  produces {\sc Out}$(x, yw, M, V)$ on the work track. Note that the concatenation of $y$ and $w$ is the contents of the work tape and is the string $\myw$ in the definition of $M_{TV}$.
%  Lemma \ref{lem:TMonestep} establishes that the computation represented in the ground state of the Hamiltonian contains the correct results of the Turing Machine computations.  Then Lemma 
%\ref{lem:groundenergy} gives an exact expression for the ground energy subject to the condition that
%the tape contents of the Turing Machine at the end of the computation is {\sc Out}$(x, yw, M, V)$.
%The remainder of the proof then establishes that with a sufficiently precise estimate of the ground energy, the value of $f(x)$ can be determined.
Section \ref{sec:eigenvaluebounds} gives the proofs for bounds on the lowest eigenvalue of various matrices used in the analysis.

\subsubsection{The Computation Repeats}
\label{sec:cycliccomp}

We prove here that the the computation tracks return to the same state
at the end of each iteration for Tracks $1$ and $2$.
The notation $\ket{c_{T, s, t}}$ refers to the enumeration of correct clock
configurations given in Definition \ref{def:corrclockenum}. The value of $s$ is in the range $0, \ldots, 2T$ and determines the 
configuration of the Track $3$ timer. The value of $t$ is in the range $0, \ldots, p(N)-1$
and determines the configurations for Tracks $1$ and $2$. 
Note that the last clock configuration in an iteration for Tracks $1$ and $2$
is $\ket{c_{T, s, p(N)-1}}$ in which the state of Track $1$ is $\leftBr \arrLeight \blank^* \rightBr$
and the state of Track $2$ is $\leftBr \arrL \blank^* \rightBr$.

\begin{lemma}
\label{lem:clockCycle}
{\bf [The Computation is Cyclic]}
Consider a sequence of clock steps  from $\ket{c_{T, s, p(N)-1}}$ to
$\ket{c_{T, s+1, p(N)-1}}$. If the computation tracks start out in a state $\ket{\phi}$,
then the state of the computation tracks is unchanged at the end of the sequence of clock steps.
That is, $p(N)$ clock steps applied to 
 $\ket{c_{T, s, p(N)-1}}\ket{\phi}$ will result in the state $\ket{c_{T, s+1, p(N)-1}}\ket{\phi}$
\end{lemma}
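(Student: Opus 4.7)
The strategy is to track, segment by segment, the net effect of the propagation terms on the computation tracks during one full cycle for Tracks $1$ and $2$, and show that the operations applied in segments $5$--$7$ exactly invert those applied in segments $1$--$3$, while segments $4$ and $8$ act as the identity on the computation tracks. Concretely, by Lemma~\ref{lem:clockGraph1} the $p(N)$ clock steps from $\ket{c_{T,s,p(N)-1}}$ back to $\ket{c_{T,s+1,p(N)-1}}$ visit, in order, exactly one instance of each of the eight segments (starting with a $1$-segment). So it suffices to argue that the composite unitary on the computation tracks induced by these eight segments is the identity.

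First I would handle segments $4$ and $8$. By construction the $4$-pointer and $8$-pointer transition rules act as the identity on Tracks $4$--$6$ (they only carry diagonal penalty terms, cf.\ $h_{init}$, $h_{length}$, $h_V$, $h_{final}$, which are not propagation terms and hence do not appear in the unitary evolution governed by $h_{prop}$). So both these segments contribute the identity operator on the computation tracks.

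Next, I would analyze segments $1$--$3$. By the enhanced propagation terms defined in Section~\ref{sec:propsegments}, each right-to-left sweep of a Track~$1$ pointer applies $P_{BC}$ (for $1$- and $2$-segments) or $P_{check}$ (for $3$-segments) to each adjacent pair of particles from right to left. By Lemma~\ref{lem:TMonestep}, such a sweep implements exactly one step of the corresponding Turing Machine on the computation tracks, provided the head is not at the right-most location and, when the step would move the head left, not at the left-most location either. Counting sweeps: segment $1$ has $N-3$ full right-to-left sweeps of the $1$-pointer, segment $2$ contributes one more right-to-left sweep of the $2$-pointer, yielding $N-2$ applications of $P_{BC}$, i.e.\ $N-2$ steps of $M_{BC}$. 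Segment $3$ then contributes $N-2$ right-to-left sweeps of the $3$-pointer, giving $N-2$ steps of $M_{check}$. The preconditions of Lemma~\ref{lem:TMonestep} are satisfied throughout because, by Lemma~\ref{lem:gentv}, $M_{BC}$ followed by $M_{check}$ (i.e.\ $\calt(M_{BC}+M_{TV})$) never moves its head outside cells $1$ through $N-3$ over $2N-4$ steps, so the head never reaches either end of the chain during the sweeps.

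Symmetrically, segments $5$--$7$ apply $P_{check}^{\dag}$, $P_{BC}^{\dag}$, and $P_{BC}^{\dag}$ via left-to-right sweeps: $N-2$ sweeps of the $5$-pointer, one sweep of the $6$-pointer, and $N-3$ sweeps of the $7$-pointer. A completely analogous argument (the mirror-image of Lemma~\ref{lem:TMonestep}, with roles of left and right swapped) shows that these implement $N-2$ reverse steps of $M_{check}$ followed by $N-2$ reverse steps of $M_{BC}$. Thus the composite unitary is
\[
U_{\text{cycle}} \;=\; U_{BC}^{-(N-2)}\, U_{check}^{-(N-2)}\, U_{check}^{\,N-2}\, U_{BC}^{\,N-2} \;=\; I,
\]
where $U_{BC}$ and $U_{check}$ are the unitaries on the computation tracks corresponding to one step of $M_{BC}$ and $M_{check}$ respectively (well defined since both TMs are reversible by Lemmas~\ref{lem:dovetail} and~\ref{lem:gentv}). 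Therefore $p(N)$ applications of the transition function to $\ket{c_{T,s,p(N)-1}}\ket{\phi}$ yield $\ket{c_{T,s+1,p(N)-1}}\ket{\phi}$, as claimed.

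The main obstacle is bookkeeping: verifying that the pre-conditions of Lemma~\ref{lem:TMonestep} hold at every sweep (i.e.\ the head never hits the chain boundary in a direction that would invalidate the step), and matching the left-to-right sweeps of segments $5$--$7$ precisely to the inverses of the right-to-left sweeps of segments $1$--$3$. Both reduce to the head-confinement guarantees of Lemmas~\ref{lem:Nofxbound} and~\ref{lem:gentv}, together with the fact that the enhanced propagation terms for segments $5$--$7$ have $P_{BC}$ and $P_{check}$ (respectively, $P_{BC}^{\dag}$ and $P_{check}^{\dag}$) swapped relative to segments $1$--$3$, as is evident from the explicit terms listed in Section~\ref{sec:propsegments}.
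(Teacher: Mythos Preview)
Your approach has a real gap: by invoking Lemma~\ref{lem:TMonestep} and the head-confinement guarantees of Lemmas~\ref{lem:Nofxbound} and~\ref{lem:gentv}, you only establish the claim for those $\ket{\phi}$ that arise from the correct initial configuration $\ket{\vinit}\ket{w}$. Lemma~\ref{lem:gentv} explicitly assumes the process starts in $(1,w)$; for an arbitrary $\ket{\phi}$ the head may sit at the boundary, the tape may contain garbage, and the preconditions of Lemma~\ref{lem:TMonestep} simply fail. Yet the lemma must hold for \emph{every} $\ket{\phi}$: it is used in Section~\ref{sec:groundenergy} to set up the block-diagonal structure on all of $S_{cyc}$, in particular on the blocks in $S_1$ where $v\neq\vinit$.

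The paper sidesteps this entirely by arguing one level lower, at the two-particle permutations rather than at the TM semantics. A right-to-left sweep in segments $1$--$3$ applies the product $P^{(1,2)}P^{(2,3)}\cdots P^{(N-3,N-2)}$ (with $P=P_{BC}$ or $P_{check}$) to the computation tracks, while a left-to-right sweep in segments $5$--$7$ applies $(P^{(N-3,N-2)})^{\dag}\cdots (P^{(2,3)})^{\dag}(P^{(1,2)})^{\dag}$, which is precisely the inverse of the former. Counting sweeps gives $(N-2)$ applications of each product followed by $(N-2)$ applications of its inverse, so the composite is the identity for any $\ket{\phi}$ whatsoever---no head-position hypotheses, no appeal to Lemma~\ref{lem:TMonestep}. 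Your segment bookkeeping is fine; the fix is to drop the TM-step interpretation and argue the cancellation directly from the explicit propagation terms in Section~\ref{sec:propsegments}.
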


\begin{proof}
Let $P$ be a permutation applied to the computation tracks of two neighboring particles. The particles
will be numbered from left to right $0$ through $N-1$, so the non-bracketed particles are $1$ through $N-2$.
Applying $P$ to particles $i$ and $i+1$ will be denoted $P^{(i,i+1)}$.
In segments $1$ and $2$, each right-to-left sweep of the $1$-pointer applies $P_{BC}$ to every pair of particles:
$P^{(1,2)}_{BC} P^{(2,3)}_{BC} \cdots P^{(N-3, N-2)}_{BC}$. There are $N-3$ sweeps in Segment $1$ and one sweep in Segment $2$, so the state of the computation tracks after Segment $2$ will be:
$$(P^{(1,2)}_{BC} P^{(2,3)}_{BC} \cdots P^{(N-3, N-2)}_{BC})^{N-2} \ket{\phi}$$
Segment $3$ has $N-2$ right-to-left in which $P_{check}$ is applied to every pair of particles:
$$(P^{(1,2)}_{check} P^{(2,3)}_{check} \cdots P^{(N-3, N-2)}_{check})^{N-2}(P^{(1,2)}_{BC} P^{(2,3)}_{BC} \cdots P^{(N-3, N-2)}_{BC})^{N-2} \ket{\phi}$$
Then in Segment $5$, there are $N-2$ right-to-left sweeps in which $(P_{check})^{\dag}$
is applied to every pair of particles, starting with $(1,2)$ and ending with $(N-3, N-2)$,
which results in the overall operation:
$$((P^{(N-3, N-2)}_{check})^{\dag} \cdots (P^{(2,3)}_{check})^{\dag}   (P^{(1,2)}_{check})^{\dag})^{N-2}$$
Finally, Segments $6$ and $7$ apply a total of $N-2$ right-to-left sweeps in which
$(P_{BC})^{\dag}$ is applied to every pair of particles, resulting in:
$$((P^{(N-3, N-2)}_{BC})^{\dag} \cdots (P^{(2,3)}_{BC})^{\dag}   (P^{(1,2)}_{BC})^{\dag})^{N-2}$$
The net effect on the computation tracks is that they return to their orignal state $\ket{\phi}$.
 \end{proof}

\begin{coro}
\label{cor:clockCycle}
For any $0 \le s < 2T+1$ and any $0 \le t < p(N)$, and $\phi \in C_N$,
 $p(N)$ transition steps applied to 
 $\ket{c_{T, s, t}}\ket{\phi}$ will result in the state $\ket{c_{T, s+1, t}}\ket{\phi}$
\end{coro}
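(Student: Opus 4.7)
The plan is to deduce the corollary from Lemma~\ref{lem:clockCycle} by exploiting reversibility together with the finite-dimensional identity $AB = I \Rightarrow BA = I$. The transition rules define bijections on the well-formed standard basis states, so on each block labeled by a cycle index the propagation steps act on the computation tracks via permutations of the finite set $C_N$. A key preliminary observation is that the map induced on the computation tracks by a transition step depends only on the Tracks $1,2$ position index $t$ and not on the Track $3$ position index $s$: this is because Track $3$ is modified only by $8$-pointer rules (TR-$37$ through TR-$41$), which act as the identity on the computation tracks, while all enhanced propagation terms that do change the computation tracks are triggered by the Tracks $1,2$ pointers whose definitions involve only Tracks $1, 2, 4, 5, 6$.

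Using this, I would introduce two bijections on $C_N$. Let $g\colon C_N \to C_N$ be the map such that applying $p(N){-}1{-}t$ transition steps to $\ket{c_{T,s,t}}\ket{\phi}$ yields $\ket{c_{T,s,p(N)-1}}\ket{g(\phi)}$. Let $h\colon C_N \to C_N$ be the map such that applying $t{+}1$ transition steps to $\ket{c_{T,s,p(N)-1}}\ket{\phi}$ yields $\ket{c_{T,s+1,t}}\ket{h(\phi)}$. By the independence-of-$s$ observation, $g$ and $h$ are well-defined bijections of $C_N$ that do not depend on $s$.

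Next I apply Lemma~\ref{lem:clockCycle} to the composition going through the point $\ket{c_{T,s,p(N)-1}}$. Starting from $\ket{c_{T,s,p(N)-1}}\ket{\phi}$, the first $t{+}1$ steps apply $h$, reaching $\ket{c_{T,s+1,t}}\ket{h(\phi)}$, and the remaining $p(N){-}1{-}t$ steps apply $g$, reaching $\ket{c_{T,s+1,p(N)-1}}\ket{g(h(\phi))}$. By the lemma, this full sequence of $p(N)$ steps must land on $\ket{c_{T,s+1,p(N)-1}}\ket{\phi}$, so $g \circ h = \mathrm{id}_{C_N}$. Since $g$ and $h$ are bijections of a finite set, this forces $h \circ g = \mathrm{id}_{C_N}$ as well.

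Finally, I reverse the order and start at $\ket{c_{T,s,t}}\ket{\phi}$: the first $p(N){-}1{-}t$ steps apply $g$, yielding $\ket{c_{T,s,p(N)-1}}\ket{g(\phi)}$, and the next $t{+}1$ steps apply $h$, yielding $\ket{c_{T,s+1,t}}\ket{h(g(\phi))} = \ket{c_{T,s+1,t}}\ket{\phi}$, which is exactly the claim. There is no serious obstacle; the only point that requires any care is verifying the independence of $g$ and $h$ from $s$, which reduces to inspecting the transition rules in Section~\ref{sec:propsegments} and confirming that the $8$-pointer rules that update Track $3$ act trivially on Tracks $4$--$6$.
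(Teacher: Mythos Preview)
Your proposal is correct. The paper states this as a corollary with no proof, so there is no explicit argument to compare against; your reasoning supplies exactly the missing steps that the paper treats as immediate. The two key ingredients you identify --- that the permutation applied to the computation tracks at each transition depends only on the Tracks~$1,2$ index $t$ (because the $8$-pointer rules that touch Track~$3$ act as the identity on Tracks~$4$--$6$), and the finite-set identity $g\circ h=\mathrm{id}\Rightarrow h\circ g=\mathrm{id}$ --- are precisely what is needed to go from Lemma~\ref{lem:clockCycle} (which handles the special case $t=p(N)-1$) to the general~$t$.

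One could phrase the same thing slightly more directly: the proof of Lemma~\ref{lem:clockCycle} already exhibits the product of the $p(N)$ step-permutations (in the order starting at $t=0$) as a telescoping product equal to the identity, and since each factor depends only on~$t$, any cyclic rotation of that product is also the identity. Your decomposition into $g$ and $h$ is just a clean way of packaging that observation.
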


\subsubsection{Analysis of the Ground Energy}
\label{sec:groundenergy}

Let $\calh_{wf}$ denote the span of
the bracketed well-formed clock configurations tensored with the  span of $C_N$, the set of well-formed computation configurations on a chain of length $N$. It will be useful at this point to separate the state, head location, and work tape (Tracks $4$ and $5$)
from the witness track (Track $6$). We will also separate the first $m$ bits of the witness track $y$ from the remaining $N-2-m$ bits, which we refer to as $w$.
We will denote a well-formed standard basis state in $C_N$ as
$\ket{v} \ket{y}\ket{w}$, where $\ket{v}$ is a standard basis state for Tracks $4$ and $5$,
and $\ket{y}\ket{w}$ is a standard basis state for Track $6$.

According to Lemmas \ref{lem:clockGraph1} and \ref{lem:clockGraph2}, the configuration graph defined on the set of well-formed clock configurations 
consists of paths of relatively short length ($\le p(N)$) and cycles of the correct clock
states. We will first handle the the part of the Hilbert space in which the clock configuration
is in one of these short paths.
Let $\calh_{path}$ be the Hilbert space spanned by the states whose clock configuration
is in a path in the configuration graph.
$H_{N, prop}$ is the only term in the Hamiltonian that is not diagonal in the standard basis
and $H_{N, prop}$ is closed on $\calh_{path}$. Therefore, we can lower bound the the eigenvalues of
$H$ restricted to $\calh_{path}$ separately from the rest of the space.

\begin{lemma}
\label{lem:lbpath}
{\bf [Eliminating Incorrect Clock States]}
The lowest eigenvalue of $H_N$ restricted to $\calh_{path}$ is at least $(1 - \cos(\pi/(2p(N)+1)))$.
\end{lemma}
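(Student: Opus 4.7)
The plan is to decompose $\calh_{path}$ into $H_N$-invariant blocks (indexed by a path in the configuration graph together with an orbit of the computation) and on each block lower-bound $H_N$ by a path-Laplacian-plus-endpoint-penalty matrix whose smallest eigenvalue is controlled by the bound proved in Section~\ref{sec:eigenvaluebounds}.

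First I fix an incorrect path $P:\; c_0 \to c_1 \to \cdots \to c_{L-1}$ in the configuration graph. By Lemmas~\ref{lem:clockwellformed} and~\ref{lem:correctclock} such a path has $L \leq p(N)$ vertices and $c_{L-1}$ is an illegal clock configuration (i.e.\ contains one of the illegal patterns built into $h_{cl}$, with no outgoing transition). For any fixed standard basis state $\ket{v}\ket{w}$ on the computation tracks define $\ket{\phi_0} = \ket{v}\ket{w}$ and, inductively, let $\ket{\phi_{j+1}}$ be the image of $\ket{\phi_j}$ under the permutation ($P_{BC}$, $P_{check}$, or the identity) attached to the transition $c_j \to c_{j+1}$. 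This is well-defined because $P_{BC}$ and $P_{check}$ are extended to genuine permutations on $\cals$. Set
\begin{equation*}
    \call_{P,v,w} \;=\; \mathrm{span}\bigl\{\ket{c_j}\ket{\phi_j} :\, 0 \leq j \leq L-1 \bigr\}.
\end{equation*}
By construction, $H_{N,prop}$ is closed on $\call_{P,v,w}$, and every other term of $H_N$ is diagonal in the standard basis and hence also preserves $\call_{P,v,w}$. Because the permutations are invertible, varying $(P,\ket{v}\ket{w})$ gives a direct-sum decomposition of $\calh_{path}$ into such blocks.

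Second, I analyze $H_N$ on a single block $\call_{P,v,w}$. In the orthonormal basis $\{\ket{c_j}\ket{\phi_j}\}_{j=0}^{L-1}$, the restriction of $H_{N,prop}$ is exactly $\tfrac{1}{2}\,\mathcal{L}_{P_L}$, where $\mathcal{L}_{P_L}$ is the combinatorial Laplacian of the path on $L$ vertices. The remaining pieces of $H_N$ ($h_{wf\text{-}cl}$, $h_{wf\text{-}co}$, $h_{cl}$, $h_{init}$, $h_{length}$, $h_V$, $h_{final}$) are non-negative diagonal matrices in the standard basis, and since $c_{L-1}$ is illegal, $h_{cl}$ contributes at least $+1$ to the diagonal entry at $\ket{c_{L-1}}\ket{\phi_{L-1}}$. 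Therefore, as self-adjoint operators on $\call_{P,v,w}$,
\begin{equation*}
    H_N\big|_{\call_{P,v,w}} \;\succeq\; \tfrac{1}{2}\,\mathcal{L}_{P_L} + \ketbra{L-1}{L-1}.
\end{equation*}

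Finally, I invoke the eigenvalue bound of Section~\ref{sec:eigenvaluebounds}: the smallest eigenvalue of $\tfrac{1}{2}\,\mathcal{L}_{P_L} + \ketbra{L-1}{L-1}$ is at least $1 - \cos\!\bigl(\pi/(2L+1)\bigr)$, obtained by the standard reflection trick which relates this matrix to the Laplacian of a cycle of length $2L+1$ restricted to an appropriate symmetry sector. Since $1 - \cos\!\bigl(\pi/(2L+1)\bigr)$ is monotonically decreasing in $L$ and $L \leq p(N)$ on every block, it follows that
\begin{equation*}
    \lambda_{\min}\bigl(H_N|_{\calh_{path}}\bigr) \;\geq\; 1 - \cos\!\Bigl(\frac{\pi}{2p(N)+1}\Bigr),
\end{equation*}
proving the lemma. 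The main technical point is the eigenvalue estimate for the path-with-endpoint-penalty matrix, which is where Section~\ref{sec:eigenvaluebounds} does the work; once that is in hand, the rest of the argument is a routine application of the invariant-subspace structure already provided by the clock and propagation construction.
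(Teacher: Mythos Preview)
Your proposal is correct and follows essentially the same approach as the paper: both decompose $\calh_{path}$ into invariant blocks indexed by a path of clock configurations together with an initial computation state, identify the restriction of $H_{N,prop}$ on each block as the path propagation matrix, add the $+1$ endpoint penalty from $h_{cl}$ at the illegal terminal configuration, and then invoke Lemma~\ref{lem:pathEig} plus monotonicity in the path length to get the bound with $p(N)$. Your citation of Lemma~\ref{lem:correctclock} for the path-length bound is in fact the correct one (the paper's proof cites Lemma~\ref{lem:clockGraph2}, which strictly speaking only covers Tracks~1 and~2).
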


\begin{proof}
Let $\ket{c_0}, \ldots, \ket{c_l}$ be the clock states in  a path in the clock configuration graph.
According to Lemma \ref{lem:clockGraph2}, $l \le p(N)-1$.
Consider a state $\ket{v} \ket{y}\ket{w}$ for the computation tracks from $C_N$. 
Starting in state $\ket{c_0} \ket{v} \ket{y}\ket{w}$, 
after $t$ applications of the transition function, the state of the
system will be $\ket{c_t} \ket{v'}\ket{y}\ket{w}$, where $\ket{v'}$ is determined by $(c_0, v, y, w, t)$.
We will call this state $\ket{\phi(c_0, v, y, w, t)}$, where $\phi(c_0, v, y, w, t) \in C_N$.
Since the operations performed on the computation tracks are always permutations,
the following set of states is ortho-normal:
$$S(c_0) = \{ \ket{c_t}\ket{\phi(c_0, v, y, w, t)}: ~\mbox{for}~\ket{v}\ket{y}\ket{w} \in C_N, ~0 \le t \le l \}.$$
The propagation term $H_{N, prop}$ is block diagonal on the space spanned by this set. Each block is specified by a choice of $\ket{v} \ket{y}\ket{w}$. 
If we express $H_{N, prop}$ in the standard basis, then each block is 
$l \times l$ and is the Laplacian of a path of length $l$.

The set $S(c_0)$ is the only set of states in the basis of $\calh_{wf}$
that has clock states $\ket{c_0}, \ldots, \ket{c_l}$.
Since the terms $H_{N, init} + H_{N, length} + H_{N,V}$ are diagonal in the standard basis, they
are all closed on the span of $S(c_0)$. Since these terms are also positive semi-definite,
we need only lower bound the smallest eigenvalue of $H_{N, prop} + H_{N, cl}$ on this space.
According to Lemma \ref{lem:clockGraph2}, the final clock configuration in the path
is illegal, which means that
the upper-left element of the matrix will have
an additional $+1$ from $H_{N, cl}$.
According to Lemma \ref{lem:pathEig}, the smallest eigenvalue in this space
is at least $(1 - \cos(\pi/(2p(N)+1)))$.
\end{proof}

We can now focus on the space of correct clock states. 
According to Corollary \ref{cor:clockCycle}, if the system starts out in state
$\ket{c_{T, s, 0}} \ket{\phi}$, where $\phi \in C_N$,
then $p(N)$ steps later, the clock configuration will be $\ket{c_{T, s+1, 0}}$ and the computation
tracks will have returned to state $\ket{\phi}$.
Therefore, if we fix the computation state to be some $\ket{v}\ket{y}\ket{w} \in C_N$ at time $(0,0)$,
 the state of the computation tracks at some time $(s,t)$, depends only on $t$ and not on $s$ or $T$.
We will call this state $\ket{\phi(v, y, w, t)}$. 
Since the operations performed on the computation tracks are always permutations,
the following set of states is orth-onormal:
\begin{align*}
S_{cyc} =  \{ \ket{c_{T,s,t}}\ket{\phi(v, y, w, t)}: ~~\mbox{for} & ~~\ket{v}\ket{y}\ket{w} \in C_N, \\
&~~T \in \{0,\ldots,N-3\},~~ s \in \{0, \ldots, 2T\},~~ t \in \{0, \ldots, p(N)-1\}\}.
\end{align*}
Note that since the verifier $V$ is classical, $\ket{\phi(v, y, w, t)}$ is also a classical standard basis state.
The matrix for $H_{N, prop}$ expressed in the basis $S$ is block diagonal. Each block is
specified by a value for $T$ and starting configuration $\ket{v}\ket{y}\ket{w}$ for the 
computation tracks. The block has dimension $(2T+1)p(N)$ and is $1/2$ times the Laplacian of a cycle of length $(2T+1)p(N)$. All other terms are diagonal in this basis.

We will refer to the correct starting configuration for Tracks $4$ and $5$ as
$\ket{\vinit}$. In this correct starting configuration, the contents of Track $5$ are
$\leftBr ~1~\#^* \rightBr$, the state on Track $4$ is $q_0$, and the head
is at the same location as the $1$ on Track $5$.

There is a correct $x$ for the length $N$ of the chain. This is the string that is written on the computation track after $M_{BC}$ runs for $N-2$ steps starting in $\ket{\vinit}$.
The string $y$  is used as the guesses for the responses to the oracle calls. Thus, once $x$ and $y$ are fixed, the inputs to the oracles
$x_1, \ldots, x_m$ are determined. In addition, the function $T(x,y)$ is also determined.
Since the state of the witness track (including $\ket{y}$) does not change over time,
the values of $x_1, \ldots, x_m$ and $T(x,y)$ are well-defined for a block.
The verifier $V$ will be run on
 the oracle queries for the cases in which the string $y$ guesses that the query
string is in $L$.
Define the set
$$Y_{rej} = \{ y \mid ~\mbox{for some}~i\in [m], ~y_i = 1~\mbox{and}~x_i \not\in L\}.$$
In order words, $Y_{rej}$ is the set of oracle guesses  for which there is at least one oracle call
in which the guess is {\em yes} but the correct answer is {\em no}.

We will now partition $S_{cyc}$ into four different sets and consider the subspace spanned by each set separately. The states from $S_{cyc}$ will be put into one of the four sets in blocks. That is, each block is completely
contained in one of the four sets.
\begin{description}
    \item $S_1$ consists of the states of the form $\ket{c_{T,s,t}}\ket{\phi(v, y, w, t)}$ in which
     $\ket{v} \neq \ket{\vinit}$. These are the states from the blocks that do not start in the correct initial configuration.
     \item $S_2$ consists of the states of the form $\ket{c_{T,s,t}}\ket{\phi(p, y, w, t)}$, where  $\ket{v} =\ket{\vinit}$, but $T(x, y)$ is not equal to the length of Track $3$ timer.
     \item $S_3$ consists of the states of the form $\ket{c_{T,s,t}}\ket{\phi(v, y, w, t)}$ where  $\ket{v} =\ket{\vinit}$, $T(x, y)$ is  equal to the length of the Track $3$ timer,  and $y \in Y_{rej}$.
     \item $S_4$ consists of the states of the form $\ket{c_{T,s,t}}\ket{\phi(v, y, w, t)}$ where  $\ket{v} =\ket{\vinit}$, $T(x, y)$ is  equal to the length of the Track $3$ timer,  and $y \not\in Y_{rej}$.
\end{description}
$H_{N, prop}$ is closed on each $S_i$ because the states from each block are completely contained in $S_i$ or are disjoint from $S_i$. All of the other Hamiltonian terms besides $H_{N, prop}$  are all diagonal in the standard basis and are therefore diagonal on the basis $S_{cyc}$.

If $N = N(x)$ for some $x$, then
staring with the correct input configuration $\ket{\vinit}\ket{y}\ket{w}$, if we run the process
$M_{BC}$ for $N-2$ steps followed by $M_{check}$ for $N-2$ steps, the contents of the work track
should be {\sc Out}$(x, yw, M, V)$, the same tape contents as $M_{TV}$ on input
$(x, yw)$ when it halts (See Figure \ref{fig:MTVpseudo}).
Note that the second input parameter to $M_{TV}$ is $\myw$,  is the entire binary string on Track $6$, which is the concatenation of $y$ and $w$.
In order for the correct output to be produced, $M_{check}$ needs to hit a final state within $N-2$
steps and the computation needs to stay within the finite bounds of $C_N$.
Lemmas \ref{lem:existsMTV} and \ref{lem:gentv} establish that there is an $M_{TV}$ (used to form $M_{check}$)
so that the correct output is in fact produced for most $x$. Throughout this section,
the lemmas will assume that we the input $x$ is {\em good} in this sense.

\begin{lemma}
\label{lem:lb-s12}
{\bf [Eliminating States from $S_1$, $S_2$, and $S_3$]}
Suppose that 
the process $(M_{BC})^{N-2}$ followed
by $(M_{check})^{N-2}$ starting in configuration $\ket{\vinit}\ket{y}\ket{w}$
produces {\sc Out}$(x, yw, M, V)$ on the work track.
Then
the smallest eigenvalue of  $H$ in the span of $S_1 \cup S_2 \cup S_3$, is at least
$(1/8)(1 - \cos(\pi/(2p(N)+1)))$.
\end{lemma}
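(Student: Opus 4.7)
The plan is to exploit the block-diagonal structure of $H$ restricted to the span of $S_{cyc}$. Each block is indexed by a triple $(T, v, w)$ and consists of the states $\ket{c_{T,s,t}}\ket{\phi(v,w,t)}$; on such a block $H_{N,prop}$ acts as $(1/2)$ times the Laplacian of a cycle of length $L = (2T+1) p(N)$, while all remaining Hamiltonian terms are diagonal in this basis. Since the states in $S_1 \cup S_2 \cup S_3$ form complete blocks, it suffices to analyze each block independently and show that at least one of the diagonal penalty terms ($h_{init}$, $h_{length}$, or $h_V$) contributes a $+1$ at a regularly-spaced position along the cycle.

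For each of the three sets I would identify the responsible penalty. If the block lies in $S_1$, then in each Track $1$-$2$ cycle, when the $\arrLeight$ pointer sweeps across the chain during the $8$-segment, the computation tracks have returned (by Corollary \ref{cor:clockCycle}) to their original incorrect state $\ket{v} \neq \ket{\vinit}$, so at least one of the Type I penalties in $h_{init}$ fires. If the block lies in $S_2$, then $\ket{v} = \ket{\vinit}$, so by Lemma \ref{lem:gentv} the work tape holds {\sc Out}$(x,w,M,V)$ at the end of the $3$-segment, whose prefix of $\sigma_X$ symbols has length exactly $T(x,y)$; since this differs from the Track $3$ timer length, the $\arrLfour$ sweep of the $4$-segment encounters a column where Track $5$ carries a symbol in $\{\sigma_A, \sigma_R, \sigma_X\}$ misaligned with Track $3$, triggering $h_{length}$. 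Finally, if the block lies in $S_3$, then again Lemma \ref{lem:gentv} guarantees that the leftmost symbol on Track $5$ is $\sigma_R$ when the $\arrLfour$ reaches the left end, firing $h_V$. In all three cases the penalty occurs at one fixed time-index $t^*$ within each of the $2T+1$ iterations of the Track $1$-$2$ cycle.

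The resulting block matrix is thus of the form $(1/2)\Delta_L + D$, where $\Delta_L$ is the Laplacian of a cycle of length $L$ and $D$ is diagonal, nonnegative, and has a $+1$ in at least one position in every window of $p(N)$ consecutive positions along the diagonal (the harmless extra $+1/2$ contributions from $h_{final}$ only help). Invoking the periodic-cost eigenvalue bound to be proved in Section \ref{sec:eigenvaluebounds}, the smallest eigenvalue of any such matrix is at least $\frac{1}{8}(1 - \cos(\pi/(2p(N)+1)))$, as required. The main obstacle is the argument for $S_2$ and $S_3$: one must carefully invoke Lemma \ref{lem:gentv} to ensure that the computation has actually produced the correct output {\sc Out}$(x,w,M,V)$ on the work tape by the clock step at which the $4$-pointer performs its check, so that the check fires deterministically; the finitely many exceptional $x$ for which $M_{check}$ fails to halt within $N-2$ steps must be excluded from the reduction.
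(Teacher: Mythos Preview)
Your proposal is correct and follows essentially the same approach as the paper: decompose into blocks, identify for each of $S_1$, $S_2$, $S_3$ which of $h_{init}$, $h_{length}$, $h_V$ fires at a fixed time index $t^*$ in every Track $1$-$2$ cycle, and then apply the periodic-cost eigenvalue bound (Lemma~\ref{lem:periodicEig}) with $s = p(N)$. The paper phrases the $S_1$ argument slightly differently---arguing backwards from the state at time $(0,0)$ using that $\arrLeight$ leaves Tracks $4$ and $5$ untouched rather than forwards via Corollary~\ref{cor:clockCycle}---but the content is the same, and your observations that extra nonnegative diagonal terms only help and that the hypothesis of the lemma already guarantees the correct work-tape contents for the $S_2$, $S_3$ checks are exactly what the paper uses.
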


\begin{proof}
Since the terms are all closed on the spans of $S_1$, $S_2$, and $S_3$, we can lower bound each space separately. 
For a fixed $(T, x, y, w)$, consider the block defined by states of the form
$\ket{c_{T,s,t}}\ket{\phi(v, y, w, t)}$. If $\ket{v} \neq \ket{\vinit}$, then the configuration of Tracks $4$ and $5$ is
not equal to $\ket{\vinit}$ at time $(0,0)$. Since the $\arrLeight$ does not alter Tracks $4$ and $5$ as it sweeps left,
the state of Tracks $4$ and $5$ will not be equal to $\ket{\vinit}$
for the preceding $N-2$ time steps.  There will be at least
one point in time when the $\arrLeight$ is sweeping left that an illegal configuration from $H_{N, init}$ is hit. Call this time $t'$. Then for all $s$:
$$H_{N, init} \ket{c_{T,s,t'}}\ket{\phi(v, y, w, t')} = \ket{c_{T,s,t'}}\ket{\phi(v, y, w, t')}$$
Note that there may be more locations where an illegal configuration is reached but we need only consider one as the others can only increase the energy of a state.
The diagonal matrix corresponding to $H_{N,init}$ will have a 
%resulting from the $1$'s from those particular states at times $(s, t')$  is $0$ everywhere, except with a
$1$ at each location of the form
$t' + s \cdot p(N)$, where $s \in \{0, \ldots, 2T\}$. If we add this matrix to $H_{N, prop}$ which is
$1/2$ times the Laplacian of a cycle graph of length $(2T+1) \cdot p(N)$, then according
to Lemma \ref{lem:periodicEig},
the smallest eigenvalue of the sum of the matrices is at least $(1/8)(1 - \cos(\pi/(2p(N)+1)))$.
This reasoning can be applied to each block in the span of $S_1$ which gives a lower bound of
$(1/8)(1 - \cos(\pi/(2p(N)+1)))$ for any eigenvalue in this space.

A similar reasoning can be applied to the span of $S_2$ using the operator $H_{N, length}$.
If a block is inside $S_2$, then it is using the correct initial configuration $\ket{\vinit}$
which produces the correct input $x$. Therefore the string $y$ from the witness track and $x$
determine the correct value $T(x,y)$ for the length of the clock on Track $3$. 
This correct value is written in unary on the computation track at the end of the computation $M_{TV}$, which by assumption
of the lemma corresponds to the contents of the work track after $(M_{BC})^{N-2}$ followed
by $(M_{check})^{N-2}$  applied to $\ket{\vinit}\ket{y}\ket{w}$. The contents of the track are
$(\sigma_A+\sigma_R)(\sigma_X)^{T(x,y)}$. Since the first $\sigma_A$ or $\sigma_R$ symbol counts as a unary digit, the value encoded is $T(x,y)+1$. This is checked against the length of the prefix
$\DblankLthree^j \arrR \blankLthree^{T-j}$ or 
$ \DblankLthree^{T-j} \arrL \blankLthree^{j} $
on Track $3$, which is equal to the timer length $T$ plus $1$. (The plus $1$ comes from the pointer on Track $3$.)
The Track $1$ pointer $\arrLfour$
will reach an illegal configuration if and only if the Track $3$ clock does not have the correct length $T(x,y)$. 
As with the reasoning for $S_1$, there may be more than one such violation, but we need only consider one that
occurs regularly at times $(s, t')$ for some fixed $t'$ and $s \in \{0, \ldots, 2T\}$.
Since $S_2$ consists of those blocks in which the Track $3$ timer has the incorrect length, the smallest
eigenvalue of $H_{N, prop} + H_{N, length}$ in the span of $S_2$ will be at least $(1/8)(1 - \cos(\pi/(2p(N)+1)))$.

A similar reasoning can be applied to the span of $S_3$ using the operator $H_{N, V}$.
If a block is inside $S_3$, then it is using the correct initial configuration $\ket{\vinit}$
which produces the correct input $x$. Therefore the string $y$ from the witness track and $x$
determine the oracle queries $x_1, \ldots, x_m$. If for any $i \in \{1, \ldots, m\}$,
$y_i = 1$ and $x_i \not\in L$, the verifier will be run on input $x_i$ and will reject,
regardless of the string used as witness.
By assumption
of the lemma, the contents of the work track after the process $(M_{BC})^{N-2}$ followed
by $(M_{check})^{N-2}$ starting in configuration $\ket{\vinit}\ket{y}\ket{w}$ will be the correct output of $M_{TV}$
on input $(x, yw)$.
This means that  there will be an $\sigma_R$ symbol written in the first
location of the work tape  after Segment $4$.
In this case,  Track $1$ pointer $\arrLfour$
will reach an illegal configuration:
$$\fourcells{\Dblank}{\vdash}{\arrLfour}{\sigma_R}$$
Since $S_3$ consists of those blocks in which $y \in Y_{rej}$, the smallest
eigenvalue of $H_{N, prop} + H_{N, V}$ in the span of $S_3$ will be at least $(1/8)(1 - \cos(\pi/(2p(N)+1)))$.
\end{proof}

Finally, we arrive at the space spanned by $S_4$. 
Define
$$S_{y, w} = \{\ket{c_{T(x,y), s, t}} \ket{\phi(\vinit, y, w, t)}: 
0 \le s \le 2T(x,y), 0 \le t < p(N) \}$$
$S_{y,w}$ is the set of standard basis states for a single block,
where the initial state is $\ket{\vinit}\ket{y}\ket{w}$ and the Track $3$ timer
is the correct $T(x,y)$.
Define $S_y$ to be the union of $S_{y,w}$ over all possible $w$.
Note that $S_4$ is the union of all $S_{y}$, where $y \not\in Y_{rej}$.
All of the terms in $H$ are closed on the span of each $S_{y,w}$.
\begin{lemma}
\label{lem:Exy}
{\bf [Smallest Eigenvalues for Blocks in $S_4$]}
Suppose that 
the process $(M_{BC})^{N-2}$ followed
by $(M_{check})^{N-2}$ starting in configuration $\ket{\vinit}\ket{y}\ket{w}$
produces {\sc Out}$(x, yw, M, V)$ on the work track.
If $y \not\in Y_{rej}$, then the minimum eigenvalue of the space spanned by $S_y$ is exactly
$$    E(x,y) =  \left( 1 - \cos\left( \frac{ \pi}{L+1} \right)\right),$$
where $L = (2 T(x,y)+1) \cdot p(N)$.
\end{lemma}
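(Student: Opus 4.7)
The plan is to exploit the block-diagonal structure of $H$ on $\mathrm{span}(S_y) = \bigcup_u \mathrm{span}(S_{y,u})$. Since $H$ is closed on each $\mathrm{span}(S_{y,u})$, it suffices to compute $\lambda_{\min}$ of $H$ restricted to each block and take the minimum. I will identify a choice $u^*$ for which this restricted matrix is exactly the one analyzed in Lemma \ref{lem:cycleEig} --- the $\tfrac{1}{2}$-scaled Laplacian of a cycle of length $L$ with two additional consecutive $+\tfrac{1}{2}$'s on the diagonal --- and then observe that every other $u$ only contributes a nonnegative diagonal perturbation, so it cannot lower the eigenvalue.

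To carry this out, I first expand $H$ in the orthonormal basis $S_{y,u}$. By the setup preceding the lemma, $H_{N,prop}$ restricted to this block is precisely $\tfrac{1}{2}$ times the Laplacian of a cycle of length $L = (2T(x,y){+}1)\, p(N)$; every other term of $H$ is diagonal in this basis. I then check each diagonal summand to verify that its $+1$ entries never fire on any element of $S_{y,u}$. The well-formedness terms $h_{wf-cl}$, $h_{wf-co}$ and the clock-pattern penalties $h_{cl}$ vanish because every $\ket{c_{T(x,y),s,t}}\ket{\phi(\vinit, y, u, t)}$ has a well-formed correct clock configuration and a well-formed computation configuration. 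The initial-state penalty $h_{init}$ vanishes because Lemma \ref{lem:clockCycle} forces the computation tracks to return to $\ket{\vinit}$ at the start of every Track 1-2 cycle --- precisely when the $\arrLeight$ check runs. The timer-length penalty $h_{length}$ vanishes because, by the hypothesis on the output of $M_{check}^{N-2}\circ M_{BC}^{N-2}$, the work tape contains $(\sigma_X)^{T(x,y)}$ right after the $\sigma_A/\sigma_R$ marker, and this length matches the length of the Track 3 timer in $S_{y,u}$. Finally $h_{final}$ contributes exactly two $+\tfrac{1}{2}$ entries on consecutive diagonal positions, independent of $u$.

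The only term whose contribution depends on $u$ is $h_V$, which adds a $+1$ whenever the $\arrLfour$ vertically scans a left-most $\sigma_R$ on Track 5. I choose $u^*$ so that for every index $i$ with $y_i=1$, the $i$-th witness segment of $u^*$ is an accepting witness for $V$ on input $x_i$. Such a $u^*$ exists because $y\notin Y_{rej}$ implies $x_i\in L$ for every $y_i=1$ query, and by the padding Lemma \ref{lem:pad} together with Lemma \ref{lem:Nlowerbound} the witness track is long enough to concatenate $y$ with all $m$ verifier witnesses. For this $u^*$, the work tape output starts with $\sigma_A$ rather than $\sigma_R$, so $h_V$ contributes zero and the restricted matrix is exactly the one described in the first paragraph. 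Applying Lemma \ref{lem:cycleEig} yields $\lambda_{\min} = 1-\cos(\pi/(L+1)) = E(x,y)$. For any other $u$, $h_V$ contributes a nonnegative diagonal perturbation, which can only raise $\lambda_{\min}$ of the block, so the overall minimum over $u$ is attained at $u^*$ and equals $E(x,y)$ exactly.

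The main obstacle will be the careful bookkeeping: correctly verifying that none of the many diagonal penalty terms accidentally fire on the correct computation, and confirming that $h_{final}$ contributes precisely the two consecutive diagonal $+\tfrac{1}{2}$'s that match the hypothesis of Lemma \ref{lem:cycleEig}. The sharp eigenvalue formula itself is deferred entirely to that lemma.
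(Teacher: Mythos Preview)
Your proposal is correct and follows essentially the same approach as the paper: decompose $\mathrm{span}(S_y)$ into blocks $S_{y,u}$, observe that on each block only $H_{N,prop}$, $H_{N,V}$, and $H_{N,final}$ act nontrivially, choose $u^*$ so that all verifier runs accept (possible precisely because $y\notin Y_{rej}$) to make $H_{N,V}$ vanish, and then invoke Lemma~\ref{lem:cycleEig}. Your write-up is in fact slightly more explicit than the paper's in checking term-by-term why $h_{wf-cl}$, $h_{wf-co}$, $h_{cl}$, $h_{init}$, and $h_{length}$ vanish on $S_{y,u}$; the paper absorbs this into the definition of $S_4$ and simply states that those terms are zero there.
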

\begin{proof}
All of the terms in $H$ except for
$H_{N, prop}$, $H_{N, V}$, and $H_{N, final}$ are $0$ on the entire space spanned by $S_4$
and are therefore $0$ on any $S_y$ such that $y \not\in Y_{rej}$.
Consider $H_{N, prop}$ restricted to the span of a particular
$S_{y,w}$ and expressed in the $S_{y,w}$ basis.
This matrix is 
$1/2$ times the Laplacian of a cycle of length $(2 T(x,y)+3) \cdot p(N)$.
The term $H_{N, final}$ adds a $+1/2$ to two consecutive locations along the diagonal. Since $H_{N, V}$
is semi-positive definite,  by Lemma \ref{lem:cycleEig}, the smallest eigenvalue for this block
is at least $E(x,y)$. 

$V$ is only run on those $i$ such that $y_i = 1$. If $y \not\in Y_{rej}$,
we know that each such $x_i$ is in $L$ which means that there is a witness which will cause
$V$ to accept. Therefore, there is a $w$ such that $H_{N, V}$ applied to every state in
$S_{y,w}$ is $0$. The lowest eigenvalue of $H$ restricted to the space spanned by
this $S_{y,w}$ is exactly the lowest eigenvalue of $H_{N, prop} + H_{N, final}$
restricted to this space, which by Lemma \ref{lem:cycleEig}, is $E(x,y)$.
\end{proof}

We are finally ready to establish that the smallest eigenvalue occurs within blocks
$S_y$, where $y$ is the set of correct oracle responses, which we refer to as $\tilde{y}$.

\begin{lemma}
\label{lem:groundenergy}
{\bf [The Ground Energy Corresponds to Blocks with Correct Oracle Guesses]}
Suppose that 
the process $(M_{BC})^{N-2}$ followed
by $(M_{check})^{N-2}$ starting in configuration $\ket{\vinit}\ket{y}\ket{w}$
produces {\sc Out}$(x, yw, M, V)$ on the work track.
The smallest eigenvalue for $H_N |_{\calh_{br}}$  is $E(x,\tilde{y})$, where
$\tilde{y}$ are the correct oracle responses for input $x$.
\end{lemma}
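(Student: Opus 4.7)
The plan is to assemble the three preceding lemmas and then carry out one combinatorial comparison on the weighting function $T(x,y)$. By Lemma \ref{lem:lbpath}, any state supported on $\calh_{path}$ has energy at least $1 - \cos(\pi/(2p(N)+1))$, and by Lemma \ref{lem:lb-s12} every state supported on $\mathrm{span}(S_1 \cup S_2 \cup S_3)$ has energy at least $(1/8)(1 - \cos(\pi/(2p(N)+1)))$. Because every block from $S_{cyc}$ lies entirely in one of $S_1,\ldots,S_4$ and the remaining terms in $H_N$ are all diagonal in the $S_{cyc}$ basis, the whole Hamiltonian is block-diagonal with respect to $\calh_{path}$, $\mathrm{span}(S_1\cup S_2\cup S_3)$, and $\mathrm{span}(S_4)$. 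It therefore suffices to compute the minimum eigenvalue inside $\mathrm{span}(S_4)$ and then verify it falls below the two bounds just mentioned.

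Inside $\mathrm{span}(S_4)$, Lemma \ref{lem:Exy} shows that for each $y \notin Y_{rej}$ the minimum eigenvalue on $\mathrm{span}(S_y)$ equals $E(x,y) = 1 - \cos(\pi/(L(y)+1))$, with $L(y) = (2T(x,y)+1)p(N)$. Since $E(x,y)$ is strictly decreasing in $T(x,y)$, the ground energy within $S_4$ is $\min_{y \notin Y_{rej}} E(x,y) = E(x, y^\star)$ where $y^\star = \arg\max_{y \notin Y_{rej}} T(x,y)$. The key combinatorial step is to show $y^\star = \tilde{y}$.

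To see this, suppose $y \notin Y_{rej}$ and $y \neq \tilde{y}$, and let $i$ be the first coordinate on which they disagree. Since the oracle queries $x_1,\ldots,x_i$ generated by simulating $M$ depend only on $y_1,\ldots,y_{i-1}$, the first $i$ queries match under both guess strings. The case $y_i = 1, \tilde{y}_i = 0$ is impossible because it would put $y$ in $Y_{rej}$ (as $\tilde{y}_i = 0$ means $x_i \notin L$), so we must have $\tilde{y}_i = 1$ and $y_i = 0$. Plugging into $T(x,y) = f(x,y) + 2^m[4^{m+1} + \sum_j y_j 4^{m-j+1}]$, the position $i$ discrepancy contributes at least $2^m \cdot 4^{m-i+1}$ in favor of $\tilde{y}$, the tail positions $j > i$ change $T$ by at most $2^m \sum_{j>i} 4^{m-j+1} < 2^m \cdot 4^{m-i+1}/3$, and $|f(x,y) - f(x,\tilde{y})| \leq 2^m$. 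The leading term dominates, so $T(x,\tilde{y}) > T(x,y)$, and hence $E(x,\tilde{y}) < E(x,y)$.

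Finally, since $T(x,\tilde{y}) \geq 2^m \cdot 4^{m+1}$ is very large, $L(\tilde{y}) \gg 2p(N)+1$, which makes $E(x,\tilde{y})$ much smaller than both $1 - \cos(\pi/(2p(N)+1))$ and $(1/8)(1 - \cos(\pi/(2p(N)+1)))$ via the bound $1-\cos\theta \approx \theta^2/2$ for small $\theta$. Therefore the ground energy of $H_N|_{\calh_{br}}$ is exactly $E(x,\tilde{y})$. The only delicate step is the combinatorial argument showing $\tilde{y}$ maximizes $T$; this is precisely why the weights $2^m \cdot 4^{m-j+1}$ are chosen as a super-geometric series, so that an incorrect \emph{no}-guess at any position is never compensated by anything occurring at later positions or by the $f(x,y)$ term.
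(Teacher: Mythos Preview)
Your plan is essentially the same as the paper's proof: reduce to $S_4$ via the earlier lemmas, then show $\tilde{y}$ maximizes $T(x,\cdot)$ over $y\notin Y_{rej}$ by the first-disagreement argument, and finally compare $E(x,\tilde{y})$ to the periodic-cost lower bound. Your combinatorial bound (geometric tail $< 4^{m-i+1}/3$ plus $|f|\le 2^m$) is a cleaner packaging of the same inequality the paper writes out explicitly, and your use of $1-\cos\theta\approx\theta^2/2$ plays the role of the paper's Claim~\ref{cl:boundcompare}.

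One small gap: your block decomposition of $\calh_{br}$ into $\calh_{path}$, $\mathrm{span}(S_1\cup S_2\cup S_3)$, and $\mathrm{span}(S_4)$ omits the bracketed states that are not well-formed (either the clock or the computation tracks violate the regular-expression constraints). The paper handles these in one line---they incur energy at least $1$ from $H_{N,\mathrm{wf\text{-}cl}}+H_{N,\mathrm{wf\text{-}co}}$---but you should mention this subspace explicitly, since $\calh_{path}$ and $S_{cyc}$ are both defined only within $\calh_{wf}$.
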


\begin{proof}
Recall that throughout the construction, we analyzed the smallest eigenvalue for $H_N$
restricted to the space spanned by all bracketed states.

Note that since $\tilde{y} \not\in Y_{rej}$,
according to Lemma \ref{lem:Exy}, 
there is an eigenvector whose eigenvalue is
$E(x,\tilde{y})$, so we only need to show that every other eigenvalue is at least $E(x,\tilde{y})$. 

The smallest eigenvalue of any state that is perpendicular to $\calh_{wf}$
is at least $1$
from $H_{N, wf-cl}$ and $H_{N, wf-co}$. 
From Lemma \ref{lem:pathEig}, the smallest eigenvalue of any state in $\calh_{path}$
is at least $1 - \cos(\pi/(2p(N)+1))$ which according to Claim
\ref{cl:boundcompare} will always be larger than $E(x,y)$ for any
$y$.
By Lemma \ref{lem:lb-s12}, the smallest eigenvalue of any state
in the span of $S_1 \cup S_2 \cup S_3$ is at least $(1/8)(1 - \cos(\pi/(2p(N)+1)))$ which according to Claim
\ref{cl:boundcompare} will always be larger than $E(x,y)$ for any
$y$.
Therefore,
the smallest eigenvalue for $H$ will correspond to an eigenvector in the span of $S_4$.

We will prove  that if $y \neq \tilde{y}$, then the smallest eigenvalue of $H$ restricted
to the span of $S_y$ is greater than $E(x, \tilde{y})$.
Suppose that the first $k$ bits of $y$ are the same as the first $k$ bits of $\tilde{y}$,
but $y_{k+1} \neq \tilde{y}_{k+1}$.
Note that the next oracle query $x_{k+1}$ will be the same for  $y$ and $\tilde{y}$.

If $x_{k+1} \not\in L$, then $\tilde{y}_{k+1} = 0$. If $y$ matches $\tilde{y}$ in the first $k$ bits and has $y_{k+1} = 1$, then $y \in Y_{rej}$ which means that $S_y$ is contained
in $S_3$ and the smallest eigenvalue of $H$ restricted to this space is larger than $E(x,y)$ for any $y$.

Now suppose that $x_{k+1} \in L$. Then $\tilde{y}_{k+1} = 1$. Consider a string $y$ that matches $\tilde{y}$ in the first $k$ bits and has $y_{k+1} = 0$. 
The value of $T(x,y)$ for any such $y$ must be at most:
$$T(x,y) = 2^{m} + 2^m \left[ 4^{m+1} + \sum_{j=k+2}^m  4^{m-j+1} + \sum_{j=1}^k \tilde{y}_j \cdot 4^{m-j+1} \right]$$
Meanwhile, the value of $T(x,\tilde{y})$ will be at least
$$T(x,\tilde{y}) =  2^m \left[ 4^{m+1} + 4^{m-k} + \sum_{j=1}^k \tilde{y}_j \cdot 4^{m-j+1} \right]$$
The lower bound for $T(x, \tilde{y})$ is larger than any $T(x,y)$, where $y$ matches $\tilde{y}$ in the first
$k$ bits and does not match $\tilde{y}$ on bit $k+1$. Therefore $E(x,y) > E(x, \tilde{y})$
and the smallest eigenvalue for $H$ restricted to $S_y$ is greater than $E(x, \tilde{y})$.
\end{proof}

\subsubsection{Bracketed States and the Final Reduction}
\label{sec:bracket}

Throughout the construction, we have analyzed the Hamiltonian restricted to the space
spanned by all bracketed states ($\calh_{br}$). 
These states have the leftmost particle in state $\leftBr$,
right rightmost particle in state $\rightBr$ and none of the particles in between in state
$\leftBr$ or $\rightBr$. We now add an additional term to $h$ to ensure that the ground state is in
$\calh_{br}$. 
We will denote a basis of the $d$-dimensional particles by $\ket{\leftBr}$, $\ket{\rightBr}$,
and $\ket{1}, \ldots, \ket{d-2}$. Let
\begin{align*}
    \bar{h} = h &  + \sum_{j-1}^{d-2} \frac 1 4 
\left( \ketbra{j}{j} \otimes I +  I \otimes \ketbra{j}{j} \right) + I \otimes \ketbra{\leftBr}{\leftBr} + \ketbra{\rightBr}{\rightBr} \otimes I
\end{align*}
Let $\bar{H}_N$ denote the Hamiltonian resulting from applying $\bar{h}$ to each pair of neighboring particles in a chain of length $N$.

\begin{lemma}
\label{lem:bracket}
$\lambda_0(\bar{H}_N) = \lambda_0( H_N |_{\calh_{br}} ) + \frac {N-2} 2$
\end{lemma}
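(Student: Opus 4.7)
The plan is to show that $\bar{H}_N$ is block-diagonal with respect to the decomposition $\calh = \calh_{br} \oplus \calh_{br}^{\perp}$, compute its spectrum on each block, and verify that the smaller ground energy comes from the bracketed block. First I would observe that the three families of added terms, $\sum_j \tfrac{1}{4}(\ketbra{j}{j}\otimes I + I\otimes\ketbra{j}{j})$, $I\otimes\ketbra{\leftBr}{\leftBr}$, and $\ketbra{\rightBr}{\rightBr}\otimes I$, are all diagonal in the particle standard basis, so $H_{extra} := \bar{H}_N - H_N$ is diagonal with eigenvalue $E_{extra}(C)$ on each basis state $\ket{C}$. Since $H_N$ is invariant on $\calh_{br}$ by the construction, Hermiticity gives invariance on $\calh_{br}^{\perp}$ as well; and $H_{extra}$, being diagonal in a basis respecting this decomposition, also preserves both blocks. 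Therefore $\lambda_0(\bar{H}_N) = \min\{\lambda_0(\bar{H}_N|_{\calh_{br}}),\lambda_0(\bar{H}_N|_{\calh_{br}^{\perp}})\}$.

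Next I would compute $E_{extra}(C)$ position by position. For a bracketed state, the only nonzero contributions come from the non-bracket interior particles, and each such particle $i$ (with $2 \le i \le N-1$) contributes $\tfrac{1}{4}$ from the pair $(i-1,i)$ and $\tfrac{1}{4}$ from the pair $(i,i+1)$, for a total of $\tfrac{1}{2}$. Summing gives $E_{extra}(C) = \tfrac{N-2}{2}$ exactly, so $H_{extra}|_{\calh_{br}} = \tfrac{N-2}{2}\,I$ and hence $\lambda_0(\bar{H}_N|_{\calh_{br}}) = \lambda_0(H_N|_{\calh_{br}}) + \tfrac{N-2}{2}$. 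For any non-bracketed standard basis state, a case analysis by particle position shows $E_{extra}(C) \ge \tfrac{N-2}{2} + \tfrac{1}{4}$: an interior $\leftBr$ or $\rightBr$ costs $+1$ instead of $+\tfrac{1}{2}$ (a jump of $\tfrac{1}{2}$), a boundary particle in a non-bracket state costs $\tfrac{1}{4}$ instead of $0$, and a boundary particle of the wrong bracket type costs $1$ instead of $0$; the minimum increment over a single-position deviation is $\tfrac{1}{4}$.

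Since $H_{extra}$ restricted to $\calh_{br}^{\perp}$ is diagonal in a basis of non-bracketed configurations, we get $H_{extra}|_{\calh_{br}^{\perp}} \succeq \bigl(\tfrac{N-2}{2}+\tfrac{1}{4}\bigr)I$. Combined with $H_N \succeq 0$ (each of the constituent terms is either a projector or of the positive semi-definite form $\tfrac{1}{2}(\ket{ab}-\ket{cd})(\bra{ab}-\bra{cd})$), this yields $\lambda_0(\bar{H}_N|_{\calh_{br}^{\perp}}) \ge \tfrac{N-2}{2}+\tfrac{1}{4}$. Finally, Lemma \ref{lem:groundenergy} gives $\lambda_0(H_N|_{\calh_{br}}) = 1 - \cos(\pi/(L+1))$ for $L = (2T(x,\tilde{y})+1)p(N)$, which is bounded by $1/4$ for all but the smallest $N$ (in particular whenever $L \ge 5$, which is guaranteed by $p(N) = 4(N-2)(2N-3)$ for $N \ge 3$). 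Thus $\lambda_0(\bar{H}_N|_{\calh_{br}}) < \lambda_0(\bar{H}_N|_{\calh_{br}^{\perp}})$, so the minimum is attained on $\calh_{br}$, giving the claimed equality.

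The only nontrivial step is the position-by-position bookkeeping in the second paragraph to confirm that the bracketed configuration is the strict minimizer of $E_{extra}$ with a gap of at least $\tfrac{1}{4}$; everything else reduces to the block-diagonal structure and the already-established bound on $\lambda_0(H_N|_{\calh_{br}})$.
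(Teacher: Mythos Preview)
Your proposal is correct and follows essentially the same approach as the paper's proof: both exploit the block-diagonal decomposition into $\calh_{br}$ and $\calh_{br}^{\perp}$, compute that the added diagonal terms contribute exactly $(N-2)/2$ on bracketed states, lower-bound the added terms by $(N-2)/2 + 1/4$ on non-bracketed states via a position-by-position analysis, and then use $H_N \ge 0$ together with the bound $\lambda_0(H_N|_{\calh_{br}}) < 1/4$ from Lemma~\ref{lem:groundenergy} to conclude. The paper phrases the non-bracketed lower bound as a ``replacement'' argument (each step toward bracketed lowers the extra energy by at least $1/4$), while you give the equivalent direct case analysis; these are the same bookkeeping.
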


\begin{proof}
The first two terms in $\bar{h}$ give a 1/4 penalty to each end particle if it is not in a bracket states and 1/2 penalty to each middle particle if it is not in a bracket state. The next two terms cause a penalty of $+1$ for any pair with a $\leftBr$ on the right and a $+1$ for any pair with a $\rightBr$ on the left.

Since $\bar{H}_N$ is closed on $\calh_{br}$ and $\calh_{br}^{\perp}$, we can consider each space separately. We will first consider the energy penalty from the additional terms in $\bar{h}$.
The bracketed states will have an additional energy of exactly $(N-2)/2$
from the additional terms, since they add  $1/2$ for every particle, except the two at the ends. Therefore, the lowest energy of any state in $\calh_{br}$ will be $(N-2)/2 + \lambda_0(H_N|_{\calh_{br}}) < (N-2)/2 + 1/4$. Note that the bound in Lemma \ref{lem:groundenergy},
on $\lambda_0(H_N|_{\calh_{br}})$ is less than $1/4$ for any $N \ge 4$.

Consider a state which is not bracketed. We will only consider the energy from the additional terms which will give a lower bound for the energy for the state. 
If the particle on the left is not in state $\leftBr$, then replacing it with
$\leftBr$ will cause the energy to go down by at least $1/4$. Similarly, If the particle on the right is not in state $\rightBr$, then replacing it with
$\rightBr$ will cause the energy to go down by at least $1/4$.
If there is a particle in a bracketed state in the middle, replacing it with a non-bracketed state will cause the  energy to go down by at least $1/2$. The process can be repeated until the state is bracketed at which point the energy from the additional terms will be exactly $(N-2)/2$. Therefore any non-bracketed configuration will have energy at least $(N-2)/2 + 1/4$.
\end{proof}

We are finally ready to put the pieces together to prove that {\sc Function}-TIH is hard for $\fpnexp$.

\begin{theorem}
\label{th:finitehardness}
{\bf [{\sc Function}-TIH is hard for $\fpnexp$]}
For any $f \in \fpnexp$, there is a Hamiltonian $h$  that operates on two $d$-dimensional particles. 
Let $H_N$ denote the Hamiltonian on a chain of $N$ $d$-dimensional particles resulting from applying $h$ to each neighboring pair in the chain. There is a polynomial time computable function $N(x)$ and polynomial $q = 32N^4 p(N)^4 = O(N^{12})$ such that,
the value of  $f(x)$ can be computed in  time polynomial in $|x|$,
given a value $E$ such $|E - \lambda_0 (H_N)| \le 1/q(N)$.
\end{theorem}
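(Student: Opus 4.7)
The plan is to take $h = \bar h$ as constructed in Section~\ref{sec:bracket}, and $N(x)$ to be the reduction defined in Definition~\ref{def:Nofx}. First I would invoke Lemma~\ref{lem:pad} to pad the input and thereby fix constants $c_1, c_2$ small enough so that $T(x,y) = O(8^{c_1 n}) = O(N^{3c_1})$ is bounded by a suitably small polynomial in $N = N(x)$; concretely, choosing $c_1$ small will make the cycle length $L = (2T(x,\tilde y)+1)\cdot p(N)$ at most $O(N^{3c_1+2})$. The set of ``bad'' inputs excluded by Lemmas~\ref{lem:existsMTV} and \ref{lem:gentv} is finite and can be handled by a hardwired lookup table without affecting polynomial-time complexity.

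Next, given the input $E$ with $|E-\lambda_0(\bar H_N)|\le 1/q(N)$, I would reduce to the bracketed subspace via Lemma~\ref{lem:bracket}: the shifted estimate
\[
\hat E \;=\; E \;-\; \frac{N-2}{2}
\]
satisfies $|\hat E - \lambda_0(H_N\!\mid_{\calh_{br}})| \le 1/q(N)$. Combining Lemma~\ref{lem:TMonestep} with Lemmas~\ref{lem:existsMTV} and \ref{lem:gentv}, the forward/backward cyclic computation driven by the $i$-pointers does produce $\mathrm{Out}(x,w,M,V)$ on the work track, so the hypothesis of Lemma~\ref{lem:groundenergy} is satisfied. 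That lemma then identifies $\lambda_0(H_N\!\mid_{\calh_{br}})$ exactly as
\[
E(x,\tilde y) \;=\; 1 - \cos\!\left(\frac{\pi}{L+1}\right), \qquad L = (2T(x,\tilde y)+1)\,p(N),
\]
where $\tilde y$ is the unique correct sequence of oracle responses on input $x$.

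The main obstacle is the final \emph{decoding} step: recovering the integer $L$ (and hence $T(x,\tilde y)$ and $f(x)$) from the approximation $\hat E$. The map $L \mapsto 1-\cos(\pi/(L+1))$ is strictly decreasing, and a mean value estimate gives
\[
\bigl(1-\cos(\tfrac{\pi}{L+1})\bigr)-\bigl(1-\cos(\tfrac{\pi}{L+2})\bigr) \;\ge\; \frac{c}{L^3}
\]
for an absolute constant $c>0$ and all sufficiently large $L$. Thus if $q(N) \ge 4 L^3/c$, the unique $L$ with $|\hat E - (1-\cos(\pi/(L+1)))| \le 1/q(N)$ is determined. Substituting the bound $L \le (2T+1)p(N) = O(N^{3c_1+2})$ and using the identity $p(N)=4(N-2)(2N-3)$, the choice $q(N) = 32 N^4 p(N)^4 = O(N^{12})$ dominates $L^3/c$ provided $c_1$ was chosen small enough (e.g.\ $c_1 \le 2/3$), which we arranged at the outset via Lemma~\ref{lem:pad}. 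Given $\hat E$, a polynomial-time algorithm can therefore locate $L$ by binary search over the (polynomially bounded) range of candidate values.

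Once $L$ is known, $T(x,\tilde y)$ is recovered as $T = (L/p(N) - 1)/2$. From the definition
\[
T(x,\tilde y) \;=\; f(x,\tilde y) + 2^m\!\left[4^{m+1} + \sum_{j=1}^m \tilde y_j\cdot 4^{m-j+1}\right]
\]
and the facts that $f(x,\tilde y)<2^m$ and the second summand is divisible by $2^m$, we extract $f(x) = f(x,\tilde y) = T(x,\tilde y) \bmod 2^m$. All arithmetic involved is polynomial in $|x|$ (equivalently in $\log N$), completing the reduction.
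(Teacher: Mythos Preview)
Your proposal is correct and follows essentially the same route as the paper's proof: assemble Lemmas~\ref{lem:existsMTV}, \ref{lem:gentv}, \ref{lem:TMonestep} to meet the hypothesis of Lemma~\ref{lem:groundenergy}, shift by $(N-2)/2$ via Lemma~\ref{lem:bracket}, and then invert $L\mapsto 1-\cos(\pi/(L+1))$ by a gap estimate and binary search. The one simplification you are missing is that you do not need to invoke padding to bound $T(x,y)$: since $T(x,y)$ is written in \emph{unary} on the work tape (so that $h_{length}$ can compare it against the Track~$3$ timer), the construction itself forces $T(x,y)\le N-2$, whence $L\le 2N\,p(N)$ directly; the paper then uses the cruder separation bound $1/L^4$ from Claim~\ref{cl:cosdiff} (your $c/L^3$ is sharper but unnecessary) to see that precision $1/(32N^4 p(N)^4)\le 1/(2L^4)$ suffices.
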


\begin{proof}
We have shown a construction for $h$ based on the Turing Machine $M$ that computes $f(x)$ and the TM $V$ that is the verifier for the oracle language $L$
in $\nexp$. 
The function $N$ is the function given in Definition \ref{def:Nofx}. 
Suppose that for some string $x$,
$N = N(x)$ and let $m$ be the number of oracle queries made by $M$ on input $x$.
Suppose further that for every $\myw \in \{0,1\}^{N-4}$,
the process $(M_{BC})^{N-2}$ followed
by $(M_{check})^{N-2}$ starting in configuration $\ket{\vinit}\ket{y}\ket{w}$
produces {\sc Out}$(x, \myw, M, V)$ on the work track,
where $yw = \myw$ and  $|y| = m$.
Then by Lemma 
\ref{lem:groundenergy},  
the ground energy of $H_N|_{\calh_{\calh_{br}}}$ is 
\begin{equation}
\label{eq:groundenergy}
    E(x, \tilde{y}) =  \left( 1 - \cos \left( \frac{ \pi}{L+1} \right) \right),
\end{equation}
where $L = (2T(x, \tilde{y}) + 1)\cdot p(N)$ and $\tilde{y}$ is the string denoting the correct oracle responses on input $x$. Lemma \ref{lem:bracket}
establishes that the state corresponding to the ground energy is in $\calh_{br}$. 

Lemmas \ref{lem:existsMTV} and \ref{lem:gentv} together establish that there is a Turing Machine $M_{TV}$
that results in Turing Machine $M_{check}$ such that
for all but a finite set of $x$'s, the conditions of 
Lemma \ref{lem:groundenergy} are met. Namely, it is in fact the case that
for $N = N(x)$ and for every $\myw \in \{0,1\}^{N-4}$,
the process $(M_{BC})^{N-2}$ followed
by $(M_{check})^{N-2}$ starting in configuration $\ket{\vinit}\ket{y}\ket{w}$
  produces {\sc Out}$(x, \myw, M, V)$ on the work track.
  
  Lemma \ref{lem:TMonestep} establishes that each right-to-left sweep of the pointer for Segments
  $1$, $2$, and $3$ executes one step of the TM corresponding to the operation $P$ applied to the computation tracks. The conditions of the lemma are satisfied since Lemma
  \ref{lem:gentv} allows us to assume that the head of the Turing Machine stays within the sequence of locations from $1$ through $N-3$.
  Segment $1$ and $2$ together apply $N-2$ steps of $M_{BC}$ and Segment $3$  applies $N-2$ steps of $M_{check}$. Therefore the propagation terms of the Hamiltonian implement the TM computations on the computation tracks in each iteration of the clock for Tracks $1$ and $2$ and the ground energy for $H_N$ is equal to the expression given in Equation (\ref{eq:groundenergy}).
  
For the finite set of $x$ for which the construction does not work, a finite look-up table can be used to determine $f(x)$.
For the other $x$'s, suppose you are given a value $E$ where $|E - E(x,\tilde{y})| < 1/2N^4 p(N)^4$.
We will show that the value of $f(x,\tilde{y}) = f(x)$ can be computed in polynomial time.
According to Claim \ref{cl:cosdiff},
if $L$ and $L'$ are distinct positive integers, then
$$\left|\cos\left( \frac{ \pi}{L+1} \right) - \cos\left( \frac{ \pi}{L'+1}\right) \right|
\ge \frac{1}{L^4}.$$
Therefore, the value of $L$ is uniquely determined, given a value $E$ such that
$|E - E(x,\tilde{y})| \le 1/2L^4$,
where $L = (2 T(x,\tilde{y})+1) \cdot p(N)$. Note that in order for the construction to work,
$T(x,y)$ must be written in unary on the work tape, which means that $T(x,y) \le N-2$.
Therefore, $L \le 2N p(N)$, and as long as the estimate is within $32 N^4 p(N)^4$ of the true
ground energy, $L$ can be uniquely recovered.

Given $E$, the value of integer $L$ can be found by binary search in time $O(\log L)$.
Since the value of $N$ is known, then $T(x,\tilde{y})$ can be recovered. The low order bits in
the binary representation of $T(x,\tilde{y})$ are exactly $f(x, \tilde{y})$, due to the assumption that the length of the output $f(x)$ is at most $m$, the number of oracle queries,
which is a valid assumption due to the padding argument given in Lemma \ref{lem:pad}.
The value of all the numbers in the calculations are polynomial in $N$
and therefore  take time polynomial in $\log N$, which by Lemma \ref{lem:Nlowerbound} is
polynomial in $|x|$.
\end{proof}

    \subsubsection{Eigenvalue Bounds}
    \label{sec:eigenvaluebounds}

This section contains proofs for the bounds on the smallest eigenvalues of various matrices used in the construction.
The end of the section contains a discussion of how these bounds relate to each other for the values of the variables that arise in the construction.

Let $C_L$ be the propagation matrix for a matrix for a cycle of length $L$. Note that $C_L$ is $1/2$ times the Laplacian
matrix for a cycle of length $L$. 
Let $P_L$ denote the propagation matrix for a path of length $L$.
We will number the rows and columns of an $L \times L$ matrix with indices $0$ through $L-1$.

\begin{definition}
{\bf [Periodic Diagonal Matrix]}
For positive integers $r$ and $s$, and $l \in \{0, \ldots, s-1\}$, let $D_{r,s,l}$ be a $rs \times rs$
diagonal matrix which is zero everywhere except that the diagonal entries
are $+1$
at locations $l +ks$ for $k \in \{0, \ldots, r-1\}$.
\end{definition}
Note that the matrix $D_{1,L,l}$ is zero everywhere, except for $+1$ in location $l$ on the diagonal.

\begin{lemma}
\label{lem:pathEig}
{\bf [Smallest Eigenvalue for the Path Graph Plus a $1/2$ Penalty]}
The smallest eigenvalue for $P_L + D_{1,L,0}$ is at least
$$\left( 1 - \cos \left(\frac{\pi}{2L+1} \right) \right)$$
\end{lemma}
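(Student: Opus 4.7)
The plan is to lift the eigenvalue problem for $P_L + D_{1,L,0}$ into the spectrum of a longer propagation matrix $P_{2L+1}$ via a standard reflection/folding trick, and then read off the desired bound from the known, explicit spectrum of $P_{2L+1}$.

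First I would observe the operator inequality $D_{1,L,0} \succeq \tfrac12 D_{1,L,0}$, so it suffices to prove the bound for $B := P_L + \tfrac12 D_{1,L,0}$. Since $P_L$ is one-half the Laplacian of the path on $L$ vertices, the quadratic form of $B$ is $v^T B v = \tfrac12\sum_{i=0}^{L-2}(v_i - v_{i+1})^2 + \tfrac12 v_0^2$. The point of the half-penalty rewrite is that $\tfrac12 v_0^2$ is exactly what would arise from a ``virtual edge'' joining $v_0$ to its negation, so I will embed $v$ into a longer antisymmetric vector whose quadratic form automatically produces the penalty.

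Concretely, given $v \in \mathbb{R}^L$, I define $\tilde{v}\in\mathbb{R}^{2L+1}$ on indices $\{0,1,\dots,2L\}$ by $\tilde{v}_L = 0$, $\tilde{v}_{L+1+i} = v_i$, and $\tilde{v}_{L-1-i} = -v_i$ for $i = 0, \dots, L-1$. Then $\tilde{v}$ is antisymmetric about the center vertex, i.e.\ $\tilde{v}_j = -\tilde{v}_{2L-j}$ for all $j$. A short telescoping of $\tfrac12\sum_j (\tilde{v}_j - \tilde{v}_{j+1})^2$ over the $2L$ edges shows that the two edges incident to the central vertex each contribute $v_0^2$, while every other edge is mirrored by a partner edge on the opposite side of the center, so together they contribute the path-Laplacian quadratic form of $v$ twice. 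This yields $\tilde{v}^T P_{2L+1} \tilde{v} = 2\, v^T B v$ and $\|\tilde{v}\|^2 = 2\|v\|^2$, so the Rayleigh quotients agree.

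Finally I would invoke the explicit spectrum: the eigenvectors of $P_{2L+1}$ are the discrete cosines $\psi_k(j) = \cos\bigl((j+\tfrac12)k\pi/(2L+1)\bigr)$ with eigenvalues $1 - \cos(k\pi/(2L+1))$ for $k = 0, \dots, 2L$. Using the identity $\cos(k\pi - \alpha) = (-1)^k \cos\alpha$, the antisymmetry condition $\psi_k(2L-j) = -\psi_k(j)$ is equivalent to $k$ being odd, and a dimension count ($L$ odd indices, matching the $L$-dimensional antisymmetric subspace) shows these eigenvectors span it. Hence the smallest eigenvalue of $P_{2L+1}$ on the antisymmetric subspace is attained at $k=1$ and equals $1 - \cos(\pi/(2L+1))$. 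Combining this with the Rayleigh-quotient identity of the previous step gives $\lambda_{\min}(B) = 1 - \cos(\pi/(2L+1))$, and via the opening operator inequality, $\lambda_{\min}(P_L + D_{1,L,0}) \ge 1-\cos(\pi/(2L+1))$. The one delicate piece is the telescoping identity: one must verify carefully that the two edges straddling the central vertex produce exactly the desired $v_0^2$ penalty and that the remaining edges pair up symmetrically under the reflection; the rest is routine bookkeeping.
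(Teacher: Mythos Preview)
Your proof is correct. Both you and the paper begin with the same reduction: use $D_{1,L,0}\succeq\tfrac12 D_{1,L,0}$ to pass to $B=P_L+\tfrac12 D_{1,L,0}$, and then identify $\lambda_{\min}(B)=1-\cos(\pi/(2L+1))$. The paper simply cites \cite{DRS07} for this last fact, whereas you supply a self-contained argument via the antisymmetric embedding into $P_{2L+1}$ and the explicit cosine eigenbasis. Your route is slightly longer but has the advantage of being elementary and not relying on an external reference; the paper's version is a two-line citation. The reflection computation and the parity/dimension count for the antisymmetric subspace are both sound as written.
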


\begin{proof}
\cite{DRS07} show that the smallest eigenvalue of $P_L + \frac 1 2 D_{1,L,0}$ 
is exactly 
$$\left( 1 - \cos \left(\frac{\pi}{2L+1} \right) \right).$$
Since the $D_{1,L,0}$ is positive semi-definite, the smallest eigenvalue
of $P_L + \frac 1 2 D_{1,L,0}$ is a lower bound for the smallest eigenvalue
of $P_L + D_{1,L,0}$. The lemma follows.
\end{proof}

\begin{lemma} 
\label{lem:cycleEig}
{\bf [Smallest Eigenvalue for the Cycle Graph Plus Two $1/2$ Penalties]}
The smallest eigenvalue of $C_L + \frac 1 2 D_{1,L,l} + \frac 1 2 D_{1,L,l+1~\mbox{mod}~L}$ is exactly 
$$\left( 1 - \cos\left( \frac{ \pi}{L+1} \right)\right).$$ 
\end{lemma}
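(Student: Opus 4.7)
The plan is to exploit two symmetries of the perturbed matrix to split the eigenvalue problem into two decoupled pieces. By the cyclic translation invariance of $C_L$, it suffices to treat the case $l = L-1$, so that the two $+\tfrac{1}{2}$ penalties sit on the pair of adjacent vertices $0$ and $L-1$. Call the resulting matrix $M$: its diagonal is $3/2$ at positions $0$ and $L-1$, $1$ elsewhere, with $-\tfrac{1}{2}$ off-diagonal on every cycle edge. The reflection $\sigma : j \mapsto L-1-j$ preserves the cycle and fixes $\{0, L-1\}$ setwise, so $[\sigma, M] = 0$ and the spectrum of $M$ splits into a $\sigma$-symmetric and a $\sigma$-antisymmetric part, which I will diagonalize separately via a trigonometric ansatz.

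Setting $\lambda = 1 - \cos\theta$, the bulk eigenvalue equation reads $\psi_{j-1} + \psi_{j+1} = 2\cos\theta \cdot \psi_j$ and the boundary equation at $j = 0$ reads $\psi_{L-1} + \psi_1 = (1 + 2\cos\theta)\,\psi_0$. In the symmetric subspace I take the ansatz $\psi_j = \cos\bigl((j - (L-1)/2)\theta\bigr)$, which is automatically $\sigma$-invariant and solves the bulk equation. Because $\psi_{L-1} = \psi_0$, the boundary equation collapses to $\psi_1 = 2\cos\theta \cdot \psi_0$; the product-to-sum identity $2\cos\theta\,\cos((L-1)\theta/2) = \cos((L-3)\theta/2) + \cos((L+1)\theta/2)$ then reduces this to the quantization
\[
\cos\!\left(\frac{(L+1)\theta}{2}\right) = 0,
\]
whose smallest positive root is $\theta = \pi/(L+1)$. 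This yields an honest eigenvector of $M$ with eigenvalue $1 - \cos(\pi/(L+1))$, giving the upper bound.

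For the matching lower bound I must rule out smaller eigenvalues in the antisymmetric subspace. The ansatz $\psi_j = \sin\bigl((j - (L-1)/2)\theta\bigr)$ satisfies $\psi_{L-1} = -\psi_0$, so the boundary equation becomes $\psi_1 = (2 + 2\cos\theta)\,\psi_0$, and the analogous sum-to-product reduction transforms it into
\[
\sin\!\left(\frac{(L+1)\theta}{2}\right) = -\,2\sin\!\left(\frac{(L-1)\theta}{2}\right).
\]
On the interval $\theta \in (0, 2\pi/(L+1))$ the left-hand side is strictly positive, while the right-hand side is strictly negative since $(L-1)\theta/2 \in (0, \pi)$ throughout. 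Hence no antisymmetric solution exists with $\theta \le 2\pi/(L+1)$, so every antisymmetric eigenvalue obeys $\lambda \ge 1 - \cos(2\pi/(L+1)) > 1 - \cos(\pi/(L+1))$. A dimension count ($\lceil L/2 \rceil$ values $\theta_k = (2k+1)\pi/(L+1) \in (0,\pi)$ in the symmetric case and $\lfloor L/2 \rfloor$ in the antisymmetric case) confirms that the two ans\"atze exhaust the $L$-dimensional space, so the minimum is achieved.

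The main technical care will be in setting up the boundary equation correctly at the two adjacent penalty sites and in the sign analysis that excludes small-$\theta$ antisymmetric solutions; everything else reduces to routine product-to-sum trigonometry.
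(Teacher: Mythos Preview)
Your argument is correct and takes a genuinely different route from the paper's. The paper does not split by the reflection $\sigma$; instead it writes down a single family $u_k=\sin\bigl((k+1)\theta\bigr)$ with $\theta=(j{+}1)\pi/(L{+}1)$, $j=0,\dots,L-1$, and asserts that these are $L$ eigenvectors with $L$ distinct eigenvalues, the smallest occurring at $j=0$. Your approach instead decouples the problem via the $\sigma$-symmetry, fully diagonalises the symmetric block through the clean quantisation $\cos\bigl((L{+}1)\theta/2\bigr)=0$, and handles the antisymmetric block by a sign argument that rules out any solution with $\theta\le 2\pi/(L{+}1)$.

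What each approach buys: the paper's route is shorter, but as written it actually has a gap --- the boundary equation at $k=0$ reduces to $\sin(L\theta)=\sin\theta$, and for $\theta=(j{+}1)\pi/(L{+}1)$ one has $\sin(L\theta)=(-1)^{j}\sin\theta$, so the ansatz only satisfies the boundary conditions for \emph{even} $j$. (For $L=3$, for instance, the true spectrum is $\{1\mp 1/\sqrt 2,\,2\}$, not the claimed $\{1-\cos(j{+}1)\pi/4\}_{j=0}^{2}$.) These even-$j$ vectors are precisely your symmetric eigenvectors, so the paper never exhibits the antisymmetric part of the spectrum and hence never rigorously justifies the lower bound. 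Your symmetry decomposition closes this gap: the $\lceil L/2\rceil$ symmetric ansatz vectors have distinct eigenvalues and therefore span the symmetric block, while your sign analysis shows the antisymmetric block contributes nothing below $1-\cos\bigl(2\pi/(L{+}1)\bigr)$.

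One small point worth tightening in your write-up: the sentence invoking a ``dimension count'' of $\lfloor L/2\rfloor$ antisymmetric solutions is not actually needed (and you never solve that transcendental quantisation). What you do need is that every antisymmetric eigenvector with eigenvalue in $(0,2)$ is of the sine-ansatz form, which follows because the bulk relation is a second-order linear recurrence and the antisymmetry constraint $\psi_{L-1-j}=-\psi_j$ pins the phase, forcing $\psi_j\propto\sin\bigl((j-(L{-}1)/2)\theta\bigr)$. Together with $M>0$ (the all-ones vector is the unique null vector of $C_L$ and is killed by the penalties), this makes your sign analysis a complete lower-bound argument.
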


\begin{proof}
It is sufficient to prove the lemma for $l = L-1$ due to the dihedral symmetry of the cycle.
A vector $\langle u_0, \ldots, u_{L-1} \rangle$ is an eigenvector of
$C_L + \frac 1 2 D_{1,L,l} + \frac 1 2 D_{1,L,l+1~\mbox{mod}~L}$ with eigenvalue $\lambda$ if the following constraints are satisfied:
\begin{align}
\label{eq:consttraint0}
  &  \frac 1 2 (u_{L-1} + u_1) = \left(\frac 3 2 - \lambda \right) u_0\\
\label{eq:consttraintk}
  &  \frac 1 2 (u_{k-1} + u_{k+1}) = \left(1 - \lambda \right) u_k & \mbox{for}~k = 1, \ldots, L-2\\
\label{eq:consttraintLm1}
  &  \frac 1 2 (u_{L-2} + u_0) = \left(\frac 3 2 - \lambda \right) u_{L-1}
\end{align}
Any vector of the form $u_k = e^{i (k+1) \theta} - e^{-i (k+1) \theta}$ satisfies the constraints in (\ref{eq:consttraintk})
for $\lambda = 1 - \cos \theta$.
Let $\omega = e^{\pi i/(L+1)}$. 
Define vector $\vec{u_j}$ to have entries
$u_{jk} = \omega^{(j+1)(k+1)}-\omega^{-(j+1)(k+1)}$.
Using the identity that $\omega^L = - \omega^{-1}$, it can be verified that $\vec{u_j}$
also satisfies (\ref{eq:consttraint0}) and (\ref{eq:consttraintLm1}) when $\lambda_j = 1 - \cos(\pi (j+1)/(L+1))$ for
$j \in \{0, \ldots, L-1\}$. This gives $L$ distinct eigenvalues for $C_L + \frac 1 2 D_{1,L,l} + \frac 1 2 D_{1,L,l+1~\mbox{mod}~L}$,
the smallest of which occurs when $j=0$.
\end{proof}

\begin{lemma} 
\label{lem:periodicEig}
{\bf [Smallest Eigenvalue Lower Bound for the Cycle Graph Plus a Periodic Penalty]}
If $L = rs$, then the smallest eigenvalue of $C_L + D_{r, s,l}$ is at least 
$$\frac 1 8 \left(1 - \cos \left( \frac{\pi}{2s+1} \right) \right)$$
\end{lemma}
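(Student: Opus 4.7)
The plan is to lower bound $\lambda_{\min}(C_L+D_{r,s,l})$ via the Rayleigh quotient by decomposing the cycle into $r$ disjoint arcs, each containing exactly one penalty. Since $C_L$ is $\tfrac{1}{2}$ times the Laplacian of the $L$-cycle, its quadratic form is $\langle u, C_L u\rangle = \tfrac{1}{2}\sum_{j=0}^{L-1}|u_{(j+1) \bmod L} - u_j|^2$, while $\langle u, D_{r,s,l} u\rangle = \sum_{k=0}^{r-1}|u_{l+ks}|^2$.

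Next I would partition the $L$ vertices into $r$ consecutive segments $S_0,\dots,S_{r-1}$, with $S_k$ consisting of the $s$ positions $l+ks,\, l+ks+1,\dots,l+ks+s-1$ (taken mod $L$), so that each $S_k$ has its penalty site at its leftmost endpoint $v_{k,0} := l+ks$. The $L$ cycle edges then split into $r(s-1)$ intra-segment edges and $r$ inter-segment edges (the ones joining $v_{k,s-1}$ to $v_{k+1,0}$). Dropping the non-negative inter-segment contributions yields
\[
\langle u, (C_L+D_{r,s,l})u\rangle \;\geq\; \sum_{k=0}^{r-1}\!\left(\tfrac{1}{2}\sum_{i=0}^{s-2}|u_{v_{k,i+1}}-u_{v_{k,i}}|^2 + |u_{v_{k,0}}|^2\right).
\]
The $k$-th summand is precisely $\langle u|_{S_k},(P_s+D_{1,s,0})u|_{S_k}\rangle$, so applying Lemma \ref{lem:pathEig} segment by segment and using $\sum_k \|u|_{S_k}\|^2 = \|u\|^2$ gives
\[
\langle u, (C_L+D_{r,s,l})u\rangle \;\geq\; \Bigl(1-\cos\tfrac{\pi}{2s+1}\Bigr)\|u\|^2,
\]
which is in fact stronger than the claimed $\tfrac{1}{8}(1-\cos(\pi/(2s+1)))$.

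There is no serious obstacle to this argument: it is a straightforward chopping of the cycle into paths, with the key observation that every penalty is absorbed by exactly one segment as an endpoint penalty. The only bookkeeping subtlety is the cyclic wrap-around when $l+ks+s-1 \geq L$, handled by relabeling. A more elaborate alternative would be to use the translation-by-$s$ symmetry to block-diagonalize $C_L+D_{r,s,l}$ into $r$ twisted $s\times s$ blocks and then bound each block directly; this avoids discarding any edges but requires a separate phase-dependent eigenvalue bound for each Fourier sector, so the direct segment decomposition is preferable.
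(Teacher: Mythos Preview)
Your proof is correct and in fact yields the stronger bound $1-\cos(\pi/(2s+1))$ without the factor of $1/8$. The argument---cutting the $L$-cycle into $r$ paths of length $s$ by discarding the $r$ inter-segment edges, and then applying Lemma~\ref{lem:pathEig} to each path-plus-endpoint-penalty block---is a genuinely different route from the paper's.

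The paper instead passes to the Fourier basis on the cycle, where $D_{r,s,l}$ becomes block-diagonal with $s\times s$ all-ones blocks $(1/s)O_s$ and $C_L$ becomes diagonal. It then compares the diagonal part $F_x$ of each Fourier sector to $\tfrac{1}{4}F_0$ via an angle-rotation argument (losing a factor of $4$), inverts the Fourier transform on $s$ sites to reduce to $C_s + D_{1,s,0}$, and finally uses a vector-doubling trick to relate this to Lemma~\ref{lem:cycleEig} on a $2s$-cycle (losing another factor of $2$). Your direct variational decomposition avoids all of these steps and all of the losses: you simply drop $r$ non-negative edge terms from the Rayleigh quotient, which costs nothing in a lower bound, and land immediately on $P_s + D_{1,s,0}$. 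The paper's Fourier approach does expose the exact block structure and could in principle be pushed toward matching upper bounds, but for the purpose of this lemma your argument is both shorter and sharper.
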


We will need the following claim in the proof of
Lemma \ref{lem:periodicEig}. First we need to define the Fourier basis vectors:
$$\ket{f_k} = \frac{1}{\sqrt{L}} \sum_{j=0}^{L-1} \omega^{jk} \ket{j},$$
where  $\omega = e^{2 \pi i/L}$ is the $L^{th}$ root of unity.

\begin{claim}
\label{cl:mod}
Let $L = rs$. Then $\langle f_k | D_{r, s, 0} | f_j \rangle = 1/s$ if $k ~\mbox{mod}~ r = j ~\mbox{mod}~ r$, and $\langle f_k | D_{r, s, l} | f_j \rangle = 0$ otherwise.
\end{claim}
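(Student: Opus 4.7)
The plan is to compute the matrix element $\langle f_k | D_{r,s,0} | f_j \rangle$ directly by expanding in the standard basis and exploiting the fact that $D_{r,s,0}$ is supported on an arithmetic progression of step $s$ inside $\{0, 1, \ldots, L-1\}$.

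First I would substitute the definitions: using $\ket{f_j} = \frac{1}{\sqrt{L}} \sum_{t=0}^{L-1} \omega^{tj}\ket{t}$ and $\bra{f_k} = \frac{1}{\sqrt{L}} \sum_{t=0}^{L-1} \omega^{-tk}\bra{t}$ together with the fact that $D_{r,s,0}$ is diagonal with support exactly at $t \in \{0, s, 2s, \ldots, (r-1)s\}$, the double sum collapses to
\[
\langle f_k | D_{r,s,0} | f_j \rangle \;=\; \frac{1}{L}\sum_{a=0}^{r-1} \omega^{as(j-k)}.
\]

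Next I would use that $L = rs$, so $\omega^s = e^{2\pi i s/(rs)} = e^{2\pi i/r}$ is a primitive $r$-th root of unity. Hence $\omega^{as(j-k)} = \zeta^{a(j-k)}$ with $\zeta = e^{2\pi i/r}$, and the geometric sum identity yields
\[
\sum_{a=0}^{r-1} \zeta^{a(j-k)} \;=\; \begin{cases} r & \text{if } j \equiv k \pmod{r},\\ 0 & \text{otherwise.}\end{cases}
\]
Dividing by $L = rs$ gives $1/s$ in the first case and $0$ in the second, which is exactly the claim.

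This is essentially just a change-of-basis calculation, so there is no real obstacle; the only point requiring a little care is keeping track of the sign in the conjugate on the $\bra{f_k}$ side and using $L = rs$ at the right moment to convert $L$-th roots of unity on the support of $D_{r,s,0}$ into $r$-th roots of unity. (The statement as written carries a small notational inconsistency between $D_{r,s,0}$ and $D_{r,s,l}$; the argument above handles $D_{r,s,0}$, and the general $D_{r,s,l}$ case follows from the same computation with an extra overall phase $\omega^{l(j-k)}$, which does not affect whether the expression is zero or $1/s$.)
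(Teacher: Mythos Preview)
Your proof is correct and follows essentially the same approach as the paper: both expand in the standard basis, reduce to the geometric sum $\frac{1}{L}\sum_a \omega^{as(j-k)}$, and use that $\omega^s$ is a primitive $r$-th root of unity. Your version is slightly more streamlined in that you work with $j-k$ directly rather than first reducing $j$ and $k$ modulo $r$ separately, but the content is the same.
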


\begin{proof}
[Proof of Claim \ref{cl:mod}]
For the purposes of this proof, let $D = D_{r, s, 0}$. The $a^{th}$ entry along the diagonal is $1$
if and only if $a = xs$ for $x \in \{0, \ldots, r-1\}$.
Therefore 
$$D \ket{f_j} = \frac{1}{\sqrt{L}} \sum_{x=0}^{r-1} \omega^{j {xs}} \ket{xs} .$$
Suppose that $j = yr+m$, where $m = j \mod r$. Since $L = rs$ and $\omega$ is the $L^{th}$ root of unity,
$$\omega^{jxs} = \omega^{(yr+m)xs} = \omega^{yxrs} \cdot \omega^{xsm} = \omega^{xsm}.$$
Therefore $D \ket{f_j} =  1/\sqrt{L} \sum_{x=0}^{r-1} \omega^{mxs} \ket{xs}$.
Similarly $\bra{f_k} D = 1/\sqrt{L} \sum_{x=0}^{r-1} \omega^{-nxs} \bra{xs}$,
where $n = k \mod r$. 
$$\langle f_k |D | f_j \rangle = \frac{1} L \sum_{x=0}^{r-1} \omega^{(m-n)xs}.$$
Since $L = rs$, $\omega^s$ is the $r^{th}$ root of unity. Therefore, the  above is $0$ if $m \neq n$ and is equal to $r/L = 1/s$ if $m = n$.
\end{proof}

\begin{proof}
[Proof of Lemma \ref{lem:periodicEig}]
It is sufficient to prove the lemma for $l = 0$ due to the dihedral symmetry of the cycle.
Let $D_{r, s,0} = D$.
We will use $L$ to denote the size of the matrix: $L = rs$.

Suppose we express a state $\ket{\phi}$ in the Fourier basis: $\sum_{j=0}^{L-1} \alpha_j \ket{f_j}$.
%The eigenvalue of $\ket{f_k}$ is $1 - \cos(2 \pi k/L)$.
By Claim \ref{cl:mod},
$$\langle \phi | D | \phi \rangle = \frac 1 s \sum_{x = 0}^{r-1} \sum_{j=0}^{s-1} \sum_{k=0}^{s-1} 
\alpha_{kr+x}^* \alpha_{jr+x}.$$
This means that $D$ is block-diagonal in the Fourier basis.
Each block is an $s \times s$ matrix, corresponding to a particular value for $x$,
and is $1/s$ times the all $1$'s matrix. 
Let $O_s$ be the $s \times s$ matrices of all $1$'s.
$D$ expressed in the Fourier basis and restricted to the
set $S_x = \{ \ket{f_{x}}, \ket{f_{r+x}}, \ldots \ket{f_{(s-1)r + x}}\}$
is exactly $(1/s) O_s$.
Meanwhile $C_L$ is diagonal in the Fourier basis.
When restricted to $S_x$, the $k^{th}$ entry along the diagonal is $1 - \cos(2 \pi (kr+x/L)))$. Call this matrix $F_x$.
We will first show that for all $x$, $F_x \succeq (1/4) F_0$,
where $\succeq$ means that every element along the diagonal of $F_x$ is greater than or equal to
the corresponding element in $(1/4) F_0$.
Then it will be sufficient the lower bound
$\lambda_0((1/s) O_s + (1/4)F_0)$.

To show $F_x \succeq (1/4) F_0$, we first consider the case that $x \le r/2$.
We will rotate each angle $2 \pi (kr+x)/L$ by $- 2 \pi x/L$. 
That is, $1 -\cos(2 \pi (kr+x)/L)$ is replaced by $1 -\cos(2 \pi (kr)/L)$ in the $k^{th}$
entry along the diagonal.
The resulting matrix is $F_0$. We will show for each entry, that the new value is at most  $4$ times the old value.
For $0 \le k < s/2$, $2 \pi (kr+x)/L < \pi$ and the new value is less than the old value.
For $s/2 \le k < s$, then $1 - \cos(2 \pi (kr+x/L))) = 1 - \cos(2 \pi ((s-k)r-x/L)))$
is replaced by $1 - \cos(2 \pi ((s-k)r/L)))$. Note that $s-k$ is positive and $x \le r/2$,
so the new angle $2 \pi (s-k)r/L$ is at most twice the original angle $2 \pi ((s-k)r-x)/L$.
Using the half angle formula for $\cos$, it can be verified that
$1/4(1 - \cos \theta) \le  (1 - \cos (\theta/2))$.

A symmetric argument can be applied if $r > x > r/2$, except that $1 - \cos(2 \pi (kr+x)/L)$
is replaced by $1 - \cos(2 \pi ((k+1)r)/L)$, which is a rotation by $2 \pi (r-x)/L$. The resulting matrix
is $\tilde{F}_0$, which is the same as $F_0$, except that the diagonal entries are rotated by one location so that the $(L-1)^{st}$
entry is $1 - \cos(2 \pi sr)/L) = 0$. Since $O_s$ is invariant under permutations of the rows and columns,
the entries of the new matrix can be rotated back to form $F_0$. Thus
the eigenvalues of $(1/s) O_s + (1/4)F_0$ are the same as the eigenvalues of $(1/s) O_s + (1/4)\tilde{F}_0$

Our task now is to lower bound $\lambda_0((1/s) O_s + (1/4)F_0 )$.
Let $U_s$ denote the unitary to change from the standard to the Fourier basis on a matrix of size $s \times s$. 
If we change $F_0$ back to the standard basis, we just get the cycle graph on $s$ vertices: 
$(U_s)^{-1} F_0 U_s = C_s$. 
Meanwhile if we apply the change of basis to $(1/s) O_s$,
we get $D(1,s,0)$, the all $0$'s matrix with one $+1$ entry in the upper left corner. Therefore, we can lower bound
$\lambda_0( D(1,s,0) + (1/4) C_s)$ instead.
Suppose that there is a $\ket{\phi}$ such that
$\langle \phi | D(1,s,0) +  C_s | \phi \rangle = c$.
The vector $\ket{\phi}$ when expressed in the standard basis is $\sum_{j=0}^{s-1} \alpha_j \ket{j}$. 
Let $\ket{\tilde{\phi}}$ be the length $2s$ vector obtained by doubling each entry from $\ket{\phi}$:
$$\ket{\tilde{\phi}} = \sum_{j=0}^{s-1} \frac{\alpha_j}{\sqrt{2}} (\ket{2j} + \ket{2j+1}).$$
We claim that
$$\langle \tilde{\phi} | (1/2) D(1,2s,0) + (1/2) D(1,2s,1) +  C_{2s} | \tilde{\phi} \rangle = \frac{c}{2}.$$
To see why this claim is true, consider partitioning 
the matrix $(1/2) D(1,2s,0) + (1/2) D(1,2s,1) +  C_{2s}$ into $s^2$ $2 \times 2$ submatrices.
Let $M_{jk}$ be the submatrix at the intersection of rows $2j$ and $2j+1$ and
columns $2k$ and $2k+1$. Let $E_{jk}$ be the entry in
row $j$ and column $k$ in 
$D_{1,s,0} +  C_s$. The key observation is that the sum of the entries in $M_{jk}$ is equal to $E_{jk}$.
Then
\begin{align*}
\frac 1 2 \langle \phi | D_{1,s,0} +  C_s | \phi \rangle & = \frac 1 2 \sum_{jk} \alpha_j^* \alpha_k E_{jk}\\
& =
\sum_{jk} \left( \begin{matrix} \alpha_j/\sqrt{2}\\ \alpha_j/\sqrt{2} \end{matrix} \right) M_{jk} ~(\alpha_k/\sqrt{2}, \alpha_k/\sqrt{2} )\\
& = \langle \tilde{\phi} | (1/2) D(1,2s,0) + (1/2) D(1,2s,1) +  C_{2s} | \tilde{\phi} \rangle
\end{align*}
Putting all the bounds together, we get that
\begin{align*}
    \lambda_0 (C_L + D(r, s,l)) & = \max_x \left\{
    \lambda_0( (1/s) O_s + F_x )
    \right\}\\
    & \ge \lambda_0( (1/s) O_s + (1/4) F_0 )\\
    & \ge \frac 1 4 \lambda_0( (1/s) O_s +  F_0 )\\
    & = \frac 1 4 \lambda_0( D(1,s,0) +  C_s )\\
    & \ge \frac 1 8 \lambda_0( (1/2) D(1,2s,0) + (1/2) D(1,2s,1) +  C_{2s} ) = \frac 1 8 \left(1 - \cos \left( \frac{\pi}{2s+1} \right) \right)
\end{align*}
The first equality follows from expressing 
$C_L + D(r, s,l)$ and observing the block diagonal structure.
The second inequality follows from the argument that for all $x$,
$F_x \ge (1/4) F_0$ by rotating the angles by $2 \pi x/rs$.
Line $3$ to $4$ above follows from rotating from the Fourier basis back to the standard basis.
The last equality follows from Lemma \ref{lem:cycleEig}.
\end{proof}

Lemma \ref{lem:periodicEig} is used to give a lower bound
for computations that have a periodic cost occurring every
$p(N)$ time steps. Lemma \ref{lem:cycleEig} gives an exact
expression for the smallest eigenvalue corresponding to correct computations. Thus, we need to establish that
the bound given in Lemma \ref{lem:cycleEig}, which is applied
for $L = (2 T(x,y) + 3)p(N)$ is always smaller than the lower bound from Lemma \ref{lem:periodicEig}, which is applied for
$s = p(N)$.

First, we will need the following fact:

\begin{fact}
\label{fact:cos}
For $0 \le \theta < 1$ and $c \ge 1$,
$$\left(1 - \cos \left( \frac{\theta}{c} \right) \right)
\le \frac 2 {c^2} \left(1 - \cos \theta \right).$$
\end{fact}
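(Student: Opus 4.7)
The plan is to establish the inequality by combining two elementary Taylor-series bounds on cosine, one used as an upper bound on the left-hand side and the other as a lower bound on the right-hand side.

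For the left-hand side, I will use the standard inequality $1 - \cos x \le x^2/2$, which holds for all $x \ge 0$ since $1 - \cos x = \int_0^x \sin t\,dt \le \int_0^x t\,dt$ (and the statement is symmetric in $x$). Applied with $x = \theta/c$, this yields
$$1 - \cos\left(\frac{\theta}{c}\right) \le \frac{\theta^2}{2c^2}.$$

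For the right-hand side, I will show the matching lower bound $1 - \cos\theta \ge \theta^2/4$. This follows by truncating the alternating Taylor series for cosine: the terms $\theta^{2k}/(2k)!$ are decreasing in $k$ for $k \ge 1$ whenever $\theta^2 \le 12$, so in particular for $\theta < 1$ we have $1 - \cos\theta \ge \theta^2/2 - \theta^4/24$. Since $\theta < 1$ gives $\theta^4/24 \le \theta^2/24 \le \theta^2/4$, this simplifies to $1 - \cos\theta \ge \theta^2/4$. Combining the two bounds,
$$\frac{2}{c^2}(1 - \cos\theta) \ge \frac{2}{c^2} \cdot \frac{\theta^2}{4} = \frac{\theta^2}{2c^2} \ge 1 - \cos\left(\frac{\theta}{c}\right),$$
which is the desired inequality.

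There is no real obstacle here: the fact is a routine trigonometric inequality, and the only subtlety is choosing the right order of Taylor approximation on each side so that the factor $2/c^2$ on the right-hand side is exactly enough to absorb the factor of $2$ lost between the one-term upper bound and the simplified lower bound. Note that the hypothesis $c \ge 1$ is not actually needed in the argument; it is the restriction $\theta < 1$ that makes the clean bound $1 - \cos\theta \ge \theta^2/4$ available, and more generally the same argument would give a slightly weaker constant for $\theta$ bounded away from $\sqrt{6}$.
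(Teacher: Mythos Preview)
Your proof is correct, but it follows a different and more direct route than the paper's. The paper rewrites the claim as $c^2/2 \le (1-\cos\theta)/(1-\cos(\theta/c))$, expands both numerator and denominator as Taylor series, groups consecutive terms into pairs $\theta^{2j}/(2j)! - \theta^{2j+2}/(2j+2)!$, and shows that each pair in the numerator is at least $c^2/2$ times the corresponding pair in the denominator; the hypothesis $c \ge 1$ is genuinely used in that term-by-term comparison. Your argument instead sandwiches the two sides between the same quantity $\theta^2/(2c^2)$ via the one-term upper bound $1-\cos x \le x^2/2$ and the two-term lower bound $1-\cos\theta \ge \theta^2/2 - \theta^4/24 \ge \theta^2/4$. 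This is shorter, and your observation that $c \ge 1$ is never invoked is accurate for your proof (though not for the paper's). Both approaches rely only on the restriction $\theta < 1$ to control the higher-order terms.
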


\begin{proof}
We will show that
$$\frac{c^2}{2} \le \frac{1 - \cos \theta }{1 - \cos (\theta/c) }.$$
Using the Taylor expansion for $\cos$, we have that
$1 - \cos \gamma = \sum_{j=1}^{\infty} (-1)^{j+1} \frac{\gamma^{2j}}{(2j)!}$.
We will show that for each pair of consecutive terms, the ratio is at least $c^2/2$:
$$\frac{\theta^{2x}}{(2x)!} - \frac{\theta^{2x+2}}{(2x+2)!}
\ge \frac{c^2}{2} \left[ \frac{(\theta/c)^{2x}}{(2x)!} - \frac{(\theta/c)^{2x+2}}{(2x+2)!} \right],$$
for $x \ge 1$. This is equivalent to showing that
$$1 - a \ge  \frac{1}{2c^{2x-2}} - \frac{a}{2c^{2x}},$$
where $a = \theta^2/(2x+2)(2x+1)$.
Using the fact that $0 \le a < 1/4$ and $c \ge 1$:
$$1 - a \ge \frac 3 4 \ge \frac 1 2 - \frac{a}{2c^{2x}}
\ge \frac{1}{2c^{2x-2}} - \frac{a}{2c^{2x}}.$$
\end{proof}

\begin{claim}
\label{cl:boundcompare}
For any $x$ and $y$ and $N \ge 4$,
$$ \left( 1 - \cos\left( \frac{ \pi}{(2 T(x,y) + 1)p(N)+1} \right)\right) < \frac 1 8 
\left( 1 - \cos\left( \frac{ \pi}{2p(N)+1} \right)\right).$$ 
\end{claim}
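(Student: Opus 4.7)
The plan is to reduce both sides to their natural quadratic approximations and exploit the large ratio between the two angles. Set
$$\theta \;=\; \frac{\pi}{2p(N)+1}, \qquad c \;=\; \frac{(2T(x,y)+1)p(N)+1}{2p(N)+1},$$
so that the left-hand side of the claim is exactly $1-\cos(\theta/c)$ and the right-hand side is exactly $\tfrac18(1-\cos\theta)$. Since $N\ge 4$ gives $p(N)=4(N-2)(2N-3)\ge 40$, we have $\theta\le\pi/81<1$ and obviously $c\ge 1$, so Fact \ref{fact:cos} applies and yields
$$1-\cos(\theta/c)\;\le\;\frac{2}{c^{2}}\,(1-\cos\theta).$$
Thus it suffices to prove the purely algebraic estimate $c>4$, which is equivalent to $(2T(x,y)-7)\,p(N) > 3$.

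The key input is a uniform lower bound $T(x,y)\ge 4$, which follows directly from the definition in \eqref{eq:Txy}: when the number of oracle queries is $m=0$, the bracket contributes $4^{m+1}=4$, and when $m\ge 1$ the first term alone contributes $2^m\cdot 4^{m+1}\ge 32$. In either case $2T(x,y)-7\ge 1$, and combined with $p(N)\ge 40>3$ this gives $(2T(x,y)-7)\,p(N)\ge 40>3$ strictly, hence $c>4$ and $2/c^{2}<1/8$. Substituting into the Fact inequality gives
$$1-\cos\!\Bigl(\tfrac{\theta}{c}\Bigr) \;\le\; \frac{2}{c^{2}}\,(1-\cos\theta) \;<\; \frac{1}{8}\,(1-\cos\theta),$$
which is exactly the desired inequality.

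There is no real obstacle here; the only thing to watch is that the uniform bound $T(x,y)\ge 4$ really does hold for every admissible input, including the degenerate case $m=0$ and $f(x,y)=0$, and that $p(N)\ge 40$ for all $N\ge 4$ so that both the hypothesis $\theta<1$ of Fact \ref{fact:cos} and the strict inequality $(2T-7)p(N)>3$ are in force. The proof is therefore essentially a one-line application of Fact \ref{fact:cos} once these two arithmetic facts are recorded.
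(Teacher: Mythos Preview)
Your proof is correct and follows essentially the same approach as the paper: both establish $T(x,y)\ge 4$ (using the $m=0$ case of \eqref{eq:Txy}) and $p(N)\ge 40$ for $N\ge 4$, then apply Fact~\ref{fact:cos} with a ratio exceeding~$4$ so that $2/c^2<1/8$. The only cosmetic difference is that the paper first replaces $T(x,y)$ by its minimum value (bounding through $9p(N)+1$) before invoking the Fact, whereas you keep $c$ general and verify $c>4$ directly.
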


\begin{proof}
Since $N \ge 4$, $p(N) = 4(N-2)(2N-3) \ge 40$.
For any $x$ and $y$, $T(x,y) \ge 4$. Note that $m$ in the definition of $T(x,y)$
is the number of oracle calls made by Turing Machine $M$ on input $x$ and
$T(x,y)$ is always at least $2^{3m+2}$.
Therefore, even when $m = 0$, $T(x,y) = 4$.
$$ \left( 1 - \cos\left( \frac{ \pi}{(2 T(x,y) + 1)p(N)+1} \right)\right)
\le  \left( 1 - \cos\left( \frac{ \pi}{9 p(N)+1} \right)\right)$$
Since $p(N) \ge 40$, we know that $\pi/(2p(N)+1) < 1$ and 
$(9p(N)+1)/(2p(N)+1) > 4$. Applying Fact \ref{fact:cos},
$$  \left( 1 - \cos\left( \frac{ \pi}{9 p(N)+1} \right)\right) < \frac 2 {4^2} 
\left( 1 - \cos\left( \frac{ \pi}{2p(N)+1} \right)\right)
\le \frac 1 8 \left( 1 - \cos\left( \frac{ \pi}{2p(N)+1} \right)\right).$$
The bound in the Lemma follows.
\end{proof}

\section{Computing the Ground Energy Density in the Thermodynamic Limit}
\label{sec:infinite}

The parameters of the problem in the thermodynamic limit
(not the instance) include  two $d^2 \times d^2$ Hermetian
matrices $h^{row}_a$ and $h^{col}_a$ denoting the energy interaction between two neighboring particles in the horizontal and vertical directions on a $2$-dimensional grid.
The dimension of each individual particle is a constant $d$. 
Then $H_a (N)$ is defined the be the Hamiltonian of an $N \times N$ grid
of $d$-dimensional particles in which $h^{row}_a$ is applied to each neighboring pair in the horizontal direction
and $h^{col}_a$ is applied to each neighboring pair in the vertical direction.
$\lambda_0 (H_a (N))$ is the ground energy of $H_a (N)$.
 The ground energy density of $(h^{row}_a, h^{col}_a)$
 in the thermodynamic limit is defined to be:
$$\alpha_0 (h^{row}_a, h^{col}_a) = \lim_{N \rightarrow \infty} \frac{\lambda_0 (H_a (N))}{N^2}.$$
When it is clear from context, we will drop the subscript $a$ denoting
the particular Hamiltonian terms, in which case
$\alpha_0$ denotes the ground energy density of
$h^{row}$ and $h^{col}$ in the thermodynamic limit.
We can assume that the Hamiltonian terms are scaled so that $0 \le \alpha_0 \le 1$.
The problem itself will be characterized by $h^{row}$ and $h^{col}$. Here is the function
version of the ground energy density (GED) problem:
\begin{quote}
{\sc Function-GED for $(h^{row}, h^{col})$}\\
{\bf Input:} An integer $n$ expressed in binary.\\
{\bf Output:} A number $\alpha$ such that $|\alpha - \alpha_0| \le 1/2^n$.
\end{quote}

The following theorem encapsulates our results for Function-GED in the thermodynamic limit:
\begin{theorem}
Function-GED problem is contained in $\fexpqmaexp$
and is hard for $\fexpnexp$.
\end{theorem}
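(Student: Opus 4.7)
Using the convergence analysis from Section \ref{sec:existLimit}, I would first establish a polynomial rate: a polynomial $P$ with $|\lambda_0(H_{2D}(N))/N^2 - \alpha_0| \le 1/P(N)$. For input $n$, take $N = \text{poly}(2^n)$ so that $1/P(N) \ll 1/2^n$, and adapt the binary-search algorithm of Figure \ref{fig:BinSearch-pseudo} to estimate $\lambda_0(H_{2D}(N))/N^2$ to precision $1/2^n$. Each binary-search step calls a QMA-EXP oracle that decides whether $\lambda_0(H_{2D}(N)) \le \lambda$; the witness is a ground state of the $\text{poly}(2^n)$-particle system, whose description requires exponentially many qubits in the input size $|n|_{\text{bin}} = \log n$, matching QMA-EXP. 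The overall runtime is FEXP in $\log n$ with a QMA-EXP oracle.

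\textbf{Hardness from $\fexpnexp$ -- construction.} For the lower bound I would combine two translationally-invariant layers of the particle Hilbert space, following the template of Cubitt, Perez-Garcia, and Wolf \cite{C15}. The bottom layer uses the matching rules of Robinson tiles \cite{R71} to force, in any zero-energy configuration of the infinite lattice, an aperiodic pattern of nested squares: for each $k \ge 1$, squares of side length $4^k$ appear with density $\Theta(4^{-2k})$. The top layer carries a copy of the Hilbert space for the 1D TI construction of Theorem \ref{th:finite}, with gluing terms that activate the finite-chain term $h$ along the top row of every Robinson square whose side length equals $N(x)$ for some integer $x$, and act as the identity elsewhere. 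Crucially, $(h^{row}, h^{col})$ depends only on the fixed Turing machines $(M, V)$ being reduced from, not on the input $n$. Adapting the energetic decomposition of \cite{C15} and applying Lemma \ref{lem:groundenergy} chain-by-chain,
\begin{equation}
\alpha_0 \;=\; c_{\text{Rob}} \;+\; \sum_{x \ge 1} \delta(x)\, E(x, \tilde{y}(x)),
\end{equation}
where $c_{\text{Rob}}$ and $\delta(x) = \Theta(2^{-4x^2})$ are fixed rationals of the construction and $E(x, \tilde{y}(x)) = 1 - \cos(\pi/(L(x)+1))$ with $L(x) = (2T(x,\tilde{y}(x))+1)\,p(N(x))$ is the exact finite-chain ground energy that encodes $f(x)$.

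\textbf{Hardness -- decoding.} Given $f \in \fexpnexp$, map input $x$ to $n_x = 4(x+1)^2$ and recover $f(x)$ from a $1/2^{n_x}$-approximation of $\alpha_0$ by induction on $x$. The tail $\sum_{x' > x} \delta(x')\, E(x', \tilde{y}(x'))$ is majorized by a fast-decaying series in $2^{-4(x')^2}$ and lies below $2^{-n_x - 1}$ by the super-quadratic spacing. For each $x' \le x$ in turn, use the high-order bits of the current residual together with Claim \ref{cl:cosdiff} to identify the integer $L(x')$, read $f(x') = f(x', \tilde{y}(x'))$ off the low-order bits of the recovered $T(x', \tilde{y}(x'))$, and then analytically evaluate and subtract $\delta(x')\, E(x', \tilde{y}(x'))$ to the needed precision before proceeding.

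\textbf{Main obstacle.} The central difficulty is exactly this inductive decoding. The $E(x, \tilde{y}(x))$ are irrational and contribute nontrivially to every bit of $\alpha_0$, so the density schedule $\delta(x)$ must be precisely calibrated against the analytic form of $E$ from Lemma \ref{lem:groundenergy} and the sharp spectral bounds of Section \ref{sec:eigenvaluebounds}; this is the reason the finite construction was engineered to produce an exact trigonometric ground energy rather than merely an inverse-polynomial gap. A secondary but nontrivial point is verifying the additive decomposition of $\alpha_0$ itself: one must check that the Robinson-plus-chain construction has the intended product-like ground-state structure, so that chains on squares of different sizes contribute independently and the gluing layer introduces no spurious low-energy configurations. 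This follows the layered argument of \cite{C15}, but requires care at the boundaries of Robinson squares to bound perimeter contributions against the $\Theta(2^{-4x^2})$ scales.
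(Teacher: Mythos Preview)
Your containment argument differs from the paper's but is workable: the paper simply observes that \textsc{Decision-GED} is already known to be in $\qmaexp$ (Theorem 2.7 of \cite{GI}) and runs the binary search of Figure~\ref{fig:BinSearch-pseudo} directly on $\alpha_0$, whereas you pass through a finite approximant. Either route is fine.

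There is, however, a real gap in your hardness construction. You write that the gluing terms ``activate the finite-chain term $h$ along the top row of every Robinson square whose side length equals $N(x)$ for some integer $x$, and act as the identity elsewhere.'' A translationally-invariant two-body term cannot do this: the local rule sees only the Robinson tile type underneath it (e.g.\ $\toprow$, $\leftcorner$, $\rightcorner$), never the global size of the enclosing square. Consequently $h_C$ is applied uniformly to the top edge of \emph{every} Robinson square of \emph{every} size $4^k$, and the ground energy density is forced to be $\alpha_0=\sum_{k\ge 1}\lambda_0(4^k)/4^{2k+1}$ with a contribution from every $k$, not just the ones you want. If $\lambda_0(4^k)$ were nonzero for all $k$, these unwanted terms would swamp the signal you are trying to read off at scale $4^{x^2}$.

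The paper's fix is exactly the content of Section~\ref{sec:modifications}, and it is not a detail: the embedded Turing machine is modified so that it \emph{computes} $k=\lfloor\log_4 N\rfloor$ from the chain length, tests whether $k$ is a perfect square, and if not writes a special symbol $\sigma_B$ that suppresses both $h_{final}$ and $h_V$, yielding $\lambda_0(4^k)=0$. A second mechanism (the $\gamma/\sigma$ distinction together with the post-processing machine $M_{post}$) ensures $\lambda_0(4^k)=0$ on the finitely many small $k$ for which the computation does not finish in $4^k-2$ steps; without this, those small-$k$ terms would contribute uncontrolled energies to the leading bits of $\alpha_0$. These two modifications are what make the sum collapse to $\alpha_0=\sum_{x}\lambda_0(4^{x^2})/4^{2x^2+1}$ and enable the inductive decoding you correctly outline. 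Your decoding sketch and your identification of the ``main obstacle'' are on target, but the construction as you describe it would not produce a decodable $\alpha_0$.
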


\subsection{Existence of the Limit}
\label{sec:existLimit}

Before proceeding to the complexity results, it is important to establish that the quantity $\alpha_0$ that we are trying to approximate always exists.

\begin{lemma}
\label{lem:gedbound}
The limit $\alpha_0 (h^{row}, h^{col}) = \lim_{N \rightarrow \infty} \frac{\lambda_0 (H_N)}{N^2}$ 
always exists. Moreover for every $N$, $$\left|\alpha_0 (h^{row}, h^{col}) - \frac{\lambda_0 (H_N)}{N^2}\right|
\le \frac{6(\norm{h^{row}} + \norm{h^{col}})+2}{N}.$$
\end{lemma} 
\begin{proof}
We can apply an additive offset to $h^{row}$ and $h^{col}$,
so that $h^{row} \ge 0$ and $h^{col} \ge 0$.
%Let $e_{max}$ be a constant such that $\norm{h^{row}} \le e_{max}$
%and $\norm{h^{col}} \le e_{max}$.
We will show that for any $\epsilon > 0$, there is a positive integer $N_0$ and an interval of size
$\epsilon$ that contains $\lambda_0 (H_N)/N^2$ for every $N \ge N_0$.

Fix a positive integer $\hat{N}$. Let $N_0 = (\hat{N})^2$ and let $\alpha$ denote $\lambda_0 (H_{\hat{N}})/(\hat{N})^2$.
 The value of $\hat{N}$ will be
chosen at the end to obtain the desired $\epsilon$. 
For any $N \ge N_0$, let $q = \lfloor N/\hat{N} \rfloor$ and $r = N ~\mbox{mod}~\hat{N}$.
The three facts that we will need about these variables are that $N - r = q \hat{N}$,
$q/N \le 1/\hat{N}$, and $r/N \le 1/\hat{N}$.

An $N \times N$ grid can be packed with $q^2$ disjoint $\hat{N} \times \hat{N}$ sub-grid. The energy
 of the adjacent pairs of particles
 within a single $\hat{N} \times \hat{N}$ sub-grid is at least
$\alpha \cdot (\hat{N})^2$. Therefore the energy density
across the entire $N \times N$ grid will be 
\begin{eqnarray*}
\frac{\lambda_0(H_{N})}{N^2} & \ge  &  \frac{q^2 \cdot \alpha \cdot (\hat{N})^2}{N^2}
= \frac{\alpha (N-r)^2}{N^2}\\
& \ge & \frac{\alpha N^2 - 2r N}{N^2} =
\alpha - \frac{2r}{N} \ge \alpha -\frac{2}{\hat{N}}.
\end{eqnarray*}

To get an upper bound on $\lambda_0 (H_N)/N^2$, consider
the state that is a $q \times q$ tensor product of the 
ground state for each
$\hat{N} \times \hat{N}$ sub-grid. The particles that are not contained
in an $\hat{N} \times \hat{N}$ sub-grid will be set to $\ket{0}$.
The cost  due to adjacent pairs of particles where both particles are contained in
a single $\hat{N} \times \hat{N}$ sub-grid
is at most $\lambda_0 (H_{\hat{N}})$. Therefore the total cost due to all pairs
of adjacent particles where both particles are contained in the same $\hat{N} \times \hat{N}$ sub-grid
is at most
$\alpha \cdot q^2 (\hat{N})^2 \le \alpha N^2$.

There are at most $2 \hat{N}$ pairs of adjacent particles that span the boundary
of a particular $\hat{N} \times \hat{N}$ sub-grid in the horizontal direction and
at most $2 \hat{N}$ in the vertical direction. The total cost due to pairs of adjacent
particles that span the boundary of any of the $\hat{N} \times \hat{N}$ sub-grids is at most
$(\norm{h^{row}} + \norm{h^{col}}) \cdot q^2 \cdot 2 \hat{N} \le  2 q (\norm{h^{row}} + \norm{h^{col}}) N$.

The number of grid locations in the $N^2$ grid that are not
contained in any of the $\hat{N} \times \hat{N}$ sub-grids is
$$N^2 - q^2 (\hat{N})^2 = N^2 - (N-r)^2 \le 2r N.$$
Each particle is adjacent to at most $4$ other particles, two in the horizontal direction and two in the vertical direction, so the total cost
due to adjacent pairs of particles where both particles are outside all of the $\hat{N} \times \hat{N}$ sub-grids
is at most $4 rN (\norm{h^{row}} + \norm{h^{col}}) $. Putting the three upper bounds together, we get that

\begin{eqnarray*}
\frac{\lambda_0(H_{N})}{N^2} & \le & \frac{\alpha N^2 + 2 q N(\norm{h^{row}} + \norm{h^{col}})  + 4 r N(\norm{h^{row}} + \norm{h^{col}}) }{N^2}\\
& = & \alpha + \frac{2q(\norm{h^{row}} + \norm{h^{col}})}{N} + \frac{4r(\norm{h^{row}} + \norm{h^{col}})}{N} 
\le \alpha + \frac{6 (\norm{h^{row}} + \norm{h^{col}})}{\hat{N}}
\end{eqnarray*}

We have shown that for every $N \ge (\hat{N})^2$,
$\lambda_0 (H (N))/N^2$ lies in the interval between
$\alpha - 2/\hat{N}$ and $\alpha + 6 (\norm{h^{row}} + \norm{h^{col}})/\hat{N}$. To prove the existence of the limit,
select $\hat{N}$  so that $(6 (\norm{h^{row}} + \norm{h^{col}}) + 2)/\hat{N} \le \epsilon$.

Since $\alpha_0 (h^{row}, h^{col})$ and $\alpha = \lambda_0 (H_{\hat{N}})/(\hat{N})^2$
both lie in the interval between
$\alpha - 2/\hat{N}$ and $\alpha + 6 (\norm{h^{row}} + \norm{h^{col}})/\hat{N}$, then $|\alpha_0 (h^{row}, h^{col}) - \alpha| 
\le (6 (\norm{h^{row}} + \norm{h^{col}}) + 2)/\hat{N}$.
\end{proof}

\subsection{Containment}
\label{sec:containment}

%We will use binary search to show that Function-GED is contained in
%$\fexpnexp$ as we did in the finite case for the problem
%but we first need to define a decision version of the problem (which we will call Decision-GED) to use
%in the binary search.

%\begin{quote}
%{\sc Decision-GED for $(h^{row}, h^{col})$}\\
%{\bf Input:} A number $\alpha \in [0,1]$, specified with $n$ bits.\\
%{\bf Output:} Accept if $\alpha_0 \le \alpha$ and Reject if $\alpha_0 \ge \alpha + 1/2^n$.
%\end{quote}

%\begin{claim} 
%Decision-GED is contained in $\qmaexp$
%\end{claim}

%The claim follows from
%Theorem 2.7 from \cite{GI} which shows that Decision-GED is complete for $\qmaexp$  when the two Hamiltonian terms $(h^{row}, h^{col})$ are given as part of the input along with $\alpha$. This implies that the version in which 
%the Hamiltonian terms are fixed for the problem (and therefore of constant size) and the input
%consists only of $\alpha$, is also contained in $\qmaexp$.

 %The algorithm to compute the  Function-GED in $\mbox{FEXP}^{\mbox{QMA-EXP}}$
 %uses the same binary search procedure used for the finite version of the problem
% given in Figure \ref{fig:BinSearch-pseudo}, except with Decision-GED instead of
% Decision-TIH. 
% The initial upper bound $u_0$ for the GED is initialized to
 %$\norm{h^{row}} + \norm{h^{col}}$.
 %The constant $c$ is chosen to be $1$ so that the final precision is  $1/2^n$.
 
 \begin{theorem}
 Function-GED is in $\mbox{FEXP}^{\mbox{QMA-EXP}}$
 \end{theorem}
 
 \begin{proof}
 We can apply an additive offset to $h^{row}$ and $h^{col}$,
so that $h^{row} \ge 0$ and $h^{col} \ge 0$ which can then be subtracted from the final answer.
Lemma \ref{lem:gedbound} indicates that a finite system can be used to approximate the ground energy density in the thermodynamic limit, where the size of the finite system determines the error in the approximation. Therefore,
we will use the  algorithm given for Function-TIH in Figure \ref{fig:BinSearch-pseudo}.
The algorithm is described for a 1D chain of length $N$ but can be easily extended for a 2D $N \times N$ finite grid by setting the initial upper bound on ground energy to be $N^2 (\norm{h^{row}} + \norm{h^{col}})$. The fact that the oracle language Decision-TIH is  in $\qmaexp$ for 2D systems is also a straight-forward extension of Kitaev's proof that the local Hamiltonian problem is in $\qma$ \cite{KSV02}.
We will use a constant $c = 1$ for Function-TIH, which can be achieved since Theorem \ref{th:finite} says that Function-TIH is in $\fpqmaexp$ for any $c$.
Therefore, on input $N$, Function-TIH will return a value $\lambda(N)$ such that
$|\lambda(N) - \lambda_0(H_N)| \le 1/N$ which means that
$|\lambda(N)/N^2 - \lambda_0(H_N)/N^2| \le 1/N^3$. Meanwhile Lemma \ref{lem:gedbound}
indicates that the true ground energy density $\alpha_0$ is within $(2 + 6(\norm{h^{row}} + \norm{h^{col}}))/N$ of $\lambda_0(H_N)/N^2$.
In order to estimate $\alpha_0$ to within $1/2^n$, select $N$ large enough so that
$$\left|\frac{\lambda(N)}{N^2} - \alpha_0 \right|
\le \frac 1{N^3} + \frac{2 + 6(\norm{h^{row}} + \norm{h^{col}})}{N} \le \frac 1 {2^n}$$
$N$ will be $O(2^n)$. Therefore, the length of the input given to Function-TIH $|N|$ is $O(2^{|n|})$, where $|n|$ the length of the input to Function-GED. Therefore if Function-TIH is in $\fpqmaexp$, then Function GED will be in $\fexpqmaexp$.
 \end{proof}

\subsection{Preliminaries for the Hardness Reduction}
\label{sec:hardnessoverviewinf}

The construction from the finite case must be adapted in a few ways to work
for the infinite grid. Those changes are described in Section 
\ref{sec:modifications}. For now, we will call this modified Hamiltonian term $h_C$
and show how $h_C$ is used for the construction for the infinite grid. 
We will layer $h_C$ on top of the constraints of a Robinson tiling.
If the constraints of the Robinson tiling are satisfied, then
the infinite grid will be tiled with an infinite sequence of squares of size $4^k$ for every $k$. The density of squares of size $4^k$ is
$1/4\cdot 4^{2k}$.
Then $h_C$ is applied to every pair of particles in the horizontal direction on the top edge of each square.

We will start with a function $f$ computed by an exponential time Turing Machine $M$ with access to an oracle in $\nexp$. 
The language computed by the oracle will be called $L$, and $V$ will denote the exponential time verifier for $L$. The input to the function $f$ is a string $x$,
where $|x| = n$.

We assume without loss of generality that if the first bit of $x$ is $0$, then
$f(x) = 0$. If $f$ does not have this property, then we can define a new function $f'$
such that $f'(0x) = 0$ and $f'(1x) = f(x)$. Any polynomial time algorithm to compute $f'$
can be used to compute $f$. This allows us to refer to the value of input sting $x$ which is
just the number represented by $x$ in binary. Here is a formal statement of the hardness
result which we will prove:

\begin{theorem}
\label{th:mainInf}
For any $f \in \fexpnexp$, there is a polynomial $q$,
constant $d$,
and interaction terms $h^{row}$ and $h^{col}$ that operate on
pairs of $d$-dimensional particles, such that
if $\alpha_0$ is the energy density of $h^{row}$ and $h^{col}$
in the thermodynamic limit, then for all but a finite number of $x$'s, it is possible in time
polynomial in $2^{|x|}$, to compute $f(x)$ given a value $\alpha$,
where $|\alpha - \alpha_0| \le 1/2^{q(x)}$.
$q(x)$ is the value of the polynomial $q$ on the
binary number $x$.
\end{theorem}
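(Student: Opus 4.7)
The plan is to combine the finite-chain hardness construction of Theorem \ref{th:finitehardness} with the Robinson-tiling technique of \cite{C15} to embed an infinite sequence of finite 1D hardness instances into a single fixed 2D translationally-invariant Hamiltonian. I take the fixed 1D term $h$ produced by Theorem \ref{th:finitehardness} for $f \in \fexpnexp$, apply a small modification (call it $h_C$, to be designed in Section \ref{sec:modifications}) which makes its ground-energy contribution well-controlled on chains of \emph{every} length rather than only on those of the form $N(x)$, and use $h_C$ as the 1D interaction layered on top of the tiling. The Hilbert space on each site splits into two tensor factors: a lower layer that carries the classical Robinson-tile symbols and their constraints, and an upper layer on which $h_C$ acts. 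On the infinite plane the Robinson constraints force an aperiodic tiling whose structure includes, for every $k \ge 1$, disjoint squares of side $4^k$ at density $1/4^{2k+1}$, and $h^{row}$ will be defined to apply $h_C$ to two horizontally adjacent sites exactly when their lower-layer state indicates that both lie along the top edge of some common Robinson square, and the identity otherwise; $h^{col}$ will implement only the tiling constraints vertically.

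This yields, via Lemma \ref{lem:gedbound} and a standard counting over square types, the decomposition
\begin{equation}
\alpha_0 \;=\; c_0 \;+\; \sum_{k=1}^{\infty} \frac{\lambda_0(H_{4^k, C})}{4^{2k+1}},
\end{equation}
where $H_{L,C}$ denotes $h_C$ applied to a 1D chain of $L$ sites and $c_0$ is a constant absorbing tile-layer and identity contributions. I will arrange the finite reduction $N(\cdot)$ so that each input $x$ corresponds to a unique scale $k(x) \sim |x|^2$; the finite construction guarantees that $\lambda_0(H_{4^{k(x)}, C})$ encodes $f(x)$ through the integer $L(x) = (2T(x,\tilde{y})+1)\, p(4^{k(x)})$ via the closed form $1 - \cos(\pi/(L(x)+1))$ from Lemma \ref{lem:cycleEig}. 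Given a precision-$2^{-q(x)}$ estimate of $\alpha_0$, with $q$ a polynomial in $2^{|x|}$ chosen so that the corresponding bit segment cleanly separates the term for $x$ from the tails of the later terms $k > k(x)$, the decoding algorithm proceeds inductively on $x' < x$: using the already-computed $f(x')$ it reconstructs $L(x')$, evaluates $(1 - \cos(\pi/(L(x')+1)))/4^{2k(x')+1}$ to enough bits of precision, and subtracts. After all smaller $x'$ are subtracted off, the next relevant window of bits determines the integer $L(x)$, and $f(x)$ is read out from $L(x)$ exactly as in the proof of Theorem \ref{th:finitehardness}.

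The main obstacle is precisely this subtraction step. The per-chain ground energies are irrational, so their bit-expansions bleed indefinitely into low-order positions, and one cannot simply partition the bits of $\alpha_0$ among distinct inputs. The inductive procedure must compute each earlier contribution to considerably higher precision than the final target, and it is essential that the eigenvalue admits the explicit analytic expression from Lemma \ref{lem:cycleEig}: once $L(x')$ is known as an integer (of magnitude at most polynomial in $4^{k(x')}$ and hence exponential in $|x|$), the value $1 - \cos(\pi/(L(x')+1))$ can be evaluated to any requested precision in time polynomial in that precision, i.e., polynomial in $2^{|x|}$. A secondary technical task, which dictates the design of $h_C$ and the layering rules, is to ensure two things: first, that on chains of lengths not in the image of $N(\cdot)$ the 1D ground energy still has a form computable from the data already in hand (so the induction can absorb it into $c_0$ or into the explicit formula above); second, that defects and partial squares along the boundaries of the $N \times N$ patches used to define $\alpha_0$ contribute only $o(N^2)$ and therefore do not perturb the limit. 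With these in place, the reduction $x \mapsto (h^{row},h^{col},q(x))$ together with the decoding algorithm above gives the claimed hardness.
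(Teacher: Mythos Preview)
Your high-level architecture matches the paper's: layer a modified 1D term $h_C$ on the top edges of Robinson squares, obtain $\alpha_0 = \sum_{k\ge 1} \lambda_0(4^k)/4^{2k+1}$, and recover $f(x)$ by inductively subtracting the contributions of smaller inputs using the closed form $1-\cos(\pi/(L+1))$. You also correctly isolate the central difficulty, namely that the per-chain energies are irrational and must be recomputed to high precision from the integers $L(x')$.

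The genuine gap is in your treatment of the ``off-scale'' chain lengths. Robinson tiling forces a finite chain of length $4^k$ for \emph{every} $k\ge 1$, not just for $k$ in the image of your encoding map. You propose that $h_C$ be designed so that on such lengths ``the 1D ground energy still has a form computable from the data already in hand (so the induction can absorb it into $c_0$ or into the explicit formula above).'' Neither option works as stated. Absorbing into a constant $c_0$ is impossible because there are infinitely many such $k$, each with a distinct nonconstant contribution. Absorbing into the induction would require a closed form for $\lambda_0(4^k)$ for \emph{all} $k$, but the finite construction of Theorem~\ref{th:finitehardness} only yields the expression $1-\cos(\pi/(L+1))$ when the embedded computation runs to completion and produces the correct output; for arbitrary $N$ the computation may not finish, and $\lambda_0(N)$ has no usable analytic form. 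Between consecutive encoded inputs there are $\Theta(x)$ values of $k$ whose energies you cannot compute, and these swamp the bits you need.

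The paper's resolution is not to make these contributions computable but to make them \emph{exactly zero}. This requires substantial modification of the 1D construction (Section~\ref{sec:modifications}): the embedded Turing machine first recovers $k=\lfloor\log_4 N\rfloor$ from the chain length and tests whether $k$ is a perfect square; if not, it writes a distinguished symbol $\sigma_B$ that suppresses $h_{final}$, yielding a zero-energy history state. Separately, the output symbols $\sigma_A,\sigma_R,\sigma_X$ that trigger penalties are produced only in a final post-processing pass $M_{post}$ (via placeholder $\gamma$ symbols), so that if the computation fails to finish in $N-2$ steps (which happens for a finite set $\calk$ of small $k$), no penalty-triggering symbol ever appears and again $\lambda_0(4^k)=0$. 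With these modifications, $\alpha_0 = \sum_{k\notin\calk,\; k=x^2} l(x)$, a sum over only the encoded inputs, and your inductive subtraction goes through. Two smaller points: the scale map is $k=x^2$ with $x$ the \emph{numerical value} of the input (not $|x|^2$, which would not be injective), and Theorem~\ref{th:finitehardness} is stated for $\fpnexp$, so the jump to $\fexpnexp$ requires a separate padding argument (Lemma~\ref{lem:padinf}) adjusting the running-time bounds of $M$ and $V$.
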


\subsection{Robinson Tiling}
\label{sec:Robinson}

The Hamiltonian for the infinite case will have two different layers. The lower layer uses
constraints that impose a quasi-periodic tiling of the infinite plane.
A tiling can be seen as a classical Hamiltonian in which there are are a finite number of
tile types corresponding to classical basis states for a particle. A tiling system
also has rules
governing the cost of placing two tiles next to each other in the horizontal or vertical direction.
These correspond to Type 1 terms in the Hamiltonian that impose an energy penalty for two neighboring particles in a particular pair of standard basis states. 
The tiling rules devised by Robinson \cite{R71} impose
a  structure on the infinite plane as shown in
Figure \ref{fig:RobinsonGrid}.
\begin{figure}[ht]
  \centering
  \includegraphics[width=4.5in]{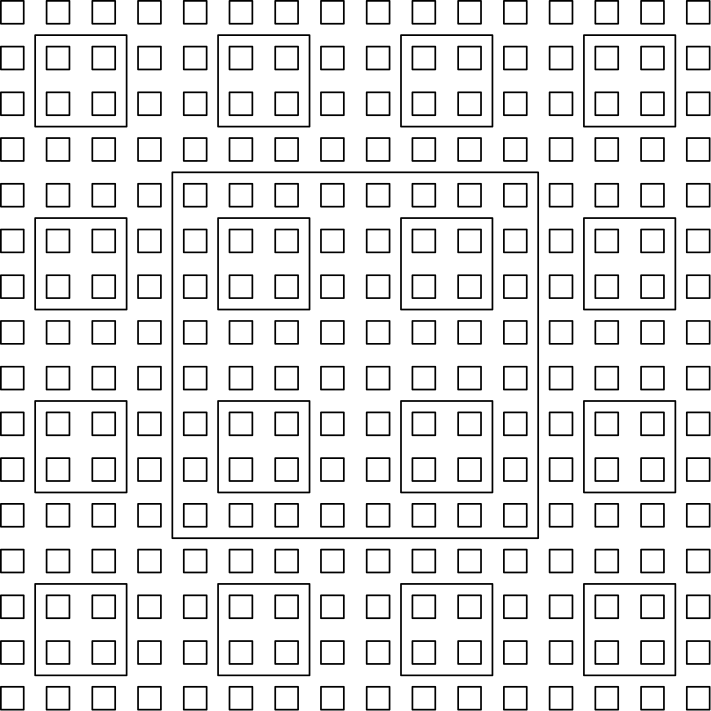}
\caption{A low energy Robinson tiling of the plane.}
\label{fig:RobinsonGrid}
\end{figure}
The essential feature of the tiling is that it has an infinite sequence of interlocking squares of increasing size. The {\em size} of a square refers to the length of one of the sides.  In the case of the Robinson tiles, the squares have size $4^k$ for $k \in \mathbb{Z}^+$. The density of squares of size $4^k$
is $1/4^{2k+1}$. %A $4 \times 4$ square from a Robinson tiling is show in Figure \ref{fig:RobinsonSquare}.

Cubitt, Perez-Garcia, and Wolf were the first to make use of quasi-periodic tilings of the infinite plane in the context of Hamiltonian complexity \cite{C15}.
We use this same idea to layer the 1D Hamiltonian used in the construction for finite chains on top of the tile type for the top 
of the perimeter of a square. The $\leftcorner$ and $\rightcorner$ states in the
Robinson layer force the top layer to be in states $\leftBr$ and $\rightBr$, which in turn
forces the Hamiltonian to act on the top layer of the particles in between the
$\leftBr$ and $\rightBr$ as the Hamiltonian
used in our construction. The end result is  that if the energy  from
the lower layer is $0$, then the ground energy of the Hamiltonian will be the ground energy 
of the Hamiltonian from our construction on the top row of squares of size $4^k$ for every $k \ge 1$. The energy contribution is then weighted according to the the density of the squares of each size. \cite{C15} establishes
robustness criteria for the Robinson tiles so that it is more energetically favorable to have a correct
tiling on the lower Robinson layer and pay the price of the energy contribution from the finite chains
embedded in the squares. We are able to use the same robustness result which is summarized in Lemma \ref{lem:Robinson}.
We introduce some modifications of the construction for finite chains (described below) to get a $2$-particle term
$h_C$. If $H_{C}(N)$ denotes the Hamiltonian resulting from applying $h_C$ to each pair in a chain of length $N$,
then the net results is that  the ground energy density in the thermodynamic limit resulting from
layering $h_C$ on top of the Robinson
rules will be:
$$\alpha_0 = \sum_{k=1}^{\infty} \frac{\lambda_0 (H_{C}(4^k)|_{\calh_{br}})}{4^{2k+1}}.$$
This is established more formally below. The reduction then must show that if $\alpha_0$ can be approximated to a given level of precision, the value $\lambda_0 (H_{C}(4^k))$ can be determined and then the value for the desired function of $k$ can be extracted.

Let $\calh_R$ be the Hilbert space for each particle under the Robinson tiling. We will refer to this part of the space as the $R$-layer.
The standard basis for $\calh_R$ is just $\ket{t}$, for each tile type $t$. 
Let $\calh_C$ be the Hilbert space for each particle given in our construction. We will refer to this part of the space as the $C$-layer.
The Hilbert space for a particle will be $\calh_R \otimes ( \ket{0} \oplus \calh_C)$. We will refer to $\ket{0}$ as the {\em blank} state for the $C$-layer.
Let $H_R$ be the Hamiltonian resulting from the Robinson tiling rules.
Let $H_C$ be the Hamiltonian resulting from our construction.
The overall Hamiltonian acts as $H_R$ on the $R$-layer.
Then there are constraints that force the $C$-layer to be in state $\leftBr$
if and only if the $R$-layer is in state $\leftcorner$.
Similarly there are constraints that force the $C$-layer to be in state $\rightBr$
if and only if the $R$-layer is in state $\rightcorner$.
There are constraints between the layers that force low energy configurations
to be in a blank state on the $C$-layer, except  if the underlying tile type
on the lower layer is $\toprow$. The rules of the Robinson tiling will force
particles in between the $\leftcorner / \leftBr$ on the left and $\rightcorner / \rightBr$ on the right to be in state $\toprow$. The Hamiltonian on the top layer will act as $H_C$
if the tile type below is $\toprow$.

The details regarding the Robinson tiling are given in \cite{C15}.
They also detail the specific requirements required of
$H_C$ to be layered in this manner. 
We call a Hamiltonian that satisfies these constraints a
{\em Robinson-Embeddable} Hamiltonian. The term used in \cite{C15} is a
{\em Gottesman-Irani} Hamiltonian, named after the construction
from \cite{GI}.

\begin{definition}
\label{def:GI}
{\bf [Robinson-Embeddable Hamiltonian: Definition 50 from \cite{C15}]} Let $\calh$ be a finite-dimensional
Hilbert space with two distinguished orthogonal states $\ket{\leftBr}$ and
$\ket{\rightBr}$. A Robinson-Embeddable Hamiltonian is a 1D, translationally-invariant nearest-neighbor Hamiltonian $H(N)$ on a chain of length $N \ge 4$ which acts on
Hilbert space $(\calh)^{\otimes N}$ with local interaction $h$, which
satisfies the following properties:
\begin{enumerate}
    \item $h \ge 0$
    \item $[h, \ketbra{\leftBr \leftBr}{\leftBr \leftBr}] = 
    [h, \ketbra{\rightBr \rightBr}{\rightBr \rightBr}] = 
    [h, \ketbra{\leftBr \leftBr}{\rightBr \rightBr}] = 
    [h, \ketbra{\rightBr \rightBr}{\leftBr \leftBr}] = 0$
    \item $\lambda_0 (H (N) |_{\calh_{br}}) < 1$, where $\calh_{br}$ is the subspace of states with fixed boundary conditions $\ket{\leftBr}$ and $\ket{\rightBr}$ at the left and right ends of the chain respectively.
    \item For every positive integer $k$, $\lambda_0 (H (4^k) |_{\calh_{br}}) \ge 0$ and $\sum_{k=1}^{\infty} \lambda_0 (H (4^k) |_{\calh_{br}}) < 1/2$.
    \item $\lambda_0 (H (N) |_{\calh_{<}}) = \lambda_0 (H (N) |_{\calh_{>}}) = 0$, where $\calh_{<}$ and $\calh_{>}$ are the subspaces of states with, respectively, a $\ket{\leftBr}$ at the left end of the chain or a $\ket{\rightBr}$ at the right end of the chain.
\end{enumerate}
\end{definition}

We first need to show that the Hamiltonian given in our constructions 
satisfies the constraints of being a Robinson-Embeddable Hamiltonian:

\begin{lemma}
The Hamiltonian  $h_C$ in our construction is a Robinson-Embeddable Hamiltonian.
\end{lemma}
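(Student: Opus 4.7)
The plan is to verify the five conditions of Definition~\ref{def:GI} for the two-particle term $h_C$. Recall that $h_C$ is obtained from the finite-chain Hamiltonian $h$ (of Section~\ref{sec:addconstraints}) by stripping off the bracket-enforcement additions of $\bar{h}$, since those will be supplied by the Robinson layer; all propagation terms $h_{prop}$ and all Type~I penalty terms $h_{wf\text{-}cl}, h_{wf\text{-}co}, h_{cl}, h_{init}, h_{length}, h_V, h_{final}$ are retained. Conditions~1 and~2 are essentially structural: every Type~I term is a rank-one projector, and every transition term in $h_{prop}$ is a projector onto an antisymmetric two-dimensional subspace, so $h_C \succeq 0$; and since no transition rule or illegal-pair term ever creates, destroys, or interconverts a $\leftBr$ or $\rightBr$ symbol (these appear only as immutable boundary anchors), $h_C \ket{\leftBr \leftBr} = 0 = \bra{\leftBr \leftBr} h_C$ and similarly for $\rightBr$, which implies commutation with each of the four bracket operators listed in Condition~2.

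Conditions~3 and~4 follow by reusing the ground-energy analysis of Section~\ref{sec:groundenergy}. Lemmas~\ref{lem:groundenergy} and~\ref{lem:cycleEig} give
\[
\lambda_0(H_C(N)|_{\calh_{br}}) \;=\; 1 - \cos\!\left(\frac{\pi}{L+1}\right), \qquad L = (2T(x,\tilde{y})+1)\,p(N) \,\ge\, p(N),
\]
which is strictly less than $1$ for every $N \ge 4$, yielding Condition~3. For Condition~4, specializing to $N = 4^k$ and using $1 - \cos \theta \le \theta^2/2$ together with $p(4^k) \ge 2\cdot 4^{2k}$,
\[
\sum_{k=1}^{\infty} \lambda_0(H_C(4^k)|_{\calh_{br}}) \;\le\; \sum_{k=1}^{\infty} \frac{\pi^2}{2\,p(4^k)^2} \;\le\; \frac{\pi^2}{8}\sum_{k=1}^{\infty} \frac{1}{4^{4k}} \;=\; \frac{\pi^2}{8\cdot 255},
\]
which is comfortably below $1/2$.

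The content of Condition~5 is to exhibit explicit zero-energy states in $\calh_<$ and $\calh_>$. I propose product ``vacuum'' states: for $\calh_>$, place $\ket{\rightBr}$ on the right end and $\ket{\blank, \blankLtwo, \blankLthree, \blank, \#, 0}$ on every other particle; for $\calh_<$, place $\ket{\leftBr}$ on the left end and $\ket{\Dblank, \DblankLtwo, \DblankLthree, \Dblank, \#, 0}$ elsewhere. The verification proceeds track-by-track: every two-particle pair occurring in these configurations lies in the allowed language of Definition~\ref{def:wellformed} (so no pair from $h_{wf\text{-}cl}$, $h_{wf\text{-}co}$, or $h_{cl}$ is illegal); the absence of any clock pointer ensures that no term of $h_{init}$, $h_{length}$, $h_V$, or $h_{final}$ applies; and because no two-particle block of the state matches either side of any propagation rule, every term of $h_{prop}$ acts as zero. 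Combined with $h_C \succeq 0$, each vacuum state has energy exactly $0$.

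The main obstacle is the detailed bookkeeping for Condition~5: the vacuum symbols ($\Dblank$-like vs.\ $\blank$-like) on each track must be chosen to be compatible with the bracket anchor on the appropriate side but avoid every illegal pair derived via Lemmas~\ref{lem:regexp} and~\ref{lem:regexpwf}. In particular, since the Track~3 regular expression permits adjacency of the form $\blankLthree\,\rightBr$ but not $\DblankLthree\,\rightBr$ (and symmetrically on the left), the two cases $\calh_<$ and $\calh_>$ are not symmetric and require the dual vacuum choices above. A careful case analysis over the six tracks, combined with the observation that all cross-track penalties in $h_{cl}$ require at least one clock pointer to be present, confirms that no penalty term is activated by either vacuum state.
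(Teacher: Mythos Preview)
Your proposal is correct and follows essentially the same approach as the paper: verify the five conditions of Definition~\ref{def:GI} one by one, use the structural form of Type~I/II terms for Conditions~1 and~2, invoke Lemma~\ref{lem:groundenergy} for Conditions~3 and~4, and exhibit the same pointer-free ``vacuum'' product states (with $\Dblank$-type symbols next to $\leftBr$ and $\blank$-type symbols next to $\rightBr$) for Condition~5. The only differences are cosmetic---you phrase Condition~2 via annihilation of $\ket{\leftBr\leftBr}$ and $\ket{\rightBr\rightBr}$ rather than via the form of the bracket-preserving transition rules, and for Condition~4 you use the sharper bound $\lambda_0(4^k)\le \pi^2/(2p(4^k)^2)$ where the paper uses the cruder $\lambda_0(N)\le 1/N$; both reach the same conclusion.
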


\begin{proof}
Although the Hamiltonian term $h_C$ used in the infinite case will be modified from the
construction used in the finite case, it will not be different in any of the features used in this proof. We will go through each property in turn:
\begin{enumerate}
    \item All terms are Type I or II terms as described in Section \ref{sec:circToHam} which have no negative eigenvalues.
    \item All of the transition rules that involve $\leftBr$ or $\rightBr$
    have the form: $\leftBr \generic \rightarrow \leftBr \generic$ or
    $\generic \rightBr \rightarrow \generic \rightBr$, where $\generic$ is not equal to $\leftBr$ or $\rightBr$. These propagation terms resulting from those transition rules all commute with the operators described in item $2$. All other terms are diagonal in the standard basis.
    \item According to Lemma \ref{lem:groundenergy}, The smallest eigenvalue of $H_N|_{\calh_{br}}$ is
    $$    E(x,\tilde{y}) =  \left( 1 - \cos\left( \frac{ \pi}{L+1} \right)\right),$$
where $L = (2 T(x,\tilde{y})+3) \cdot p(N)$. For $N \ge 4$, $L > 2$, and this value is always strictly less than $1$.
\item The lower bound $\lambda_0 (H_C (4^k) |_{\calh_{br}}) \ge 0$ holds because $h_C \ge 0$.
For the upper bound, we can plug in $p(N)$ to the expression above and use the fact that $T(x,y) \ge 0$ and the assumption that $N \ge 4$:
$$\lambda_0 (H_C (N) |_{\calh_{br}}) \le  \left( 1 - \cos\left( \frac{ \pi}{12(N-2)(2N-5)+1} \right)\right) \le \frac{1}{N}.$$
The sum $\sum_{k=1}^{\infty} 1/4^k$ is strictly less than $1/2$.
\item The standard basis states below are $0$ energy states for $\calh_{<}$ and $\calh_{>}$.
None of the transition rules apply since there is no pointer on Track $1$.
\begin{center}
\begin{small}
\begin{tabular}{|@{}c@{}|@{}c@{}|}
\hline
$\leftBr$ &
\begin{tabular}{@{}c@{}|@{}c@{}|@{}c@{}|}
$\Dblank$ & $\Dblank$ & $\Dblank \cdots$ \\
\hline
$\DblankLtwo$ & $\DblankLtwo$ & $\DblankLtwo \cdots$ \\
\hline
$\DblankLthree$ & $\DblankLthree$ & $\DblankLthree \cdots$   \\
\hline
$\Dblank$ & $\Dblank$   & $\Dblank \cdots$  \\
\hline
$\#$ & $\#$ & $\# \cdots$ \\
\hline
$0$ & $0$ & $0 \cdots$ \\
\end{tabular}
\\
\hline
\end{tabular}
~~~~~~~~~~~~~~~~~~~
\begin{tabular}{|@{}c@{}|@{}c@{}|}
\hline
\begin{tabular}{@{}c@{}|@{}c@{}|@{}c@{}|}
$\cdots \blank$ & $\blank$ & $\blank $ \\
\hline
$\cdots \blankLtwo$ & $\blankLtwo$ & $\blankLtwo $ \\
\hline
$\cdots \blankLthree$ & $\blankLthree$ & $\blankLthree $   \\
\hline
$\cdots \blank$ & $\blank$   & $\blank $  \\
\hline
$\cdots \#$ & $\#$ & $\#$ \\
\hline
$\cdots 0$ & $0$ & $0$ \\
\end{tabular}
&
$\rightBr$ 
\\
\hline
\end{tabular}
\end{small}
\end{center}
\end{enumerate}
\end{proof}

$\lambda_0(N)$ is defined to be $\lambda_0(H_{C}(N) |_{\calh_{br}})$, the ground energy of a bracketed
finite chain of length $N$ in which $h_{c}$ is applied to every pair of neighboring particles in the chain.
\cite{C15} prove a robustness bound for Robinson tiling which enables us
to assume  that the cost of $H_C$ on the finite segments dominates in the ground energy
density.

\begin{lemma}
\label{lem:Robinson}
{\bf (Lemma 52 from \cite{C15})}
Let $h^{row}_R$ and $h^{col}_R$ be the local interactions of the modified
Robinson tiling acting on $\calh_R \otimes \calh_R$
as described in \cite{C15}. 
For a given ground state configuration (tiling) of $H_R$, let $\call$
denote the set of all horizontal line segments of the lattice that lie between $\leftcorner$ and $\rightcorner$ tiles. Let $h_C$ denote
the local interaction of a Robinson-Embeddable Hamiltonian $H_C (N)$
as in Definition \ref{def:GI}.
Then there is a Hamiltonian $H$ on a 2D square lattice of width and height $L$ with nearest neighbor interactions $h^{row}$ and $h^{col}$
such that for any $L$, $\lambda_0(H(L))$, the ground energy of $H(L)$,  is in the interval.
\begin{equation}
    \label{eq:squareE}
    \left[  
    \sum_{k=1}^{\lfloor \log_4 (L/2) \rfloor}
    \left( \left\lfloor \frac{L}{2^{2n+1}} \right\rfloor
    \left( \left\lfloor \frac{L}{2^{2n+1}} \right\rfloor - 1 \right) \right) \lambda_0 (4^k),
    \sum_{k=1}^{\lfloor \log_4 (L/2) \rfloor}
    \left( \left\lfloor \frac{L}{2^{2n+1}} \right\rfloor
    \left( \left\lfloor \frac{L}{2^{2n+1}} \right\rfloor + 1 \right) \right) \lambda_0 (4^k)
    \right]
\end{equation}
\end{lemma}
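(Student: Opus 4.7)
The plan is to follow the strategy of Cubitt, Perez-Garcia, and Wolf \cite{C15} whose Lemma 52 is being quoted here; I will sketch the two matching bounds separately and indicate where the real work lies.

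For the upper bound I would exhibit a concrete product witness state. First fix any valid Robinson tiling of the $L \times L$ region on the $R$-layer, which exists because the Robinson tile set admits tilings of arbitrarily large square regions with the interlocking-square structure of Figure \ref{fig:RobinsonGrid}. On the $C$-layer, I would put the blank state $\ket{0}$ on every particle whose underlying $R$-tile is not $\toprow$, $\leftcorner$, or $\rightcorner$. On each maximal horizontal run of $\toprow$ tiles flanked by $\leftcorner$ and $\rightcorner$ (which has some length of the form $4^k$ by the structure of the tiling), I would place the ground state of $H_C(4^k)|_{\calh_{br}}$, which exists and has energy $\lambda_0(4^k)$ by property~3 of Definition \ref{def:GI}. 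The inter-layer constraints are satisfied by construction (blank $C$-layer iff the $R$-layer tile is not $\toprow$, $\leftcorner$, or $\rightcorner$), the $R$-layer energy is zero, property~5 of Definition \ref{def:GI} ensures that no extra energy is paid at boundaries where a chain meets a $C$-blank particle, and the only nonzero contribution comes from the top-row chains. Summing over all squares, and using the standard count that the number of top-rows of squares of size $4^k$ inside an $L \times L$ region is at most $\lfloor L/2^{2k+1}\rfloor(\lfloor L/2^{2k+1}\rfloor + 1)$, gives the right endpoint of (\ref{eq:squareE}).

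For the lower bound I would invoke the robustness of the modified Robinson tiling proved in \cite{C15}. The key quantitative property, which is the heart of their construction and which I would cite rather than re-derive, is that any deviation from a perfect Robinson tiling inside an $L \times L$ patch incurs an energy penalty on the $R$-layer that strictly exceeds the total possible gain available from not having to place any $H_C$ ground states on top rows. This uses property~4 of Definition \ref{def:GI} in an essential way: $\sum_{k=1}^{\infty}\lambda_0(4^k) < 1/2$ means that the total energy saved by freeing all top rows is bounded, whereas a single tiling violation costs at least a constant. Conditioned on the $R$-layer being a valid Robinson tiling of the patch, the Hamiltonian on the $C$-layer decouples across disjoint top rows (properties~2 and~5 of Definition \ref{def:GI} ensure no cross-talk at the bracket boundaries and no penalty outside top rows), so its energy is at least $\sum_k N_k \lambda_0(4^k)$ where $N_k$ is the number of size-$4^k$ squares. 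Using the lower count $\lfloor L/2^{2k+1}\rfloor(\lfloor L/2^{2k+1}\rfloor - 1)$, which accounts for squares that are only partially contained in the $L \times L$ window, yields the left endpoint of (\ref{eq:squareE}).

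The main obstacle, and the reason this lemma is imported from \cite{C15} rather than reproved, is establishing that quantitative robustness of the Robinson tiling under the layered Hamiltonian: one must show that every non-tiling configuration of the $R$-layer costs strictly more than $1/2$ of additional energy inside each relevant region, so that the summability property~4 of Definition \ref{def:GI} really does dominate the saving. This requires a careful combinatorial analysis of how tiling faults propagate through the nested-square Robinson structure and how many $\leftcorner$/$\rightcorner$ pairs can be eliminated per unit of tiling-violation energy; once that is in hand, the counting of squares of each size inside an $L \times L$ window is elementary and only contributes the $\pm 1$ boundary slack visible in the two endpoints of (\ref{eq:squareE}).
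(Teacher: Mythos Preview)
Your sketch is appropriate and matches the structure of the argument in \cite{C15}. Note that the present paper does not prove this lemma at all: it is quoted verbatim as Lemma~52 from \cite{C15} and used as a black box, with no proof given here. You correctly recognize this yourself in your final paragraph. Your outline of the upper bound via an explicit product witness and the lower bound via the robustness of the modified Robinson tiling, with properties~2--5 of Definition~\ref{def:GI} doing the work of decoupling the $C$-layer chains and bounding the total savings from tiling violations, is exactly the shape of the original argument.
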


The expression for the ground energy density in the thermodynamic limit which
we will need in our construction follows from Lemma 
\ref{lem:Robinson} and is encapsulated in the following Corollary. 

\begin{coro}
\label{cor:ged}
The ground energy density of $h^{row}$ and $h^{col}$ in the thermodynamic limit is
$$\alpha_0 = \sum_{k=1}^{\infty} \frac{\lambda_0 (4^k)}{4^{2k+1}}.$$
\end{coro}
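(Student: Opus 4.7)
The plan is to apply Lemma~\ref{lem:Robinson} to the Hamiltonian $H$ on an $L \times L$ grid and divide through by $L^2$, then push the limit $L \to \infty$ inside the sum. By the lemma, $\lambda_0(H(L))/L^2$ is sandwiched between
\[
\frac{1}{L^2}\sum_{k=1}^{\lfloor \log_4(L/2)\rfloor} \left\lfloor \frac{L}{2^{2k+1}} \right\rfloor \left( \left\lfloor \frac{L}{2^{2k+1}} \right\rfloor - 1 \right)\lambda_0(4^k)
\]
and the analogous upper bound with $+1$ instead of $-1$. For each fixed $k$, the coefficient of $\lambda_0(4^k)$ in either bound tends to $1/4^{2k+1}$ as $L\to\infty$, since $\lfloor L/2^{2k+1}\rfloor = L/2^{2k+1} + O(1)$ and so
\[
\frac{1}{L^2}\left\lfloor \frac{L}{2^{2k+1}} \right\rfloor\left(\left\lfloor \frac{L}{2^{2k+1}} \right\rfloor \pm 1\right) = \frac{1}{4^{2k+1}} + O\!\left(\frac{1}{L\cdot 2^{2k+1}}\right).
\]
This handles the pointwise (in $k$) convergence of each term.

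The real work is justifying the interchange of limit and infinite sum, since the range of summation $\lfloor \log_4(L/2)\rfloor$ itself grows with $L$. The plan is to produce a summable dominating sequence and invoke dominated convergence. Concretely, I would bound the coefficient of $\lambda_0(4^k)$ in the upper bound by
\[
\frac{1}{L^2}\left(\frac{L}{2^{2k+1}}\right)\!\left(\frac{L}{2^{2k+1}} + 1\right) \le \frac{1}{4^{2k+1}} + \frac{1}{L\cdot 2^{2k+1}},
\]
valid whenever $k \le \lfloor \log_4(L/2)\rfloor$, and zero otherwise. Multiplying by $\lambda_0(4^k)$ and summing in $k$, the two pieces give $\sum_k \lambda_0(4^k)/4^{2k+1}$ (finite by the computation in the proof of property~4 of the Robinson-Embeddable definition, which shows $\sum_k \lambda_0(4^k) < 1/2$) and a second piece that is $O(1/L)\cdot \sum_k \lambda_0(4^k)/2^{2k+1} = O(1/L)$. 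The same argument works symmetrically for the lower bound. Hence both the upper and lower bound on $\lambda_0(H(L))/L^2$ converge to $\sum_{k=1}^\infty \lambda_0(4^k)/4^{2k+1}$, which forces $\alpha_0$ to equal this sum.

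The main obstacle, as just indicated, is the truncation in the sum: one must argue that the tail contribution from $k > \lfloor \log_4(L/2)\rfloor$ is negligible and that the explicit $\pm 1$ correction to the floor does not accumulate over the growing number of terms. Both difficulties are controlled by the absolute summability $\sum_k \lambda_0(4^k) < \infty$, together with the geometric factor $1/2^{2k+1}$, so the interchange is legitimate. The other ingredient I would need to verify in passing is that the lemma's hypothesis applies in our setting, i.e., that the $h_C$ from our construction really is a Robinson-embeddable Hamiltonian in the sense of Definition~\ref{def:GI}; this has already been checked in the preceding lemma.
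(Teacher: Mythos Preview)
Your proposal is correct and follows essentially the same approach as the paper: divide the bounds from Lemma~\ref{lem:Robinson} by $L^2$, observe that the coefficient of each $\lambda_0(4^k)$ is $1/4^{2k+1} + O(1/(L \cdot 2^{2k+1}))$, and let $L \to \infty$. The paper's proof is terser and does not spell out the dominated-convergence justification for interchanging the limit with the growing sum, so your explicit treatment of that step (via the summability $\sum_k \lambda_0(4^k) < 1/2$ from property~4 of Definition~\ref{def:GI}) is in fact more careful than the original.
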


\begin{proof}
Using the bounds in (\ref{eq:squareE}), the ground energy density of $H$
on an $L \times L$ square is contained in:
$$\frac{\lambda_0 (H(L))}{L^2} \in 
 \sum_{k=1}^{\lfloor \log_4 (L/2) \rfloor} \lambda_0 (4^k) \left(
\frac{1}{4^{2k+1}} \pm \frac{3}{2^{2k+1} \cdot L} \right)$$
Then
\begin{align*}
    \alpha_0 & = \lim_{L \rightarrow \infty}  \frac{\lambda_0 (H(L))}{L^2}\\
    & = \lim_{L \rightarrow \infty}   \sum_{k=1}^{\lfloor \log_4 (L/2) \rfloor} \lambda_0 (4^k) \left(
\frac{1}{4^{2k+1}} - \frac{3}{2^{2k+1} \cdot L} \right)
 = \lim_{L \rightarrow \infty}   \sum_{k=1}^{\lfloor \log_4 (L/2) \rfloor} \lambda_0 (4^k) \left(
\frac{1}{4^{2k+1}} + \frac{3}{2^{2k+1} \cdot L} \right)\\
& = \sum_{k=1}^{\infty} \frac{\lambda_0 (4^k)}{4^{2k+1}}
\end{align*}
\end{proof}

\subsection{The Final Reduction: Mapping  from 
the GED to f(x)}
\label{sec:infreduction}

A few modifications for the finite construction are required for use in the infinite construction. We will describe those modifications in general terms here and then state 
Lemma \ref{lem:energyksquare} that encapsulate the key facts required of the construction.
More specifics about the modifications as well as a proof of Lemma \ref{lem:energyksquare}
is given in Section \ref{sec:modifications}.
In this section we will complete the proof of the hardness result for the thermodynamic
limit using the claims from Lemma \ref{lem:energyksquare}.

The first change is that in the infinite construction,
we will not use squares of size $4^k$ for every $k$ to encode the result of a computation. We will only use  $k$'s that are a perfect square.
Thus, we will use a chain of size $4^{x^2}$ to encode $f(x)$.
This will allow us to use the bits from $x^2$ to $(x+1)^2$
in the energy density to encode $f(x)$.
We need to add a way for $\lambda_0(H_{inf}(4^k))$ to be $0$ if
$k$ is not a prefect square.

The second issue is the fact that the computation embedded in the Hamiltonian construction
for the finite case may not finish in exactly $N-2$ steps, where $N = 4^k$ is the length of the chain.
Recall, that the clock in the construction ensures that the process is executed for exactly $N-2$ steps.
We can make sure that $N$ steps is sufficient asymptotically, so that the construction will work on all but a finite
number of $k$, but we need to control what happens for small $N = 4^k$ when the computation may not finish. 
We alter the construction so that in the cases where the computation does not finish, the ground energy is $0$.
We define the set $\calk$ to denote the set of $k$ such that the computation does not finish in $N-2$ steps for $N = 4^k$.
We then prove that there is a Turing Machine that finishes the required computation in  $o(N)$ steps, so that $\calk$ is finite.
The analysis  follows the finite case very closely and we establish the following lemma similar to the
expression for the finite case.

\begin{lemma}
\label{lem:energyksquare}{\bf [Energy of the $k^{th}$ square as a function of $k$]}
For $N = 4^k$, where $k \in \calk$ or $k$ is not a perfect square, $\lambda_0 (4^k)  = 0$.
If $k$ is a perfect square, and $x = \sqrt{k}$, then 
$$    \lambda_0 (4^k) =  \left( 1 - \cos\left( \frac{ \pi}{L+1} \right)\right),$$
where $L = (2 T(x,\tilde{y})+3) \cdot p(4^k)$ and $\tilde{y}$ is the string denoting
the correct answers to the oracle queries made by $M$ on input $x$.
\end{lemma}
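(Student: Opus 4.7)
The plan is to follow the structure of Lemma \ref{lem:groundenergy} essentially verbatim, but with $h_C$ modified so that (i) the reduction computes $x = \sqrt{k}$ from $N = 4^k$ before invoking $M_{BC}$, and (ii) any input length $N$ for which this extraction fails (i.e.\ $k$ is not a perfect square, or $k \in \calk$) admits a genuine zero-energy state, rather than the strictly positive cycle-eigenvalue we get in the finite case.

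First I would describe the modifications to the Turing Machine layer. In place of the original binary counter $M_{BC}$, the construction runs a short preamble $M_{\sqrt{}}$ that, given a chain of length $4^k$, writes $\log_4 N = k$ in binary (using the same unary-clock-to-binary-counter idea as Lemma \ref{lem:Nofxbound}), and then tests whether $k$ is a perfect square; if it is, it writes $x = \sqrt{k}$ on the work tape and hands control off to the (previously analyzed) $M_{BC} + M_{TV}$ pipeline on input $x$. Otherwise it writes a distinguished symbol $\sigma_{\mathrm{null}}$ and halts. The construction must be arranged so that when the halt symbol is $\sigma_{\mathrm{null}}$, the penalty terms $h_{init}$, $h_{length}$, $h_V$, and (crucially) $h_{final}$ are all suppressed: this is achieved by gating the ``check'' transition rules on the $4$- and $8$-pointers by the absence of $\sigma_{\mathrm{null}}$ in the leftmost cell, and by removing the two consecutive $+1/2$ terms from $h_{final}$ in this case. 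With these gates in place, the block containing $\ket{\vinit}$ in a non-square or $\calk$-exceptional chain has $H_{N, prop}$ acting as the full cycle Laplacian with no diagonal perturbation, whose minimum eigenvalue is exactly $0$ (the uniform superposition over the cycle). The remaining blocks still carry periodic costs from the well-formedness penalties, so the minimum over all blocks is $0$, giving $\lambda_0(4^k) = 0$ in the first case.

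For the second case, when $k$ is a perfect square, I would argue that after the preamble the computation embedded in the Hamiltonian is exactly the process analyzed in Section \ref{sec:groundenergy}, but with $N$ replaced by $4^k$ and input $x = \sqrt k$. The same block decomposition applies: blocks corresponding to incorrect clock states, wrong initial configurations, wrong timer lengths, or $y \in Y_{\mathrm{rej}}$ are eliminated via Lemmas \ref{lem:lbpath} and \ref{lem:lb-s12}. The remaining blocks are indexed by $(y, u)$ with $y \notin Y_{\mathrm{rej}}$, where each block's minimal eigenvalue is given by Lemma \ref{lem:Exy}, namely $1 - \cos(\pi/(L+1))$ with $L = (2T(x,y)+1)p(4^k)$. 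Lemma \ref{lem:groundenergy}'s argument that $y = \tilde y$ strictly minimizes $T(x,y)$ carries through without change because $T(x,y)$ depends only on the function $f$, $x$, and $y$, not on the ambient geometry. This yields the advertised formula; I would only note the shift from $(2T+1)$ to $(2T+3)$ announced in the statement, which arises from the extra $8$-segments introduced by the preamble and must be tracked carefully through the cycle length.

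The main obstacle is checking that the preamble modification preserves all the structural invariants we relied on: reversibility of $M_{\sqrt{}} + M_{BC} + M_{TV}$ as a whole (via Lemmas \ref{lem:addsymbols}, \ref{lem:dovetail}, and the transformation $\calt$), the bound that the head never escapes the segment $[1, N-3]$ (forcing us to bound the space complexity of square-testing by $o(N)$ and to justify that the new $\calk$ is still finite), and the fact that in the ``null'' branch $h_{final}$ really does vanish, so that $H_{N, prop}$ restricted to the dominant block is the unperturbed cycle Laplacian. Proving that the gated-out penalty terms do not accidentally create a new periodic cost elsewhere in the cycle—i.e., that the null-branch block truly achieves eigenvalue $0$—requires reverifying the propagation rules for all eight segments when $\sigma_{\mathrm{null}}$ is present, which is tedious but routine. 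Once those invariants are re-established, the rest of the argument is a direct transplant of the finite-case proof.
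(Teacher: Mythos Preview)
Your high-level strategy matches the paper's: modify the finite construction so that the non-perfect-square and ``did not finish'' cases admit a genuine zero-energy block, and otherwise re-use the block analysis of Section~\ref{sec:groundenergy} verbatim. But the mechanism you propose for suppressing the penalties has a real gap in the $k \in \calk$ case.

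You gate $h_{final}$, $h_{length}$, etc.\ on the \emph{absence} of $\sigma_{\mathrm{null}}$ in the leftmost cell. This works when the preamble actually completes and writes $\sigma_{\mathrm{null}}$, i.e.\ the non-perfect-square case. It fails when the clock runs out before the preamble (or later TM stages) finish: then the leftmost cell holds whatever intermediate symbol the TM happened to leave there, $\sigma_{\mathrm{null}}$ is absent, and by your rule the penalties \emph{do} fire. In particular $h_{final}$ would add its $+1/2$ terms, and $h_{length}$ would demand that the Track~$3$ timer match a tape pattern that is mid-computation garbage. You would not get a zero-energy block. The paper handles this with a different device: $M_{TV\text{-}inf}$ writes its output using an inert alphabet $\{\gamma_A,\gamma_R,\gamma_B,\gamma_X\}$ that triggers no penalties, and a separate post-processing machine $M_{post}$ converts $\gamma$'s to $\sigma$'s, writing the single trigger symbol $\sigma_A/\sigma_R/\sigma_B$ as its very last step. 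Thus an unfinished computation is guaranteed to have no $\sigma_A,\sigma_R$ on tape, so the (positively) gated $h_{final}$ and $h_V$ vanish; and $h_{length}$ is broadened to count both $\gamma$- and $\sigma$-type symbols so that some timer length is always consistent with whatever partial output is present (Lemma~\ref{lem:doesntfinish}). The point is that the gate must be on the \emph{presence} of a symbol that only appears at the final TM step, not on the absence of a symbol that also fails to appear when nothing has finished.

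Two smaller remarks. First, the paper does not insert a preamble before $M_{BC}$; it runs $M_{BC}$ for $N-2$ steps as before (possibly stopping mid-increment), and then $M_{TV\text{-}inf}$ finishes the increment, recovers $N(z)$, and computes $k=\lfloor\log_4 N(z)\rfloor$ and $x=\lfloor\sqrt{k}\rfloor$ internally (Figure~\ref{fig:Minfpseudo}). Your preamble idea is workable but is not what is done. Second, your explanation of the $(2T+1)\to(2T+3)$ shift as ``extra $8$-segments from the preamble'' is not supported by the construction; no preamble segments are added, and the discrepancy is an internal inconsistency in the paper's statements rather than a structural change you need to account for.
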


Lemma \ref{lem:reduction}
proven below  uses Lemmas \ref{lem:energyksquare}  and the results on Robinson tiling from Corollary \ref{cor:ged} to complete the proof of Theorem \ref{th:mainInf}.
In particular, it shows that if the energies of the finite segments are as described in Lemma \ref{lem:energyksquare}
and the ground energy density $\alpha$ is in fact the infinite sum given in Corollary \ref{cor:ged}, then
there is a polynomial $q(x)$ such that if we are
given $\alpha$ where
$|\alpha - \alpha_0| \le 1/ 2^{q(x)}$,
it is possible to determine $\lambda_0(4^{x^2})$ to within a sufficient precision so that the integer $L$ can be determined.
From $L$, $T(x, \tilde{y})$ can be computed which in turn encodes the value of $f(x)$.
Note that Lemma \ref{lem:reduction} makes the assumption that the Turing Machine $M$ that computes $f$ makes at most $2^{|x|}$ oracle calls on input $x$.
This assumption is justified by the padding argument given in Lemma \ref{lem:padinf}.
We will start by setting up some notation and 
a few claims about the functions used in the proof.

\begin{definition}
For positive integer $x$,
\begin{itemize}
    \item $N_x = 4^{x^2}$
    \item $T_x = T(x, \tilde{y})$, where $\tilde{y}$ is the set of correct responses to the oracle queries made by $M$ on input $x$.
    \item $$l(x) = \frac{1}{4 \cdot (N_x)^{2}}\left( 1 - \cos\left( \frac{ \pi}{(2 T_x+1) \cdot p(N_x)+1} \right)\right).$$
\end{itemize}
\end{definition}

We will use the following three  claims in the analysis:

\begin{claim}
\label{cl:Tbound}
Suppose that on input $x$, $M$ makes at most
$2^{|x|}$ oracle calls.
Then $T_x \le N_{x+1}/N_x$.
\end{claim}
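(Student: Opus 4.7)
The plan is a direct computation from the closed form for $T$. I would substitute into
\[
T(x, y) \;=\; f(x, y) \;+\; 2^m\left[4^{m+1} \;+\; \sum_{j=1}^m y_j \cdot 4^{m-j+1}\right]
\]
and bound each piece elementarily. The output $f(x, y)$ is produced as a binary string of length exactly $m$ (by the padding convention), so its numerical value is at most $2^m - 1 < 2^m$. The inner sum is dominated by the geometric series $\sum_{i=1}^m 4^i = (4^{m+1} - 4)/3 < 4^{m+1}/3$, and therefore the bracket is below $(4/3)\cdot 4^{m+1}$. Combining these,
\[
T_k \;<\; 2^m \;+\; \tfrac{4}{3}\cdot 2^m \cdot 4^{m+1} \;=\; 2^m + \tfrac{16}{3}\cdot 2^{3m} \;\le\; 2^{3m+3}
\]
whenever $m \ge 1$, and $T_k$ is only a constant when $m = 0$.

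Next I would apply the hypothesis $m \le 2^{|x|/2}$ to convert this into a bound in $k$. The input $x$ to $M$ is the (padded) binary encoding of the input associated with the $k$-th perfect-square chain, and the padding in Lemma~\ref{lem:padinf} is arranged so that $2^{|x|/2} \le k$; granting this, $m \le k$. The comparison is then immediate, since
\[
\frac{N_{k+1}}{N_k} \;=\; 4^{(k+1)^2 - k^2} \;=\; 4^{2k+1} \;=\; 2^{4k+2},
\]
and $2^{3k+3} \le 2^{4k+2}$ holds for every $k \ge 1$. Chaining these estimates gives $T_k < 2^{3k+3} \le 2^{4k+2} = N_{k+1}/N_k$, as required.

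The main obstacle is the step that converts the hypothesis $m \le 2^{|x|/2}$ into the clean bound $m \le k$, because a crude estimate (simply using $|x| \le \lceil 2\log_2 k \rceil + O(1)$) would give only $m \le k\sqrt 2$, inflating the exponent of $T_k$ to $3\sqrt 2 \cdot k > 4k$ and defeating the comparison with $N_{k+1}/N_k$. This is precisely what the padding in Lemma~\ref{lem:padinf} is designed to address: by choosing the padded input length so that $2^{|x|/2}$ hits $k$ on the nose, we get $m \le k$ and the remaining arithmetic is a one-line check.
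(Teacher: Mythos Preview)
Your proof is essentially the same as the paper's: both bound $T(x,y) \le 2^{3m+3}$ from the explicit formula, compute $N_{k+1}/N_k = 4^{2k+1} = 2^{4k+2}$, and conclude via $2^{3k+3} \le 2^{4k+2}$ once $m \le k$ is in hand. The paper simply asserts ``on input $k^2$, the number of oracle calls is at most $k$'' directly from the hypothesis $m \le 2^{|x|/2}$, without your extended discussion of the $\sqrt{2}$ slack or the appeal to the padding lemma; your extra care on that point is not a different argument, just a more scrupulous version of the same one.
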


\begin{proof}
If $m$ is the number of oracle calls made by $M$ on input $x$,
then the high order bit of $T(x,y)$ is in location $3m+2$ for every $y$.
Therefore $2^{3m+2} \le T(x,y) \le 2^{3m+3}$.
By the assumption of the claim, on input $x$, the number of oracle calls
is at most $2^{|x|} \le x$.
Therefore $T(x,y) \le 2^{3x+3}$.
This is upper bounded by $N_{x+1}/N_x \le 4^{(x+1)^2}/4^{x^2} = 4^{2x+1} = 2^{4x+2}$ for $x \ge 1$.
\end{proof}

\begin{claim}
\label{cl:fbound}
$$l(x+1) < \frac 1 2 \cdot l(x).$$
\end{claim}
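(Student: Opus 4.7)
The plan is a direct bound on the ratio $l(k+1)/l(k)$, keeping track of the two factors separately. Setting $L_k = (2T_k+1)p(N_k)+1$, the definition of $l$ gives
$$\frac{l(k+1)}{l(k)} = \frac{N_k^2}{N_{k+1}^2} \cdot \frac{1-\cos(\pi/L_{k+1})}{1-\cos(\pi/L_k)}.$$
Since $N_{k+1}/N_k = 4^{(k+1)^2 - k^2} = 4^{2k+1}$, the first factor is exactly $16^{-(2k+1)}$, and it represents the shrinkage coming from the $1/N_k^2$ prefactor in $l$. For the cosine ratio I would use two elementary estimates: $1-\cos\theta \le \theta^2/2$ to bound the numerator, and $1-\cos\theta = 2\sin^2(\theta/2) \ge 2\theta^2/\pi^2$ (coming from $\sin x \ge 2x/\pi$ on $[0,\pi/2]$) to lower bound the denominator. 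These upgrade the cosine ratio to at most $\pi^2 L_k^2/(4 L_{k+1}^2)$.

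Next I would bound $L_k$ from above and $L_{k+1}$ from below using the ingredients already available. Claim \ref{cl:Tbound} gives $T_k \le N_{k+1}/N_k = 4^{2k+1}$, and the explicit formula $p(N) = 4(N-2)(2N-3)$ gives $p(N_k) \le 8 N_k^2$, so
$$L_k \le (2 \cdot 4^{2k+1} + 1) \cdot 8 N_k^2 + 1 \le 24 \cdot 4^{2k+1} N_k^2 = 24 \, N_k N_{k+1}.$$
On the other hand $L_{k+1} \ge p(N_{k+1}) \ge 4 N_{k+1}^2$, which holds as soon as $N_{k+1} \ge 7$, hence for every $k \ge 1$. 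Substituting these bounds yields
$$\frac{l(k+1)}{l(k)} \le \frac{N_k^2}{N_{k+1}^2} \cdot \frac{\pi^2 (24 N_k N_{k+1})^2}{4 (4 N_{k+1}^2)^2} = \frac{9\pi^2 N_k^4}{N_{k+1}^4} = \frac{9\pi^2}{16^{4k+2}}.$$

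Since $16^{4k+2} \ge 16^6 > 10^7$ for every $k \ge 1$, the right-hand side is vastly smaller than $1/2$, proving the claim. The only place that requires checking is the range of validity of the inequalities $1 - \cos\theta \ge 2\theta^2/\pi^2$ and $p(N) \ge 4N^2$: both are satisfied here because $L_k \ge p(N_k) \ge p(4) = 40$ (so the cosine argument is tiny) and $N_k = 4^{k^2} \ge 4$ for all $k \ge 1$. The main obstacle is really just bookkeeping of the constants; the large exponential gap $16^{4k+2}$ leaves enormous slack, so no delicate analysis or small-$k$ casework is required.
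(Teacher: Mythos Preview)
Your proof is correct, and it takes a genuinely different route from the paper's. The paper argues that the cosine ratio $\bigl(1-\cos(\pi/L_{k+1})\bigr)/\bigl(1-\cos(\pi/L_k)\bigr)$ is at most $1$ by showing $L_{k+1}\ge L_k$; for that it invokes an additional (and somewhat implicit) monotonicity assumption, namely that $M$ makes at least as many oracle queries on input $(k+1)^2$ as on input $k^2$, which forces $T_{k+1}\ge T_k/2$ and, combined with $p(N_{k+1})/p(N_k)\ge 4$, gives $(2T_k+1)p(N_k)\le (2T_{k+1}+1)p(N_{k+1})$. The factor of $1/2$ then comes entirely from $N_k^2/N_{k+1}^2$.

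You avoid that monotonicity assumption altogether: you upper bound $L_k$ via Claim~\ref{cl:Tbound} and lower bound $L_{k+1}$ using only $T_{k+1}\ge 0$, then absorb the resulting losses into the enormous prefactor $N_k^2/N_{k+1}^2 = 16^{-(2k+1)}$. This yields a far stronger quantitative bound ($9\pi^2/16^{4k+2}$ rather than just $<1/2$) and is more self-contained, since it does not rely on any structural property of $M$'s query count beyond what is already stated. The price is a little more bookkeeping with constants, but as you note the slack is huge. One minor wording nit: in your final paragraph you write ``$p(N)\ge 4N^2$'' and justify it with $N_k\ge 4$, but that inequality is only used for $N_{k+1}$ and needs $N\ge 6$; you already handled this correctly earlier (``as soon as $N_{k+1}\ge 7$''), so just tighten that last sentence.
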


\begin{proof}
If $m$ is the number of oracle calls made by $M$ on input $x$,
then the high order bit of $T(x,y)$ is in location $3m+2$ for every $y$.
Therefore $2^{3m+2} \le T(x,y) \le 2^{3m+3}$.
As long as $M$ makes as many queries on input $x+1$ as it does
on input $x$, $T_{x+1} \ge  T_x/2$. Since $x \ge 1$,
$N_{x+1} \ge 4^{2^2} = 64$, which implies that $p(N_{x+1})/p(N_x) \ge 4$.
Therefore $(2 T_x +1) p(N_x) \le (2 T_{x+1}+1) p(N_{x+1})$.
Therefore,
$$\left( 1 - \cos\left( \frac{ \pi}{(2 T_{x+1}+1) \cdot p(N_{x+1})+1} \right)\right) \le
\left( 1 - \cos\left( \frac{ \pi}{(2 T_x+1) \cdot p(N_x)+1} \right)\right)$$
Therefore $l(x+1)/l(x) \le (N_x/N_{x+1})^ = 4^{x^2}/4^{(x+1)^2} \le 1/2$.
\end{proof}

\begin{claim}
\label{cl:cosdiff}
If $x, x'$, and $y$ are positive integers greater than $5$ and $x \neq x'$, then
$$\left| \cos \left( \frac{ \pi}{yx+1}\right) - \cos \left( \frac{ \pi}{yx'+1}\right) \right| \ge \frac{1}{y^2 x^4}.$$
\end{claim}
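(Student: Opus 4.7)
The plan is to apply the sum-to-product identity
\[
\cos A - \cos B \;=\; -2\sin\!\left(\frac{A+B}{2}\right)\sin\!\left(\frac{A-B}{2}\right),
\]
with $A = \pi/(yx+1)$ and $B = \pi/(yx'+1)$, and then bound each sine factor below using Jordan's inequality $\sin\theta \ge 2\theta/\pi$ for $\theta\in[0,\pi/2]$. The hypothesis $x,x',y>5$ guarantees that $A,B < \pi/37$, so both $(A+B)/2$ and $|A-B|/2$ lie comfortably in $(0,\pi/2)$ and the Jordan bound applies to each factor. This yields the estimate
\[
\left|\cos A - \cos B\right| \;\ge\; \frac{2\,(A{+}B)\,|A{-}B|}{\pi^2} \;\ge\; \frac{2\,\max(A,B)\,|A{-}B|}{\pi^2}.
\]
After substituting $\max(A,B) = \pi/\min(yx+1,yx'+1)$ and $|A-B| = \pi y|x-x'|/((yx+1)(yx'+1))$, this becomes
\[
\left|\cos A - \cos B\right| \;\ge\; \frac{2\,y\,|x-x'|}{(yx+1)(yx'+1)\,\min(yx+1,\,yx'+1)}.
\]

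Next I will split into the two cases $x<x'$ and $x>x'$ and minimize the right-hand side over the integer parameter $x'$. Writing $x'=x+d$ or $x'=x-d$ with $d\ge 1$, a short calculus argument (the derivative of $d/(y(x\pm d)+1)$ with respect to $d$ has constant sign) shows that the expression is monotonically increasing in $d$, so the worst case occurs at $d=1$, i.e.\ at $x'=x\pm 1$. In Case 1 ($x'=x+1$) the bound becomes $2y/[(yx+1)^2(yx+y+1)]$, and in Case 2 ($x'=x-1$) it becomes $2y/[(yx+1)(yx-y+1)^2]$.

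The final step will be a straightforward arithmetic check. Using $x,y\ge 6$, I get $(yx+1)^2\le (\tfrac{37}{36})^2 y^2x^2$ and $yx+y+1 \le \tfrac{43}{36}yx$, so Case~1 is lower-bounded by $\Theta(1/(y^2x^3))$, which easily dominates $1/(y^2x^4)$; an analogous inequality using $yx-y+1\le yx$ handles Case~2. The delicate point is that the right-hand side of the claim involves only $x$ (not $\min(x,x')$ or $\max(x,x')$), so Case~2 ($x'<x$, where $yx'+1$ appears squared in the denominator) needs the inequality $yx-y+1 \le yx$ together with the hypothesis $x>5$ to convert the bound into the stated $1/(y^2x^4)$ form. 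This is the only spot that requires care; everything else is a routine trigonometric manipulation.
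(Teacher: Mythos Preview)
Your argument is correct but follows a genuinely different route from the paper's. The paper works directly with the Taylor expansion of $\cos$: it asserts (without justification) that the minimum occurs at $x'=x+1$, writes
\[
\frac{\pi^2}{2(yx+1)^2}-\frac{\pi^2}{2(y(x+1)+1)^2}-\frac{\pi^4}{24(yx+1)^4},
\]
and then bounds this below by $\tfrac{\pi^2}{8}\cdot\tfrac{1}{y^2x^4}$ using the numeric inequalities $(yx+1)^2(y(x+1)+1)^2\le 2y^4x^4$ and $\pi^2/(12y^2)\le 1/4$ valid for $x,y\ge 5$. Your sum-to-product/Jordan approach replaces this Taylor bookkeeping with a clean multiplicative lower bound and, unlike the paper, actually justifies the reduction to $x'=x\pm 1$ via monotonicity in $d=|x-x'|$; it also yields a bound of order $1/(y^2x^3)$, one power of $x$ better than needed. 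One small point: in Case~2 the denominator carries $(y(x-d)+1)^2$, not $(y(x-d)+1)$, so the calculus you describe for ``$d/(y(x\pm d)+1)$'' is not literally the right expression there---but the monotonicity is immediate anyway since both $d$ and $1/(y(x-d)+1)$ increase with $d$.
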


\begin{proof}
The difference between the two $\cos$ terms is minimized for $x' = x+1$, in which case, using the first terms in the Taylor expansion for $\cos$, we have:
\begin{align*}
  &  \left| \cos \left( \frac{ \pi}{yx+1}\right) - \cos \left( \frac{ \pi}{yx'+1}\right) \right|\\
  & \ge
    \frac{\pi^2}{2(yx+1)^2} - \frac{\pi^2}{2(y(x+1)+1)^2}
    - \frac{\pi^4}{24 (yx+1)^4}\\
    & \ge \frac{\pi^2}{2} \left[  \frac{y^2}{(yx+1)^2(y(x+1)+1)^2} - \frac{\pi^2}{12 y^4 x^4} \right]\\
    & \ge \frac{\pi^2}{2} \left[  \frac{1}{2 y^2 x^4} - \frac{(\pi^2/12 y^2)}{y^2 x^4} \right]\\
  &  \ge \frac{\pi^2}{2} \left[ \frac{1}{2 y^2 x^4} - \frac{1}{4 y^2 x^4} \right] = \frac{\pi^2}{8} \left[ \frac{1}{ y^2 x^4} \right] > \frac{1}{ y^2 x^4}
\end{align*}
The inequality from the third to the fourth line uses the fact that for
$x \ge 5$ and $y \ge 5$, 
$(yx+1)^2(y(x+1)+1)^2 \le 2y^4 x^4$. 
The inequality from the fourth to the fifth line uses the fact that for
 $y \ge 5$, 
$\pi^2 /12 y^2 \le 1/4$. 
\end{proof}

\begin{lemma}{\bf [Computing $f(x)$ from the GED]}
\label{lem:reduction}
Suppose that
$$\alpha_0 = \sum_{k=1}^{\infty} \frac{\lambda_0 (4^k)}{4^{2k+1}}$$
and $\lambda_0(4^k)$ satisfies the conditions in Lemma \ref{lem:energyksquare}.
Suppose also that on input $x$, the Turing Machine $M$ that computes $f$ makes
at most $2^{|x|}$ oracle calls.
Then if $x^2 \not\in \calk$, there is a polynomial $q$ such that we can compute $f(x)$ in time that is polynomial in $x$, 
given $\alpha$ where
$|\alpha - \alpha_0| \le 1/ 2^{q(x)}$.
\end{lemma}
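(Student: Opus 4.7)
The plan is to inductively extract the integers $L_1, L_2, \ldots, L_x$ from $\alpha_0$ via a sequence of binary searches interleaved with exact subtractions. By Lemma \ref{lem:energyksquare} and Corollary \ref{cor:ged}, only perfect-square indices $k = j^2 \notin \calk$ contribute, so $\alpha_0 = \sum_{j \ge 1} l(j)$ with $l(j) := 0$ when $j^2 \in \calk$. Set $\beta_0 := \alpha$. For $j = 1, 2, \ldots, x$ in turn: binary-search over integer $T \in \{0, 1, \ldots, \lfloor N_{j+1}/N_j \rfloor\}$ (the upper bound coming from Claim \ref{cl:Tbound}) to locate the unique $T_j$ whose value $(1 - \cos(\pi/(L+1)))/(4 N_j^2)$ at $L = (2T+1)\, p(N_j)$ lies closest to $\beta_{j-1}$; compute $l(j)$ to high precision from this $L_j$ and set $\beta_j := \beta_{j-1} - l(j)$. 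For the finitely many $j \le x$ with $j^2 \in \calk$, skip the search and take $l(j) = 0$. After iteration $j = x$, recover $T_x$ from $L_x = (2 T_x + 1)\, p(N_x)$ and read off $f(x)$ as the $m$ low-order bits of $T_x$ using the explicit formula (\ref{eq:Txy}).

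The core content of the proof is that the binary search at each step unambiguously identifies the correct $T_j$. By Claim \ref{cl:cosdiff} applied with $y = p(N_j)$ and integer $2T+1$, two distinct values of $L_j = (2T+1)\,p(N_j)$ produce values of $l(j)$ differing by at least $\Omega(N_j^2 / L_j^4)$; call this the \emph{gap}. The error in $\beta_{j-1}$ as an estimate of $l(j)$ decomposes into (a) the tail $\sum_{k > j} l(k) < 2\, l(j+1)$ (Claim \ref{cl:fbound}), (b) the original approximation error $1/2^{q(x)}$, and (c) rounding errors from earlier exact subtractions, which we render negligible by computing every $l(k)$ to $\Omega(q(x))$ bits of precision. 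Using Claim \ref{cl:Tbound} together with $p(N) = \Theta(N^2)$ gives $L_j = O(N_j N_{j+1})$, while $L_{j+1} \ge p(N_{j+1}) = \Theta(N_{j+1}^2)$. A short calculation then yields
\[
\frac{l(j+1)}{\text{gap}_j} \;=\; O\!\left(\frac{L_j^4}{N_j^2 N_{j+1}^2 L_{j+1}^2}\right) \;=\; O\!\left(\frac{N_j^2}{N_{j+1}^2}\right) \;=\; O(4^{-(4j+2)}),
\]
so the tail is exponentially smaller than the gap. For the initial error, we need $1/2^{q(x)}$ below $\text{gap}_x$; since $\log_2 N_k = 2 k^2$ and $\log_2(N_x^2 N_{x+1}^4) = 12 x^2 + 16 x + 8$, any polynomial $q(x) = \Theta(x^2)$ with a sufficiently large leading constant (e.g.\ $q(x) = 50 x^2$) suffices.

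The main conceptual obstacle, and the reason iterative subtraction is essential rather than a direct read-off, is that $l(j+1)$ itself is much \emph{larger} than the gap $N_j^2 / L_j^4$ as soon as $j \ge 3$, despite $l(j+1)/l(j)$ being exponentially small. Only after the leading term is subtracted exactly does the residual tail become small relative to the next step's gap; maintaining this invariant over all $j \le x$ is the structural reason the induction is needed.

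The running time analysis is routine: iteration $j$ performs $O(\log(N_{j+1}/N_j)) = O(j)$ binary-search queries, each evaluating a cosine to $q(x) = O(x^2)$ bits in $\text{poly}(x)$ time via rational approximation of $\pi$; summed over $j = 1, \ldots, x$, the total is $\text{poly}(x)$, which is polynomial in $2^{|x|}$ as required.
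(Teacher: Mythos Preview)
Your proposal is correct and follows essentially the same approach as the paper: iteratively determine $T_1,\ldots,T_x$ by peeling off $l(1),\ldots,l(x-1)$ one at a time, using Claim~\ref{cl:cosdiff} at each step to show the residual error is below the gap needed to pin down the next integer $T_j$. Your core calculation bounding $l(j+1)/\text{gap}_j = O(N_j^2/N_{j+1}^2)$ matches the paper's inequality~(\ref{eq:toshow}) (the paper phrases it as $[p(N_s)N_s/p(N_{s+1})N_{s+1}]^2 \le [T_{s+1}/2T_s^2]^2$, but this unpacks to the same comparison), and your choice of $q(x)=\Theta(x^2)$ is consistent with the paper's requirement $1/2^{q(x)}\le l(x+1)$.

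One point of confusion: your final paragraph asserts that ``$l(j+1)$ itself is much \emph{larger} than the gap $N_j^2/L_j^4$ as soon as $j\ge 3$,'' which flatly contradicts your own (correct) calculation two paragraphs earlier showing $l(j+1)/\text{gap}_j = O(4^{-(4j+2)})$. The genuine reason the induction is needed is not that the \emph{tail} $l(j+1)$ swamps $\text{gap}_j$---it doesn't---but that the \emph{earlier} terms $l(1),\ldots,l(j-1)$ are enormous compared to $\text{gap}_j$, so one cannot read off $T_j$ from $\alpha$ directly without first subtracting them. You should fix or delete that paragraph; it is commentary only and does not affect the validity of the argument.
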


This lemma implies the proof of Theorem \ref{th:mainInf} 
in a straight-forward way. 

\begin{proof}
The polynomial $q$ is chosen so that 
$1/ 2^{q(x)} \le l(x+1)$. Let $\calk(k)$ be an indicator function which is equal to $1$ if $k \in \calk$ and is equal to $0$ otherwise. 
Since $\lambda_0 (4^k) = 0$ if $k$ is not a perfect square or if $k \in \calk$, we have that
$$\alpha_0 = \sum_{j=1}^{\infty} (1 - \calk(j^2)) l(j).$$

We will prove by induction on $s$ that for $s \le x$, we have a $\alpha_s$
that satisfies:
$$\left| \alpha_s - \sum_{j=s}^{\infty} (1 - \calk(j^2))l(j) \right|  \le \left( 1 + \frac s x \right) l(x+1).$$
We will show that if $s^2 \not\in \calk$,
then we can uniquely determine $T_s$ in polynomial time. Moreover, if $s < x$,
then we can find $\alpha_{s+1}$ such that
$$\left| \alpha_{s+1} - \sum_{j=s+1}^{\infty} (1 - \calk(j^2)) l(j) \right|  \le \left( 1 + \frac {s+1} x \right) l(x+1).$$
Note that we start with $\alpha = \alpha_{1}$ that satisfies the inductive hypothesis for $s = 1$.
When we reach $s = x$, as long as $x^2 \not\in \calk$, we will have an
 $\alpha_x$ from which we can determine $T_x$ which then determines $f(x)$.

Now to prove the inductive step.
If $s^2 \in \calk$, then
$$\sum_{j=s}^{\infty} (1 - \calk(j^2))l(j) = \sum_{k=j+1}^{\infty} (1 - \calk(j^2))l(j).$$
We can set $\alpha_{s+1} = \alpha_s$ and the inductive step is complete.

Now suppose that $s^2 \not\in \calk$. By Claim \ref{cl:fbound}, $\sum_{j=s+1}^{\infty} l(j) \le 2 l(s+1)$. Therefore
$$|\alpha_s - l(s)| \le 2 \cdot l(s+1) + \left( 1 + \frac s x \right) l(x+1)
\le 4 l(s+1)$$
The value $\lambda_0(N_s)$ is
$$ \left(1 - \cos \left( \frac{ \pi}{(2T_s + 1) p(N_s) +1}\right) \right).$$
Plugging in the definition for $l(s)$ and multiplying by $4(N_s)^2$, we get that
\begin{align}
    |4 (N_s)^2 \alpha_s - \lambda_0(N_s)| & \le \frac{(N_s)^2}{(N_{s+1})^2} \cdot 
    \left( 1 - \cos\left( \frac{ \pi}{(2 T_{s+1}+1) \cdot p(N_{s+1})+1}
    \right)\right)\\
    & \le \frac{(N_s)^2}{(N_{s+1})^2} \cdot 
    \left( \frac{ (\pi/2)^2}{4(p(N_{s+1}))^2 (T_{s+1})^2}
    \right)\\
    & \le \frac{(N_s)^2}{ (N_{s+1})^2 (p(N_{s+1}))^2 (T_{s+1})^2}
    \label{eq:Ebound}
\end{align}
By Claim \ref{cl:Tbound}, for any $T \neq T_s$, 
$$\left|   \left(1 - \cos \left( \frac{ \pi}{(2T_s + 1) p(N_s)+1 }\right) \right) -
\left(1 - \cos \left( \frac{ \pi}{(2T + 1) p(N_s) +1}\right) \right) \right|
\ge \frac{1}{ (p(N_s))^2 (T_s)^4}.$$
Therefore, if we can determine the value of $\lambda_0(N_s)$
to within $\pm 1/2 (p(N_s))^2 (T_s)^4$, the value of $T_s$ is uniquely determined
and can be computed by binary search in time that is poly-logarithmic in $N_s$, 
and therefore polynomial in $x$.
If we can determine $T_s$, then we can use the Taylor approximation for the
$\cos$ function to compute an approximation
$\tilde{l}(s)$
of $l(s)$ to within $\pm l(x+1)/x$. We can then set
$\alpha_{s+1} = \alpha_s - \tilde{l}(s)$.
\begin{align*}
\left| \alpha_{s+1} - \sum_{j=s+1}^{\infty} l(j) \right|
 = & ~~\left| \alpha_s - \tilde{l}(s) - \sum_{j=s+1}^{\infty} l(j) \right| \\
 \le  &~~ |l(s) - \tilde{l}(s)| + \left| \alpha_s - \sum_{j=s}^{\infty} l(j) \right| \\
 \le & ~~ \frac{l(x+1)}{x} + \left( 1 + \frac s x \right) l(x+1) = \left( 1 + \frac {s+1} x \right) l(x+1)
\end{align*}
It remains to show that the bound from (\ref{eq:Ebound}) is at most
$1/2 (p(N_s))^2 (T_s)^4$.
Thus, we need to show
\begin{equation}
\label{eq:toshow}
    \left[ \frac{ p(N_s) N_s }{ p(N_{s+1}) N_{s+1}} \right]^2
\le \left[ \frac{T_{s+1}}{2 (T_{s})^2 }\right]^2
\end{equation} 
Using the bounds from Claim \ref{cl:Tbound}, we get that
$T_s \le N_{s+1}/N_s$. Also,
The polynomial $p(N) = 4(N-2)(2N-3)$. 
Since $N_{s+1} \ge N_s \ge N_2 \ge 64$, 
 $p(N_s)/p(N_{s+1}) \le 1/4$.
Therefore 
$$
\left[ \frac{ p(N_s) N_s }{ p(N_{s+1}) N_{s+1}} \right]^2
\le 
\left[ \frac{ p(N_s)  }{ p(N_{s+1}) T_s} \right]^2 \le \frac{1}{(4 T_s)^2}$$
As long as $M$ makes as many oracle queries on input $(s+1)^2$ as it does on $s^2$, $2T_{s+1} \ge  T_s$.
The  the bound required in (\ref{eq:toshow}) follows.
\end{proof}

\subsection{Modifications to the Hamiltonian  from the Finite Case}
\label{sec:modifications}

The energy density in the thermodynamic limit will be 
$$\sum_{k=1}^{\infty} \frac{ \lambda_0(4^k)}{4^{2k+1}}.$$
With this expression in mind, we will describe required changes to
the construction for the finite case.
The first thing to notice is that the GED only encodes energies for chains of length $N = 4^k$, so the input to the function $f$ from which we are reducing will be $k$ which is logarithmic in $N$. The size of the input is $\log k$ which means that $N$, the size of the chain, is doubly exponential in the input length. This is why we are reducing from $\fexpnexp$, instead of $\fpnexp$ as we did in the finite case.

Furthermore, we will not use all the $k$'s to encode the result of a computation. We will only use the $k$'s that are a perfect square.
Thus, we will use a chain of size $4^{x^2}$ to encode $f(x)$.
This will allow us to use the bits from $x^2$ to $(x+1)^2$
in the energy density to encode $f(x)$.
We need to add a way for $\lambda_0(4^k)$ to be $0$ if
$k$ is not a prefect square. We handle this case by encoding
the fact that $k$ is not a prefect square in the output of 
of the verifier. Recall that $M_{TV}$ writes a symbol $\sigma_A$ or $\sigma_R$
as the output of its verification process. If $k$ is not a perfect square then we make sure that a different
character $\sigma_B$ is written in the first location
when $M_{TV}$ halts. 
The additional $+1$ in $H_{final}$ is only triggered if
the output of $M_{TV}$ starts with an $\sigma_A$ or an $\sigma_R$. 
If the initial symbol is neither $\sigma_A$ nor $\sigma_R$, then there will be no additional terms along the diagonal besides $H_{prop}$. The equal superposition of all the states in the computation will be a $0$ energy state.

Finally, we will need to alter the construction to ensure that no cost is incurred
if $M_{TV}$ does not finish its computation. 
Although $N-2$ steps will be enough for $M_{BC}+M_{TV-inf}$ to complete its computation asymptotically, for small $N = 4^k$, this may not be the case. 
We will manage this by replacing the  symbols $\{\sigma_A,  \sigma_R, \sigma_B, \sigma_X\}$ with symbols from $\{\gamma_A,  \gamma_R, \gamma_B, \gamma_X\}$ everywhere in the transition rules for
$M_{TV-inf}$. 
The $\gamma$ symbols do not trigger any penalty terms.
Then there is a post-processing step in which $\{\gamma_A,  \gamma_R, \gamma_B, \gamma_X\}$ are converted to $\{\sigma_A,  \sigma_R, \sigma_B, \sigma_X\}$ in a controlled way so that if the computation halts before it finishes, there will be a $0$ energy state.

After the binary counter Turing Machine runs for $N-2$ steps, we will run a process for $N-2$ steps that consists of the following three Turing Machines dovetailed together:
\begin{enumerate}
    \item $M_{BC}$ is the same as in the finite case and is described in Section \ref{sec:MBC}.
    \item $M_{TV-inf}$ will be an altered version of the $M_{TV}$ used in the construction for the finite systems. The pseudo-code for $M_{inf}$ is given in Figure \ref{fig:Minfpseudo}.
    \item $M_{post}$ is a post-processing step which ensures that the symbols that trigger a periodic cost appear in a controlled way so that if the computation stops before completion, the ground energy will be $0$. The pseudo-code for $M_{post}$ is given in Figure \ref{fig:Mpostpseudo}. The transition rules are specified in Figure \ref{fig:Mpostrules}.
\end{enumerate}

\begin{figure}[]
\noindent
\fbox{\begin{minipage}{\textwidth}
\begin{tabbing}
{\sc Input:} $(z, w)$\\
(1)~~~~ \= Compute $N = N(z)$   \\
(2)  \> Compute $k = \lfloor \log_4 N \rfloor$\\
(3)  \> Compute $x = \lfloor \sqrt{k} \rfloor$\\
(4) \> \IF $x^2 = k$\\
(5) \> ~~~~~ \= $m$ is the number of oracle queries made by $M$ on input $x$\\
(6) \> \> $r$ is the size of  the witness used by $V$ on any oracle query generated by $M$ on input $x$\\
(7) \> \> $m$ and $r$ are hard-coded functions of $|x|$ determined by $M$ and $V$.\\
(8) \> \>    Simulate Turing Machine $M$ on input $(x,y)$\\
(9)  \> \> ~~~~~ \= Use $y$ for the responses to the oracle queries, $y$ denotes the first $m$ bits of $w$\\
(10)  \> \> \>Simulation generates $x_1, \ldots, x_m$, inputs to oracle queries\\
(11)  \> \> {\sc Reject} $= $ FALSE\\
(12)   \> \> \FOR $i = 1 \ldots m$\\
(13) \> \> \> \IF $y_i = 1$\\
(14) \> \> \> ~~~~~ \= Simulate $V$ on input $(x_i, w_i)$\\
(15) \> \> \> \> ~~~~~ \= $w_i$ is the string formed by bits $m +(i-1)r+1$ through $m + ri$ of the witness track\\
(16) \> \> \> \> If $V$ rejects on input $(x_i, w_i)$\\
(17) \>\>\>\>  \>{\sc Reject} $= $ TRUE\\
(18) \> \>\IF (\sc Reject)\\
(19) \>\>  \>Write "$\gamma_R$" in the left-most position of the work tape\\
(20) \> \>\ELSE\\
(21) \>\>  \> Write "$\gamma_A$" in the left-most position of the work tape\\
(22) \> \>Compute $T(x,y)$\\
(23) \> \>Write the value of $T(x,y)$ in unary with "$\gamma_X$" symbols starting in the second position of the work tape\\
(24) \> \IF $l^2 \neq k$\\
(25) \> ~~~~~ \= Write "$\gamma_B$" in the left-most position of the work tape\\
(26) \> \> Reverse the computation and finish with $(\gamma_B, z,w)$ on the work tape\\
{\sc Output:}\\
\> \IF $k$ is not a perfect square $(\gamma_B, z, w)$\\
\> \IF $k$ is a perfect square $( \{\gamma_A, \gamma_R \} (\gamma_X)^{T(x,y)};z,w)$

\end{tabbing}
\end{minipage}}
\caption{Pseudo-code for the Turing Machine $M_{TV-inf}$.}
\label{fig:Minfpseudo}
\end{figure}

\begin{figure}[ht]
\noindent
\begin{center}
\fbox{\begin{minipage}{\textwidth}
\begin{tabbing}
(1)~~~~ \= \IF current symbol is not in $\{ \gamma_A, \gamma_R, \gamma_B\}$ \\
(2) \> ~~~~~ \= Transition to $s_f$, stay in place.\\
(3)  \> \ELSE \\
(4) \> \> move right into state $s_R$\\
(4)  \> \WHILE (current symbol is $\gamma_X$)\\
(5) \> \> Change symbol to $\sigma_X$ and move right in state $s_R$\\
(6) \> \IF current symbol is not $\gamma_X$\\
(7) \> \> Move left into $s_L$\\
(8)  \> Move left until a symbol from $\{ \gamma_A, \gamma_R, \gamma_B\}$ is reached\\
(9)  \> Change $\gamma_A/\gamma_B/\gamma_R$ to $\sigma_A/\sigma_B/\sigma_R$, transition to $q_f$, stay in place\\
\end{tabbing}
\end{minipage}}
\end{center}
\caption{Pseudo-code for the Turing Machine $M_{post}$.}
\label{fig:Mpostpseudo}
\end{figure}

%The rules for $M_{post}$ are given in Figure \ref{fig:Mpostrules}.

\begin{figure}[ht]
\centering
\begin{tabular}{|c|c|c|c|c|c|c|}
\hline
 & $\gamma_A$ & $\gamma_R$ & $\gamma_B$ & $\gamma_X$ & $\sigma_X$ & $a \neq \gamma_A, \gamma_R, \gamma_B, \gamma_X$  \\
 \hline
 \hline
 $s_{0}$ & $(s_R, \gamma_A, R)$ & $(s_R, \gamma_R, R)$  & $(s_R, \gamma_B, R)$ & $(s_f, \gamma_X, N)$ & - & $(s_f, a, N)$ \\
 \hline
 $s_{R}$ & $(s_L, \gamma_A, L)$ & $(s_L, \gamma_R, L)$  & $(s_L, \gamma_B, L)$ & $(s_R, \sigma_X, R)$ &  - & $(s_L, a, L)$ \\
 \hline
 $s_{L}$ & $(s_f, \sigma_A, N)$ & $(s_f, \sigma_R, N)$  & $(s_f, \sigma_B, N)$  & * & $(s_L, \gamma_X, L)$ & * \\
 \hline
\end{tabular}
\caption{A summary of the transition rules for $M_{post}$. The transition rules are well-defined
as long as the tape contents do not contain $\{\sigma_A,  \sigma_R, \sigma_B, \sigma_X\}$ when $M_{post}$ begins.
The rules can be extended so that they are well defined on every $Q \times \Gamma$ pair, so that the reduced transition function remains one-to-one and the Turing Machine
remains unidrectional. The rule $\delta(a, q_f) = (q_0, a, N)$ is included for every $a \in \Gamma$ so that the resulting Turing Machine is in normal-form.}
\label{fig:Mpostrules}
\end{figure}

Lemma \ref{lem:addsymbols} is used so that $M_{BC}$, $M_{TV-inf}$, and $M_{post}$
all have the same tape alphabet $\Gamma$.
Then the same transformation that was applied to $M_{BC} + M_{TV}$ in the finite case
is applied here in the infinite case to $M_{BC} + M_{TV-inf}+M_{post}$,
with the exception that the final state for $M_{BC} + M_{TV-inf}+M_{post}$ is $r_f$
instead of $p_f$ for $M_{BC} + M_{TV}$.
Also the "time-wasting" subroutine is entered if the current character is in $\{\sigma_A, \sigma_R, \sigma_B\}$ when the final state $r_f$ 
of $M_{post}$ is reached.
The finite case did not have a $\sigma_B$, so the time-wasting subroutine was entered only if the
current character was $\sigma_A$ or $\sigma_R$ at the time that $M_{BC} + M_{TV}$ halts. More specifically, the rule $\delta(p_f, \sigma_B) \rightarrow (q_0, \sigma_B, N)$ is removed and the following two rules are added:
\begin{align*}
    \delta(p_f, \sigma_B) & \rightarrow (q_*, \sigma_B, R)\\
    \delta(q_*, \sigma_B) & \rightarrow (q_0, \sigma_B, N)\\
\end{align*}
These new rules preserve the properties of reversibility in exactly the same way as the original transformation did.
We will call this altered transformation $\calt'$ and define:
$$M_{check-i} = \calt' ( M_{BC} + M_{TV-inf} + M_{post})$$
The process embedded in the Hamiltonian for the infinite case will be $N-2$ steps of $M_{BC}$ followed
by $N-2$ steps of $M_{check-i}$.

To summarize what happens with $M_{check-i}$. Suppose that $N = 4^k$.
$M_{BC}$ is run for $N-2$ steps and ends in the middle of an increment operation.
$M_{check-i}$ will pick up where $M_{BC}$ left off and complete the increment operation, ending with some string $z$ on the work tape. $N(z)$ will be $4^k$ plus the few steps required to complete the increment operation. Even though $N(z) \neq 4^k$, there is enough information to recover $k$ from $z$
since $k$ will be $\lfloor \log_4 N(z) \rfloor$.
If $k$ is not a perfect square, then $M_{check-i}$ will end with the input
$(\gamma_B, z,\myw)$ on its work tape. In this case $M_{post}$ will convert $\gamma_B$ to $\sigma_B$ and then halt.
The time-wasting process is triggered which preserves the output of the tape until the clock runs out.
If $k = x^2$, for some integer $x$, then $M_{TV-inf}$ does exactly what $M_{TV}$
would do on input $(x, \myw)$, using the Turing Machines $M$ and $V$ for $f$.
Note that lines $(5)$ through $(23)$ of $M_{TV-inf}$ are exactly the same
as lines $(2)$ through $(20)$ of $M_{TV}$. The functions for $m$, the number of oracle calls and $r$ the size of the witnesses used for $V$ are different but both are hard-coded into $M$ and $V$. Therefore at the end of $M_{TV-inf}$'s computation, the output is exactly
{\sc Out}$(x, \myw, M, V)$ as given in Definition \ref{def:correctoutput} for the finite case,
with the exception that $M_{TV}$ has written its output with symbols $\{\gamma_A \gamma_R, \gamma_X\}$ instead of $\{\sigma_A \sigma_R, \sigma_X\}$.
Then $M_{post}$ changes the output $(\gamma_A + \gamma_R) (\gamma_X)^T$
to $(\sigma_A + \sigma_R) (\sigma_X)^T$.
This  is done in such a way that the $\sigma_A/\sigma_R$ symbol is the very last symbol written.

Lemma \ref{lem:correctoutput-inf} establishes that this process produces the correct output for the construction. 
It will be convenient to have a way to connect $N$, the size of the chain and the string
$z$ that will be one the work tape after $M_{check-i}$ finishes the last increment operation.

\begin{definition}
$\caln(x)$ is the set of positive integers $N$ such that $N \le N(x)$ and there is no other $x' \neq x$ such that $N \le N(x') < N(x)$.
\end{definition}

\begin{lemma}
\label{lem:correctoutput-inf}
{\bf [$M_{check-i}$ Produces the Correct Output]}
Suppose that $N = 4^k$ for $k \ge 2$.  Suppose also that $N \in \caln(z)$ and $M_{TV-inf}$ behaves properly on all binary input pairs $(z,\myw)$
and is $1$-compact.
Then the process that starts with input $(1,\myw)$ for $\myw \in \{0,1\}^{N-2}$,
runs $M_{BC}$ for $N-2$ steps and runs $M_{check-i}$ for $N-2$ steps will keep the head of the Turing Machine inside
tape cells $1$ through $N-3$. Also
\begin{enumerate}
    \item If $M_{BC} + M_{TV-inf}+M_{post}$ does not finish, then there will be no symbols from
$\{\sigma_A \sigma_R, \sigma_B, \sigma_X\}$ on the work tape.
\item If $M_{BC} + M_{TV-inf}+M_{post}$ does  finish and $k$ is not a perfect square, then the process finishes with $(\sigma_B, z,w)$ on its work tape.
\item If $M_{BC} + M_{TV-inf}+M_{post}$ does  finish and $k = x^2$, then the process finishes with tape contents {\sc Out}$(x, \myw, M, V)$ as given in Definition \ref{def:correctoutput} for the finite case.
\end{enumerate}
\end{lemma}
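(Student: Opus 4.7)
The plan is to trace the concatenated $2(N-2)$-step computation $M_{BC};\,M_{check\text{-}i}$ applied to $(1,w)$ in four stages --- the first $M_{BC}$ block, the residual $M_{BC}$ prefix inside $M_{check\text{-}i}$, the $M_{TV\text{-}inf}$ stage, and finally $M_{post}$ together with the time-wasting closure added by $\calt'$ --- and verify each of the three claims against the transition tables of Figures \ref{fig:MBCrules} and \ref{fig:Mpostrules} and the pseudo-code of Figures \ref{fig:MBCpseudo} and \ref{fig:Minfpseudo}.

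First I would establish the head-containment claim. For the initial $N-2$ steps of $M_{BC}$, Lemma \ref{lem:Nofxbound} applied to $N(z) \ge N$ shows the head stays in cells $1$ through $N(z)-3$; in particular it cannot exit cells $1$ through $N-3$ in the first $N-2$ steps. Then by the defining property of $\caln(z)$, $M_{check\text{-}i}$ first uses its $M_{BC}$ rules to complete the pending increment after at most a constant number of further steps, at which moment the work tape carries $z$ and the head is back in cell $1$ in state $q_0$ of $M_{TV\text{-}inf}$. From here properness and $1$-compactness of $M_{TV\text{-}inf}$, inherited through the extensions of Lemmas \ref{lem:addsymbols}, \ref{lem:dovetail}, and the analog of Lemma \ref{lem:gentv} applied to $\calt'$, keep the head within cells $1$ through $N-3$ during all remaining steps; $M_{post}$ and the time-wasting state $q_*$ then only revisit cells already touched.

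Next I would prove claims (2) and (3). For claim (3), on the branch $k = x^2$, lines (5)--(23) of $M_{TV\text{-}inf}$ reproduce lines (2)--(20) of $M_{TV}$ verbatim, with the hardcoded functions for $m, r$ adjusted for the new input length and with the output alphabet replaced by $\{\gamma_A,\gamma_R,\gamma_X\}$; hence when $M_{TV\text{-}inf}$ halts in the head's home cell the work tape carries $\{\gamma_A,\gamma_R\}(\gamma_X)^{T(x,y)}$ followed by non-$\gamma$ symbols. Reading the rules of Figure \ref{fig:Mpostrules}, $M_{post}$ sweeps right in $s_R$ converting each $\gamma_X$ to $\sigma_X$, reverses at the first non-$\gamma_X$ symbol, walks back to cell $1$, and in its terminating transition converts $\gamma_A$ or $\gamma_R$ to $\sigma_A$ or $\sigma_R$, producing exactly {\sc Out}$(x,w,M,V)$. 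The $\calt'$ time-waster is then triggered by $\sigma_A$ or $\sigma_R$ in cell $1$ and freezes the tape for the remainder of the $N-2$ steps. Claim (2) is analogous and simpler: when $k$ is not a perfect square, $M_{TV\text{-}inf}$ takes the non-square branch, writes $\gamma_B$ in cell $1$, reverses its auxiliary work, and leaves $(\gamma_B, z, w)$; $M_{post}$ immediately turns around (no $\gamma_X$ is ever encountered) and converts $\gamma_B$ to $\sigma_B$.

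Claim (1) is handled by a case analysis on where the $2(N-2)$-step budget expires. No rule of $M_{BC}$, of its $M_{check\text{-}i}$ continuation, or of $M_{TV\text{-}inf}$ ever writes a $\sigma$-labeled character, so if the process is cut off before $M_{post}$ is invoked, the work tape is disjoint from $\{\sigma_A,\sigma_R,\sigma_X\}$. If the process is cut off inside $M_{post}$ but before its final $s_L$ transition, Figure \ref{fig:Mpostrules} shows that the only $\sigma$-symbol ever introduced by $M_{post}$ is $\sigma_X$, written during the $s_R$ right-sweep, and that each left-step in $s_L$ on a $\sigma_X$ reverts it back to $\gamma_X$; together with the fact that the terminating step from $s_L$ is the unique source of $\sigma_A,\sigma_R,\sigma_B$, this leaves the forbidden set empty at any intermediate halting point, which establishes (1). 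The main obstacle is this last case analysis: verifying that an interrupted $M_{post}$ cannot leave stranded $\sigma_X$'s on the tape requires carefully pairing each right-sweep step with its reverting left-sweep counterpart and checking that no truncation occurs between them other than one where the head is still in the initial $\gamma$-region, which in turn relies on the $o(N)$ running-time bound on $M_{TV\text{-}inf}$ from Lemma \ref{lem:existsMTV} and the fact that $M_{post}$ itself runs in time linear in $T(x,y) \le N-2$.
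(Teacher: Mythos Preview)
Your stage-by-stage trace mirrors the paper's proof, and your head-containment argument and your handling of claims (2) and (3) are essentially the same as what the paper does.

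The gap is in claim (1). The paper dispatches it in a single sentence: characters from $\{\sigma_A,\sigma_R,\sigma_X\}$ are written only in the very last step of $M_{post}$, so an unfinished run leaves none. You instead read the $s_L$-on-$\sigma_X$ entry of Figure~\ref{fig:Mpostrules} as reverting $\sigma_X$ to $\gamma_X$ and try to pair each right-sweep write with a left-sweep undo. This cannot close the case: a truncation during the right sweep, before any $s_L$ step is taken, leaves $\sigma_X$'s on the tape with nothing to cancel them, and your appeal to the $o(N)$ bound on $M_{TV\text{-}inf}$ would only show that the process \emph{does} finish, which places you outside case (1) rather than inside it. Your reading is also inconsistent with your own claim-(3) argument, where you have $M_{post}$ produce $(\sigma_A+\sigma_R)(\sigma_X)^{T}$ as final output; if $s_L$ really reverted every $\sigma_X$, that output would carry $\gamma_X$'s instead. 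The paper's case list (\ref{eq:firstcase})--(\ref{eq:fifthcase}) keeps the $\sigma_X$'s through the left sweep, so the table entry you are leaning on is most likely a typo for $(s_L,\sigma_X,L)$. In any event, the only downstream use of claim (1) is in Lemma~\ref{lem:doesntfinish}, which needs merely the absence of $\sigma_A$ and $\sigma_R$; that weaker statement follows immediately from your own observation that those two symbols are produced solely by the terminating $s_L\to s_f$ transition.
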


\begin{proof}
According the Lemma \ref{lem:Nofxbound}, after $N-2$ steps of $M_{BC}$, the head stays within tape cells $1$ through $N-4$. After $M_{check-i}$ takes over, the last unfinished increment operation is completed. The head can only reach one additional location past $N-4$ in a single increment operation which means the head has stayed within locations $1$ through $N-3$. At this point the head is back in location $1$ and $M_{TV-inf}$ is launched. Since $M_{TV-inf}$ behaves properly on all binary inputs, the head will not go to the left of tape cell $1$. There are at most $N-3$ steps remaining and since $M_{TV-inf}$ is $1$-compact, it will not go past location $N-3$ on the right.
When $M_{TV-inf}$ finishes, the head is back at tape cell $1$ at which point $M_{post}$ is launched. 
At this point, at least $2$ steps of $M_{check-i}$ have been run and there are at most $N-4$ steps
left to go. The head can not reach past location $N-3$ in only $N-4$ steps.
Also, the rules $M_{post}$ guarantee that the head will not move to the left of location $1$.
Furthermore, $M_{post}$ will end in the location in which is began.

The only time a character from $\{\sigma_A \sigma_R,  \sigma_B, \sigma_X\}$ is written, is in the last step of $M_{post}$, so if $M_{BC} + M_{TV-inf}+M_{post}$ does not finish, then there will be no occurrences of $\{\sigma_A \sigma_R,  \sigma_B, \sigma_X\}$ on the tape.

If $M_{BC} + M_{TV-inf}+M_{post}$ does finish and $k$ is not a perfect square, then
the contents of the tape after $M_{TV-inf}$ finishes is $(\gamma_B, z, w)$. $M_{post}$ will convert the
$\gamma_B$ to a $\sigma_B$ and then halt. The time-wasting process is launched and the tape contents will remain unchanged until the $N-2$ steps of $M_{check-i}$ are complete.

If $M_{BC} + M_{TV-inf}+M_{post}$ does finish and $k = x^2$, then
the contents of the tape after $M_{TV-inf}$ finishes is {\sc Out}$(x, \myw, M, V)$, except that every $\sigma_i$ is a $\gamma_i$. $M_{post}$ will convert the $\gamma$'s to $\sigma$'s and
the contents of the tape will be {\sc Out}$(x, \myw, M, V)$. When $M_{post}$ finishes, the head will be pointing to the $\sigma_A$ or $\sigma_R$ which will launch the time-wasting process and the tape contents will remain unchanged until the $N-2$ steps of $M_{check-i}$ are complete.
\end{proof}

\subsubsection{Existence of $M_{TV-inf}$ that Produces the Correct Output for Most N}

We now need to argue that there is a Turing Machine $M_{TV-inf}$ that has the required properties so that the resulting $M_{check-i}$ produces the correct output for most values of $N$. As with the finite case, we need some assumptions about the running time of $M$ and $V$ which $M_{TV-inf}$ simulates which can be proven using standard padding arguments. 

\begin{lemma}
\label{lem:padinf}
{\bf [Padding Lemma for $\fexpnexp$]}
If $f \in \fexpnexp$, then for any constants, $c_1$ and $c_2$,
$f$ is polynomial time reducible to a function $g \in \fexpnexp$ such that $g$ can computed by
an exponential-time Turing Machine $M$
with access to a $\nexp$ oracle for language $L$. The verifier for $L$ is a  Turing Machine $V$.
Moreover, on input $x$ of length $n$, $M$ runs in $O(2^{c_1 n})$ time, and makes
at most $2^{c_1 n}$ queries to the oracle. Also, the length of the queries made to the oracle is at most $2^{c_1 n}$ and the running time of $V$ as well as the size of the witness required for $V$ on any query made by $M$ is $O(2^{2^{c_2 n}})$
\end{lemma}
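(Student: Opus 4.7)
The plan is to mimic the proof of Lemma~\ref{lem:pad}, the analogous padding statement for $\fpnexp$, while carefully tracking the extra exponential that arises in $\fexpnexp$. Fix an exponential-time oracle Turing machine $M_0$ computing $f$ with runtime $O(2^{r_1(n)})$ for some polynomial $r_1$, and a verifier $V_0$ for the oracle language $L_0 \in \nexp$ with runtime and witness size $O(2^{r_2(n)})$ for some polynomial $r_2$. On an input of length $n$ the machine $M_0$ makes at most $2^{r_1(n)}$ oracle queries, each of length at most $2^{r_1(n)}$, and on any such query $V_0$ uses time and witness at most $2^{r_2(2^{r_1(n)})} = 2^{2^{O(r_1(n))}}$.

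First I would define the reduction by mapping $x$ to the padded string $\bar{x} = x \, 1 \, 0^{t(|x|)} \, 1$ of length $\bar{n} = |x| + t(|x|) + 2$ for a polynomial $t$ to be pinned down below, and set $g(\bar{x}) = f(x)$ when the suffix is well-formed and $g(\bar{x}) = 0$ otherwise. The oracle-machine $M$ for $g$ checks the suffix in time linear in $\bar{n}$ and then simulates $M_0$ on the unpadded $x$, using the same oracle language $L = L_0$ and the same verifier $V = V_0$. Then $M$'s total runtime, number of queries, and query lengths are all bounded by $2^{r_1(|x|)} + O(\bar{n})$, while $V$'s runtime and witness size on any query made by $M$ are bounded by $2^{r_2(2^{r_1(|x|)})} = 2^{2^{O(r_1(|x|))}}$.

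Next I would choose $t$ of high enough degree that $r_1(|x|) \le c' \bar{n}$ for a single constant $c'$ satisfying $c' \le c_1$ and $d c' \le c_2$, where $d$ is the degree of $r_2$. Since $\bar{n} \ge t(|x|)$, this reduces to requiring $r_1(n) \le c' \, t(n)$ for all sufficiently large $n$, which any polynomial $t$ with large enough leading coefficient and degree satisfies. Plugging in: $M$'s runtime and the number and length of its queries are at most $O(2^{c_1 \bar{n}})$, and $V$'s cost is at most $2^{r_2(2^{c' \bar{n}})} = 2^{O(2^{d c' \bar{n}})} = O(2^{2^{c_2 \bar{n}}})$. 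The finitely many short $x$ for which the asymptotic inequalities fail can be handled by hard-coding their values into $M$ via a finite lookup table.

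The main obstacle is purely bookkeeping across two levels of exponentials. A crude bound on $V$'s runtime as $2^{\operatorname{poly}(|\bar{q}|)}$ with $|\bar{q}|$ as large as $2^{c_1 \bar{n}}$ would only yield $2^{2^{O(\bar{n})}}$ with a hidden constant depending on the degree of $r_2$, rather than $2^{2^{c_2 \bar{n}}}$ with the prescribed constant $c_2$. The key observation is that $r_2$ is a \emph{fixed} polynomial of some degree $d$ independent of $\bar{n}$, so $r_2(2^{c' \bar{n}}) \le O(2^{d c' \bar{n}})$, and by choosing $t$ with large enough degree and leading coefficient the product $d c'$ can be pushed below $c_2$. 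No additional padding of the oracle queries themselves is required, since the query length is already controlled by the input-level padding.
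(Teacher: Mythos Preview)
Your proof is correct and follows essentially the same approach as the paper's own proof: pad the input by a suffix $1\,0^{t(n)}\,1$, have $g$ check the suffix and then simulate the original oracle machine on the unpadded part with the same oracle language and verifier, and choose the polynomial $t$ large enough that $r_1(n) \le c_1 t(n)$ and $r_2(2^{r_1(n)}) \le 2^{c_2 t(n)}$. Your treatment of the constants (introducing $c'$ with $c' \le c_1$ and $dc' \le c_2$) is slightly more explicit than the paper's, but the argument is the same padding argument in both cases.
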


\begin{proof}
Suppose that $M$ runs in $O(2^{r_1(n)})$ time and the running time of $V$ is $O(2^{r_2(n)})$.
The reduction will pad an input $x$ to $f$ with $1 0^{t(n)} 1$, for a polynomial $t$ chosen below.
We will refer to this substring as the {\em suffix}.
The algorithm for $g$ will verify that the length of the suffix  is in fact $t(n)$, where $n$ is the number of bits not included in the suffix.
If not, then the value of $g$ is $0$. If the input to $g$ does have a suffix with the correct length
then it erases the suffix and simulates $M$ on the rest of the string.
Note that the input length to $g$ is now $\bar{n} = t(n) + n + 2$. 
The polynomial $t$ is chosen so that $2^{ c_1 \bar{n}} \ge 2^{c_1 t(n)} \ge 2^{r_1(n)}$, so that the running time of 
$M$ and hence the number of oracles made, as well as the length of the inputs to those oracles
are all bounded  by $2^{c_1 \bar{n}}$.  The running time of $V$ on any of the query inputs is
at most 
$2^{r_2(2^{r_1(n)})}$. $t$ will also be chosen to be large enough so that
$$2^{r_2(2^{r_1(n)})} < 2^{2^{c_2 t(n)}} < 2^{2^{c_2 \bar{n}}}.$$
\end{proof}

We need now to establish that $M_{TV-inf}$ can be chosen so that for all but a finite set of $k$, 
$M_{BC} + M_{TV-inf}+M_{post}$ finishes in $4^k-2$ steps for every witness $w$. 
\begin{definition}
\label{def:sufficient}
{\bf [Sufficient $N$ for Completing $M_{BC} + M_{TV-inf}+M_{post}$'s Computation]}
We say that $N$ is sufficient for $M_{BC} + M_{TV-inf}+M_{post}$ if for every 
$\myw \in \{0,1\}^{N-4}$,
starting with input $(1,\myw)$, if $M_{BC}$ is run for $N-2$ steps, followed by
$M_{BC} + M_{TV-inf}+M_{post}$ for $N-2$ steps, the final state $r_f$ for $M_{post}$ is reached.
\end{definition}

\begin{lemma}
\label{lem:existsMTVinf}
{\bf [Existence of $M_{TV-inf}$ With Required Running Time ]}
There is a reversible normal-form Turing Machine $M_{TV}$ such that 
\begin{enumerate}
    \item $M_{TV-inf}$ behaves properly on all $(a, b)$, where $a, b \in \{0,1\}^*$.
    \item $M_{TV-inf}$ is $1$-compact.
    \item For all but a finite number of  $k$, if $N = 4^k$ is sufficient for $M_{BC} + M_{TV-inf}+M_{post}$.
\item The behavior of $M_{TV-inf}$ is as described in pseudo-code given in Figure \ref{fig:Minfpseudo}.
\end{enumerate}
\end{lemma}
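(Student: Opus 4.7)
The plan is to follow the proof of Lemma~\ref{lem:existsMTV} from the finite case, with the new challenge being to budget time carefully when the verifier $V$ is doubly-exponential-time. First, invoke the padding Lemma~\ref{lem:padinf} to obtain an equivalent $g \in \fexpnexp$ computed by a Turing Machine $M$ making at most $2^{c_1|x|}$ oracle queries on input $x$, whose verifier $V$ runs in time $O(2^{2^{c_2 |x|}})$ with witnesses of matching length; the constants $c_1$ and $c_2$ will be taken small (say $c_1 = c_2 = \tfrac{1}{4}$).

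Next I describe a RAM algorithm implementing Figure~\ref{fig:Minfpseudo}. On input $(z,w)$ with $|z| = O(\log N) = O(k)$, computing $N = N(z)$, $k = \lfloor \log_4 N \rfloor$, $x = \lfloor \sqrt{k} \rfloor$ and testing $k \stackrel{?}{=} x^2$ takes $\mbox{poly}(k)$ time. Since $|x| = \Theta(\log k)$, simulating $M$ costs $k^{O(c_1)}$ steps, each of the $m \le 2^{c_1|x|} = k^{O(c_1)}$ verifier invocations takes $2^{2^{c_2|x|}} = 2^{k^{O(c_2)}}$ steps, and writing $T(x,y) \le 2^{O(m)}$ in unary adds $2^{k^{O(c_1)}}$ steps. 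For $c_1, c_2 < \tfrac{1}{2}$ the total is $2^{k^{o(1)}} \cdot k^{O(1)}$, which is $o(4^k)$. When $k$ is not a perfect square the algorithm simply writes $\gamma_B$ and reverses its preparatory bookkeeping, which is even cheaper. I translate this RAM algorithm into a multi-tape Turing Machine with polynomial overhead and then convert it to a reversible normal-form TM via Theorem B.8 of~\cite{BV97} (quadratic overhead); the $o(4^k)$ bound is preserved and properties 1 and 4 follow. Property 2 ($1$-compactness) is enforced by prepending one stall state that holds in place for a single step before launching the main computation, exactly as in Lemma~\ref{lem:existsMTV}.

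For property 3, the $N-2$ allotted steps of $M_{check-i}$ must cover completing the interrupted increment left over from $M_{BC}$ (at most $O(k)$ extra steps by Lemma~\ref{lem:oneinc}), executing $M_{TV-inf}$ itself (bounded by $o(4^k)$ above), and running the post-processor $M_{post}$, which makes one round trip over the initial unary block of length $T(x,y) \le 2^{k^{o(1)}}$. Their concatenation is still $o(4^k)$, so for all but finitely many $k$ the total fits within $N - 2 = 4^k - 2$ steps, as required by Definition~\ref{def:sufficient}.

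The main technical obstacle is establishing the $o(4^k)$ budget: the doubly-exponential runtime of $V$ appears at first sight incompatible with the merely-doubly-exponential $4^k$ step budget. The key is that the meaningful input to the hard computation has length only $|x| = \Theta(\log k)$, so the doubly-exponential blowup $2^{2^{c_2|x|}}$ collapses to $2^{k^{O(c_2)}}$, which for $c_2 < 1$ is sub-$2^k$ and hence comfortably fits into $4^k$. Picking $c_1, c_2$ small enough in the padding step robustly absorbs both this bound and the quadratic reversibility overhead.
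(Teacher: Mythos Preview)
Your proposal is correct and follows essentially the same approach as the paper: invoke the padding lemma to control $c_1,c_2$, bound the RAM time of each stage (arithmetic, simulating $M$, the verifier calls, writing $T(x,y)$ in unary, and $M_{post}$), convert to a reversible TM via Theorem~B.8 of \cite{BV97} with quadratic overhead, and prepend a stall state for $1$-compactness. One notational slip: for fixed $c_2$ (say $c_2=\tfrac14$) the verifier cost is $2^{k^{c_2}}$, which is not $2^{k^{o(1)}}$; but since $k^{c_2}=o(k)$ for any $c_2<1$, your conclusion $o(4^k)$ (even after squaring for reversibility) is correct and matches the paper.
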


\begin{proof}
If $N = 4^k$, the time to complete the increment operation after $N-2$ steps of $M_{BC}$ is $O(k)$.

Next we consider the time to compute the function outlined in Figure \ref{fig:Minfpseudo}
on a RAM and then argue about the time to convert the algorithm to a reversible Turing Machine.
The initial arithmetic computations can be done in time that is polylogarithmic in $N$. 
Now suppose that $k = x^2$.
By Lemma \ref{lem:pad}, we can assume that the running time of $M$
is on the order of $2^{c_1 |x|} \le 2^{c_1 |k|} \le k^{c_1}$.
The number of queries made by $M$ and the size of the input to those questions is
also on the order of $k^{c_1}$. 
The running time of each call is on the order of $2^{2^{c_2 |x|}} \le 2^{k^{c_2}}$. Thus, the total time for running the verifier is on the order of $k^{c_1} 2^{k^{c_2}}$.
The time to compute $T(x,y)$ and write the resulting value in unary is $O(4^{k^{c_1}})$.

Finally, the time to run $M_{post}$ after the final state of $M_{TV-inf}$ is reached is $O(T(x,y))$ which
is also $O(4^{c_1 k})$.
Thus, for any constant $d$, the constants $c_1$ and $c_2$ can be chosen so that the running time to compute
the function computer by $M_{TV-inf}$ is $O(2^{dk}))$.

The RAM algorithm can be converted to a TM computation with at most polynomial overhead.
Finally, by Theorem B.8 from \cite{BV97}, the TM can be converted to a reversible TM that behaves properly on all inputs
with quadratic overhead. Therefore the constant $d$ can be chosen to be small enough so that the running time
of the reversible TM to compute $M_{BC} + M_{TV-inf}+M_{post}$ is $o(4^k)$. Since $N = 4^k$, the total running time will be at most
$N-2$ for all but a finite number of values for $k$.

Finally, to satisfy the second assumption, we can add one additional state which causes the reversible TM to stay in place for one step before beginning its computation.
\end{proof}

\subsubsection{Additional Penalty Terms for the Infinite Construction}

The clock and the propagation terms will be exactly the same as the finite case, except that $M_{check-i}$ is replaced by $M_{check-i}$ throughout.  We need to adjust the penalty terms so that the ground energy is $0$ for all $N$ in which the process
executed on the computation tracks does not have time to finish. Also, we only run the simulation on segments  of size $4^k$, where
$k$ is a perfect square. If $k$ is not a perfect square then we want the Hamiltonian on that segment to have a ground energy of $0$. We have set up the output of the Turing Machine  so that these situations can be detected.

In the correct initial configuration, the contents of Track $5$ are
$\leftBr  1 ~\#^*  \rightBr$, where $\#$ is the blank symbol.
The state should be $q_0$ and the head should be pointing to the cell with the $1$.
$h_{init}$ is exactly as it was for the finite case.
The first of the two rules given below
pertains to Tracks $1$ and $5$ only and 
implicitly acts as the identity on all other tracks. The third rule
pertains to Tracks $1$, $4$, and $5$:
$$h_{init} = \sum_{a \in \Gamma, ~x \in \Gamma - \{ \#\} } \ketbrabig{\fourcells{\Dblank}{a}{\arrLeight}{x}}{\fourcells{\Dblank}{a}{\arrLeight}{x}}
~~~~~+~~~~~\sum_{x \in \Gamma - \{ 1 \} } \ketbrabig{\threecellsL{\arrLeight}{x}}{\threecellsL{\arrLeight}{x}}
~~~~~+~~~~~
\ketbrabig{\fourcellsL{\arrLeight}{\neg q_0}{1}}{\fourcellsL{\arrLeight}{\neg q_0}{1}}
$$

In order to understand the remaining changes, it will be useful to analyze what the contents of Track $5$ (the work track) can be at various points in the computation. We will let $\bar{\Gamma}$ denote all the tape symbols $\Gamma$,
except for $\{\gamma_A, \gamma_R, \gamma_B, \gamma_X, \sigma_A, \sigma_R, \sigma_B, \sigma_X\}$. At any point in the computation of $M_{BC}$
the tape contents are always from $(\bar{\Gamma})^*$.
At any point in the computation of 
$M_{TV-inf}+M_{post}$, the tape contents will be in one of the following forms:
\begin{align}
\label{eq:firstcase}
    &  (\bar{\Gamma})^*  & M_{TV-inf}~\mbox{running}\\
    &  (\gamma_A + \gamma_R + \gamma_B) (\bar{\Gamma} + \gamma_X)^*   & M_{TV-inf}~\mbox{running}\\
    &  (\gamma_A + \gamma_R + \gamma_B) (\sigma_X)^* (\gamma_X)^* (\bar{\Gamma})^*   & M_{post}~\mbox{sweeping right}\\
\label{eq:fourthcase}
    &  (\gamma_A + \gamma_R + \gamma_B) (\sigma_X)^*  (\bar{\Gamma})^*   & M_{post}~\mbox{sweeping left}\\
\label{eq:fifthcase}
    &  (\sigma_A + \sigma_R + \sigma_B) (\sigma_X)^*  (\bar{\Gamma})^*   & \mbox{last step of}~M_{post}
\end{align}
 (\ref{eq:firstcase}) through (\ref{eq:fourthcase}),
show the possibilities for the tape contents up until the point where $M_{check-i}$ hits a final state.
The fifth case (\ref{eq:fifthcase}), corresponds to a finished $M_{check-i}$ which keeps it's tape contents for the remainder of the $N-2$ steps.
Therefore, we would like to have any symbol from the set $\Sigma = \{ \gamma_A, \gamma_R, \gamma_B, \sigma_A, \sigma_R, \sigma_B, \sigma_X \}$ count as a symbol for
checking the length of the Track $3$ timer. Recall that the Track $3$ timer always has the form
$\DblankLthree^* (\arrL + \arrR) \blankLthree^* \dead^*$. Thus, as the Track $1$ $4$-pointer sweeps left,
it will check that the symbol on Track $5$ is from the set $\Sigma$ if and only if
the Track $3$ state is from the set $\{ \blankLthree, \arrL, \arrR, \DblankLthree \}$.
$$h_{length} = 
\sum_{\substack{s \in \{\DblankLthree, \blankLthree, \arrR, \arrL \}\\ a ~\not\in \Sigma} } \ketbrabig{\threecellsvert{\arrLfour}{s}{a}}{\threecellsvert{\arrLfour}{s}{a}}
~~~~~+~~~~~
\sum_{b \in \Sigma }
\ketbrabig{\threecellsvert{\arrLfour}{\dead}{ b}}{\threecellsvert{\arrLfour}{\dead}{b}}
$$
If $M_{check-i}$ finishes and $N = 4^k$, where $k$ is not  a perfect square, then the number of $\sigma_X$'s be $0$ and the $h_{length}$ term will check that the Track $3$ timer is $0$.

$h_V$ which introduces a
 penalty if the verifier $V$ rejects on any of its computations does not change. The term below applies to Tracks $1$ and $5$:
$$h_{V} = \ketbrabig{\threecellsL{\arrLfour}{\sigma_R}}{\threecellsL{\arrLfour}{\sigma_R}}.$$

Finally, the penalty  for correct computations and clock configurations that occur every cycle for Tracks $1$, $2$, and $3$
should only happen if the $M_{check-i}$ finishes and $N = 4^k$, where $k$ is   a perfect square.
In this case there will be an $\sigma_A$ or an $\sigma_R$ in the leftmost position. The term below applies to Track $1$, $2$, $3$, and $5$:
$$h_{final} = \frac 1 2 \ketbrabig{\fivecellsL{\arrLfour}{\arrR}{\arrR}{\sigma_A}}{\fivecellsL{\arrLfour}{\arrR}{\arrR}{\sigma_A}}
+ \frac 1 2 \ketbrabig{\fivecellsL{\arrLfour}{\arrR}{\arrR}{\sigma_R}}{\fivecellsL{\arrLfour}{\arrR}{\arrR}{\sigma_R}} + \frac 1 2 \ketbrabig{\fivecellsL{\arrRfive}{\arrR}{\arrR}{\sigma_A}}{\fivecellsL{\arrLfour}{\arrR}{\arrR}{\sigma_A}}
+ \frac 1 2 \ketbrabig{\fivecellsL{\arrRfive}{\arrR}{\arrR}{\sigma_R}}{\fivecellsL{\arrLfour}{\arrR}{\arrR}{\sigma_R}}.$$

The final $2$-particle Hamiltonain term is:
$$h = h_{prop} + h_{wf-cl} + h_{wf-co} + h_{cl} + h_{init} + h_{length} + h_V + h_{final}.$$
For each of these terms $h_*$, we will use $H_{N,*}$ to refer to the Hamiltonian on a chain of length $N$
obtained by applying the $2$-particle term $h_*$ to each pair of neighboring particles in the chain.

\subsubsection{Analysis of the Ground Energy of the Modified Hamiltonian}

The analysis addresses three separate cases: $N = 4^k$ is not sufficient for $M_{BC} + M_{TV-inf}+M_{post}$,
$N = 4^k$ is  sufficient for $M_{BC} + M_{TV-inf}+M_{post}$ and $k$ is not a perfect square, and finally,
$N = 4^k$ is  sufficient for $M_{BC} + M_{TV-inf}+M_{post}$ and $k$ is  a perfect square.

\begin{lemma}
\label{lem:doesntfinish}
{\bf [Ground Energy for $N$ not Sufficient]}
Suppose that $N = 4^k \in \caln(z)$, and $N$ is not sufficient
for $M_{BC} + M_{TV-inf}+M_{post}$, then $\lambda_0 ( 4^k) = 0$.
\end{lemma}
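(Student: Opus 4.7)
The plan is to exhibit a zero-energy eigenstate of $H_C(N)$ restricted to the bracketed subspace $\calh_{br}$, from which $\lambda_0(4^k) = 0$ will follow. The strategy is analogous to the block analysis in Section~\ref{sec:groundenergy}, but exploits the key feature of the infinite construction: by Lemma~\ref{lem:correctoutput-inf}, when $M_{check-i}$ fails to finish, no symbol from $\{\sigma_A, \sigma_R, \sigma_B, \sigma_X\}$ is ever written on the work tape throughout the cycle.

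First I will invoke the non-sufficiency hypothesis to fix a witness $w^*$ such that, starting from $(1, w^*)$, the process ($M_{BC}$ for $N-2$ steps followed by $M_{check-i}$ for $N-2$ steps) does not reach the final state $r_f$. Then I will consider the block $S_{T, \vinit, w^*}$ for a Track 3 timer length $T$ chosen below. As in Lemma~\ref{lem:Exy}, $H_{N,prop}$ restricted to this block is $\tfrac{1}{2}$ times the Laplacian of the cycle graph on $(2T+1)p(N)$ vertices, whose ground eigenstate is the uniform superposition with eigenvalue zero. Among the diagonal penalty terms, $h_{init}$ is zero because the block starts at $\vinit$ and by cyclicity (Lemma~\ref{lem:clockCycle}) the computation returns to $\vinit$ at every Time~$0$; while $h_V$ and $h_{final}$ are zero because both require a $\sigma$-symbol at position~$1$ on Track~$5$, which has been ruled out by the absence of $\sigma$-symbols along the entire trajectory.

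The remaining issue is $h_{length}$, which aligns Track~$3$'s non-dead positions with the leading $\Sigma$-symbols on Track~$5$, where $\Sigma$ now includes $\{\gamma_A, \gamma_R, \gamma_B, \gamma_X\}$. Inspecting Figure~\ref{fig:Minfpseudo}, the first $\Sigma$-symbol that $M_{TV-inf}$ ever writes is the $\gamma_A$ or $\gamma_R$ at position~$1$ on line~19 or~21, which occurs only after all verifier calls complete; the $\gamma_X$-symbols appear only on line~23, and the intermediate $T(x,y)$-computation on line~22 uses exclusively $\bar\Gamma$ workspace symbols. My plan is to refine the choice of $w^*$ so that $M_{check-i}$ stalls during line~22, leaving the end-of-segment-$3$ tape with a single $\gamma_A$ or $\gamma_R$ at position~$1$ and symbols from $\bar\Gamma$ everywhere else. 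Setting $T=0$ then forces Track~$3$ into the configuration $\leftBr (\arrR + \arrL) \dead^{N-3} \rightBr$, whose only non-dead slot is at position~$1$ (matching the lone $\Sigma$-symbol) and whose $\dead$-tail at positions $2$ through $N-2$ is matched by $\bar\Gamma$ entries on Track~$5$. Thus $h_{length}$ vanishes on every state in the block, and the uniform superposition over $S_{0, \vinit, w^*}$ is a zero-energy eigenstate of the full Hamiltonian restricted to $\calh_{br}$.

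The hard part will be verifying that a witness $w^*$ of the required ``stall during line~22 but after line~21'' form always exists when $N = 4^k$ is non-sufficient. A case analysis on the phase at which stalling occurs suffices in most cases, but if for some small $N$ the computation cannot even reach line~19/21 for any $w$, the plan is to slightly modify $M_{TV-inf}$ so that it writes a $\gamma_B$ marker at position~$1$ at the very start of its execution (before simulating $M$); any subsequent stall then automatically yields the clean tape shape needed for $T=0$. Any residual edge cases for the finitely many small $N$'s guaranteed by Lemma~\ref{lem:existsMTVinf} can be absorbed into a finite-case exception, just as the finite-hardness proof of Theorem~\ref{th:finitehardness} handles its own finitely many exceptional inputs.
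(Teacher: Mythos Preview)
Your overall framework—exhibit a zero-energy state as the uniform superposition over some block $S_{T,\vinit,w}$—is exactly the paper's approach, and your treatment of $h_{init}$, $h_V$, and $h_{final}$ is correct: Lemma~\ref{lem:correctoutput-inf} guarantees that no $\sigma_A/\sigma_R/\sigma_B$ symbol appears when the process fails to finish, so those penalty terms vanish on every state in the block.

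You overcomplicate the $h_{length}$ argument, however. You try to force $T=0$ by engineering $w^*$ so that the process stalls at a specific line, whereas the paper does no such thing: it simply takes whatever $w$ witnesses non-sufficiency, looks at the tape at the end of segment~$3$, and sets $T$ to be one less than the number of $\Sigma$-symbols present (where $\Sigma=\{\gamma_A,\gamma_R,\gamma_B,\sigma_A,\sigma_R,\sigma_B,\sigma_X\}$—note $\gamma_X\notin\Sigma$, contrary to what you wrote). Because the unfinished tape shapes \eqref{eq:firstcase}--\eqref{eq:fourthcase} have their $\Sigma$-symbols forming a contiguous prefix, this choice of $T$ makes $h_{length}$ vanish with no further work, regardless of which phase the process stalled in.

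Your step ``refine the choice of $w^*$ so that $M_{check-i}$ stalls during line~22'' is the genuine gap in your argument: the stalling point after exactly $N-2$ steps is a deterministic function of $(N,w^*)$, and non-sufficiency only hands you \emph{some} $w^*$ for which the process fails to finish—you get no control over the line at which it halts, and your promised ``case analysis on the phase at which stalling occurs'' is never carried out. Your fallback of modifying $M_{TV-inf}$ to write a $\gamma_B$ marker at the outset is a construction-level change (requiring care to preserve reversibility) that the paper's floating-$T$ approach simply does not need.
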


\begin{proof}
First, we establish that there is a witness $\myw$ such that
starting with input $(1,\myw)$, at the end of the process ($N-2$ steps of $M_{BC}$ and $N-2$ steps of $M_{check-i}$), there are no symbols from
$\{ \sigma_A. \sigma_B, \sigma_R \}$ on the tape.
We consider two separate cases:
\begin{description}
\item {\bf Case 1: $N = 4$:} There are only $2$ tape symbols in the chain. $2$ steps of $M_{BC}$ are run followed by two steps of $M_{check-i}$. To handle this special case, we can add three time-wasting steps to $M_{BC}$ before it begins the increment operation:
\begin{align*}
    \delta(q_0, a) & = (q'_0, a, N)\\
    \delta(q'_0, a) & = (q''_0, a, N)\\
    \delta(q''_0, a) & = (q''''_0, a, N)\\
\end{align*}
Then $q'''_0$ replaces $q_0$ in the rules of $M_{BC}$. This adds $3n(x^R)$ to $N(x)$
which is only an increase of a constant factor. Lemmas \ref{lem:Nofxbound} and \ref{lem:Nlowerbound} still hold.
Therefore two steps of $M_{BC}$ followed by two steps of
$M_{check-i}$ will cause the head to stay in place for three steps and then move to the right one step. Only a symbols from $\{0, 1, \#\}$ are written.
\item {\bf Case 2: $N = 4^k$, where $k \ge 2$:}  According to Lemma \ref{lem:correctoutput-inf},
the tape head stays within locations $1$ through $N-3$ which means that the
Turing Machine steps are correctly executed in the Hamiltonian.
If $N$ is not 
sufficient for $M_{BC} + M_{TV-inf}+M_{post}$ then there is a $\myw$ for which $M_{BC} + M_{TV-inf}+M_{post}$ does not reach a final state.
On input $(1,\myw)$, there are no symbols from
$\{ \sigma_A. \sigma_B, \sigma_R \}$ on the tape at the end of the process.
\end{description}
Let $T$ be the number of symbols from the set $\{ \gamma_A, \gamma_R, \gamma_B, \sigma_X\}$
on the work tape  minus $1$ at the end of the process.
The ground state for this Hamiltonian will correspond to the cycle in which the clock has length $T$ and the computation
starts in state $(1,\myw)$ with the head in state $q_0$ at the left end of the chain. 
In this case $H_{N, init}$ will be $0$ because the initial configuration is correct.
$H_{N, length}$ will also be $0$ because the timer symbols on Track $3$ will correspond to the symbols from
the set  $\{ \gamma_A, \gamma_R, \gamma_B, \sigma_X\}$ on Track $5$. Furthermore, $H_{N,V}$ and $H_{N,final}$ will be $0$ because
there will never be an $\sigma_A$ or an $\sigma_R$ on the work tape since those symbols are written in the last step of $M_{check-i}$ before it hits a final state. The overall energy for this state will be $0$.
\end{proof}

\begin{lemma}
\label{lem:notsquare}
{\bf [Ground Energy for $N = 4^k$ where $k$ is not a Perfect Square]}
Suppose that $N = 4^k$  is  sufficient
for $M_{BC} + M_{TV-inf}+M_{post}$ and $k$ is not a perfect square.
Then $\lambda_0(4^k)$.
\end{lemma}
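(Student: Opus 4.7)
The statement is evidently $\lambda_0(4^k) = 0$, and the plan is to exhibit an explicit zero-energy eigenvector of $H_N|_{\calh_{br}}$ and invoke $h \ge 0$. Since $N = 4^k$ is sufficient for $M_{BC} + M_{TV-inf} + M_{post}$ and $k$ is not a perfect square, Lemma \ref{lem:correctoutput-inf} (case 2) guarantees that for every witness $w$, after $N-2$ steps of $M_{BC}$ followed by $N-2$ steps of $M_{check-i}$ starting from the correct initial configuration $\ket{\vinit}\ket{w}$, the work tape contains exactly $(\sigma_B, z, w)$, i.e.\ the leftmost symbol is $\sigma_B$ and all remaining symbols lie in $\bar\Gamma$. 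In particular the tape has no occurrence of $\sigma_A$, $\sigma_R$, or $\sigma_X$, and exactly one symbol from the set $\Sigma$ (namely the $\sigma_B$ in the leftmost cell).

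The candidate eigenvector lives in the block $S_{y,u}$ associated with $T=0$ and initial computation state $\ket{\vinit}\ket{w}$ (for any fixed $w$). With $T=0$, Track $3$ has the form $\leftBr (\arrR/\arrL) \dead^{N-3} \rightBr$, the full clock cycle has length $(2T+1)p(N) = p(N)$, and by Lemma \ref{lem:clockCycle} the computation tracks return to $\ket{\vinit}\ket{w}$ after one cycle. Define
\[
\ket{\psi} = \frac{1}{\sqrt{p(N)}} \sum_{t=0}^{p(N)-1} \ket{c_{0,0,t}} \ket{\phi(\vinit, w, t)}.
\]
Since $H_{N,prop}$ restricted to this block is $\tfrac12$ times the Laplacian of a cycle of length $p(N)$, the uniform superposition $\ket{\psi}$ is a zero eigenvector of $H_{N,prop}$.

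It remains to verify that every diagonal penalty term annihilates $\ket\psi$. All clock configurations in the block are well-formed and correct, so $H_{N,wf-cl}$, $H_{N,wf-co}$, and $H_{N,cl}$ all give $0$. Because the initial computation state is $\ket{\vinit}$ by construction, $H_{N,init}$ contributes $0$ at every clock configuration where the $\arrLeight$ sweep checks initialization. The $\arrLfour$ checks for $H_{N,length}$, $H_{N,V}$, and $H_{N,final}$ happen during the $4$-segment, at which point Segment $3$ has already completed and the computation tracks hold $(\sigma_B, z, w)$: the only $\Sigma$-symbol is the leftmost $\sigma_B$, which is aligned with the single non-$\dead$ position of the Track $3$ timer, so no illegal pattern from $h_{length}$ is hit; the leftmost symbol is neither $\sigma_R$ (ruling out $h_V$) nor $\sigma_A$ or $\sigma_R$ (ruling out $h_{final}$, which was explicitly modified in the infinite construction to trigger only on $\{\sigma_A,\sigma_R\}$).

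Thus $H_N \ket\psi = 0$. Combined with $h \ge 0$ (which gives $H_N \succeq 0$), this forces $\lambda_0(4^k) = 0$, as claimed. The only subtle step in the execution of this plan is tracking the state of Track $5$ at the exact clock step where each penalty term is evaluated; everything else follows mechanically from the infinite-case modifications of $h_{final}$ and $h_{length}$ together with Lemma \ref{lem:correctoutput-inf}.
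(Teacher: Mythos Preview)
Your proposal is correct and follows essentially the same approach as the paper: choose the block with Track~$3$ timer length $T=0$ and correct initial configuration $\ket{\vinit}\ket{w}$, observe that the output $(\sigma_B, z, w)$ at the end of Segment~$3$ triggers none of $h_{init}$, $h_{length}$, $h_V$, or $h_{final}$, and conclude that the uniform superposition over this cycle has zero energy. Your write-up is in fact more explicit than the paper's --- you spell out the state $\ket\psi$, note that it is the kernel vector of the cycle Laplacian for $H_{N,prop}$, and verify the alignment of the single $\sigma_B$ with the single non-$\dead$ Track~$3$ cell --- but the underlying argument is identical.
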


\begin{proof}
The ground state for this Hamiltonian will correspond to the cycle in which the clock has length $0$ and the computation
starts in state $(1,\myw)$ with the head in state $q_0$ at the left end of the chain. 
In this case $H_{N, init}$ will be $0$ because the initial configuration is correct.
$H_{N, length}$ will also be $0$ because the tape contents at the end of 
the process (in Segment $4$) will be $(\sigma_B, z, w)$, where $4^k \in \caln(z)$.
This will be consistent with a clock of length $0$.
Furthermore, $H_{N,V}$ and $H_{N,final}$ will be $0$ because
neither $\sigma_A$ nor $\sigma_R$ are on the work tape at the end of the process. The overall energy for this state will be $0$.
\end{proof}

\begin{definition}
{\bf [Set of $k$ where $4^k$ is not Sufficient]}
Let $\calk$ denote the set of $k$ which are not sufficient for $M_{check-i}$.
Lemma \ref{lem:existsMTVinf} shows that $M_{TV-inf}$ can be chosen so that $4^k$ is
sufficient for $M_{BC}+M_{TV-inf}+M_{post}$  for all but a finite set of $k$. We will fix a $M_{TV-inf}$ that satisfies those properties and let $\calk$ denote the finite set of $k$ such that $4^k$ is not sufficient for $M_{BC}+M_{TV-inf}+M_{post}$
\end{definition}

This final lemma characterizes $\lambda_0(4^k)$ for all $k$.

\begin{proof} {\bf [Proof of Lemma \ref{lem:energyksquare}]}
Lemmas \ref{lem:doesntfinish} and \ref{lem:notsquare} show that for $k \in \calk$ or when $k$ is not a perfect square, then
$\lambda_0(4^k) = 0$.
Finally, we have the case where $N = 4^k$, $k$ is a perfect square ($k = x^2$) and
$4^k$ is sufficient for $M_{BC} + M_{TV-inf}+M_{post}$. Then according to Lemma \ref{lem:correctoutput-inf} after $N-2$ steps of 
$M_{count}$ and $N-2$ steps of $M_{check-i}$ starting in configuration $(1,w)$, the contents of the tape will
be {\sc Out}$(x, w, M, V)$, corresponding to the correct output of $M_{TV-inf}+M_{post}$. Note that this is the same as the output for $M_{TV}$ (for the finite case) on input $x$ and $w$, simulating TMs $M$ and $V$.
The functions for $m$, the number of query calls, and $r$ the length of the witnesses for $V$ are a different function of $|x|$ but those functions are assumed to be hard-coded into $M$ and $V$.
The analysis of the ground energy now is the same as the finite case. Lemmas
\ref{lem:lb-s12}, \ref{lem:Exy}, and \ref{lem:groundenergy} all hold. In particular, Lemma \ref{lem:groundenergy} says that
$\lambda_0(4^k) = E(x,\tilde{y})$, where $\tilde{y}$ is the string denoting
the correct answers to the oracle queries made by $M$ on input $x$ and
$$E(x,y) =  \left( 1 - \cos\left( \frac{ \pi}{L+1} \right)\right),$$
where $L = (2 T(x,\tilde{y})+3) \cdot p(4^k)$.
\end{proof}

\vspace{.1in}

%\noindent
%{\bf {\large Note Added}}

%\noindent
%After announcing these results we found that similar results were achieved independently by 
%Watson and Cubitt; they are posted simultaneously on the arXiv.

\section{Acknowledgements}

We are grateful to the Simons Institute for the Theory of Computing, at whose program on the ``The Quantum Wave in Computing'' this collaboration began.
We would also like to thank the following people for fruitful conversations regarding this work: Ignacio Cirac, Toby Cubitt, James Watson, Johannes Bausch, Sevag Gharibian, and David Perez-Garcia. We are especially grateful to Ignacio Cirac for highlighting the importance of studying quantum computational complexity in the thermodynamical limit.

\bibliographystyle{alpha}
\bibliography{references}

\begin{thebibliography}{BCLPG20}

\bibitem[AGIK09]{Aharonov_2009}
Dorit Aharonov, Daniel Gottesman, Sandy Irani, and Julia Kempe.
\newblock The power of quantum systems on a line.
\newblock {\em Communications in Mathematical Physics}, 287(1):41–65, Jan
  2009.

\bibitem[{Amb}14]{Amb14}
A.~{Ambainis}.
\newblock On physical problems that are slightly more difficult than qma.
\newblock In {\em 2014 IEEE 29th Conference on Computational Complexity (CCC)},
  pages 32--43, 2014.

\bibitem[AZ19]{AharonovZhou}
Dorit Aharonov and Leo Zhou.
\newblock Hamiltonian sparsification and gap-simulation.
\newblock In Avrim Blum, editor, {\em 10th Innovations in Theoretical Computer
  Science Conference, {ITCS} 2019, January 10-12, 2019, San Diego, California,
  {USA}}, volume 124 of {\em LIPIcs}, pages 2:1--2:21. Schloss Dagstuhl -
  Leibniz-Zentrum f{\"{u}}r Informatik, 2019.

\bibitem[BCLPG20]{Bausch_2020}
Johannes Bausch, Toby~S. Cubitt, Angelo Lucia, and David Perez-Garcia.
\newblock Undecidability of the spectral gap in one dimension.
\newblock {\em Physical Review X}, 10(3), Aug 2020.

\bibitem[BCO17]{Bausch_2017}
Johannes Bausch, Toby Cubitt, and Maris Ozols.
\newblock The complexity of translationally invariant spin chains with low
  local dimension.
\newblock {\em Annales Henri Poincaré}, 18(11):3449–3513, Oct 2017.

\bibitem[BCW21]{Bausch_2021}
Johannes Bausch, Toby~S. Cubitt, and James~D. Watson.
\newblock Uncomputability of phase diagrams.
\newblock {\em Nature Communications}, 12(1), Jan 2021.

\bibitem[BG15]{BravGoss15}
Sergey Bravyi and David Gosset.
\newblock Gapped and gapless phases of frustration-free spin-1/2 chains.
\newblock {\em Journal of Mathematical Physics}, 56, 03 2015.

\bibitem[BV97]{BV97}
Ethan Bernstein and Umesh~V. Vazirani.
\newblock Quantum complexity theory.
\newblock {\em {SIAM} J. Comput.}, 26(5):1411--1473, 1997.

\bibitem[CM16]{CubittMontanaro}
Toby Cubitt and Ashley Montanaro.
\newblock Complexity classification of local hamiltonian problems.
\newblock {\em SIAM Journal on Computing}, 45(2):268–316, 2016.

\bibitem[CPGW15]{C15}
Toby~S. Cubitt, David Perez-Garcia, and Michael~M. Wolf.
\newblock Undecidability of the spectral gap.
\newblock {\em Nature}, 528(7581):207–211, Dec 2015.

\bibitem[DRS07]{DRS07}
Percy Deift, Mary~Beth Ruskai, and Wolfgang Spitzer.
\newblock Improved gap estimates for simulating quantum circuits by adiabatic
  evolution.
\newblock {\em Quantum Inf. Process.}, 6(2):121--125, 2007.

\bibitem[GHLS15]{Gharibian_2015}
Sevag Gharibian, Yichen Huang, Zeph Landau, and Seung~Woo Shin.
\newblock Quantum hamiltonian complexity.
\newblock {\em Foundations and Trends® in Theoretical Computer Science},
  10(3):159–282, 2015.

\bibitem[GI13]{GI}
Daniel Gottesman and Sandy Irani.
\newblock The quantum and classical complexity of translationally invariant
  tiling and hamiltonian problems.
\newblock {\em Theory of Computing}, 9(2):31--116, 2013.

\bibitem[GPY19]{GPY19}
Sevag Gharibian, Stephen Piddock, and Justin Yirka.
\newblock Oracle complexity classes and local measurements on physical
  hamiltonians.
\newblock {\em arXiv:1909.05981}, 2019.

\bibitem[GY19]{GY19}
Sevag Gharibian and Justin Yirka.
\newblock The complexity of simulating local measurements on quantum systems.
\newblock {\em {Quantum}}, 3:189, September 2019.

\bibitem[Kre86]{Krentel}
Mark~W. Krentel.
\newblock The complexity of optimization problems.
\newblock In Alan~L. Selman, editor, {\em Structure in Complexity Theory},
  pages 218--218, Berlin, Heidelberg, 1986. Springer Berlin Heidelberg.

\bibitem[KSV02]{KSV02}
A.~Yu. Kitaev, A.~H. Shen, and M.~N. Vyalyi.
\newblock {\em Classical and Quantum Computation}.
\newblock American Mathematical Society, USA, 2002.

\bibitem[OT05]{OT05}
R.~Oliveira and B.~Terhal.
\newblock The complexity of quantum spin systems on a two-dimensional square
  lattice.
\newblock {\em arXiv:0504050}, 2005.

\bibitem[Pap94]{Papa}
Christos~H. Papadimitriou.
\newblock {\em Computational complexity.}
\newblock Addison-Wesley, 1994.

\bibitem[Rob71]{R71}
Raphael Robinson.
\newblock Undecidability and nonperiodicity for the tilings of the plane.
\newblock {\em Invent. Math.}, 12:177--209, 1971.

\bibitem[WBG20]{WBG20}
James~D. Watson, Johannes Bausch, and Sevag Gharibian.
\newblock The complexity of translationally invariant problems beyond ground
  state energies.
\newblock {\em arXiv:2012.12717}, 2020.

\bibitem[WC21]{CW21}
James~D. Watson and Toby~S. Cubitt.
\newblock Computational complexity of the ground state energy density problem.
\newblock {\em arXiv:2107.05060}, 2021.

\end{thebibliography}

\end{document}